\newcommand{\RN}[1]{%
  \textup{\uppercase\expandafter{\romannumeral#1}}%
}
\newcommand{\probP}{\text{I\kern-0.15em P}}
\DeclareMathOperator*{\argmax}{argmax}
\numberwithin{equation}{section}
\theoremstyle{plain}
\newtheorem{theorem}{Theorem}[section]
\newtheorem{corollary}[theorem]{Corollary}
\newtheorem{lemma}[theorem]{Lemma}
\newtheorem{definition}[theorem]{Definition}
\newtheorem{assumption}[theorem]{Assumption}
\newtheorem{prop}[theorem]{Proposition}
\theoremstyle{remark}
\title{Most Powerful Test with Exact Family-Wise Error Rate Control: Necessary Conditions and a Path to Fast Computing}
\author{%
  Prasanjit Dubey$^{1,\orcidlink{0000-0002-3667-5507},}
  $
  \thanks{Corresponding author. 755 Ferst Dr NW, Atlanta, GA 30332, USA. Email: \href{mailto:pdubey31@gatech.edu}{pdubey31@gatech.edu}}%
  \qquad
  Xiaoming Huo$^{1,\orcidlink{0000-0003-0101-1206}}$\\
  $^{1}$H.~Milton Stewart School of Industrial and Systems Engineering,\\
  Georgia Institute of Technology, Atlanta, GA 30332, USA 
}
\date{}
\begin{document}
\doublespacing
\maketitle




\begin{abstract}
Identifying the most powerful test in multiple hypothesis testing under strong family-wise error rate (FWER) control is a fundamental problem in statistical methodology. 
State-of-the-art approaches formulate this as a constrained optimisation problem, for which a dual problem with strong duality has been established in a general sense.
However, a constructive method for solving the dual problem is lacking, leaving a significant computational gap. 
This paper fills this gap by deriving novel, necessary optimality conditions for the dual optimisation. 
We show that these conditions motivate an efficient coordinate-wise algorithm for computing the optimal dual solution, which, in turn, provides the most powerful test for the primal problem. 
We prove the linear convergence of our algorithm, i.e., the computational complexity of our proposed algorithm is proportional to the logarithm of the reciprocal of the target error.
To the best of our knowledge, this is the first time such a fast and computationally efficient algorithm has been proposed for finding the most powerful test with family-wise error rate control.
The method's superior power is demonstrated through simulation studies, and its practical utility is shown by identifying new, significant findings in both clinical and financial data applications.
\end{abstract}

\noindent \textbf{Keywords:} Coordinate descent, FWER control, Multiple hypothesis testing, Optimal decision policy, Optimization











\section{Introduction}
Multiple-hypothesis testing (MHT) underpins much of modern empirical science by enabling the simultaneous assessment of many claims while guarding against spurious findings. 
The fundamental objective is to maximise power, the ability to detect true effects, subject to rigorous control of false rejections. 
Among competing error metrics, the family-wise error rate (FWER), defined as the probability of at least one false rejection, remains a gold standard when the cost of any Type~I error is high. 
Strong control of the FWER at a pre-specified level $\alpha$, uniformly over all configurations of null and non-null hypotheses, is the norm in settings where reliability and reproducibility are paramount.

Applications where FWER control is essential are widespread and diverse. 
In clinical research, for example, multiple endpoints or treatment contrasts are routinely evaluated, and inflating Type~I error can lead to unsafe or ineffective therapies being advanced (\cite{Pocock87, Bretz09, Vickerstaff19}). 
In the social sciences and economics, investigators often test families of related hypotheses about policy impacts, and erroneous rejections can misguide policy formulation (\cite{Romano05, list16, Viviano25}). 
Large-scale online experimentation and product A/B testing face similar risks, where spurious “wins” propagate costly product decisions; principled error control remains central even amid sequential and adaptive experimentation (\cite{Johari21}). 
Related concerns arise in contemporary machine-learning pipelines, where distribution-free notions of risk control for black-box algorithms are increasingly required (e.g., \cite{bates21}). 
These motivating examples reinforce the need for procedures that are simultaneously powerful and equipped with strong, finite-sample error guarantees.

Classical and modern literatures have studied optimality under FWER control from several angles, including foundational contributions on multiple comparisons and optimal tests (\cite{Spjtvoll1972OnTO, westfall98, Lehmann1}) and more recent developments that seek most-powerful procedures under global constraints (\cite{Dobriban, Rosenblum}). 
A major conceptual step was the formulation of optimal multiple testing as an infinite-dimensional binary program whose objective captures power and whose constraints enforce family-wise control. 
For simple hypotheses, \cite{RHPA22} demonstrated that this program admits a dual problem with strong duality: the optimal value of the primal (most-powerful test subject to error constraints) equals that of the dual.
When a dual vector $\mu^\ast$ of Lagrange multipliers exists and satisfies the complementary-slackness conditions (Proposition~2 of \cite{RHPA22}), the associated decision rule $\vec D^{\mu^\ast}$ is optimal for the primal. 
Under a mild non-redundancy assumption on the family of joint densities (Assumption~3 of \cite{RHPA22}), such a $\mu^\ast$ is guaranteed to exist.

Despite this elegant duality theory, a practical gap persists. 
The existing results establish existence but stop short of providing constructive characterisations or algorithms for computing the optimal dual multipliers. 
In practice, one is left with a high-dimensional, potentially ill-conditioned search over dual parameters with repeated evaluation of constraint integrals, an approach that can be computationally prohibitive and numerically fragile. 
This gap limits the day-to-day deployability of otherwise optimal procedures, particularly when the number of hypotheses is small to moderate and exact control is required.

The present paper addresses this computational bottleneck by developing a principled route to compute optimal dual multipliers for exact FWER control. 
We focus on the case $K=3$, which is of substantive interest in applications such as subgroup analyses in clinical trials, where only a handful of clinically meaningful groups are targeted (\cite{RHPA22, Rosenblum}). 
Within this setting, we derive necessary optimality conditions for the dual variables (Theorem~\ref{thm:mu_optimality_combined} in this paper) that translate the abstract existence arguments into concrete analytic relationships. 
These conditions exploit the symmetry and arrangement-increasing structure of the problem together with the linear representations of power and error in the $p$-value domain.

We introduce a coordinate-wise algorithm (Algorithm~\ref{alg:compute_optimal_mu_K3_main}) that solves the dual program efficiently and reliably by building on these conditions. 
The method alternates over dual coordinates, bracketing and solving the one-dimensional optimality equations while keeping the remaining multipliers fixed, and is coupled with numerically stable evaluation of the requisite integrals on the sorted $p$-value simplex. 
This yields a provably fast and practical procedure that returns the optimal Lagrange multipliers and hence the most powerful FWER-controlling policy for $K=3$.

Our results make the optimal multiple-testing framework of \cite{RHPA22} practically viable in small-$K$ regimes that are common in applied work, without sacrificing exact, strong error control, by rendering the dual approach operational. 
Beyond providing a computational contribution, the explicit optimality conditions clarify the structure of optimal policies and illuminate how power is traded against error constraints through the dual variables. 
While our theory and algorithms are developed for $K=3$, the analysis suggests promising pathways to extend these ideas to larger families, which we discuss alongside empirical evaluations and comparisons with established procedures in subsequent sections.

\section{Problem Formulation}
\label{sec:formulation}
This section establishes the mathematical framework for our analysis of multiple hypothesis testing (MHT). 
We begin in Section \ref{sec:mht} by introducing the fundamental notation for the testing problem, including the hypotheses, data, and decision rules. 
Following this, Section \ref{sec:powererrorfn} provides formal definitions for the core metrics used to evaluate statistical power and control error rates. 
Section \ref{sec:most-powerulf-via-optimize} then frames the search for an optimal testing procedure as a constrained optimization problem. 
To simplify this problem, we introduce two generic structural assumptions and motivate our focus on likelihood ratio-ordered decision policies. 
The problem is subsequently reformulated in the p-value domain in Section \ref{sec:p-values}. 
We also present a general linear representation for the power and error functions, and then derive the corresponding Lagrangian dual problem in Section \ref{sec:lagdualprob}, which forms the basis for our main theoretical results.

\subsection{Multiple Hypotheses Testing}
\label{sec:mht}
This subsection introduces the fundamental setting and notation for a general multiple hypothesis testing (MHT) problem. 
We define the core components, including the hypotheses, the observed data, the decision rules, and the underlying probabilistic structure that governs the relationship between them.

We formally define the MHT problem as the concurrent assessment of $K$ distinct hypotheses,
\[
\{H_1, H_2, \ldots, H_K\}.
\]
Within this collection, an unknown subset $\mathcal{N} \subseteq [K] = \{1, 2, \ldots, K\}$ represents the indices of the hypotheses that are true nulls. 
Each hypothesis $H_k$ is evaluated based on an observed data vector $\vec{X}_k \in \mathcal{X}_k$, and the complete dataset is $\vec{X} = (\vec{X}_1, \ldots, \vec{X}_K)$, residing in the product space $\mathcal{X} = \mathcal{X}_1 \times \cdots \times \mathcal{X}_K$. 
A testing policy maps the observed data to binary decisions $\vec{D}(\vec{X}) = (D_1(\vec{X}), \ldots, D_K(\vec{X})) \in \{0,1\}^K$, where $D_k(\vec{X}) = 1$ signifies rejection of $H_k$ and $D_k(\vec{X}) = 0$ signifies non-rejection.

The underlying true state of nature is captured by a fixed, unknown binary vector $\vec{h} \in \{0,1\}^K$, where $h_k = 0$ indicates that $H_k$ is a true null and $h_k = 1$ indicates the alternative is true. 
We denote the joint likelihood under configuration $\vec{h}$ by $\mathcal{L}_{\vec{h}}(\vec{X})$. 
We also define the likelihood ratios $\Lambda_k(\vec{X}_k) = f_{k,1}(\vec{X}_k)/f_{k,0}(\vec{X}_k)$ for $k=1,\ldots,K$, where $f_{k,0}$ and $f_{k,1}$ are the null and alternative densities for $\vec{X}_k$.
We frequently consider the structured configurations indexed by $l \in [K]$ in which the first $l$ hypotheses are false nulls and the remaining $K-l$ are true nulls:
\[
\vec{h}_l = (\underbrace{1, 1, \ldots, 1}_l, \underbrace{0, \ldots, 0}_{K-l})^t.
\]

Under the common assumption of independence between the individual tests, the joint distribution of $\vec{X}$ given $\vec{h}$ factorizes as:
\begin{equation}
\label{eq:true_joint_dist}
f_{\vec{h}}(\vec{X}) = \prod_{k=1}^K f_{k,h_k}(\vec{X}_k),
\end{equation}
where $f_{k,h_k}(\vec{X}_k) = f_{k,0}(\vec{X}_k)$ if $h_k = 0$ and $f_{k,h_k}(\vec{X}_k) = f_{k,1}(\vec{X}_k)$ if $h_k = 1$.
To formalize symmetry, let $S_K$ denote the set of all permutations of $[K]$. 
For $\sigma \in S_K$, we write $\sigma(\vec{X})$ (resp. $\sigma(\vec{h})$) for the vector obtained by permuting the coordinates of $\vec{X}$ (resp. $\vec{h}$) according to $\sigma$.
We collect additional permutation notation (including transpositions $\sigma_{ij}$) in Appendix~\ref{app:perm_notation}.

\subsection{Power and Error Functions in Multiple Hypotheses Testing}
\label{sec:powererrorfn}
This subsection formally defines the standard metrics used to evaluate the performance of a multiple testing procedure. 
We begin by categorizing the possible outcomes of a testing procedure for $K$ hypotheses, as summarized in Table \ref{tab:outcome}, and then detail the specific functions for quantifying both statistical power and error rates.
\begin{table}[htbp]
    \centering
    \begin{tabular}{c|c|c|c}
        \toprule
         & $H_i$'s Accepted & $H_i$'s Rejected & Total \\\midrule
        $H_i$'s True  & $U$ & $V$ & $K_0$ \\
        $H_i$'s False & $T$ & $S$ & $K - K_0$ \\\midrule
                     & $W$ & $R$ & $K$ \\
        \bottomrule
    \end{tabular}
    \caption{Summary of key quantities in multiple hypotheses testing.}
    \label{tab:outcome}
\end{table}

We consider two primary formulations for the power of a test, following \cite{RHPA22}.
The first power metric, minimal power ($\Pi_{\text{any}}$), measures the probability of making at least one discovery under the configuration $\vec{h}_K$ where all $K$ null hypotheses are false:
\begin{equation}
\label{eq:piany_final}
\Pi_{\text{any}}(\vec{D}) = \mathbb{P}_{\vec{h}_K}(R > 0).
\end{equation}
The second metric, average power ($\Pi_l$), provides a measure of average power under a configuration $\vec{h}_l$ where exactly $l$ hypotheses are false nulls:
\begin{equation}
\label{eq:pil_final}
\Pi_l(\vec{D}) = \frac{1}{l} \mathbb{E}_{\vec{h}_l}\left[\sum_{k=1}^l D_k(\vec{X})\right], \quad 1 \leq l \leq K.
\end{equation}

To constrain erroneous rejections, we define two standard error rates. 
The family-wise error rate (FWER), denoted $\mathrm{FWER}_l$, is:
\begin{equation}
\label{eq:fwer_final}
\mathrm{FWER}_l(\vec{D}) = \mathbb{P}_{\vec{h}_l}(V > 0), \quad 0 \leq l < K.
\end{equation}
The false discovery rate (FDR), denoted $\mathrm{FDR}_l$, is:
\begin{equation}
\label{eq:fdr_final}
\mathrm{FDR}_l(\vec{D}) = \mathbb{E}_{\vec{h}_l}\left[\frac{V}{R} ; R > 0\right], \quad  0 \leq l < K.
\end{equation}

It is well-established that FWER provides stricter control than FDR. 
This relationship is formalized in Appendix~\ref{app:proof_fwer_implies_fdr}.
Therefore any procedure that controls $\mathrm{FWER}_l(\vec{D}) \le \alpha$ automatically controls $\mathrm{FDR}_l(\vec{D}) \le \alpha$ at the same level.

\subsection{Searching for the Most Powerful Test via Optimization}
\label{sec:most-powerulf-via-optimize}
This subsection synthesizes the preceding definitions to cast the search for an optimal multiple testing procedure as a constrained optimization problem. 

The central objective can be framed as maximizing statistical power, $\Pi(\vec{D})$, subject to the constraint that a chosen error function, $\mathrm{Err}_{\vec{h}_l}(\vec{D})$, does not exceed a prespecified level $\alpha \in (0, 1)$:
\begin{equation}
\label{eq:mtpintitial1}
\max_{\vec{D} \in \{0,1\}^K} \ \Pi(\vec{D}), \quad \text{subject to} \quad \mathrm{Err}_{\vec{h}_l}(\vec{D}) \leq \alpha, \quad  0 \leq l < K.
\end{equation}
In this formulation, $\Pi(\vec{D})$ can be $\Pi_{\text{any}}(\vec{D})$ from \eqref{eq:piany_final} or $\Pi_l(\vec{D})$ from \eqref{eq:pil_final}, and $\mathrm{Err}_{\vec{h}_l}(\vec{D})$ can be $\mathrm{FDR}_l(\vec{D})$ from \eqref{eq:fdr_final} or $\mathrm{FWER}_l(\vec{D})$ from \eqref{eq:fwer_final}.

Our main theoretical results focus on the specific instance where we maximize $\Pi_{l=3}(\vec{D})$ under FWER constraints with $K=3$ hypotheses:
\begin{equation}\label{eq:mtpintitial2}
\begin{aligned}
\max_{\vec{D} \in \{0,1\}^3} \ &  \Pi_{l=3}(\vec{D}) \\
\text { s.t. } & \mathrm{FWER}_{l}(\vec{D})\leq \alpha, \ 0 \leq l\leq 2. \\
\end{aligned}
\end{equation} 
While the framework developed in the remainder of this section applies to the general problem \eqref{eq:mtpintitial1}, our primary contribution, presented in Section \ref{sec:main_result}, is a necessary condition for the optimal solution to the specific problem \eqref{eq:mtpintitial2}.

\paragraph{Two Conditions and Consequences}
We now introduce two fundamental conditions, following \cite{RHPA22}, that impose structure on the joint distribution of the data and are crucial for simplifying the optimization problem defined in \eqref{eq:mtpintitial1}.
The first condition, $\vec{h}$-exchangeability, requires the joint distribution to be symmetric under permutation, ensuring that hypothesis identities convey no information beyond their truth status.
\begin{assumption}[$\vec{h}$-Exchangeability]
\label{as:assumption1}
\label{assumption:a1}
The $K$ tests are $\vec{h}$-exchangeable if, for all $\sigma \in S_{K}$, we have 
$$
\mathcal{L}_{\vec{h}}(\vec{X})=\mathcal{L}_{\sigma(\vec{h})}[\sigma(\vec{X})],
$$
where $\mathcal{L}_{\vec{h}}(\vec{X})$ denotes the joint likelihood for data $\vec{X}$ and configuration $\vec{h}$.
\end{assumption}
This exchangeability motivates restricting our attention to symmetric decision functions, which are invariant to the ordering of the hypotheses.
\begin{definition}[Symmetric Decision Function]
\label{def:decision_rule}
A decision function $\vec{D}: \mathcal{X} \rightarrow \{0,1\}^{K}$ is symmetric if, for all $\vec{X} \in \mathcal{X}$ and $\sigma \in S_{K}$,
$$
\sigma(\vec{D}(\vec{X}))=\vec{D}(\sigma(\vec{X})).
$$
\end{definition}

The second condition, arrangement-increasing, imposes a monotonicity structure requiring the joint likelihood to increase when stronger evidence aligns with true alternatives.
\begin{assumption}[Arrangement-Increasing]
\label{as:assumption2}
The likelihood function $\mathcal{L}_{\vec{h}}(\cdot)$ is arrangement increasing if, for any pair of indices $i \neq j$ and data $\vec{X}$ such that
\begin{equation}\label{eq:assumption2_eq1}
    (\Lambda_i(\vec{X}_{i})-\Lambda_j(\vec{X}_{j}))(h_{i}-h_{j}) \leq 0
\end{equation}
holds, then upon exchanging entries $i$ and $j$ in $\vec{X}$ to create $\sigma_{i j}(\vec{X})$, we have
\begin{equation}
\mathcal{L}_{\vec{h}}(\vec{X}) \leq \mathcal{L}_{\vec{h}}(\sigma_{i j}(\vec{X})).
\end{equation}
\end{assumption}
The primary consequence of the arrangement-increasing assumption is that it justifies restricting our search to Likelihood Ratio-Ordered Decision Policies, which we discuss below.

\paragraph{Likelihood Ratio-Ordered Decision Policies}
We restrict attention to decision rules that are \emph{likelihood ratio (LR)-ordered}, meaning that hypotheses with stronger evidence against the null (larger likelihood ratios) are rejected at least as readily as those with weaker evidence. 
\begin{definition}[LR-ordered Decision Function]\label{def:lrordering}
A decision function $\vec{D}: \mathcal{X} \rightarrow\{0,1\}^{K}$ is LR-ordered if
$$
\Lambda_i(\vec{X}_{i}) \geq \Lambda_j(\vec{X}_{j}) \Rightarrow D_{i}(\vec{X}) \geq D_{j}(\vec{X}).
$$
\end{definition}
This restriction is without loss of optimality under the structural conditions introduced earlier. 
In particular, under Assumptions \ref{as:assumption1} (exchangeability) and \ref{as:assumption2} (arrangement-increasing), \citet[Theorem~1 and Corollary~1]{RHPA22} show that for the power criteria $\Pi_{\text{any}}$ or $\Pi_l$ and for error criteria $\mathrm{FDR}$ or $\mathrm{FWER}$, any symmetric decision rule can be replaced by an LR-ordered symmetric rule that weakly improves power while not increasing the error. 
Consequently, in solving \eqref{eq:mtpintitial1} (and its specialized instances), it suffices to search over the class of symmetric LR-ordered policies.
The full details are provided in Appendix~\ref{app:lrorder_full}.

\subsection{Reformulating the MHT Using P-values}
\label{sec:p-values}
In this subsection, we reformulate the multiple hypotheses testing problem in the domain of p-values, allowing decision rules to be expressed as functions of an ordered p-value vector. Additionally, we demonstrate a key structural property of the power and error functions defined earlier.

We now transition from likelihood ratios to their corresponding p-values, denoted $u_i$. 
For a realized likelihood ratio $\Lambda_{i}$ for the $i$-th hypothesis, the p-value $u_i$ is:
$$
u_{i}
=\int I\left\{\vec{X}_{i}: \frac{f_{i,1}(\vec{X}_{i})}{f_{i,0}(\vec{X}_{i})} \geq \Lambda_{i}\right\} f_{i,0}(\vec{X}_{i}) d \vec{X}_{i}, 
$$
where $f_{i,0}(\cdot)$ and $f_{i,1}(\cdot)$ are the null and alternative densities, respectively, as defined in \eqref{eq:true_joint_dist}.

Under $H_{0k}$, $u_k \sim U(0,1)$, and under $H_{Ak}$, we denote the alternative density by $g_k(\cdot)$:
\begin{equation}\label{eq:mtp1st1}
    H_{0k}\colon u_k \sim U(0, 1) \ \text{vs} \ H_{Ak}\colon u_k \sim g_k, \ \text{for} \  k=1,2,\cdots,K. 
\end{equation} 
To maintain symmetry, we assume $g_k(\cdot)=g(\cdot)$ for all $k$. 
Given symmetry and LR-ordering (small p-values correspond to large LRs), it suffices to define $\vec{D}$ on the ordered subset
$$
Q =\left\{u: 0 \leq u_{1} \leq u_{2} \leq \cdots \leq u_{K} \leq 1\right\},
$$
and extend by symmetry to $[0,1]^K$.
Let $\lambda(u)$ denote the likelihood ratio threshold corresponding to a p-value $u$, defined implicitly by:
\begin{equation}\label{eq:pvalue}
    u=\int \mathbb{I}\left\{\frac{f_1(x)}{f_0(x)}\geq \lambda(u)\right\}f_0(x) dx, 
\end{equation}
where, for notational simplicity, we suppress the hypothesis index under the symmetric setup.
The following lemma connects $\lambda(\cdot)$ to the alternative p-value density $g(\cdot)$.
\begin{lemma}
\label{lem:gdot}
Suppose $g(\cdot)$ is the identical alternative density function defined in \eqref{eq:mtp1st1}, and $\lambda(u)$ is the likelihood ratio threshold defined in \eqref{eq:pvalue} for $0 \le u \le 1$. 
If $g(\cdot)$ is continuous on $(0,1)$ and differentiable almost everywhere, then we have, for $0 \le u \le 1$, 
$$
g(u)=\lambda(u).
$$
\end{lemma}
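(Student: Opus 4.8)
The plan is to obtain $g$ as the derivative of the alternative cumulative distribution function (CDF) of the p-value, and to exploit the identity $f_1(x) = \Lambda(x) f_0(x)$, where $\Lambda(x) := f_1(x)/f_0(x)$, on the level sets of $\Lambda$.

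First I would record that \eqref{eq:pvalue} expresses $u$ as the value at $\lambda(u)$ of the null survival function of the likelihood ratio, $G_0(\lambda) := \int \mathbb{I}\{\Lambda(x) \ge \lambda\}\, f_0(x)\, dx$; thus $u = G_0(\lambda(u))$ and $\lambda(\cdot) = G_0^{-1}(\cdot)$ is the strictly decreasing inverse. By this monotonicity, the p-value $U = G_0(\Lambda)$ satisfies $U \le u \iff \Lambda \ge \lambda(u)$, so its alternative CDF is $F_1(u) = \int \mathbb{I}\{\Lambda(x) \ge \lambda(u)\}\, f_1(x)\, dx = \int \mathbb{I}\{\Lambda(x) \ge \lambda(u)\}\, \Lambda(x)\, f_0(x)\, dx$, where the second equality uses $f_1 = \Lambda f_0$.

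Next I would differentiate in $u$. Introducing the two threshold integrals $\phi(\lambda) := \int \mathbb{I}\{\Lambda \ge \lambda\}\, f_0\, dx$ and $\psi(\lambda) := \int \mathbb{I}\{\Lambda \ge \lambda\}\, \Lambda\, f_0\, dx$, we have $u = \phi(\lambda(u))$ and $F_1(u) = \psi(\lambda(u))$. Writing $h_0$ for the null density of $\Lambda$, one obtains $\phi'(\lambda) = -h_0(\lambda)$ and, since $\psi(\lambda) = \int_\lambda^\infty t\, h_0(t)\, dt$, also $\psi'(\lambda) = -\lambda\, h_0(\lambda)$. The chain rule then gives $1 = \phi'(\lambda(u))\, \lambda'(u)$ and $g(u) = F_1'(u) = \psi'(\lambda(u))\, \lambda'(u)$; dividing these two relations eliminates $\lambda'(u)$ and $h_0$ and yields $g(u) = \psi'(\lambda(u))/\phi'(\lambda(u)) = \lambda(u)$. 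Conceptually, on the level set $\{\Lambda = \lambda\}$ the extra weight $\Lambda$ carried by $\psi$ is the constant $\lambda$, which is exactly why the two derivatives differ by this factor.

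The step I expect to be the main obstacle is making this differentiation rigorous rather than formal: it requires that $\Lambda$ carry no atoms under either hypothesis, so that $G_0$ is continuous and strictly monotone and $U$ is genuinely continuously distributed, and that the level-set integrals $\phi, \psi$ be differentiable with the stated derivatives. This is precisely where the assumption that $g$ is continuous on $(0,1)$ and differentiable almost everywhere enters: it guarantees that $\lambda(\cdot) = G_0^{-1}$ is absolutely continuous with $\lambda'(u) = -1/h_0(\lambda(u))$, so the chain-rule manipulation is valid at almost every $u$, after which continuity of both $g$ and $\lambda$ upgrades the equality $g(u) = \lambda(u)$ to all of $(0,1)$. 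An equivalent and perhaps cleaner route bypasses $F_1$ altogether, using the change-of-variables formula $g(u) = h_1(\lambda(u))\, |\lambda'(u)|$ together with the standard identity $h_1(t) = t\, h_0(t)$ relating the null and alternative densities of a likelihood ratio; the analytical content, and the same regularity caveat, are identical.
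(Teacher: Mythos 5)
Your proposal rests on the same key identity as the paper's proof, namely $f_1(x)=\Lambda(x)f_0(x)$ restricted to level sets of $\Lambda$, so that the alternative mass of a likelihood-ratio band equals its null mass weighted by a factor trapped near $\lambda$. But the route you take through it is genuinely different: you differentiate. You write $u=\phi(\lambda(u))$ and $F_1(u)=\psi(\lambda(u))$, introduce the null density $h_0$ of $\Lambda$, compute $\phi'=-h_0$ and $\psi'(\lambda)=-\lambda h_0(\lambda)$, and divide to cancel $h_0$ and $\lambda'(u)$. The paper never differentiates in $\lambda$ and never introduces $h_0$. Instead it works with a finite increment: by \eqref{eq:pvalue} the band $\{\lambda(\tau+\Delta)\le \Lambda\le \lambda(\tau)\}$ has null mass \emph{exactly} $\Delta$, so its alternative mass $\int \mathbb{I}\{\lambda(\tau)\ge \Lambda\ge \lambda(\tau+\Delta)\}\,\Lambda f_0\,dx$ is sandwiched between $\Delta\,\lambda(\tau+\Delta)$ and $\Delta\,\lambda(\tau)$; the mean value theorem for integrals writes the same quantity as $g(u^*)\Delta$ with $u^*\in(\tau,\tau+\Delta)$, and letting $\Delta\to 0$ with continuity of $g$ gives $\lambda(\tau+\Delta)\le g(u^*)\le\lambda(\tau)$ and hence $g(\tau)=\lambda(\tau)$. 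The payoff of the paper's version is that it needs nothing about the distribution of $\Lambda$ beyond what \eqref{eq:pvalue} already encodes; the payoff of yours is that it makes the mechanism transparent (the weight on the level set $\{\Lambda=\lambda\}$ is the constant $\lambda$) and generalizes to the change-of-variables identity $h_1(t)=t\,h_0(t)$.

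The residual issue is the one you yourself flag, and your proposed repair does not quite close it. The hypotheses of the lemma concern $g$ (continuity on $(0,1)$, a.e.\ differentiability); they do not imply that $G_0$ is absolutely continuous with a positive density $h_0$ at $\lambda(u)$. A continuous, strictly decreasing $G_0$ can carry a singular part, in which case $h_0$ fails to exist as a density, $\lambda'(u)=-1/h_0(\lambda(u))$ is meaningless on a set of positive measure, and the a.e.\ chain rule for the composition $u=\phi(\lambda(u))$ can genuinely fail (the Cantor function composed with its inverse is the standard obstruction). So your derivation is valid under the additional, unstated assumption that $\Lambda$ has a (positive, continuous) null density near the relevant thresholds, which is stronger than what the lemma grants. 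The paper's sandwich argument is precisely the device that dispenses with this: because the null band mass is pinned to $\Delta$ by the definition of the p-value, the two-sided bound on the alternative band mass holds with no differentiation in $\lambda$ at all, and the only analytic input consumed is continuity of $g$ via the mean value theorem. If you want to keep your differentiation route, you should either add the density assumption on $\Lambda$ explicitly, or replace the chain-rule step by the finite-increment sandwich, at which point your argument collapses into the paper's.
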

\begin{proof}
See the proof in Appendix \ref{app:proof-g-monotone}.
\end{proof}
Since $\lambda(\cdot)$ is non-increasing in $u$, Lemma~\ref{lem:gdot} implies that $g(\cdot)$ is also non-increasing on $[0,1]$.

\paragraph{Linear Representation of the Power and Error Functions}
It can be further shown that the power and error functions admit linear representations as functionals of $\vec{D}$ on $Q$, following \cite{RHPA22}. 
\begin{theorem}(\cite[Equations (4) and (7)]{RHPA22}) 
\label{thm:powererrorfn} 
Taking into account $\vec{h}$-exchangeability and symmetry, the power function $\Pi_l(\vec{D})$, defined in \eqref{eq:pil_final}, can be written as:
\begin{equation}\label{eq:pillinear1}
\begin{aligned}
\Pi_l(\vec{D}) = (l-1)!(K-l)! \int_{Q} \sum_{i \in {\binom{K}{l}}} f_{i}(\vec{u}) \sum_{k \in i} D_{k}(\vec{u}) d\vec{u}, \ 1 \leq l \leq K.
\end{aligned}
\end{equation}
Additionally, the family-wise error rate admits the linear representation: 
\begin{equation}\label{eq:fwer_linear}
    FWER_l(\vec{D}) = l!(K - l)! \int_Q \sum_{k=1}^K D_k(\vec{u}) \sum_{\substack{i \in \binom{K}{l} \\ \Bar{i}_{\min}=k}} f_i(\vec{u}) \, d\vec{u}, \quad  0 \leq l < K.
\end{equation}  
\end{theorem}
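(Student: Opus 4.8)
The plan is to prove both identities from a common template: expand each performance metric as an integral of the appropriate integrand against the joint $p$-value density under $\vec{h}_l$ over the cube $[0,1]^K$, partition the cube into its $K!$ ordering regions, and push each piece onto the sorted simplex $Q$ by a measure-preserving coordinate permutation. The two structural inputs are Assumption~\ref{as:assumption1} (which fixes the configuration density under $\vec{h}_l$ as a product of $g$'s on the alternative coordinates and $1$'s on the null coordinates in the $p$-value domain) and the symmetry of $\vec{D}$ (Definition~\ref{def:decision_rule}), which transports the decision rule across the permutation via the relation $D_j(\tau(\vec{v}))=D_{\tau^{-1}(j)}(\vec{v})$.

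Concretely, for $\Pi_l$ I would first write $\mathbb{E}_{\vec{h}_l}[\sum_{k=1}^l D_k]=\int_{[0,1]^K}\big(\sum_{k=1}^l D_k(\vec{u})\big)\prod_{j=1}^l g(u_j)\,d\vec{u}$, the alternative coordinates being $\{1,\ldots,l\}$ under $\vec{h}_l$. For $\vec{u}$ in general position let $\tau\in S_K$ be its sorting permutation and substitute $\vec{v}\in Q$ with $v_m=u_{\tau(m)}$ (unit Jacobian), so that $\tau(\vec{v})=\vec{u}$. Writing $i=\tau^{-1}(\{1,\ldots,l\})$ for the sorted positions that carry an alternative, the density factor becomes $\prod_{m\in i} g(v_m)$, which is exactly the configuration density $f_i(\vec{v})$; and the symmetry relation turns $\sum_{k=1}^l D_k(\vec{u})$ into $\sum_{m\in i} D_m(\vec{v})$, since $\tau^{-1}$ carries $\{1,\ldots,l\}$ onto $i$. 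Summing over the ordering regions amounts to summing over $\tau\in S_K$; for each fixed $l$-subset $i$ there are exactly $l!\,(K-l)!$ permutations with $\tau^{-1}(\{1,\ldots,l\})=i$, which produces the factor $l!\,(K-l)!$ in front of $\int_Q \sum_{i\in\binom{K}{l}} f_i \sum_{k\in i} D_k$. Dividing by $l$ as in \eqref{eq:pil_final} collapses $l!\,(K-l)!/l$ to $(l-1)!\,(K-l)!$, giving \eqref{eq:pillinear1}.

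For the FWER the same change of variables applies, but now the integrand is $\mathbb{I}\{V>0\}$ with $V=\sum_{k=l+1}^K D_k$ the number of rejected true nulls under $\vec{h}_l$. After moving to $Q$ the event becomes $\{\sum_{m\in\bar{i}} D_m(\vec{v})>0\}$, where $\bar{i}=\tau^{-1}(\{l+1,\ldots,K\})$ is the set of sorted null positions. Here the additional ingredient is the LR-ordering restriction (Definition~\ref{def:lrordering}): on $Q$ the decisions form a nonincreasing $0/1$ sequence $D_1\ge\cdots\ge D_K$, so the rejection set is a prefix and at least one null is rejected if and only if the \emph{smallest} null position is rejected, i.e.\ $\mathbb{I}\{V>0\}=D_{\bar{i}_{\min}}(\vec{v})$. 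The permutation count again contributes $l!\,(K-l)!$, and regrouping $\sum_{i} D_{\bar{i}_{\min}} f_i$ by the common value $\bar{i}_{\min}=k$ yields the inner sum $\sum_{i:\bar{i}_{\min}=k} f_i$ of \eqref{eq:fwer_linear}.

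The routine part is the change-of-variables bookkeeping, and the main thing to get right is the permutation algebra: carefully tracking how $\tau$ relabels both the alternative coordinates (to pin down $i$ and hence $f_i$) and the components of $\vec{D}$ (via symmetry), and verifying the $l!\,(K-l)!$ count. The one genuinely structural step, specific to the error functional, is the LR-ordering reduction of the union event $\{V>0\}$ to the single indicator $D_{\bar{i}_{\min}}$; without the prefix structure of LR-ordered rules this collapse would fail and the representation would no longer be linear in $\vec{D}$. Ties among the $u_j$ form a set of Lebesgue measure zero and are discarded throughout.
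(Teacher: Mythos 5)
Your proposal is correct, but a direct comparison with ``the paper's proof'' is not possible for an interesting reason: the paper never proves this theorem. It is imported verbatim from \cite[Equations (4) and (7)]{RHPA22}, and the paper's own contribution is limited to unpacking the notation (the configuration densities $f_i(\vec u)=\prod_{k\in i}g(u_k)$ and the index $\bar i_{\min}=\min(i^c)$ in Appendix~\ref{app:linear_rep_notation}) and then specializing the formulas to $K=3$ in the proof of Lemma~\ref{lemma:optimization_problem_k_3}. Your symmetrization argument --- expand the expectation over $[0,1]^K$, partition into the $K!$ ordering regions, push each region onto $Q$ by the sorting permutation with unit Jacobian, use symmetry of $\vec D$ to transport the decisions, and count the $l!\,(K-l)!$ permutations that induce a given configuration $i=\tau^{-1}(\{1,\dots,l\})$ --- is exactly the bookkeeping the paper's appendix gestures at but does not carry out, and your factor arithmetic is right ($l!\,(K-l)!/l=(l-1)!\,(K-l)!$ for the power; $l!\,(K-l)!$ for the FWER). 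Your specialization also reproduces the paper's explicit $K=3$ constraints (e.g.\ $FWER_0=6\int_Q D_1$, $FWER_1=2\int_Q[D_1(g(u_2)+g(u_3))+D_2g(u_1)]$), which is a useful consistency check.

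One point deserves emphasis, and you handle it correctly where the theorem's own statement is silent: the hypothesis list ``$\vec h$-exchangeability and symmetry'' is not sufficient for the FWER identity. The collapse $\mathbb{I}\{V>0\}=D_{\bar i_{\min}}(\vec v)$ requires the prefix (LR-ordered) structure $D_1\ge D_2\ge\cdots\ge D_K$ on $Q$; without it the union event is not linear in $\vec D$ and \eqref{eq:fwer_linear} fails. This assumption is supplied implicitly by the paper's standing restriction to symmetric LR-ordered policies (Definition~\ref{def:lrordering}, Appendix~\ref{app:lrorder_full}, and the constraint $0\le D_3\le D_2\le D_1\le 1$ in Lemma~\ref{lemma:optimization_problem_k_3}), so your proof is faithful to the intended setting; it would only strengthen your write-up to state LR-ordering as an explicit hypothesis of the FWER half of the theorem.
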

Here $f_i(\vec{u})$ denotes the joint density of $\vec{u}$ under the configuration in which the hypotheses indexed by $i$ are false nulls (and the complement are true nulls); the constraint $\Bar{i}_{\min}=k$ identifies the relevant term in the representation of $\mathrm{FWER}_l$ (see Appendix~\ref{app:linear_rep_notation} for exhaustive notational details).
To unify the problem into a generic optimization framework, we express both power and error functions in a standard linear form:
\begin{equation}
\label{eq:power_general}
    \Pi(\vec{D}) = \int_{Q}\left(\sum_{k=1}^{K} a_{k}(\vec{u}) D_{k}(\vec{u})\right) d \vec{u}
\end{equation}
and
\begin{equation}\label{eq:error_general}
    Err_{\vec{h}_{l}}(\vec{D}) 
    = \int_{Q}\left(\sum_{k=1}^{K} b_{l, k}(\vec{u}) D_{k}(\vec{u})\right) d \vec{u} \leq \alpha, \quad  0 \leq l < K.
\end{equation} 
The coefficient functions $a_k(\vec{u})$ and $b_{l,k}(\vec{u})$ are fixed, non-negative functions defined over $Q$. 
With these linear representations, the general problem \eqref{eq:mtpintitial1} can be rewritten in the canonical form:
\begin{equation} \label{eq:mtpopt1}
\begin{aligned}
\max_{\vec{D}} \ & \int_{Q}\left(\sum_{i=1}^{K} a_{i}(\vec{u}) D_{i}(\vec{u})\right) d \vec{u} \\ 
\text{ s.t. } \ & \int_{Q}\left(\sum_{i=1}^{K} b_{l, i}(\vec{u}) D_{i}(\vec{u})\right) d \vec{u} \leq \alpha, \quad  0 \leq l < K,
\end{aligned}
\end{equation} 
where $\vec{D}$ is restricted to be LR-ordered and symmetric. We denote the optimal value of this primal problem by $f^\ast$.

\subsection{Lagrangian Dual of the Central Optimization Problem} 
\label{sec:lagdualprob}
This subsection constructs the Lagrangian dual of \eqref{eq:mtpopt1}, following \cite{RHPA22}, and states strong duality conditions that justify solving the primal via the dual.

\begin{lemma}[Dual Problem, \cite{RHPA22}]\label{lem:dual_problem}
Let the Lagrangian associated with the primal problem~\eqref{eq:mtpopt1} be
\begin{equation}\label{eq:lagrangian1}
  L(\vec D,\mu)=
    \sum_{l=0}^{K-1}\mu_{l}\alpha
    +\int_{Q}\sum_{i=1}^{K}D_i(\vec u)
      R_i(\mu,\vec u)d\vec u,
\end{equation}
where the dual vector is
\(
  \mu=(\mu_{0},\dots,\mu_{K-1})^{\top}\!\ge 0
\)
and
\begin{equation}\label{eq:rimu1}
  R_{i}(\mu,\vec u)
  =
  a_i(\vec u)-\sum_{l=0}^{K-1}\mu_lb_{l,i}(\vec u),\qquad
  1\le i\le K.
\end{equation}
For every fixed \(\mu\ge 0\), define
\begin{equation}\label{eq:l*}
  l^{\ast}(\mu,\vec u)
  =
  \argmax_{1\le l\le K}
      \Bigl\{0,\sum_{i=1}^{l}R_i(\mu,\vec u)\Bigr\},
\end{equation}
and the corresponding (LR-ordered) policy
\begin{equation}\label{eq:optdecision1}
  D^{\mu}_{i}(\vec u)
  =
  \mathbf 1\!\bigl\{1\le i\le l^{\ast}(\mu,\vec u)\bigr\},
  \qquad 1\le i\le K.
\end{equation}
Then
\begin{equation}
\label{eq:opt_decision1}
  \max_{\vec D \in \mathcal{D}} L(\vec D,\mu)
  =
  L\bigl(\vec D^{\mu},\mu\bigr),
\end{equation}
where $\mathcal{D}$ is the set of symmetric, LR-ordered decision functions. Consequently, the dual program reduces to:
\begin{equation}\label{eq:ld_mu}
\min_{\mu\geq 0} \max_{\vec{D} \in \mathcal{D}} L(\vec{D}, \mu) \stackrel{\mbox{\eqref{eq:opt_decision1}}}{=} \min_{\mu\geq 0}  L(\vec{D}^{\mu}, \mu).
\end{equation}
\end{lemma}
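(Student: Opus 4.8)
The plan is to reduce the inner maximization $\max_{\vec D \in \mathcal D} L(\vec D,\mu)$ to a pointwise (in $\vec u$) optimization over the sorted simplex $Q$ and to identify the pointwise optimizer as the threshold rule $\vec D^{\mu}$. The first observation is that the additive term $\sum_{l=0}^{K-1}\mu_l\alpha$ in \eqref{eq:lagrangian1} does not depend on $\vec D$, so maximizing $L(\cdot,\mu)$ over $\mathcal D$ is equivalent to maximizing the integral functional $\int_Q \sum_{i=1}^K D_i(\vec u)\,R_i(\mu,\vec u)\,d\vec u$.

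Next I would exploit the structure of $\mathcal D$. Every $\vec D \in \mathcal D$ is symmetric, hence fully determined by its restriction to $Q$, and on $Q$ we have $u_1 \le \cdots \le u_K$, equivalently $\Lambda_1 \ge \cdots \ge \Lambda_K$. The LR-ordering constraint of Definition~\ref{def:lrordering} then forces $D_1(\vec u) \ge D_2(\vec u) \ge \cdots \ge D_K(\vec u)$; since each $D_i(\vec u) \in \{0,1\}$, the decision vector at every $\vec u \in Q$ is a block of ones followed by zeros, i.e.\ there is an integer $l(\vec u) \in \{0,1,\ldots,K\}$ with $D_i(\vec u) = \mathbf 1\{1 \le i \le l(\vec u)\}$. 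Conversely, any measurable threshold function $\vec u \mapsto l(\vec u)$ yields, after symmetric extension to $[0,1]^K$, an admissible element of $\mathcal D$. Thus the feasible set is parametrized exactly by $l(\cdot)$, and the objective becomes $\int_Q \sum_{i=1}^{l(\vec u)} R_i(\mu,\vec u)\,d\vec u$.

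Then I would argue that, because the integrand at each $\vec u$ depends only on $l(\vec u)$ and the LR-ordering constraint is purely pointwise, the functional is maximized by choosing the best threshold independently at each $\vec u$. Concretely, for any admissible $\vec D$ and every $\vec u$ one has $\sum_{i=1}^{l(\vec u)} R_i(\mu,\vec u) \le \max\{0,\ \max_{1\le l\le K}\sum_{i=1}^{l} R_i(\mu,\vec u)\}$, with the upper envelope attained by $l(\vec u)=l^{\ast}(\mu,\vec u)$ from \eqref{eq:l*}; integrating this pointwise inequality over $Q$ gives $L(\vec D,\mu) \le L(\vec D^{\mu},\mu)$, with equality at $\vec D = \vec D^{\mu}$. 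The $0$ inside the maximum encodes the option $l=0$ (reject nothing), which is optimal precisely when all partial sums $\sum_{i=1}^{l} R_i$ are negative. This establishes \eqref{eq:opt_decision1}, and substituting it into the definition of the dual yields the reduction \eqref{eq:ld_mu} at once.

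The main obstacle I anticipate is the rigorous justification of interchanging maximization and integration, i.e.\ that the maximum of the integral equals the integral of the pointwise maximum. Its two ingredients are (i) admissibility of $\vec D^{\mu}$, which follows because each $R_i(\mu,\cdot)$ is a fixed measurable combination of the coefficient functions $a_i$ and $b_{l,i}$, so that the $\argmax$ over the finite set $\{0,\ldots,K\}$ defines a measurable $l^{\ast}(\mu,\cdot)$ and a threshold rule that is LR-ordered by construction; and (ii) the claim that no admissible $\vec D$ can exceed the pointwise envelope. Both reduce to the separability of the functional together with the fact that the only constraint coupling the coordinates, namely LR-ordering, relates $D_i(\vec u)$ and $D_j(\vec u)$ at the \emph{same} $\vec u$ and is automatically satisfied by every threshold rule, so the global maximization genuinely decouples across $\vec u$.
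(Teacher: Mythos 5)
Your proposal is correct and follows essentially the same route as the paper's proof: both drop the $\vec D$-independent term, use the fact that symmetric LR-ordered policies on $Q$ are exactly threshold ("ones-then-zeros") rules, and then exchange maximization and integration by optimizing the threshold pointwise, which yields $\max_{\vec D}\int_Q\sum_i D_i R_i\,d\vec u=\int_Q\max\{0,\max_{1\le l\le K}\sum_{i=1}^{l}R_i\}\,d\vec u$ and identifies $\vec D^{\mu}$ as the maximizer. Your explicit treatment of measurability of $l^{\ast}(\mu,\cdot)$ and of the pointwise envelope inequality is a slightly more careful rendering of the interchange step the paper asserts directly, but it is the same argument.
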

\begin{proof}
See the proof in Appendix \ref{app:dual_problem}.
\end{proof}

We denote the optimal value of the dual problem by $d^\ast$, and the task reduces to finding $\mu^\ast \ge 0$ such that
\[
\mu^*=\underset{\mu \geq 0}{\operatorname{argmin}} \quad L(\vec{D}^{\mu}, \mu).
\]
The connection between $\mu^\ast$ and the optimal primal policy $\vec{D}^{\mu^\ast}$ is formalized by the KKT complementary slackness conditions ensuring strong duality:
\begin{prop}
\label{prop:optmu}
(\cite[Proposition 1]{RHPA22}) If there exists a $\mu^{*} \ge \vec{0}$ such that the policy $\vec{D}^{\mu^{*}}$ defined in \eqref{eq:optdecision1} satisfies for all $l=0, \ldots, K-1:$
\begin{equation*}
\begin{gathered}
\left(\mu_{l}^{*} > 0 \text { and } \int_{Q}\left(\sum_{i=1}^{K} b_{l, i}(\vec{u}) D_{i}^{\mu^{*}}(\vec{u})\right) d \vec{u}=\alpha\right) \text { or } \\
\left(\mu_{l}^{*}=0 \text { and } \int_{Q}\left(\sum_{i=1}^{K} b_{l, i}(\vec{u}) D_{i}^{\mu^{*}}(\vec{u})\right) d \vec{u} \leq \alpha\right),
\end{gathered}
\end{equation*}
then $\vec{D}^{\mu^{*}}$ is an optimal policy for Problem \eqref{eq:mtpopt1}.
\end{prop}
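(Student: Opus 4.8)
The plan is to run the classical Karush--Kuhn--Tucker sufficiency argument, exploiting that the inner maximizer $\vec{D}^{\mu^{*}}$ supplied by Lemma~\ref{lem:dual_problem} already solves the unconstrained Lagrangian problem, and then using the stated complementary-slackness conditions to close the gap between the Lagrangian value and the primal objective. The first step I would take is to rewrite the Lagrangian \eqref{eq:lagrangian1} by substituting $R_i$ from \eqref{eq:rimu1} and separating the power and error integrals, obtaining the identity
\begin{equation}
L(\vec{D}, \mu) = \Pi(\vec{D}) + \sum_{l=0}^{K-1} \mu_{l}\bigl(\alpha - \mathrm{Err}_{\vec{h}_{l}}(\vec{D})\bigr),
\end{equation}
valid for every $\vec{D} \in \mathcal{D}$ and every $\mu \ge \vec{0}$, where $\Pi$ and $\mathrm{Err}_{\vec{h}_{l}}$ are the linear functionals in \eqref{eq:power_general}--\eqref{eq:error_general}.

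Next, I would establish two consequences of this identity at $\mu = \mu^{*}$. First, for the candidate policy, the hypothesized conditions force each product $\mu_{l}^{*}\bigl(\alpha - \mathrm{Err}_{\vec{h}_{l}}(\vec{D}^{\mu^{*}})\bigr)$ to vanish: when $\mu_{l}^{*} > 0$ the error equals $\alpha$, and when $\mu_{l}^{*} = 0$ the factor is zero outright. Summing over $l$ yields $L(\vec{D}^{\mu^{*}}, \mu^{*}) = \Pi(\vec{D}^{\mu^{*}})$. The same two cases also show that $\vec{D}^{\mu^{*}}$ is primal-feasible, since each branch delivers $\mathrm{Err}_{\vec{h}_{l}}(\vec{D}^{\mu^{*}}) \le \alpha$. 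Second, for an arbitrary feasible competitor $\vec{D} \in \mathcal{D}$ with $\mathrm{Err}_{\vec{h}_{l}}(\vec{D}) \le \alpha$ for all $l$, non-negativity of each $\mu_{l}^{*}$ gives $\mu_{l}^{*}\bigl(\alpha - \mathrm{Err}_{\vec{h}_{l}}(\vec{D})\bigr) \ge 0$, whence $L(\vec{D}, \mu^{*}) \ge \Pi(\vec{D})$.

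The conclusion then follows from a single chain of inequalities. Invoking \eqref{eq:opt_decision1} of Lemma~\ref{lem:dual_problem}, which asserts that $\vec{D}^{\mu^{*}}$ maximizes $L(\cdot, \mu^{*})$ over $\mathcal{D}$, I would write, for any feasible $\vec{D}$,
\begin{equation}
\Pi(\vec{D}^{\mu^{*}}) = L(\vec{D}^{\mu^{*}}, \mu^{*}) \ge L(\vec{D}, \mu^{*}) \ge \Pi(\vec{D}),
\end{equation}
where the first equality is the complementary-slackness collapse, the middle inequality is optimality of the inner maximizer, and the last inequality is feasibility of $\vec{D}$. Since $\vec{D}^{\mu^{*}}$ is itself feasible, it attains the primal optimum of \eqref{eq:mtpopt1}.

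The argument is structurally routine, so the only point demanding care is alignment of the feasible sets: the Lagrangian maximization in Lemma~\ref{lem:dual_problem} ranges over the symmetric LR-ordered class $\mathcal{D}$, and I must ensure that the primal maximization in \eqref{eq:mtpopt1} is taken over exactly this same class, so that every feasible primal competitor is admissible in the inequality $L(\vec{D}^{\mu^{*}}, \mu^{*}) \ge L(\vec{D}, \mu^{*})$. This is precisely why the earlier reduction to LR-ordered symmetric rules being \emph{without loss of optimality} is essential: it guarantees that the weak-duality bound certifies \emph{global} primal optimality rather than optimality within a merely restricted class.
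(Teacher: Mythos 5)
Your proof is correct and takes essentially the same route as the paper: the paper itself cites \cite{RHPA22} for Proposition~\ref{prop:optmu}, and its analogous duality-gap argument in Appendix~\ref{sec:d_mu_optimality} rests on exactly the ingredients you use — complementary slackness collapsing $L(\vec{D}^{\mu^{*}},\mu^{*})$ to $\Pi(\vec{D}^{\mu^{*}})$, primal feasibility of $\vec{D}^{\mu^{*}}$, and the inner-maximization property \eqref{eq:opt_decision1} supplying the weak-duality inequality. Your chain $\Pi(\vec{D}^{\mu^{*}}) = L(\vec{D}^{\mu^{*}}, \mu^{*}) \ge L(\vec{D}, \mu^{*}) \ge \Pi(\vec{D})$ is the standard KKT sufficiency argument underlying both, and your closing remark about keeping the primal and Lagrangian maximizations over the same class $\mathcal{D}$ is the right point of care.
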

While an optimal $\mu^\ast$ satisfying Proposition~\ref{prop:optmu} need not exist in general, \cite{RHPA22} guarantees its existence under the following assumption on the joint densities.
\begin{assumption}(\cite[Assumption 3]{RHPA22})
\label{assumption:optmu}
The set of density functions $\left\{f_{\vec{h}}: \vec{h} \in\{0,1\}^{K}\right\}$ is non-redundant:
$$
\sum_{\vec{h}} \gamma_{\vec{h}} f_{\vec{h}}(\vec{u}) \neq 0 \text{ almost everywhere for any fixed vector } \gamma_{\vec{h}} \in \mathbb{R} \text{ with } \sum_{\vec{h}}\left|\gamma_{\vec{h}}\right|>0.
$$
\end{assumption}
While \cite{RHPA22} establishes this theoretical foundation, it proposes only a generic search method for finding $\mu^\ast$. 
This method involves searching the parameter space and numerically evaluating the constraint integrals $\int_{Q}(\sum_{i=1}^{K} b_{l, i}(\vec{u}) D_{i}^{\mu}(\vec{u})) d \vec{u}$ to check the optimality conditions of Proposition \ref{prop:optmu}. 
This leaves a significant gap, as no explicit form or constructive algorithm for obtaining $\mu^\ast$ is provided. 
A primary contribution of our paper, detailed in Section \ref{sec:main_result}, is to fill this gap by providing an explicit characterization and a practical algorithm for computing the optimal Lagrange multipliers, thus making the dual solution method computationally viable.
```

\section{Main Result}
\label{sec:main_result}

This section presents the central contributions of our work, focusing on a specialized yet informative multiple hypothesis testing problem. 
We begin in Section \ref{sec:specialMHT} by formally defining this specific problem, which involves $K=3$ hypotheses with FWER control. 
In Section \ref{sec:optimal-policy}, we derive the explicit analytic form of the optimal decision policy as a function of the dual Lagrangian multipliers. 
We then derive the explicit expression for the dual objective function itself in Section \ref{sec:explorelagrangian}. 
Following this, Section \ref{sec:condition-optimal} establishes the necessary optimality conditions for the dual minimization problem. 
A crucial property, the coordinatewise monotonicity of the target functions, is proven in Section \ref{sec:monotonicty-in-Fs}, with further monotonicity results detailed in Section \ref{sec:more-monotone}. 
Finally, based on this complete theoretical framework, we propose a computationally efficient coordinate-update algorithm in Section \ref{sec:computing01} that numerically solves for the optimal testing procedure.

\subsection{A Special MHT Problem}
\label{sec:specialMHT}

This subsection defines the specific multiple hypothesis testing (MHT) problem that serves as the focus of our theoretical analysis. 
We depart from the general formulation in \eqref{eq:mtpopt1} to identify a computationally tractable case, for which we derive a complete solution.

The specific MHT problem we investigate is a specialized instance of the p-value formulation from \eqref{eq:mtp1st1} where we consider $K=3$ hypotheses. 
This problem is expressed as:
\begin{equation}
\label{eq:mtp}
H_{0k}\colon u_k \sim U(0, 1) \ \text{vs} \ H_{Ak}\colon u_k \sim g, \ \text{for} \  k=1,2,3.
\end{equation}
We assume, as in Section \ref{sec:p-values}, that the alternative density $g(\cdot)$ is identical for all tests and, per Lemma \ref{lem:gdot}, is a nonincreasing monotonic function. 
While our theoretical results are contingent on the $K=3$ setting, this case provides essential insights, and its extension to more general $K$ remains a key direction for future research.

We further specify the problem by selecting the average power $\Pi_{l=3}$ as the objective function to be maximized and the family-wise error rate $FWER_l$ as the error function to be controlled, both of which were formally defined in Section \ref{sec:powererrorfn}. 
This specification leads to the optimization problem previously stated in \eqref{eq:mtpintitial2}, which we restate here for clarity:
\begin{equation} 
\label{eq:MHTK3}
\begin{aligned}
\max_{\vec{D} \in \{0,1\}^3} \ &  \Pi_{3}(\vec{D}) \\
\text { s.t. } & FWER_{l}(\vec{D})\leq \alpha, \ 0 \leq l\leq 2. \\
\end{aligned}
\end{equation} 

By applying the general linear representations from Theorem \ref{thm:powererrorfn}, specifically \eqref{eq:pillinear1} for power and \eqref{eq:fwer_linear} for FWER, and assuming independence between tests, we can derive the explicit optimization problem for our $K=3$ case. 
The following lemma provides this concrete formulation, which forms the basis for all subsequent analysis.

\begin{lemma}
\label{lemma:optimization_problem_k_3}
Consider a multiple hypotheses testing problem with $K = 3$ hypotheses. Let $\vec{D} = (D_1, D_2, D_3): Q \rightarrow [0,1]^3$ denote a decision policy defined on the ordered p-value domain $Q = \{ \vec{u} \in [0,1]^3 : 0 \le u_1 \le u_2 \le u_3 \le 1 \}$. 
Suppose the objective is to maximize the average power $\Pi_3(\vec{D})$ under family-wise error rate (FWER) constraints at a threshold $\alpha \in (0,1)$. 
Then, the problem is equivalent to the following optimization program: 
\begin{equation}
\label{eq:objconst}
\begin{aligned}
\max _{\vec{D}: Q \rightarrow[0,1]^{3}} & 2 \int_{Q}\left(D_{1}(\vec{u})+D_{2}(\vec{u})+D_{3}(\vec{u})\right) g(u_1) g(u_2) g(u_3) d \vec{u} \\
\text { s.t. } & FWER_{0}(\vec{D})=6 \int_{Q} D_{1}(\vec{u}) d \vec{u} \leq \alpha \text {, } \\
& FWER_{1}(\vec{D})=2 \int_{Q}\left[D_{1}(\vec{u})(g(u_2)+g(u_3))+D_{2}(\vec{u}) g(u_1) \right] d \vec{u} \leq \alpha \text {, } \\
& FWER_{2}(\vec{D})=2 \int_{Q}\left[D_{1}(\vec{u}) g(u_2) g(u_3)+ D_{2}(\vec{u}) g(u_1) g(u_3) + D_{3}(\vec{u}) g(u_1) g(u_2)\right] d \vec{u} \leq \alpha, \\
& 0 \leq D_{3}(\vec{u}) \leq D_{2}(\vec{u}) \leq D_{1}(\vec{u}) \leq 1, \ \forall \vec{u} \in Q.
\end{aligned}
\end{equation} 
This formulation is equivalent to the general problem \eqref{eq:mtpopt1} under the symmetry and exchangeability Assumptions \ref{as:assumption1} and \ref{as:assumption2}.
\end{lemma}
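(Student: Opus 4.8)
The plan is to prove the equivalence by direct specialization of the general linear representations in Theorem~\ref{thm:powererrorfn} to $K=3$, invoking independence to evaluate every subset density $f_i(\vec u)$ in closed form, and finally translating the LR-ordering/symmetry restriction on $\vec D$ into the stated monotonicity constraint. The entire argument is careful bookkeeping of factorial prefactors, subset densities, and the index condition $\bar i_{\min}=k$; no limiting or existence argument is required.

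First I would record the density evaluation. Under the configuration in which the hypotheses indexed by a set $i$ are the false nulls, independence \eqref{eq:true_joint_dist} together with $u_k\sim U(0,1)$ under the null and $u_k\sim g$ under the alternative gives
\[
f_i(\vec u)=\prod_{k\in i} g(u_k)\prod_{k\notin i}1=\prod_{k\in i} g(u_k),\qquad \vec u\in Q.
\]
In particular $f_\emptyset\equiv 1$, $f_{\{j\}}(\vec u)=g(u_j)$, $f_{\{j,m\}}(\vec u)=g(u_j)g(u_m)$, and $f_{\{1,2,3\}}(\vec u)=g(u_1)g(u_2)g(u_3)$.

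For the objective I would set $l=K=3$ in \eqref{eq:pillinear1}: there is a single index set $i=\{1,2,3\}$ with prefactor $(l-1)!(K-l)!=2!\,0!=2$, yielding $\Pi_3(\vec D)=2\int_Q g(u_1)g(u_2)g(u_3)\bigl(D_1+D_2+D_3\bigr)\,d\vec u$, which is exactly the target objective. For the three constraints I would apply \eqref{eq:fwer_linear} for $l=0,1,2$, in each case listing the size-$l$ subsets $i$, computing $\bar i_{\min}$ (the smallest index in the complement $\bar i$), and grouping by $k$; the prefactor $l!(K-l)!$ equals $6,\,2,\,2$ for $l=0,1,2$ respectively. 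The step demanding the most care is enumerating the $\bar i_{\min}=k$ assignments: for $l=0$ only $i=\emptyset$ contributes (with $\bar i_{\min}=1$), giving the $D_1$ term; for $l=1$ the sets $\{2\},\{3\}$ give $\bar i_{\min}=1$ and $\{1\}$ gives $\bar i_{\min}=2$, producing the coefficients $g(u_2)+g(u_3)$ on $D_1$ and $g(u_1)$ on $D_2$; for $l=2$ the sets $\{2,3\},\{1,3\},\{1,2\}$ give $\bar i_{\min}=1,2,3$, producing $g(u_2)g(u_3),\,g(u_1)g(u_3),\,g(u_1)g(u_2)$ on $D_1,D_2,D_3$. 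Substituting the densities from the first step reproduces the three FWER expressions verbatim.

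Finally I would justify the feasible set. On $Q$ we have $u_1\le u_2\le u_3$, and since small p-values correspond to large likelihood ratios we have $\Lambda_1\ge\Lambda_2\ge\Lambda_3$; Definition~\ref{def:lrordering} then forces $D_1(\vec u)\ge D_2(\vec u)\ge D_3(\vec u)$, and together with the relaxed range $D_k\in[0,1]$ this is precisely $0\le D_3\le D_2\le D_1\le 1$. The reduction to symmetric LR-ordered policies is without loss of optimality by the cited \citet[Theorem~1 and Corollary~1]{RHPA22} under Assumptions~\ref{as:assumption1} and \ref{as:assumption2}, so the specialized program \eqref{eq:objconst} is equivalent to \eqref{eq:mtpopt1}. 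The only genuine subtlety, and hence the main obstacle, is getting the $\bar i_{\min}=k$ grouping and the matching factorial weights correct; everything else is substitution.
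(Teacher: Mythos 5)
Your proposal is correct and follows essentially the same route as the paper's own proof: specializing the linear representations \eqref{eq:pillinear1} and \eqref{eq:fwer_linear} to $K=3$, evaluating the subset densities via independence, computing the factorial prefactors $2,6,2,2$, and enumerating the $\bar i_{\min}=k$ assignments exactly as in the paper's table \eqref{eq:i_min_bar}. Your explicit justification of the constraint $0\le D_3\le D_2\le D_1\le 1$ from LR-ordering on $Q$ is a small addition the paper leaves implicit, but it does not constitute a different approach.
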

\begin{proof}
See the proof in Appendix \ref{app:proof_optimization_problem_k_3}.
\end{proof}

\subsection{Optimal Decision Policy}
\label{sec:optimal-policy}
This subsection derives the explicit structure of the optimal decision policy $\vec{D}^\mu$ that maximizes the Lagrangian $L(\vec{D}, \mu)$ for any fixed vector of dual multipliers $\mu = (\mu_0, \mu_1, \mu_2)$.

To determine the optimal policy $\vec{D}^\mu = (D_1^\mu, D_2^\mu, D_3^\mu)$, we follow the general structure outlined in Lemma \ref{lem:dual_problem}, specifically equations \eqref{eq:l*} and \eqref{eq:optdecision1}. 
For any point $\vec{u} \in Q$, the optimal decisions $D_i^\mu(\vec{u})$ are determined by checking whether the cumulative sums of the functions $R_i(\mu, \vec{u})$ are positive. 
This construction, which inherently respects the LR-ordering constraint $D_3 \le D_2 \le D_1$, is formalized in the following lemma.

\begin{lemma}
\label{lem:opt_decision_mu}
For the MHT problem with $K=3$ defined in \eqref{eq:objconst}, the optimal decision policy $$\vec{D}^\mu(\vec{u}) = (D_1^\mu(\vec{u}), D_2^\mu(\vec{u}), D_3^\mu(\vec{u}))$$ that maximizes the Lagrangian $L(\vec{D}, \mu)$ for a fixed $\mu$, as given in \eqref{eq:opt_decision1}, has the form:
\begin{equation}\label{eq:dimus2}
\begin{aligned}
D_{1}^{\mu}(\vec{u})= & \alpha_1^{\mu}(\vec{u}), \\
D_{2}^{\mu}(\vec{u})= & \alpha_{1}^{\mu}(\vec{u}) \alpha_2^{\mu}(\vec{u}), \\
D_{3}^{\mu}(\vec{u})= & \alpha_{1}^{\mu}(\vec{u}) \alpha_{2}^{\mu}(\vec{u}) \alpha_3^{\mu}(\vec{u}).
\end{aligned}
\end{equation}
where $\alpha_i^\mu: Q \to \{0, 1\}$ are indicator functions defined as:
\begin{equation}\label{eq:alpha_mu_u}
\begin{aligned}
\alpha_1^{\mu}(\vec{u}) &= \mathbbm{1}\left\{R_{1}(\mu,\vec{u}) > 0 \cup R_{1}(\mu,\vec{u})+R_{2}(\mu,\vec{u}) > 0 \cup R_{1}(\mu,\vec{u})+R_{2}(\mu,\vec{u})+R_{3}(\mu,\vec{u}) > 0 \right\}, \\
\alpha_2^{\mu}(\vec{u}) &= \mathbbm{1}\left\{R_{2}(\mu,\vec{u}) > 0 \cup R_{2}(\mu,\vec{u})+R_{3}(\mu,\vec{u}) > 0 \right\}, \\
\alpha_3^{\mu}(\vec{u}) &= \mathbbm{1}\left\{R_{3}(\mu,\vec{u}) > 0 \right\}.
\end{aligned}
\end{equation}
and $R_{i}(\mu,\vec{u})=a_i(\vec u)-\sum_{l=0}^{2}\mu_lb_{l,i}(\vec u), i=1,2,3$ is defined as in Lemma \ref{lem:dual_problem}.
\end{lemma}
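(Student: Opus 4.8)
The plan is to start from the pointwise characterisation already supplied by Lemma~\ref{lem:dual_problem}: for each fixed $\mu\ge 0$ and each $\vec u\in Q$, the Lagrangian-maximising decision is the likelihood-ratio-ordered threshold rule $D_i^\mu(\vec u)=\mathbf 1\{i\le l^\ast(\mu,\vec u)\}$, where $l^\ast$ maximises the prefix sums $S_l:=\sum_{i=1}^{l}R_i(\mu,\vec u)$ over $l\in\{0,1,2,3\}$ with the convention $S_0:=0$ (rejecting nothing), ties being resolved toward the smaller $l$ so as to match the strict inequalities appearing in \eqref{eq:alpha_mu_u}. Thus the entire lemma reduces to the purely algebraic statement that, for every $\vec u$, the nested indicators $\mathbf 1\{l^\ast\ge i\}$ coincide with the products $\prod_{j=1}^{i}\alpha_j^\mu(\vec u)$, since $D_1^\mu=\mathbf 1\{l^\ast\ge1\}$, $D_2^\mu=\mathbf 1\{l^\ast\ge2\}$ and $D_3^\mu=\mathbf 1\{l^\ast\ge3\}$ by construction.

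To expose this structure I would introduce the suffix maxima $M_j:=\max_{j\le l\le 3}S_l$ and first rewrite the three indicators of \eqref{eq:alpha_mu_u} in the compact comparison form $\alpha_j^\mu(\vec u)=\mathbf 1\{M_j>S_{j-1}\}$. This is immediate from $S_l-S_{l-1}=R_l$: for instance $\alpha_2^\mu=\mathbf 1\{R_2>0\ \cup\ R_2+R_3>0\}=\mathbf 1\{S_2>S_1\ \cup\ S_3>S_1\}=\mathbf 1\{M_2>S_1\}$, and similarly $\alpha_1^\mu$ compares against $S_0=0$ while $\alpha_3^\mu=\mathbf 1\{S_3>S_2\}$. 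In this language $\alpha_j^\mu=1$ says precisely that some prefix of length at least $j$ strictly beats the prefix of length $j-1$. The core of the argument is then the set identity
\[
\{\,l^\ast\ge i\,\}=\bigcap_{j=1}^{i}\{\,\alpha_j^\mu=1\,\}=\bigcap_{j=1}^{i}\{\,M_j>S_{j-1}\,\},\qquad i=1,2,3.
\]

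For the characterisation of $l^\ast$ I would use that, under the smaller-index tie rule, $l^\ast\ge i$ holds if and only if no prefix shorter than $i$ attains the global maximum, i.e. if and only if $M_i>\max\{S_0,\dots,S_{i-1}\}$. The two inclusions then follow from two elementary observations about the suffix maxima: (i) $M_0\ge M_1\ge M_2\ge M_3$; and (ii) whenever $\alpha_j^\mu=1$, i.e. $M_j>S_{j-1}$, one has $M_{j-1}=\max(S_{j-1},M_j)=M_j$. For the forward inclusion, $l^\ast\ge i$ gives $M_i>S_{j-1}$ for every $j\le i$, and since $M_j\ge M_i$ by (i) this yields $\alpha_j^\mu=1$ for all $j\le i$. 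For the reverse inclusion, $\alpha_1^\mu=\dots=\alpha_i^\mu=1$ forces, via (ii), the chain $M_0=M_1=\dots=M_i$, so that each inequality $M_j>S_{j-1}$ becomes $M_i>S_{j-1}$; taking the maximum over $j\le i$ gives $M_i>\max\{S_0,\dots,S_{i-1}\}$, which is exactly $l^\ast\ge i$.

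The one genuinely delicate point, and the step I expect to be the main obstacle, is the bookkeeping around ties and the direction of the tie-breaking rule: the strict inequalities in \eqref{eq:alpha_mu_u} are only consistent with resolving $\argmax$ toward the smaller $l$, and one must check that the resulting boundary behaviour (for example $R_3=0$, where including or excluding hypothesis $3$ leaves the Lagrangian unchanged) is assigned correctly by the product form. Since such ties occur only on a set of $\vec u$ of Lebesgue measure zero under the continuous alternative $g$, any consistent convention yields the same value of $L(\vec D^\mu,\mu)$; fixing the smaller-index convention makes the representation \eqref{eq:dimus2} hold pointwise. Once the set identity above is established, substituting back gives $D_1^\mu=\alpha_1^\mu$, $D_2^\mu=\alpha_1^\mu\alpha_2^\mu$ and $D_3^\mu=\alpha_1^\mu\alpha_2^\mu\alpha_3^\mu$, completing the proof.
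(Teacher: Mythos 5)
Your proposal is correct, and its skeleton matches the paper's: both reduce the lemma to the prefix-sum characterisation of Lemma~\ref{lem:dual_problem} and then recast the threshold rule $D_i^\mu(\vec u)=\mathbbm{1}\{i\le l^\ast(\mu,\vec u)\}$ in the nested product form. The difference lies in what actually gets proved. The paper's proof (Appendix~\ref{app:opt_decision_mu}) spends its effort on the $K=3$ specifics---reading off $a_i(\vec u)$ and $b_{l,i}(\vec u)$ by matching \eqref{eq:objconst} against \eqref{eq:power_general}--\eqref{eq:error_general}, writing out the $R_i(\mu,\vec u)$ and the complements $\beta_i^\mu$ explicitly---and then simply \emph{asserts} the nested-indicator representation \eqref{eq:dimus1} as ``given by using \eqref{eq:optdecision1}'', with no argument for the equivalence between the $\argmax$ rule and the product of strict-inequality indicators. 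Your suffix-maxima argument supplies exactly that missing combinatorial step, and it checks out: with $S_0=0$, $S_l=\sum_{i\le l}R_i$, $M_j=\max_{j\le l\le 3}S_l$, the rewritings $\alpha_j^\mu=\mathbbm{1}\{M_j>S_{j-1}\}$ are correct, the characterisation $l^\ast\ge i\iff M_i>\max\{S_0,\dots,S_{i-1}\}$ under the smaller-index tie rule is correct, and both inclusions of $\{l^\ast\ge i\}=\bigcap_{j\le i}\{\alpha_j^\mu=1\}$ follow from your observations $M_1\ge M_2\ge M_3$ and $M_{j-1}=M_j$ whenever $M_j>S_{j-1}$. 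Your point that the strict inequalities in \eqref{eq:alpha_mu_u} force the smaller-$l$ tie-breaking (harmless for the value of $L(\vec D^\mu,\mu)$) is also well taken; the paper's own convention in the proof of Lemma~\ref{lem:dual_problem} (``$l^\ast=0$ if the maximum is less than $0$'') is ambiguous precisely at ties, and your remark resolves this cleanly. Two caveats: your claim that ties occur on a Lebesgue-null set needs strict monotonicity of $g$ (Assumption~\ref{as:assumption3}) rather than mere continuity---if $g$ is flat on an interval, sets such as $\{g(u_3)=\mu_2\}$ can have positive measure---though this only affects your optional remark, not the pointwise identity; and, unlike the paper, you never derive the explicit coefficients $a_i,b_{l,i}$ for problem \eqref{eq:objconst}, a derivation the surrounding text (e.g.\ Lemma~\ref{lem:nonnegative-as-bs}) explicitly delegates to this proof, so as a drop-in replacement your argument should be supplemented with that bookkeeping even though the lemma statement, which defines $R_i$ abstractly, does not strictly require it.
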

\begin{proof}
See the proof in Appendix \ref{app:opt_decision_mu}. 
\end{proof}

The specific expressions for the coefficient functions $a_i(\vec{u})$ and $b_{l,i}(\vec{u})$ are implicitly defined by aligning the objective and constraint functions in Lemma \ref{lemma:optimization_problem_k_3} with the general linear forms in \eqref{eq:power_general} and \eqref{eq:error_general}. 
Their detailed derivation is provided in the proof of Lemma \ref{lem:opt_decision_mu}.

A key property of these coefficient functions, which is evident from their derivation (see proof of Lemma \ref{lem:opt_decision_mu}), is their non-negativity. 
We state this formally in the following lemma, which is presented without proof.

\begin{lemma} 
\label{lem:nonnegative-as-bs}
For the coefficient functions $a_i(\vec{u})$ and $b_{l,i}(\vec{u})$ corresponding to the power and error functions in the $K=3$ problem \eqref{eq:objconst}, we have:
\begin{equation}
\label{eq:non_negative_ai_bli}
a_i(\vec{u}) \geq 0 \quad \text{and} \quad b_{l,i}(\vec{u}) \geq 0 \quad \text{for all } i = 1,2,3 \text{ and } l = 0,1,2.
\end{equation}
\end{lemma}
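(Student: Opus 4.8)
The plan is to prove the claim by direct inspection: the coefficient functions are obtained simply by reading off the factors multiplying each $D_k(\vec u)$ in the objective and constraints of Lemma~\ref{lemma:optimization_problem_k_3}, so their non-negativity will follow immediately from the non-negativity of the alternative density $g$.

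First I would align the power functional in \eqref{eq:objconst} with the generic form \eqref{eq:power_general}. Comparing the factor multiplying each $D_k(\vec u)$ shows $a_i(\vec u)=2\,g(u_1)g(u_2)g(u_3)$ for $i=1,2,3$. Next I would do the same for the three FWER constraints against \eqref{eq:error_general}: the $l=0$ constraint gives $b_{0,1}(\vec u)=6$ and $b_{0,2}(\vec u)=b_{0,3}(\vec u)=0$; the $l=1$ constraint gives $b_{1,1}(\vec u)=2\bigl(g(u_2)+g(u_3)\bigr)$, $b_{1,2}(\vec u)=2\,g(u_1)$, and $b_{1,3}(\vec u)=0$; and the $l=2$ constraint gives $b_{2,1}(\vec u)=2\,g(u_2)g(u_3)$, $b_{2,2}(\vec u)=2\,g(u_1)g(u_3)$, and $b_{2,3}(\vec u)=2\,g(u_1)g(u_2)$.

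The crucial observation is that $g$ is a probability density on $(0,1)$, so $g(u)\ge 0$ for every $u\in[0,1]$; equivalently, by Lemma~\ref{lem:gdot}, $g(u)=\lambda(u)$ is a likelihood ratio threshold and is therefore non-negative. Since every $\vec u\in Q$ has coordinates lying in $[0,1]$, each of the expressions above is a finite sum of products of the non-negative constants $2$ and $6$ with values of $g$ at such coordinates. A product of non-negative numbers is non-negative and a sum of non-negative numbers is non-negative, so every $a_i(\vec u)$ and every $b_{l,i}(\vec u)$ is $\ge 0$ on $Q$, which is exactly \eqref{eq:non_negative_ai_bli}.

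There is essentially no obstacle here: the statement is a bookkeeping consequence of the explicit linear representations already established, which is why the lemma is stated without proof in the main text. The only point requiring a moment's care is that several coefficients (for instance $b_{0,2}$, $b_{0,3}$, and $b_{1,3}$) vanish identically; but zero satisfies the required inequality, so these degenerate cases cause no difficulty.
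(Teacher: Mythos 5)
Your proposal is correct and matches the paper's reasoning: the lemma is stated without proof precisely because the coefficients $a_i(\vec u)$ and $b_{l,i}(\vec u)$ are read off explicitly in the proof of Lemma~\ref{lem:opt_decision_mu} (yielding exactly the values you list), and non-negativity then follows immediately from $g\ge 0$. Nothing further is needed.
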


Finally, for notational convenience in subsequent sections, we define the complements of the indicator functions $\alpha_i^\mu(\vec{u})$ as $\beta_i^{\mu}(\vec{u}) = 1-\alpha_i^{\mu}(\vec{u})$, for $i=1,2,3$.

\subsection{Expression of the Dual Problem}
\label{sec:explorelagrangian}

This subsection derives the explicit form of the Lagrangian dual objective function, $L(\vec{D}^\mu, \mu)$, for our specific $K=3$ problem. 
This is achieved by substituting the optimal policy $\vec{D}^\mu$ (from Lemma \ref{lem:opt_decision_mu}) into the general Lagrangian (from \eqref{eq:lagrangian1}) and rearranging the terms based on the dual multipliers $\mu$.

We begin by reformulating the general Lagrangian $L(\vec{D}^\mu, \mu)$ from \eqref{eq:lagrangian1} to explicitly articulate its dependence on $\mu_0, \mu_1,$ and $\mu_2$. 
By substituting the coefficients $a_i$ and $b_{l,i}$ from \eqref{eq:objconst} and the optimal policy $\vec{D}^\mu$ from \eqref{eq:dimus2} into the Lagrangian, we can group terms to isolate the contribution of each multiplier. 
This rearrangement yields a more tractable structure for the dual minimization problem.

The following lemma presents the resulting objective function for the dual problem, which is an unconstrained minimization problem with respect to $\mu$.

\begin{lemma}
\label{lemma:lagrangian_intermsof_mu0}
Consider the $K=3$ MHT problem \eqref{eq:objconst} and its dual formulation \eqref{eq:ld_mu}. 
The dual objective function $L(\vec{D}^\mu, \mu)$, defined in \eqref{eq:lagrangian1} and evaluated at the optimal policy $\vec{D}^\mu$ from Lemma \ref{lem:opt_decision_mu}, admits the following decomposition:
\begin{equation}\label{eq:lagrangian5_general}
\begin{aligned}
L(\vec{D}^\mu, \mu) =&\alpha(\mu_0+\mu_1+\mu_2)\\
&+2\int_Qg(u_1) g(u_2) g(u_3)\alpha_1^{\mu}(\vec{u}) \biggl(1 + \alpha_2^{\mu}(\vec{u})+\alpha_2^{\mu}(\vec{u})\alpha_3^{\mu}(\vec{u}) \biggr)d\vec{u}\\
&-6\int_Q \alpha_1^\mu(\vec{u})\mu_0d\vec{u} \\
&-2\int_Q \alpha_1^\mu(\vec{u})\Bigl(g(u_1)\alpha_2^{\mu}(\vec{u})+g(u_2) + g(u_3)\Bigr)\mu_1d\vec{u}\\
& -2\int_Q\alpha_{1}^{\mu}(\vec{u})\Bigl(g(u_{2})g(u_{3})+\alpha_{2}^{\mu}(\vec{u})g(u_{1})g(u_{3})+\alpha_{2}^{\mu}(\vec{u})\alpha_{3}^{\mu}(\vec{u})g(u_{1})g(u_{2})\Bigr)\mu_2 d\vec{u}.
\end{aligned}
\end{equation}
\end{lemma}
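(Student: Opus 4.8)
The plan is to establish the decomposition by direct substitution of the optimal policy into the $K=3$ Lagrangian, followed by a regrouping of terms according to the dual coordinates. First I would specialize the general Lagrangian \eqref{eq:lagrangian1} to $K=3$, obtaining $L(\vec{D}^\mu,\mu)=\alpha(\mu_0+\mu_1+\mu_2)+\int_Q\sum_{i=1}^3 D_i^\mu(\vec u)R_i(\mu,\vec u)\,d\vec u$. Inserting the optimal policy from Lemma \ref{lem:opt_decision_mu}, namely $D_1^\mu=\alpha_1^\mu$, $D_2^\mu=\alpha_1^\mu\alpha_2^\mu$, and $D_3^\mu=\alpha_1^\mu\alpha_2^\mu\alpha_3^\mu$, turns the integrand into $\alpha_1^\mu R_1+\alpha_1^\mu\alpha_2^\mu R_2+\alpha_1^\mu\alpha_2^\mu\alpha_3^\mu R_3$.

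Second, I would read off the explicit coefficient functions by matching the objective and the three FWER constraints of \eqref{eq:objconst} against the canonical forms \eqref{eq:power_general}--\eqref{eq:error_general}. This yields $a_1=a_2=a_3=2g(u_1)g(u_2)g(u_3)$ for the power; $b_{0,1}=6$ with $b_{0,2}=b_{0,3}=0$; $b_{1,1}=2(g(u_2)+g(u_3))$, $b_{1,2}=2g(u_1)$, $b_{1,3}=0$; and $b_{2,1}=2g(u_2)g(u_3)$, $b_{2,2}=2g(u_1)g(u_3)$, $b_{2,3}=2g(u_1)g(u_2)$. Substituting $R_i=a_i-\sum_{l=0}^2\mu_l b_{l,i}$ from \eqref{eq:rimu1} then expresses the integrand as a power piece (the $a_i$ terms) minus three error pieces, one attached to each $\mu_l$.

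Third, I would collect each group separately. The power piece factors as $2g(u_1)g(u_2)g(u_3)\,\alpha_1^\mu(1+\alpha_2^\mu+\alpha_2^\mu\alpha_3^\mu)$ because $a_1=a_2=a_3$, recovering the second line of \eqref{eq:lagrangian5_general}. The $\mu_0$ piece collapses to $-6\int_Q\alpha_1^\mu\mu_0\,d\vec u$ since only $b_{0,1}$ is nonzero; the $\mu_1$ piece becomes $-2\int_Q\alpha_1^\mu\bigl(g(u_1)\alpha_2^\mu+g(u_2)+g(u_3)\bigr)\mu_1\,d\vec u$ after combining the $b_{1,1}$ and $b_{1,2}$ contributions; and the $\mu_2$ piece gives the stated three-term expression $-2\int_Q\alpha_1^\mu\bigl(g(u_2)g(u_3)+\alpha_2^\mu g(u_1)g(u_3)+\alpha_2^\mu\alpha_3^\mu g(u_1)g(u_2)\bigr)\mu_2\,d\vec u$. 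Adding back the boundary contribution $\alpha(\mu_0+\mu_1+\mu_2)$ completes the identity.

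This argument is essentially mechanical, so I do not anticipate a genuine obstacle; the only point requiring care is the correct identification of which $b_{l,i}$ are nonzero and the bookkeeping of the nested indicator products $\alpha_1^\mu$, $\alpha_1^\mu\alpha_2^\mu$, $\alpha_1^\mu\alpha_2^\mu\alpha_3^\mu$ across the power and error groupings. Attaching each argument of $g$ to the correct multiplier and decision index is the one place where an index slip could occur, so I would verify the three FWER terms against \eqref{eq:objconst} coordinate by coordinate before combining.
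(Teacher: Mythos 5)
Your proposal is correct and follows essentially the same route as the paper: substitute the optimal policy $D_1^\mu=\alpha_1^\mu$, $D_2^\mu=\alpha_1^\mu\alpha_2^\mu$, $D_3^\mu=\alpha_1^\mu\alpha_2^\mu\alpha_3^\mu$ and the explicit coefficients $a_i$, $b_{l,i}$ into $L(\vec{D}^\mu,\mu)=\alpha\sum_l\mu_l+\int_Q\sum_i D_i^\mu R_i\,d\vec{u}$, then regroup term by term according to $\mu_0$, $\mu_1$, $\mu_2$, and your identification of the nonzero $b_{l,i}$ matches the paper's exactly. The only difference is cosmetic: the paper detours through the complement indicators $\beta_i^\mu=1-\alpha_i^\mu$ and a matrix representation (the $A_1(\vec{u})$, $b_3(\vec{u})$ bookkeeping defining $f_1$ and $f_2$) before converting back, whereas your direct collection of the $\alpha$-products reaches the identical decomposition \eqref{eq:lagrangian5_general} without that intermediate machinery.
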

\begin{proof}
The proof is provided in Appendix \ref{app:lagrangian_intermsof_mu0}.
\end{proof}

As established in Section \ref{sec:lagdualprob} (Proposition \ref{prop:optmu} and Assumption \ref{assumption:optmu}), strong duality holds for this problem. 
Therefore, finding the vector $\mu^* = (\mu_0^\ast, \mu_1^\ast, \mu_2^\ast)$ that minimizes the function in \eqref{eq:lagrangian5_general} is equivalent to solving the original primal optimization problem \eqref{eq:MHTK3}.

\subsection{Optimality Conditions for Dual Lagrangian Minimization}
\label{sec:condition-optimal}

This subsection establishes the necessary conditions for a vector $\mu^*$ to be a minimizer of the dual objective function $L(\vec{D}^\mu, \mu)$. 
We first introduce regularity conditions on the alternative density $g(\cdot)$ and then derive the coordinate-wise optimality conditions that $\mu^*$ must satisfy.

To derive the optimality conditions for minimizing $L(\vec{D}^\mu, \mu)$ with respect to $\mu$, we first impose several regularity conditions on the alternative density function $g(\cdot)$ defined in \eqref{eq:mtp1st1}.

The first assumption, a lower Lipschitz bound, ensures that the density $g(\cdot)$ is strictly monotonic and does not flatten out over any interval, which guarantees a sufficiently responsive relationship between p-values and their densities.
\begin{assumption}[Lower Lipschitz Bounds]
\label{as:assumption3}
There exists a constant $c_3 > 0$ such that for all $u, u' \in [0,1]$,
\begin{equation}
|g(u) - g(u')| \ge c_3 |u - u'|.
\end{equation}
\end{assumption}

The second assumption, strict positivity, requires that the alternative density be bounded away from zero. 
This is necessary to ensure that terms involving $g(u)$ in the Lagrangian do not vanish.
\begin{assumption}[Strict Positivity]
\label{as:assumption4}
There exists a constant $c_4 > 0$ such that for all $u \in [0,1]$,
\begin{equation}
g(u) \ge c_4.
\end{equation}
\end{assumption}

The third assumption provides a simple upper bound on the density function, which ensures all integrals are well-defined.
\begin{assumption}[Upper Bound]
\label{as:assumption5}
There exists a constant $c_5 > 0$ such that for all $u \in [0,1]$,
\begin{equation}\label{eq:assumption5}
g(u) \le c_5.
\end{equation}
\end{assumption}

Under these assumptions, we can characterize the minimizer of the Lagrangian $L(\vec{D}^\mu, \mu)$ by analyzing its partial derivatives. 
The following theorem presents the necessary conditions for optimality, derived by considering small perturbations in each coordinate $\mu_i$ while holding the others fixed.

\begin{theorem}[Optimality Conditions for Minimizing $L(\vec{D}^\mu, \mu)$]
\label{thm:mu_optimality_combined}
Suppose Assumptions~\ref{as:assumption3},~\ref{as:assumption4}, and~\ref{as:assumption5} hold. 
Consider the dual formulation \eqref{eq:ld_mu} of the MHT problem \eqref{eq:objconst}, with the Lagrangian $L(\vec{D}^\mu, \mu)$ as defined in \eqref{eq:lagrangian5_general}. 
Then, for fixed values of the other coordinates, the following characterizations of a local minimizer $\mu^\ast = (\mu_0^\ast, \mu_1^\ast, \mu_2^\ast)$ hold:
\begin{enumerate}
  \item  For fixed $\mu_1$ and $\mu_2$, the coordinate $\mu_0^\ast \in \mathbb{R}_+$ is a local minimizer of $L(\vec{D}^\mu, \mu)$ with respect to $\mu_0$ only if
  \begin{equation}
  \label{eq:mu_0_optimality_condition}
  \alpha = 6 \int_Q \alpha_1^{(\mu_0^\ast,\mu_1,\mu_2)}(\vec{u}) d\vec{u}.
  \end{equation}

\item  For fixed $\mu_0$ and $\mu_2$, the coordinate $\mu_1^\ast \in \mathbb{R}_+$ is a local minimizer of $L(\vec{D}^\mu, \mu)$ with respect to $\mu_1$ only if
  \begin{equation}
  \label{eq:mu_1_optimality_condition}
  \alpha = 1 - 2 \int_Q \beta_2^{(\mu_0,\mu_1^\ast,\mu_2)}(\vec{u})g(u_1) d\vec{u}.
  \end{equation}

\item  For fixed $\mu_0$ and $\mu_1$, the coordinate $\mu_2^\ast \in \mathbb{R}_+$ is a local minimizer of $L(\vec{D}^\mu, \mu)$ with respect to $\mu_2$ only if
  \begin{equation}
  \label{eq:mu_2_optimality_condition}
  \alpha = 2 \int_Q \alpha_3^{(\mu_0,\mu_1,\mu_2^\ast)}(\vec{u}) g(u_1)g(u_2) d\vec{u}.
  \end{equation}
\end{enumerate}
\end{theorem}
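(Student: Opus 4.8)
The plan is to analyse each coordinate in isolation. Fixing the other two multipliers, write $\phi_l(\mu_l):=L(\vec D^\mu,\mu)$ as a function of the single nonnegative variable $\mu_l$. Since $L(\vec D,\mu)$ in \eqref{eq:lagrangian1} is affine in $\mu$ for every fixed $\vec D$, the dual objective $L(\vec D^\mu,\mu)=\max_{\vec D\in\mathcal D}L(\vec D,\mu)$ is a pointwise supremum of affine functions, hence convex in $\mu$ and in particular in each $\mu_l$. For a convex univariate function a point is a local minimizer iff it is stationary, so it suffices to compute $\phi_l'$ and impose $\phi_l'(\mu_l^\ast)=0$ (read through one-sided derivatives when $\mu_l^\ast=0$). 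This reduces the theorem to a derivative computation followed by an algebraic simplification.

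First I would differentiate. Writing the integral term as $\int_Q\Phi(\mu,\vec u)\,d\vec u$ with $\Phi=\max\{0,S_1,S_2,S_3\}$ and $S_m=\sum_{i\le m}R_i(\mu,\vec u)$, Danskin's theorem applied pointwise in $\vec u$ gives $\partial_{\mu_l}\Phi=-\sum_{i\le l^\ast}b_{l,i}(\vec u)=-\sum_i D_i^\mu(\vec u)\,b_{l,i}(\vec u)$ at points where the maximizing index $l^\ast$ is unique, the first-order effect of the moving argmax cancelling. The delicate point is interchanging $\partial_{\mu_l}$ with $\int_Q$, because the indicators $\alpha_i^\mu$ jump across the level sets $\{R_i=0\}$ (for $\alpha_3^\mu$ this set is exactly $\{g(u_3)=\mu_2\}$). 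This is where Assumptions~\ref{as:assumption3}--\ref{as:assumption5} enter: the lower Lipschitz bound makes $g$ strictly monotone, so the jump sets are measure-zero graphs moving at a bounded rate in $\mu$, while positivity and the upper bound supply an integrable dominating function. The outcome is the envelope identity $\phi_l'(\mu_l)=\alpha-\sum_i\int_Q b_{l,i}(\vec u)\,D_i^\mu(\vec u)\,d\vec u=\alpha-\mathrm{FWER}_l(\vec D^\mu)$.

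It then remains to recast each stationarity identity $\mathrm{FWER}_l(\vec D^\mu)=\alpha$ into the compact one-indicator form claimed. For $l=0$ this is immediate: $b_{0,1}\equiv 6$ and $b_{0,2}=b_{0,3}=0$, so $\mathrm{FWER}_0(\vec D^\mu)=6\int_Q\alpha_1^\mu$, which is \eqref{eq:mu_0_optimality_condition}. For $l=1,2$ I would substitute $D_1^\mu=\alpha_1^\mu$, $D_2^\mu=\alpha_1^\mu\alpha_2^\mu$, $D_3^\mu=\alpha_1^\mu\alpha_2^\mu\alpha_3^\mu$, use the threshold representation $\alpha_3^\mu=\mathbbm{1}\{g(u_3)>\mu_2\}$ (obtained by dividing $R_3=2g(u_1)g(u_2)(g(u_3)-\mu_2)$ by the strictly positive factor $2g(u_1)g(u_2)$), and invoke the symmetry normalization $2\int_Q\bigl(g(u_1)+g(u_2)+g(u_3)\bigr)\,d\vec u=1$, which is the source of the constant $1$ and of the complementary indicator $\beta_2^\mu=1-\alpha_2^\mu$ appearing in \eqref{eq:mu_1_optimality_condition}.

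I expect this last reduction to be the main obstacle: collapsing the three-term integrands of $\mathrm{FWER}_1(\vec D^\mu)$ and $\mathrm{FWER}_2(\vec D^\mu)$ into a single indicator weighted by $g(u_1)$, respectively $g(u_1)g(u_2)$, cannot treat the regions $\{\alpha_1^\mu=1\}$, $\{\alpha_2^\mu=1\}$, $\{\alpha_3^\mu=1\}$ as unrelated but must exploit the nesting and monotonicity among them. Establishing precisely these relationships is the role of the coordinatewise monotonicity results in Sections~\ref{sec:monotonicty-in-Fs}--\ref{sec:more-monotone}, on which the closed forms rely. A secondary difficulty is the boundary case $\mu_l^\ast=0$, where only a one-sided derivative exists and one must argue via convexity that the same identity still characterizes the minimizer.
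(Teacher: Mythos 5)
Your route is genuinely different from the paper's: you invoke convexity of the dual function (pointwise maximum of affine functions) plus a Danskin-type envelope computation, whereas the paper proves each coordinate condition by a bare-hands perturbation $L(\mu_l^\ast+\delta)-L(\mu_l^\ast)\ge 0$, an $o(\delta)$ estimate on the indicator-difference terms (the case analyses in Appendix~\ref{app:mu_optimality}), and a sign-contradiction argument (Propositions~\ref{prop:o_delta_condition_mu_0}--\ref{prop:o_delta_condition_mu_2}). For the coordinate $\mu_0$ your plan does work and is arguably cleaner: since $b_{0,1}\equiv 6$ and $b_{0,2}=b_{0,3}=0$, the envelope derivative is exactly $\alpha-6\int_Q\alpha_1^{\mu}(\vec u)\,d\vec u$, which is \eqref{eq:mu_0_optimality_condition}.

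The genuine gap is your final step for $\mu_1$ and $\mu_2$. Danskin gives stationarity in the form $\mathrm{FWER}_l(\vec D^{\mu})=\alpha$, i.e.
\begin{equation*}
\alpha=2\!\int_Q\!\bigl[\alpha_1^{\mu}(g(u_2)+g(u_3))+\alpha_1^{\mu}\alpha_2^{\mu}g(u_1)\bigr]d\vec u,
\qquad
\alpha=2\!\int_Q\!\bigl[\alpha_1^{\mu}g(u_2)g(u_3)+\alpha_1^{\mu}\alpha_2^{\mu}g(u_1)g(u_3)+\alpha_1^{\mu}\alpha_2^{\mu}\alpha_3^{\mu}g(u_1)g(u_2)\bigr]d\vec u,
\end{equation*}
whereas \eqref{eq:mu_1_optimality_condition} and \eqref{eq:mu_2_optimality_condition} are (after your normalization $\int_0^1 g=1$) the same expressions with $\alpha_1^{\mu}$, respectively $\alpha_1^{\mu}\alpha_2^{\mu}$, replaced by $1$. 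These are not equal in general: their difference is an integral of a function bounded below by a positive constant (Assumption~\ref{as:assumption4} gives $g\ge c_4>0$) over the set $\{\alpha_1^{\mu}=0\}$, respectively $\{\alpha_1^{\mu}\alpha_2^{\mu}=0\}$, and for fixed $\mu_0>0$ these sets typically have positive measure. Moreover the nesting you hope to exploit fails: with $\mu_0$ large and $\mu_1=\mu_2=0$ one has $R_2=R_3=2g(u_1)g(u_2)g(u_3)>0$ but $R_1+R_2+R_3<0$ wherever $g$ is small, so $\alpha_2^{\mu}=1$ while $\alpha_1^{\mu}=0$; and the lemmas of Sections~\ref{sec:monotonicty-in-Fs}--\ref{sec:more-monotone} assert monotonicity of the target integrals in $\mu$, not identities between these regions at a fixed $\mu$, so they cannot supply the collapse you need. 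The paper never proves such an identity; instead it modifies the objective \emph{before} differentiating: in Appendices~\ref{app:mu1_optimality} and~\ref{app:mu2_optimality} it first argues that one may substitute $\alpha_1^{\mu}\equiv 1$ (and, for the $\mu_2$ analysis, also $\alpha_2^{\mu}\equiv 1$) into the Lagrangian --- see \eqref{eq:lagrangian5_mu1} and \eqref{eq:lagrangian5_mu2_alpha2_1} --- and only then carries out the perturbation, so the stated conditions are first-order conditions of that reduced Lagrangian. Your proposal has no counterpart to this reduction, and without it your envelope identity is a different equation from the one to be proved. (The boundary case $\mu_l^\ast=0$ you flag is real but secondary, and is shared by the paper's own proof, which also perturbs with $\delta<0$.)
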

\begin{proof}
The proof is detailed in Appendix \ref{app:mu_optimality}.
\end{proof}

\subsection{Coordinatewise Monotonicity in the Target Functions}
\label{sec:monotonicty-in-Fs}

This subsection establishes a crucial property of the optimality conditions derived in Theorem \ref{thm:mu_optimality_combined}. 
We demonstrate that each of the `target functions' (the right-hand sides of the equations in Theorem \ref{thm:mu_optimality_combined}) is a non-increasing function of its respective dual variable. 
This monotonicity is the key property that ensures a unique solution and enables the use of efficient root-finding algorithms.

We begin by observing that the optimality conditions in Theorem \ref{thm:mu_optimality_combined} share a common structure: they each equate a function of a single $\mu_i$ (holding the other coordinates fixed) to the constant $\alpha$. 
The following lemma establishes the coordinate-wise monotonicity of these functions.

\begin{lemma}
\label{lem:monotonic_optimality}
Consider the functions on the right-hand sides of the optimality conditions in Theorem \ref{thm:mu_optimality_combined}. 
Provided that the error function coefficients $b_{l,i}(\vec{u}) \ge 0$ for all $l \in \{0, 1, 2\}$ and $i \in \{1, 2, 3\}$ (as established in Lemma \ref{lem:nonnegative-as-bs}), these functions are non-increasing in their respective coordinates. 
Equivalently, the following maps are non-increasing on $\mathbb{R}_+$:
\begin{enumerate}
    \item For fixed $\mu_1, \mu_2 \in \mathbb{R}_+$, the map
\begin{equation}
\label{eq:mu0_mapping}
\mu_0 \mapsto \int_Q \alpha_1^{(\mu_0, \mu_1, \mu_2)}(\vec{u}) d\vec{u}
\end{equation}
is non-increasing with respect to $\mu_0$.

    \item For fixed $\mu_0, \mu_2 \in \mathbb{R}_+$, the map
\begin{equation}
\label{eq:mu1_mapping}
\mu_1 \mapsto 1 - 2 \int_Q \beta_2^{(\mu_0,\mu_1,\mu_2)}(\vec{u})g(u_1) d\vec{u}
\end{equation}
is non-increasing with respect to $\mu_1$.

    \item For fixed $\mu_0, \mu_1 \in \mathbb{R}_+$, the map
\begin{equation}
\label{eq:mu2_mapping}
\mu_2 \mapsto 2 \int_Q \alpha_3^{(\mu_0,\mu_1,\mu_2)}(\vec{u}) g(u_1)g(u_2) d\vec{u}
\end{equation}
is non-increasing with respect to $\mu_2$.
\end{enumerate}
\end{lemma}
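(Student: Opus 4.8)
The plan is to reduce all three claims to a single monotonicity principle applied to the functions $R_i(\mu,\vec u)=a_i(\vec u)-\sum_{l=0}^{2}\mu_l b_{l,i}(\vec u)$, and then to propagate that monotonicity through the indicator functions and the nonnegatively-weighted integrals. The starting observation is that, for each fixed $\vec u$ and each coordinate $l$, the map $\mu_l\mapsto R_i(\mu,\vec u)$ is affine with slope $-b_{l,i}(\vec u)\le 0$ by Lemma~\ref{lem:nonnegative-as-bs}; hence every $R_i$, and therefore every partial sum $\sum_{i=1}^{j}R_i(\mu,\vec u)$, is non-increasing in each $\mu_l$ while the remaining coordinates are held fixed.

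Next I would lift this to the indicators. Each $\alpha_j^{\mu}(\vec u)$ is the indicator of a union of events of the form $\{\sum_{i=1}^{j'}R_i(\mu,\vec u)>0\}$. Because a union of ``strictly positive'' events can only shrink as the underlying quantities decrease, each $\alpha_j^{\mu}(\vec u)$ is non-increasing pointwise in $\vec u$ in each $\mu_l$: if $\mu_l''>\mu_l'$ and $\alpha_j^{\mu''}(\vec u)=1$, then some partial sum is strictly positive at $\mu_l''$, and since that same partial sum is at least as large at $\mu_l'$ it remains strictly positive there, forcing $\alpha_j^{\mu'}(\vec u)=1$. This yields $\alpha_1$ non-increasing in $\mu_0$ (Map~1), $\alpha_3$ non-increasing in $\mu_2$ (Map~3), and $\alpha_2$ non-increasing in $\mu_1$, whence its complement $\beta_2^{\mu}=1-\alpha_2^{\mu}$ is non-decreasing in $\mu_1$ (the ingredient for Map~2).

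Finally I would integrate. Since $d\vec u$, $g(u_1)$, and $g(u_1)g(u_2)$ are all nonnegative weights (using $g\ge 0$ from Lemma~\ref{lem:gdot} together with Assumption~\ref{as:assumption4}), integrating a pointwise non-increasing integrand against any of them preserves the non-increasing property. Thus $\mu_0\mapsto\int_Q\alpha_1^{\mu}\,d\vec u$ and $\mu_2\mapsto 2\int_Q\alpha_3^{\mu}g(u_1)g(u_2)\,d\vec u$ are non-increasing directly. For Map~2, $\mu_1\mapsto\int_Q\beta_2^{\mu}g(u_1)\,d\vec u$ is non-decreasing, so negating and adding the constant $1$ restores the claimed direction, giving $\mu_1\mapsto 1-2\int_Q\beta_2^{\mu}g(u_1)\,d\vec u$ non-increasing.

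The argument is structurally uniform, so the only genuine care-points are bookkeeping rather than difficulty: first, correctly matching each coordinate $\mu_l$ to the subset of $R_i$ it influences and confirming that the relevant nonnegative coefficients $b_{l,i}$ are precisely the ones controlling each indicator; and second, tracking the sign flip in Map~2, where non-increasing monotonicity of $\alpha_2$ becomes non-decreasing monotonicity of $\beta_2$ before the external factor $-2$ and the constant $1$ recover the stated direction. The main (mild) obstacle is stating the pointwise monotonicity of the union-of-events indicators cleanly for the strict thresholds ``$>0$'' so that the implication direction is airtight; beyond that, the result follows entirely from nonnegativity of the coefficients $b_{l,i}$ and of the integrating weights.
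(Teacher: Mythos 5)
Your proposal is correct and follows essentially the same route as the paper's proof: exploiting the affine dependence of $R_i(\mu,\vec u)$ on each $\mu_l$ with non-positive slope $-b_{l,i}(\vec u)$, deducing pointwise monotonicity of the partial-sum indicators (the paper phrases this as nestedness of the sets $\{S_k(\mu,\vec u)>0\}$), and integrating against the nonnegative weights. Your handling of the sign flip for Map~2 via $\beta_2^{\mu}=1-\alpha_2^{\mu}$ matches the paper's rewriting $1-2\int_Q\beta_2^{\mu}g(u_1)\,d\vec u = 1-2\int_Q g(u_1)\,d\vec u+2\int_Q\alpha_2^{\mu}g(u_1)\,d\vec u$, so no gap remains.
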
 
\begin{proof}
The proof is provided in Appendix \ref{app:monotonic_optimality}.
\end{proof}

The monotonicity established in Lemma \ref{lem:monotonic_optimality} is instrumental for computation, as it transforms the search for each optimal $\mu_i^\ast$ (holding the others fixed) into a one-dimensional root-finding problem for a monotonic function.
This ensures that standard numerical methods, such as the bisection method, can be applied to efficiently and uniquely find the root.

We can now define these target functions, which we aim to set equal to $\alpha$, as follows:
\begin{equation}
\label{eq:f_gamma}
\begin{aligned}
F_0(\mu_0;\mu_1,\mu_2) &:= 6\!\int_Q \alpha_1^{(\mu_0,\mu_1,\mu_2)}(\vec{u})d\vec{u}, \\
F_1(\mu_1;\mu_0,\mu_2) &:= 1-2\!\int_Q \beta_2^{(\mu_0,\mu_1,\mu_2)}(\vec{u})g(u_1)d\vec{u}, \\
F_2(\mu_2;\mu_0,\mu_1) &:= 2\!\int_Q \alpha_3^{(\mu_0,\mu_1,\mu_2)}(\vec{u})g(u_1)g(u_2)d\vec{u}.
\end{aligned}
\end{equation}
Lemma \ref{lem:monotonic_optimality} guarantees that each function $F_\gamma(\cdot)$ is strictly decreasing in its first argument, and thus possesses at most one root that satisfies $F_\gamma(\mu_\gamma^\ast; \dots) = \alpha$.

The final objective is to find an optimal triplet $(\mu_0^\ast, \mu_1^\ast, \mu_2^\ast)$ that simultaneously solves the system of equations defined by Theorem \ref{thm:mu_optimality_combined}:
\begin{equation}
\label{eq:f_gamma2}
\begin{aligned}
F_0(\mu_0^\ast;\mu_1^\ast,\mu_2^\ast) = \alpha, \\
F_1(\mu_1^\ast;\mu_0^\ast,\mu_2^\ast) = \alpha, \\
F_2(\mu_2^\ast;\mu_0^\ast,\mu_1^\ast) = \alpha.
\end{aligned}
\end{equation}

This formulation as a system of equations, where each component function is monotonic in its primary variable, naturally suggests a coordinate-wise root-finding algorithm, which we will detail in Section \ref{sec:computing01}.

\subsection{More Monotonicity Results}
\label{sec:more-monotone}

This subsection presents additional monotonicity properties of the target functions $F_\gamma$ defined in \eqref{eq:f_gamma}. 
We establish that each function $F_\gamma$ is not only monotonic in its primary coordinate (as shown in Lemma \ref{lem:monotonic_optimality}) but also in its other two coordinate arguments. 
These cross-monotonicity properties are essential for construction and guaranteeing the convergence of the coordinate-update algorithm presented later.

The following lemmas (Lemmas \ref{lem:alpha1_mu12_monotone}–\ref{lem:alpha3_mu01_monotone}) establish that each of the target functions $F_0, F_1, F_2$ is a non-increasing function with respect to all of its arguments, not just the primary coordinate. For example, Lemma \ref{lem:alpha1_mu12_monotone} shows that $F_0(\mu_0;\mu_1,\mu_2)$ decreases as $\mu_1$ or $\mu_2$ increases (in addition to decreasing with $\mu_0$). 
This global monotonicity is a powerful property that ensures the coordinate-update algorithm (Algorithm \ref{alg:compute_optimal_mu_K3_main}) will converge to a unique solution.

\begin{lemma}
\label{lem:alpha1_mu12_monotone}
For any fixed $\mu_{0}\in\mathbb R_{+}$, the mappings 
\[
\mu_{1} \longmapsto \int_{Q}\alpha_{1}^{(\mu_{0},\mu_{1},\mu_{2})}(\vec u)\,d\vec u
\quad\text{and}\quad
\mu_{2} \longmapsto \int_{Q}\alpha_{1}^{(\mu_{0},\mu_{1},\mu_{2})}(\vec u)\,d\vec u
\]
are non-increasing on $\mathbb R_{+}$ when the remaining coordinates are held fixed, provided the coefficients $b_{l,i}(\vec{u}) \ge 0$ for all $l, i$.
\end{lemma}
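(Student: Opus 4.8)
The plan is to reduce the asserted monotonicity of each integral to a \emph{pointwise} monotonicity of the integrand $\alpha_1^{(\mu_0,\mu_1,\mu_2)}(\vec{u})$ in $\mu_1$ (and separately in $\mu_2$), and then to integrate. Fix $\vec{u}\in Q$ throughout. Recall from Lemma~\ref{lem:dual_problem} that $R_i(\mu,\vec{u})=a_i(\vec{u})-\sum_{l=0}^{2}\mu_l b_{l,i}(\vec{u})$ is affine in $\mu_1$ with slope $-b_{1,i}(\vec{u})$. Since $b_{1,i}(\vec{u})\ge 0$ for every $i$ (Lemma~\ref{lem:nonnegative-as-bs}), each $R_i(\mu,\vec{u})$ is non-increasing in $\mu_1$ when $\mu_0,\mu_2$ are held fixed; the analogous statement in $\mu_2$ follows from $b_{2,i}(\vec{u})\ge 0$.

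First I would recast the indicator in a single-threshold form. Writing the partial sums $S_m(\mu,\vec{u}):=\sum_{i=1}^{m}R_i(\mu,\vec{u})$ for $m=1,2,3$, the definition in \eqref{eq:alpha_mu_u} can be rewritten as
$$\alpha_1^{\mu}(\vec{u})=\mathbbm{1}\{S_1>0 \text{ or } S_2>0 \text{ or } S_3>0\}=\mathbbm{1}\{\max(S_1,S_2,S_3)>0\}.$$
This reformulation is the key step: it collapses the union of three positivity events into a single super-level threshold on one scalar quantity $M(\mu,\vec{u}):=\max(S_1,S_2,S_3)$.

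Next I would propagate the monotonicity through $M$. Each $S_m$ is a finite sum of the $R_i$, hence a sum of functions that are non-increasing in $\mu_1$, and is therefore itself non-increasing in $\mu_1$; a pointwise maximum of finitely many non-increasing functions is again non-increasing, so $M(\mu,\vec{u})$ is non-increasing in $\mu_1$. Consequently, if $\mu_1\le\mu_1'$ then $M((\mu_0,\mu_1',\mu_2),\vec{u})\le M((\mu_0,\mu_1,\mu_2),\vec{u})$, so the event $\{M>0\}$ is nested decreasingly in $\mu_1$ and the indicator $\mathbbm{1}\{M>0\}$ cannot jump from $0$ to $1$ as $\mu_1$ grows. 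Hence $\alpha_1^{(\mu_0,\mu_1,\mu_2)}(\vec{u})$ is non-increasing in $\mu_1$ for every fixed $\vec{u}$, and by the identical argument (replacing $b_{1,i}$ by $b_{2,i}$) also non-increasing in $\mu_2$.

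Finally, integration over $Q$ preserves the pointwise inequality: for $\mu_1\le\mu_1'$ we have $0\le\alpha_1^{(\mu_0,\mu_1',\mu_2)}(\vec{u})\le\alpha_1^{(\mu_0,\mu_1,\mu_2)}(\vec{u})$ for all $\vec{u}$, and monotonicity of the integral yields $\int_Q\alpha_1^{(\mu_0,\mu_1',\mu_2)}(\vec{u})\,d\vec{u}\le\int_Q\alpha_1^{(\mu_0,\mu_1,\mu_2)}(\vec{u})\,d\vec{u}$, which is exactly the claimed non-increasing behavior in $\mu_1$; the $\mu_2$ case is identical. I do not expect a serious obstacle here, since the argument is essentially monotonicity bookkeeping; the only points requiring care are the union-to-maximum reduction and the verification that the strict inequality ``$>0$'' is preserved, which it is because the non-increasing behavior of $M$ makes the super-level sets $\{M>0\}$ nested.
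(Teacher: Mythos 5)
Your proof is correct and follows essentially the same route as the paper's (Appendix proof of this lemma): both exploit that each $R_i$ is affine and non-increasing in $\mu_1$ (resp.\ $\mu_2$) since $b_{1,i},b_{2,i}\ge 0$, deduce pointwise monotonicity of $\alpha_1^{\mu}$ via the partial sums $S_1,S_2,S_3$, and integrate; your rewriting of the union of positivity events as $\mathbbm{1}\{\max(S_1,S_2,S_3)>0\}$ is just a cosmetic repackaging of the paper's nested set-inclusion argument.
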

\begin{proof}
The proof is provided in Appendix \ref{app:alpha1_mu12_monotone}.
\end{proof}

\begin{lemma}
\label{lem:beta2_mu02_monotone}
For any fixed $\mu_{1}\in\mathbb R_{+}$, the mappings 
\[
\mu_{0}\longmapsto 1-2\!\int_{Q}\beta_{2}^{(\mu_{0},\mu_{1},\mu_{2})}(\vec u)\,g(u_{1})\,d\vec u
\quad\text{and}\quad
\mu_{2}\longmapsto 1-2\!\int_{Q}\beta_{2}^{(\mu_{0},\mu_{1},\mu_{2})}(\vec u)\,g(u_{1})\,d\vec u
\]
are non-increasing on $\mathbb R_{+}$ when the remaining coordinates are held fixed, provided the coefficients $b_{l,i}(\vec{u}) \ge 0$ for all $l, i$.
\end{lemma}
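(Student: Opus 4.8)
The plan is to reduce both monotonicity claims to a single pointwise statement about the indicator $\beta_2^\mu$ and then integrate against the non-negative weight $g(u_1)$. The map in question equals $1$ minus twice an integral of $\beta_2^\mu g(u_1)$, so, after the sign flip, it suffices to show that this integral is \emph{non-decreasing} in $\mu_0$ and in $\mu_2$.

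First I would make $\beta_2^\mu$ explicit. Since $\alpha_2^{\mu}(\vec u)=\mathbbm 1\{R_2(\mu,\vec u)>0 \text{ or } R_2(\mu,\vec u)+R_3(\mu,\vec u)>0\}$ by \eqref{eq:alpha_mu_u}, its complement is
\[
\beta_2^{(\mu_0,\mu_1,\mu_2)}(\vec u)=\mathbbm 1\bigl\{R_2(\mu,\vec u)\le 0 \text{ and } R_2(\mu,\vec u)+R_3(\mu,\vec u)\le 0\bigr\}.
\]
Recalling $R_i(\mu,\vec u)=a_i(\vec u)-\sum_{l=0}^{2}\mu_l b_{l,i}(\vec u)$, both defining quantities are affine in each coordinate, with slopes $-b_{0,2}(\vec u)$ and $-(b_{0,2}(\vec u)+b_{0,3}(\vec u))$ in $\mu_0$, and $-b_{2,2}(\vec u)$ and $-(b_{2,2}(\vec u)+b_{2,3}(\vec u))$ in $\mu_2$. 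The key step is the observation that, because all $b_{l,i}(\vec u)\ge 0$, both $R_2(\mu,\vec u)$ and $R_2(\mu,\vec u)+R_3(\mu,\vec u)$ are non-increasing in $\mu_0$ (with $\mu_1,\mu_2$ fixed) and, separately, in $\mu_2$ (with $\mu_0,\mu_1$ fixed). Hence, if the event $\{R_2\le 0,\ R_2+R_3\le 0\}$ holds at some value of $\mu_0$, it persists for every larger value, and likewise in $\mu_2$; so for each fixed $\vec u$ the maps $\mu_0\mapsto\beta_2^{(\mu_0,\mu_1,\mu_2)}(\vec u)$ and $\mu_2\mapsto\beta_2^{(\mu_0,\mu_1,\mu_2)}(\vec u)$ are non-decreasing.

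To finish, I would multiply by the non-negative weight $g(u_1)\ge 0$ (strictly positive by Assumption~\ref{as:assumption4}) and integrate over $Q$; pointwise monotonicity of the integrand is preserved, so $\mu_0\mapsto\int_Q\beta_2^{(\mu_0,\mu_1,\mu_2)}(\vec u)\,g(u_1)\,d\vec u$ and $\mu_2\mapsto\int_Q\beta_2^{(\mu_0,\mu_1,\mu_2)}(\vec u)\,g(u_1)\,d\vec u$ are non-decreasing. Subtracting twice these integrals from $1$ flips the direction and yields the claimed non-increasing behaviour in both $\mu_0$ and $\mu_2$. This is essentially the same mechanism already used for the primary coordinate in Lemma~\ref{lem:monotonic_optimality}(2), now applied to the off-coordinates, so I do not anticipate a genuine obstacle; the only points requiring care are tracking the sign flip from the factor $1-2(\cdot)$ and noting that the argument uses solely the non-negativity of the relevant $b_{l,i}$, not their explicit form.
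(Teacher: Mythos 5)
Your proposal is correct and follows essentially the same route as the paper's proof: both exploit the linearity of $R_2$ and $R_2+R_3$ in $\mu_0$ and $\mu_2$ with non-positive slopes (from $b_{l,i}\ge 0$), deduce pointwise monotonicity of the relevant indicator, and integrate against the non-negative weight $g(u_1)$. The only cosmetic difference is that the paper rewrites the map via $\beta_2^\mu = 1-\alpha_2^\mu$ and tracks that $\alpha_2^\mu$ is pointwise non-increasing, whereas you track directly that $\beta_2^\mu$ is pointwise non-decreasing — these are equivalent bookkeeping choices for the same argument.
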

\begin{proof}
The proof is provided in Appendix \ref{app:beta2_mu02_monotone}.
\end{proof}

\begin{lemma}
\label{lem:alpha3_mu01_monotone}
For any fixed $\mu_{2}\in\mathbb R_{+}$, the mappings 
\[
\mu_{0}\longmapsto 2\!\int_{Q}\alpha_{3}^{(\mu_{0},\mu_{1},\mu_{2})}(\vec u)\,
         g(u_{1})g(u_{2})\,d\vec u
\quad\text{and}\quad
\mu_{1}\longmapsto 2\!\int_{Q}\alpha_{3}^{(\mu_{0},\mu_{1},\mu_{2})}(\vec u)\,
         g(u_{1})g(u_{2})\,d\vec u
\]
are non-increasing on $\mathbb R_{+}$ when the remaining coordinates are held fixed, provided the coefficients $b_{l,i}(\vec{u}) \ge 0$ for all $l, i$.
\end{lemma}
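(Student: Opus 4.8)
The plan is to exploit that $\alpha_3^\mu(\vec u)=\mathbbm 1\{R_3(\mu,\vec u)>0\}$ is the indicator of a single linear threshold, which makes this the most transparent of the three cross-monotonicity lemmas. First I would recall from Lemma~\ref{lem:dual_problem} (equation \eqref{eq:rimu1}) that $R_3(\mu,\vec u)=a_3(\vec u)-\mu_0 b_{0,3}(\vec u)-\mu_1 b_{1,3}(\vec u)-\mu_2 b_{2,3}(\vec u)$, so for each fixed $\vec u$ the quantity $R_3$ is affine in every coordinate $\mu_l$ with slope $-b_{l,3}(\vec u)$. Under the standing hypothesis $b_{l,i}(\vec u)\ge 0$ (guaranteed by Lemma~\ref{lem:nonnegative-as-bs}), this slope is nonpositive, so the maps $\mu_0\mapsto R_3(\mu,\vec u)$ and $\mu_1\mapsto R_3(\mu,\vec u)$ are both non-increasing when the remaining coordinates are held fixed. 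This is exactly the mechanism already used for the primary coordinate $\mu_2$ in Lemma~\ref{lem:monotonic_optimality}, now applied to the off-diagonal coordinates.

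The central step is to transfer monotonicity of $R_3$ to pointwise monotonicity of the indicator. Fixing $\vec u$ and taking $\mu_0\le \mu_0'$ with $\mu_1,\mu_2$ held fixed, I would note that $R_3$ at $\mu_0'$ is no larger than at $\mu_0$, so the super-level event shrinks, $\{R_3(\mu_0',\cdot)>0\}\subseteq\{R_3(\mu_0,\cdot)>0\}$, whence $\alpha_3^{(\mu_0',\mu_1,\mu_2)}(\vec u)\le \alpha_3^{(\mu_0,\mu_1,\mu_2)}(\vec u)$ for every $\vec u$. The identical argument with the roles of $\mu_0$ and $\mu_1$ interchanged handles the second mapping.

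To conclude, I would multiply the integrand by the weight $g(u_1)g(u_2)$, which is nonnegative (indeed strictly positive, by Assumption~\ref{as:assumption4}), so that $\alpha_3^\mu(\vec u)\,g(u_1)g(u_2)$ remains pointwise non-increasing in $\mu_0$ and in $\mu_1$. Integrating over $Q$, which is legitimate since the integrand is bounded by $c_5^2$ on the finite-measure set $Q$ via Assumption~\ref{as:assumption5}, and invoking monotonicity of the Lebesgue integral delivers the stated non-increasingness, the factor $2$ being an irrelevant positive constant.

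I do not anticipate a genuine obstacle here. Unlike the companion Lemmas~\ref{lem:alpha1_mu12_monotone} and~\ref{lem:beta2_mu02_monotone}, whose indicators $\alpha_1$ and $\beta_2=1-\alpha_2$ are unions of several half-spaces and therefore require tracking how overlapping super-level events move, $\alpha_3$ is governed by a single inequality, so no case analysis is needed and the monotonicity of a union is never invoked. The only points demanding a moment's care are the sign of the weight $g(u_1)g(u_2)$ and the integrability of the integrand, both immediate from the regularity Assumptions~\ref{as:assumption4} and~\ref{as:assumption5}.
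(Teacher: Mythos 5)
Your proposal is correct and follows essentially the same route as the paper's proof in Appendix~\ref{app:alpha3_mu01_monotone}: linearity of $R_3(\mu,\vec u)$ in each $\mu_l$ with nonpositive slope $-b_{l,3}(\vec u)$, the resulting nesting of the super-level sets $\{R_3>0\}$, pointwise monotonicity of $\alpha_3^{\mu}$, and integration against the nonnegative weight $g(u_1)g(u_2)$. The only cosmetic difference is that you additionally invoke Assumptions~\ref{as:assumption4} and~\ref{as:assumption5} for positivity and integrability, where the paper needs only $g(u_1)g(u_2)\ge 0$ on the bounded domain $Q$.
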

\begin{proof}
The proof is provided in Appendix \ref{app:alpha3_mu01_monotone}.
\end{proof}

\subsection{Computational Strategy Based on the Optimality Conditions}
\label{sec:computing01}

This final subsection presents our proposed computational strategy for solving the MHT problem \eqref{eq:MHTK3}. 
We leverage the theoretical foundations established, namely, the optimality conditions (Theorem \ref{thm:mu_optimality_combined}) and the monotonicity properties (Lemmas \ref{lem:monotonic_optimality}–\ref{lem:alpha3_mu01_monotone}), to develop a principled and efficient coordinate-update algorithm that computes the optimal Lagrange multipliers $\mu^\ast$.

Algorithm \ref{alg:compute_optimal_mu_K3_main} provides the pseudocode for a coordinate-update scheme designed to find the optimal vector $\hat{\vec{\mu}} = (\mu_0, \mu_1, \mu_2)$ that satisfies the optimality system \eqref{eq:f_gamma2} up to a specified tolerance $\varepsilon$. 
The algorithm initializes $\vec{\mu}^{(0)} = (0,0,0)$ and iteratively cycles through the coordinates. 
In each step, it updates one coordinate, $\mu_\gamma^{(t)}$, by finding the root of its corresponding target function $F_\gamma(\cdot)$ (holding the other two coordinates fixed at their most recent values). 
This root-finding step is performed by the subroutine \texttt{ComputeCoordinateMu} (described in Algorithm \ref{alg:ComputeCoordinateMu} in Appendix \ref{app:algos}). The outer loop continues until the $\ell_2$ norm of the change between successive iterates, $\|\vec{\mu}^{(t)}-\vec{\mu}^{(t-1)}\|_2$, falls below the tolerance $\varepsilon$, or a maximum iteration count $T_{\max}$ is reached.

\begin{algorithm}[htbp]
\caption{Coordinate–Update Algorithm for Optimal $\mu$}
\label{alg:compute_optimal_mu_K3_main}
\begin{algorithmic}[1]
\Statex \textbf{Input:}
\Statex \hspace{0.6em} Desired FWER level $\alpha$, inner tolerance $\delta$, outer tolerance $\varepsilon$, maximum outer iterations $T_{\max}$, search ceiling $U_{\max}$.
\vspace{2pt}

\Statex \textbf{Initialisation:}
\Statex \hspace{0.6em} $\vec{\mu}^{(0)}\gets(0,0,0)$, \; $t\gets 0$.

\While{True}
    \State $t\gets t+1$
    \State $\mu_0^{(t)}\gets  
            \texttt{ComputeCoordinateMu}\bigl(F_0(\cdot;\mu_1^{(t-1)},\mu_2^{(t-1)}),
                                         \alpha,\delta,U_{\max}\bigr)$   \Comment{See \eqref{eq:f_gamma}; update first coordinate}
    \State $\mu_1^{(t)}\gets  
            \texttt{ComputeCoordinateMu}\bigl(F_1(\cdot;\mu_0^{(t)},\mu_2^{(t-1)}),
                                         \alpha,\delta,U_{\max}\bigr)$   \Comment{See \eqref{eq:f_gamma}; update second coordinate}
    \State $\mu_2^{(t)}\gets  
            \texttt{ComputeCoordinateMu}\bigl(F_2(\cdot;\mu_0^{(t)},\mu_1^{(t)}),
                                         \alpha,\delta,U_{\max}\bigr)$ \Comment{See \eqref{eq:f_gamma}; update third coordinate}
    
    \If{$\|\vec{\mu}^{(t)}-\vec{\mu}^{(t-1)}\|_2\le\varepsilon$} \Comment{Convergence check}
        \State \textbf{break}
    \EndIf
    \If{$t\ge T_{\max}$}
        \State \textbf{break}
    \EndIf
\EndWhile
\Statex \textbf{Output:}\quad $\hat{\vec{\mu}}\gets\vec{\mu}^{(t)}$
\end{algorithmic}
\end{algorithm}
The subroutine \texttt{ComputeCoordinateMu} is a robust one-dimensional root-finder. 
It first brackets the root by starting at $L=0$ and expanding an upper bound $U$ (initially $U_s$, then multiplied by $U_f$) until $F_{\gamma}(U) \le \alpha$ or the ceiling $U_{\max}$ is hit. If $F_{\gamma}(0) < \alpha$, it fails, as no non-negative root exists. 
Once the root is bracketed in $[L, U]$, the algorithm applies standard bisection until the tolerance $\delta$ is met or a maximum number of iterations $MaxIter_b$ is reached. 
Given the monotonicity guaranteed by Lemma \ref{lem:monotonic_optimality}, this subroutine reliably returns a $\delta$-accurate root or a clear diagnostic.
Furthermore, in Appendix \ref{sec:convergence} we provide a theoretical guarantee for the convergence of Algorithm \ref{alg:compute_optimal_mu_K3_main}. 
We establish that under mild regularity conditions, the coordinate updates ensure linear convergence to the unique optimal $\mu^\ast$. 
As a result, the computational complexity scales with the logarithm of the reciprocal of the target error, $O(\log(1/\varepsilon))$.

Once the optimal Lagrange multipliers $\hat{\vec{\mu}}=(\hat\mu_{0},\hat\mu_{1},\hat\mu_{2})$ have been computed by Algorithm \ref{alg:compute_optimal_mu_K3_main}, the original MHT problem is effectively solved. 
The optimal decision policy $D^{\star}(\vec{u})$ is then given explicitly by Lemma \ref{lem:opt_decision_mu}:
$$
D^{\star}_{1}(\vec u)=\alpha_{1}^{\hat{\mu}}(\vec u),\qquad
D^{\star}_{2}(\vec u)=\alpha_{1}^{\hat{\mu}}(\vec u)\,\alpha_{2}^{\hat{\mu}}(\vec u),\qquad
D^{\star}_{3}(\vec u)=\alpha_{1}^{\hat{\mu}}(\vec u)\,\alpha_{2}^{\hat{\mu}}(\vec u)\,\alpha_{3}^{\hat{\mu}}(\vec u),
$$
where the indicator functions $\alpha_{k}^{\hat{\mu}}(\vec u)$ are defined in \eqref{eq:alpha_mu_u} using the computed $\hat{\vec{\mu}}$.

Since the vector $\hat{\vec{\mu}}$ satisfies the optimality system \eqref{eq:f_gamma2} (to within tolerance $\varepsilon$), the resulting policy $\vec{D}^{\star}$ is the optimal decision policy. 
See Appendix \ref{sec:d_mu_optimality} for details establishing the strong duality that guarantees this result.
Algorithm \ref{alg:compute_optimal_mu_K3_main} successfully controls the family-wise error rate at the desired level $\alpha$ while simultaneously maximizing the statistical power, thus providing a complete and computationally efficient solution to the MHT problem for $K=3$.

\section{Simulation Study}
\label{sec:simulations}

To assess the practical performance of our proposed method, we conduct four simulation studies comparing Algorithm~\ref{alg:compute_optimal_mu_K3_main} against standard FWER-controlling procedures: Bonferroni (\cite{Bonferroni1936,Miller1981}), Holm (\cite{holm}), Hochberg (\cite{BH95}), Hommel (\cite{hommel}) and the Romano--Wolf stepdown procedure for independent tests (\cite{romano}). 
These studies cover a range of settings: a one-sided truncated normal model, a two-sided mixture normal model, a heavy-tailed $t$-distribution alternative, and a Beta model directly specified on $p$-values. 
In all four cases, our algorithm is configured to maximise the average power $\Pi_3$ while strictly controlling the FWER at $\alpha=0.05$.

Across all designs (Figures~\ref{fig:sim_pi3_trunc}–\ref{tab:sim_pi3_beta}), the resulting $\Pi_3$-optimised policy is substantially more powerful than standard procedures for its target metric $\Pi_3$, often by a wide margin for moderate and strong signals, and typically also delivers comparable or superior minimal power $\Pi_{\text{any}}$. 
Details of the simulation setup, including the construction of $p$-values, the corresponding alternative densities $g(u)$ and verification of our assumptions, are given in Appendix~\ref{app:simulations}.

\paragraph{\texorpdfstring{Testing $K=3$ Independent Normal Means}{Testing K=3 Independent Normal Means}}

We first consider $K=3$ independent and identical one-sided normal mean tests with truncation to enforce our theoretical assumptions. 
For each $k=1,2,3$, the test statistic $X_k$ is drawn from a normal distribution truncated to $[-M,M]$ with $M=6$:
\begin{equation}
\label{eq:sim1_trunc_normal_mean_mht}
    H_{0k}: X_k \sim N(0,1)_{\text{trunc}} \quad \text{vs.} \quad 
    H_{Ak}: X_k \sim N(\theta,1)_{\text{trunc}}, \qquad k=1,2,3,
\end{equation}
with $\theta$ varying from $0$ down to $-4.0$. 
Under $H_{0k}$ we construct $p$-values via the truncated-null CDF so that $u_k \sim U(0,1)$, and derive the corresponding alternative density $g(u)$ by the probability integral transform; the explicit form and verification of the assumptions on $g$ are given in Appendix~\ref{app:sim_truncnormal_details}.

Figure~\ref{fig:sim_pi3_trunc} reports the average power $\Pi_3$ (left) and minimal power $\Pi_{\text{any}}$ (right) under the global alternative (all three hypotheses false), based on $N=120{,}000$ Monte Carlo replications. 
As expected, our $\Pi_3$-optimised policy achieves the highest average power across signal strengths; for example, at $\theta=-2.0$ the average power is $0.650$ versus $0.555$ for Hommel, and at $\theta=-3.0$ the values are $0.965$ versus $0.900$. 
The same policy also uniformly dominates competing procedures for $\Pi_{\text{any}}$ (e.g.\ $0.932$ versus $0.853$ for Hommel when $\theta=-2.0$), indicating that the gain is not restricted to a single power metric but provides a generally more powerful test. 
Furthermore, at the global null ($\theta=0$), all procedures maintain FWER near $0.05$, while our optimised policy is conservative (empirical FWER $\approx 0$).

\begin{figure}[htbp]
\centering
\includegraphics[width=0.7\textwidth]{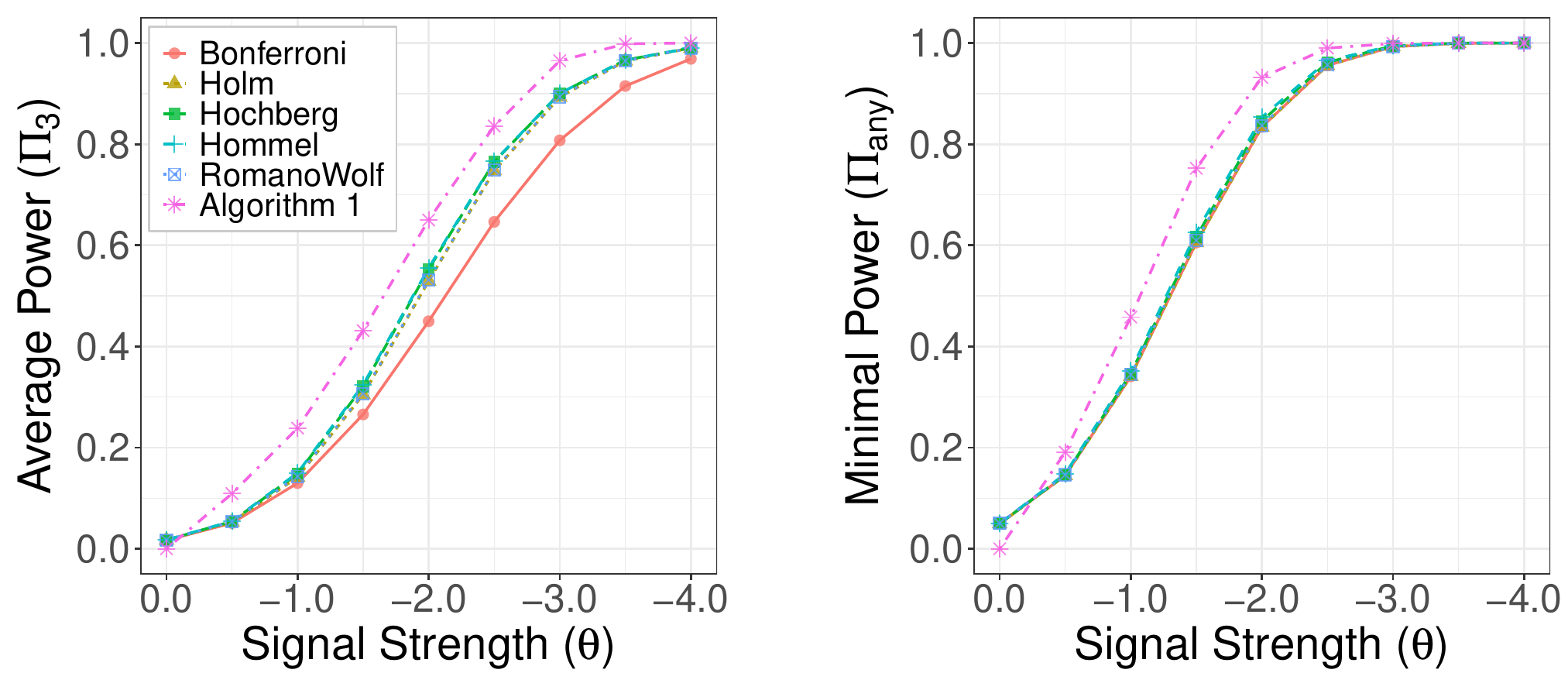}
\caption{Average power $\Pi_3$ (left) and minimal power $\Pi_{\text{any}}$ (right) of the $\Pi_3$-optimised policy under the truncated normal alternative with all three hypotheses false. 
}
\label{fig:sim_pi3_trunc}
\end{figure}

\paragraph{Testing Two-Sided Mixture Normal Alternatives}
We next consider a two-sided testing problem with a symmetric Gaussian mixture alternative:
\begin{equation}
\label{eq:sim2_mixture_normal_mht}
    H_{0k}: X_k \sim N(0,1) \quad \text{vs.} \quad 
    H_{Ak}: X_k \sim 0.5\,N(\theta,1) + 0.5\,N(-\theta,1), \qquad k=1,2,3,
\end{equation}
with $\theta$ again varying from $0$ to $-4.0$. 
We use standard two-sided $p$-values $u_k = 2 \Phi(-|X_k|)$, which are uniform under $H_{0k}$; the induced alternative density $g(u)$ is obtained via the likelihood ratio and shown in Appendix~\ref{app:sim_mixture_details} to be monotone and strictly positive, satisfying our key assumptions, though unbounded near~0, violating the regularity condition in Assumption \ref{as:assumption5}.

Figure~\ref{fig:sim_mixture_pi3} shows that Algorithm~\ref{alg:compute_optimal_mu_K3_main} again yields higher average power than standard procedures across the range of $\theta$. 
At $\theta=-2.0$ the $\Pi_3$ of our method is $0.504$ versus $0.415$ for Hommel, increasing to $0.905$ versus $0.828$ at $\theta=-3.0$. 
The minimal power $\Pi_{\text{any}}$ for the same policy is also uniformly higher (e.g.\ $0.804$ versus $0.737$ at $\theta=-2.0$), indicating a generally more powerful test, and the procedure remains conservative at the global null (empirical FWER $\approx 0$). 
Taken together, these results show that the gains from the optimal policy extend to multimodal, two-sided alternatives and are robust to violations of the boundedness assumption on $g$, further underscoring the robustness of Algorithm~\ref{alg:compute_optimal_mu_K3_main} to mild departures from the theoretical conditions.

\begin{figure}[htbp]
\centering
\includegraphics[width=0.7\textwidth]{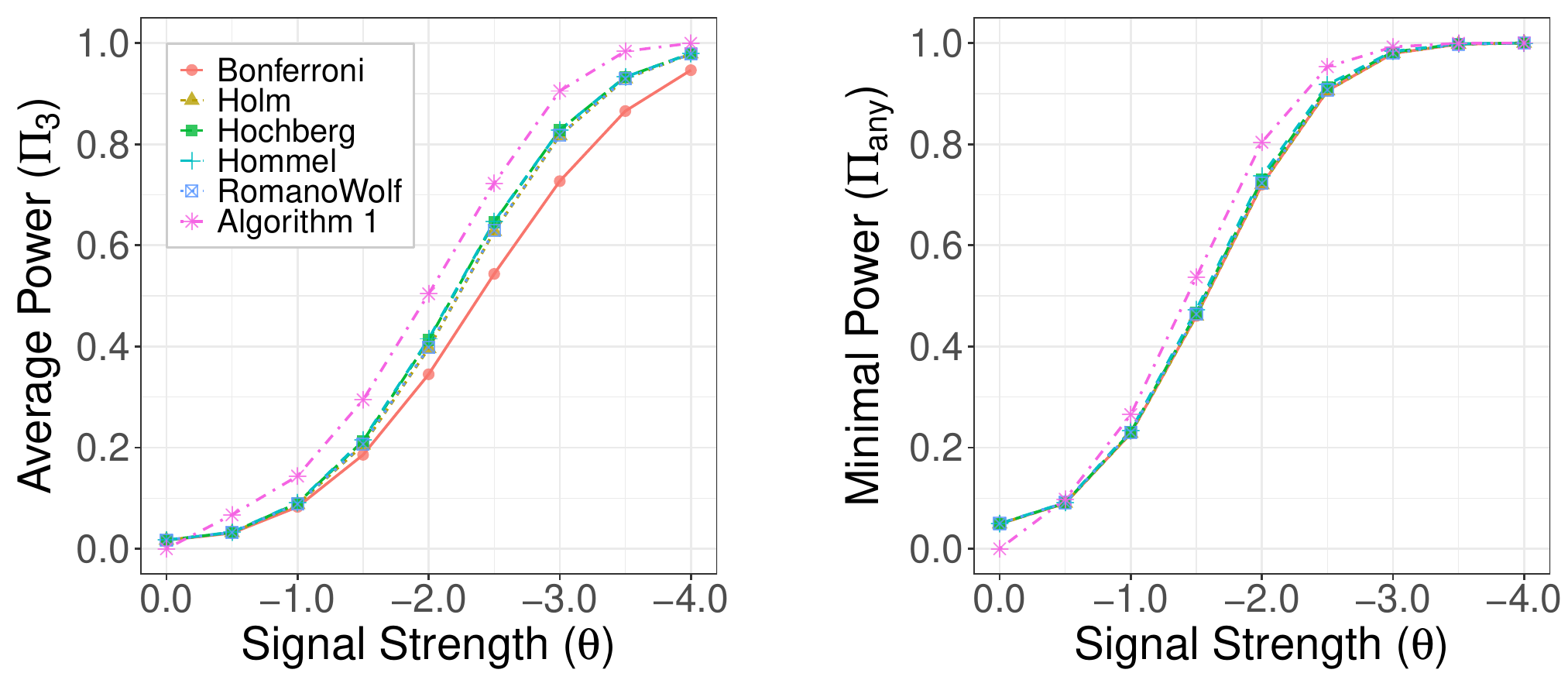}
\caption{Average power $\Pi_3$ (left) and minimal power $\Pi_{\text{any}}$ (right) of the $\Pi_3$-optimized policy under the two-sided mixture normal model with all three hypotheses false. 
}
\label{fig:sim_mixture_pi3}
\end{figure}

\paragraph{\texorpdfstring{Testing Heavy-Tailed Alternatives ($t$-Distribution)}{Testing Heavy-Tailed Alternatives (t-Distribution)}}

To examine robustness to heavy tails, we consider a Student $t$-distribution alternative:
\begin{equation}
\label{eq:sim3_t_dist_mht}
    H_{0k}: X_k \sim N(0,1) \quad \text{vs.} \quad 
    H_{Ak}: X_k \sim t_{\text{df}}, \qquad k=1,2,3,
\end{equation}
with degrees of freedom $\text{df} \in \{2,4,\dots,20\}$. 
Two-sided $p$-values $u_k = 2\Phi(-|X_k|)$ are again uniform under $H_{0k}$; the corresponding density $g(u)$ is derived and is shown in Appendix~\ref{app:sim_t_details} to be non-increasing, although unbounded near $0$.

Figure~\ref{fig:sim_t_dist} displays the results. 
For the target metric $\Pi_3$ (left), our algorithm consistently achieves higher power than the benchmarks; at $\text{df}=2$ the average power is $0.166$ versus $0.146$ for Hommel, with similar relative gains across other degrees of freedom. 
For $\Pi_{\text{any}}$ (right), the $\Pi_3$-optimised policy is the performance of our algorithm is comparable to the standard procedures (e.g.\ at $\text{df}=2$ it yields $0.322$ versus $0.365$ for Hommel).
Consistent with the previous simulations, the algorithm maintains stable convergence and exact FWER control even when the alternative density is unbounded, confirming robustness to the boundedness assumption and yielding a generally more powerful test.

\begin{figure}[htbp]
\centering
\includegraphics[width=0.7\textwidth]{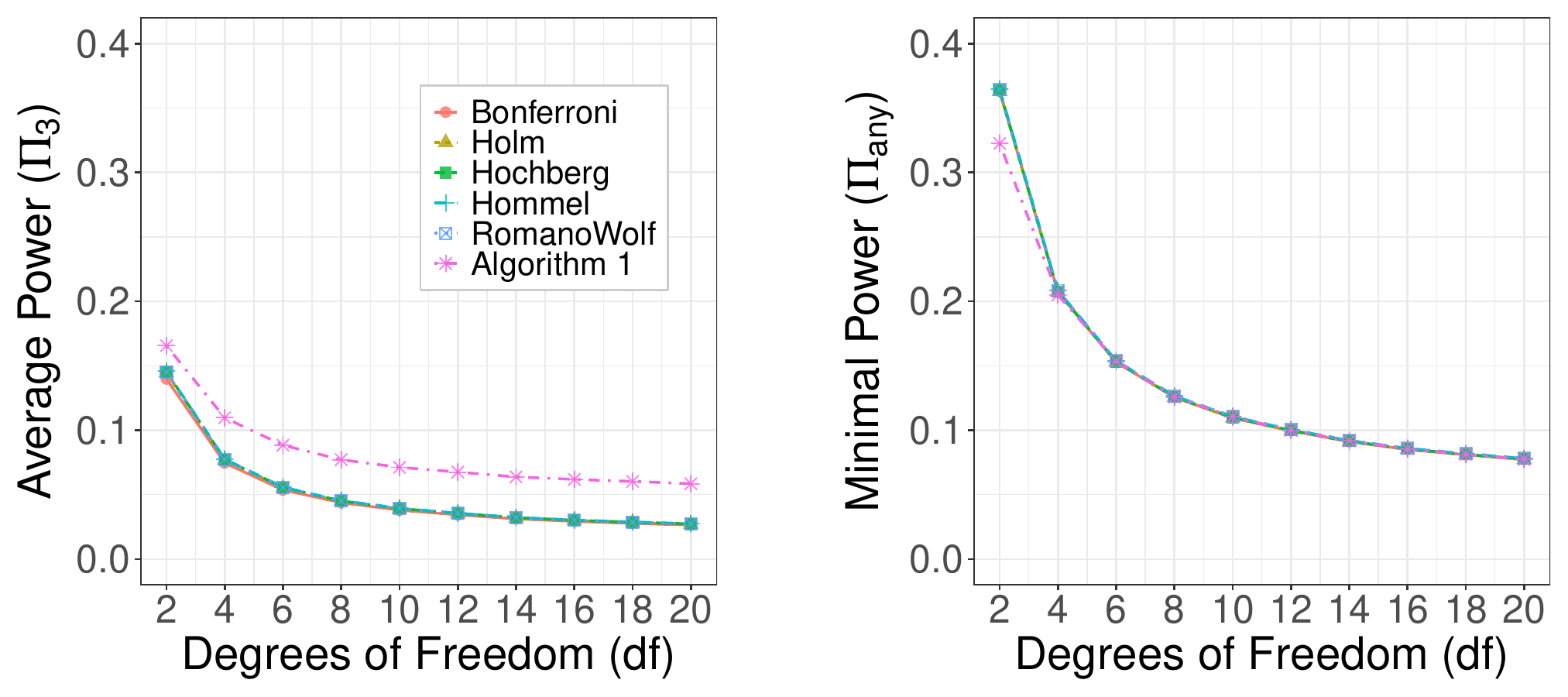}
\caption{Average power $\Pi_3$ (left) and minimal power $\Pi_{\text{any}}$ (right) under the $t$-distribution alternative. Algorithm~\ref{alg:compute_optimal_mu_K3_main} is optimised for $\Pi_3$, showing consistent power gains (left).}
\label{fig:sim_t_dist}
\end{figure}

\paragraph{Testing Parameters of the Beta Density Function}

Finally, we consider a setting where the $p$-values themselves follow a Beta distribution. 
Under the null, $u_k \sim \text{Beta}(1,1)$ (uniform on $(0,1)$), while under the alternative $u_k \sim \text{Beta}(\theta,1)$ with $\theta < 1$:
\begin{equation}
\label{eq:sim2_beta_mht}
    H_{0k}: u_k \sim \text{Beta}(1,1) \quad \text{vs.} \quad 
    H_{Ak}: u_k \sim \text{Beta}(\theta,1), \qquad k=1,2,3,
\end{equation}
with $\theta \in \{0.8,0.6,0.4,0.2\}$. 
As $\theta$ decreases, the alternative places more mass near $0$, corresponding to stronger signals. 
Here the alternative density $g(u)$ is simply the $\text{Beta}(\theta,1)$ density, $g(u) = \theta u^{\theta-1}$; full details are in Appendix~\ref{app:sim_beta_details}.

Figure~\ref{tab:sim_pi3_beta} shows that our policy achieves uniformly higher average power $\Pi_3$ than Hommel across all $\theta$ (e.g.\ $0.576$ vs.\ $0.490$ at $\theta=0.2$), and also substantially higher $\Pi_{\text{any}}$ (e.g.\ $0.854$ vs.\ $0.830$ at $\theta=0.2$), confirming that the gains persist in a fully $p$-value-based setting.

\begin{figure}[htbp]
\centering
\includegraphics[width=0.7\textwidth]{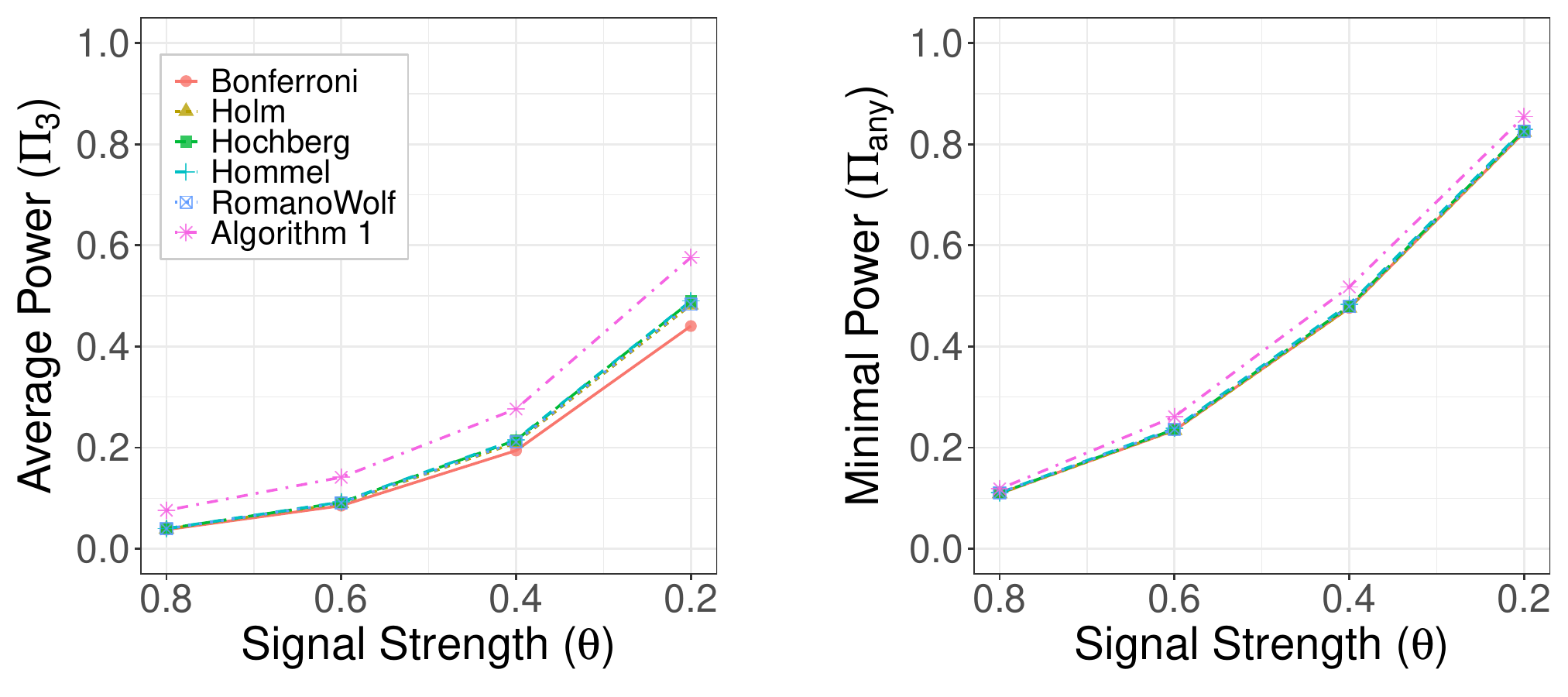}
\caption{Average power $\Pi_3$ (left) and any-discovery power $\Pi_{\text{any}}$ (right) for the $\Pi_3$-optimised policy under the $\text{Beta}(\theta,1)$ alternative. 
}
\label{tab:sim_pi3_beta}
\end{figure}

\section{Real-World Applications}
\label{sec:realdata}
To demonstrate the practical value and domain-general applicability of our algorithm, we apply it to two distinct real-world datasets from different scientific fields: a clinical trial subgroup analysis (Section~\ref{sec:bcg}) and a financial asset pricing problem (Section~\ref{sec:finance_app}). 
Both experiments represent common $K=3$ testing scenarios where standard FWER methods are known to suffer from a loss of power.
In both domains, these results show that our optimal test can find practical, significant, and interpretable results that are overlooked by standard procedures.
Details regarding the experiment setup, code, and datasets are provided in Appendix \ref{app:experiments}.

\subsection{BCG Vaccine Subgroup Analysis}
\label{sec:bcg}
We evaluate Algorithm~\ref{alg:compute_optimal_mu_K3_main} on the open BCG vaccine dataset ($K=3$, $\alpha=0.05$) across four subgroup splits, benchmarking against Holm, Hommel, and Closed-Stouffer (\cite{Marcus1976}, \cite{Stouffer1949}).
While all methods agree where evidence is strong, our algorithm uniquely rejects the \texttt{low\_risk} subgroup ($p\approx 0.094$) in the \emph{baseline-risk} (underlying incidence) split.
Algorithm \ref{alg:compute_optimal_mu_K3_main} detects this borderline effect, providing clinically relevant evidence of the vaccine efficacy in low-incidence populations that standard procedures miss.

\paragraph{Dataset, subgroup construction and hypotheses}
We analyse the public \texttt{dat.bcg} dataset from the \texttt{metadat} package, which aggregates independent trials comparing BCG vaccination against a control.
For each trial, we extract the counts of tuberculosis cases and non-cases in the vaccinated (treatment) (\texttt{tpos}, \texttt{tneg}) and control (\texttt{cpos}, \texttt{cneg}) arms.
We also utilise the allocation method (\texttt{alloc} $\in\{\text{random}, \text{alternate}, \text{systematic}\}$) to proxy for trial conduct, and absolute latitude (\texttt{ablat}) to capture environmental variance.
This dataset provides a robust, reproducible benchmark due to its heterogeneity across diverse eras, geographies, and experimental protocols.
See Appendix \ref{app:bcg_details} for additional details.

We define the MHT we are solving in this scenario.
A “three-way split” means dividing the trials into three \emph{non-overlapping} subgroups along one meaningful axis, then analyzing each subgroup separately.
We consider four splits:
\begin{enumerate}\setlength\itemsep{0pt}
\item \emph{Allocation (trial conduct):} $S_1=\{\texttt{alloc}=\text{random}\}$, $S_2=\{\texttt{alloc}=\text{alternate}\}$, $S_3=\{\texttt{alloc}=\text{systematic}\}$.
This isolates differences in assignment and concealment.
\item \emph{Latitude (geography/environment):} Defined by sample tertiles of absolute latitude (\texttt{ablat}): $S_1$ (low), $S_2$ (mid), $S_3$ (high).
Geography proxies background exposure to environmental mycobacteria.
\item \emph{Baseline risk (underlying incidence):} Defined by tertiles of the control-arm risk $\mathrm{CER}=\texttt{cpos}/(\texttt{cpos}+\texttt{cneg})$: $S_1$ (low), $S_2$ (mid), $S_3$ (high).
This tests whether benefit is consistent in low- versus high-incidence settings.
\item \emph{Era (time proxy):} Defined by tertiles of the trial date: $S_1$ (early), $S_2$ (mid), $S_3$ (late).
\end{enumerate}

For a fixed split with subgroups $S_1, S_2, S_3$, let $\theta_k$ denote the true log risk ratio (BCG vs.\ control) in subgroup $S_k$.
We formulate the simultaneous test of $K=3$ hypotheses:
\begin{equation}
\label{eq:mht_experiment}
H_{0k}:\ \theta_k = 0 \qquad \text{vs.} \qquad H_{Ak}:\ \theta_k \neq 0,\qquad k=1,2,3,
\end{equation}
with strong FWER control at \(\alpha=0.05\) \emph{within that split}.
For each subgroup, we compute a pooled log risk ratio and standard error using fixed-effects meta-analysis (see Appendix \ref{app:bcg_details} for estimation details) and convert these to two-sided $p$-values.

\paragraph{Results} We compare the rejection decisions of Algorithm \ref{alg:compute_optimal_mu_K3_main} against the baseline procedures.
In splits where the evidence is uniformly strong, specifically the \emph{allocation}, \emph{latitude}, and \emph{era} splits, all procedures reject all three null hypotheses (see Appendix \ref{app:bcg_details} for tabulated results).
This confirms that our method maintains power comparable to standard approaches when signals are unambiguous.

However, the \emph{baseline-risk} outcome contains a borderline subgroup where methods differ.
The pooled \(p\)-values for the low, mid, and high risk subgroups are $p_{\text{low}} \approx 0.0942$, $p_{\text{mid}} \approx 8.38\times 10^{-28}$, and $p_{\text{high}} \approx 1.02\times 10^{-24}$ respectively (Figure \ref{fig:risk_separate}).
Holm, Hommel, and Closed-Stouffer fail to reject the null for the \texttt{low\_risk} subgroup.
In contrast, Algorithm \ref{alg:compute_optimal_mu_K3_main} rejects $H_{0,\text{low}}$, leveraging the decisive evidence in the other two subgroups to gain power on the borderline case while maintaining exact FWER control.

This result has substantive implications for public health.
The \texttt{low\_risk} subgroup corresponds to populations where TB incidence is naturally low.
By rejecting the null hypothesis for this group, our method provides statistical support that the vaccine remains effective in low-incidence settings, a conclusion missed by standard procedures.

\begin{figure}[htbp]
\centering
\includegraphics[width=0.7\textwidth]{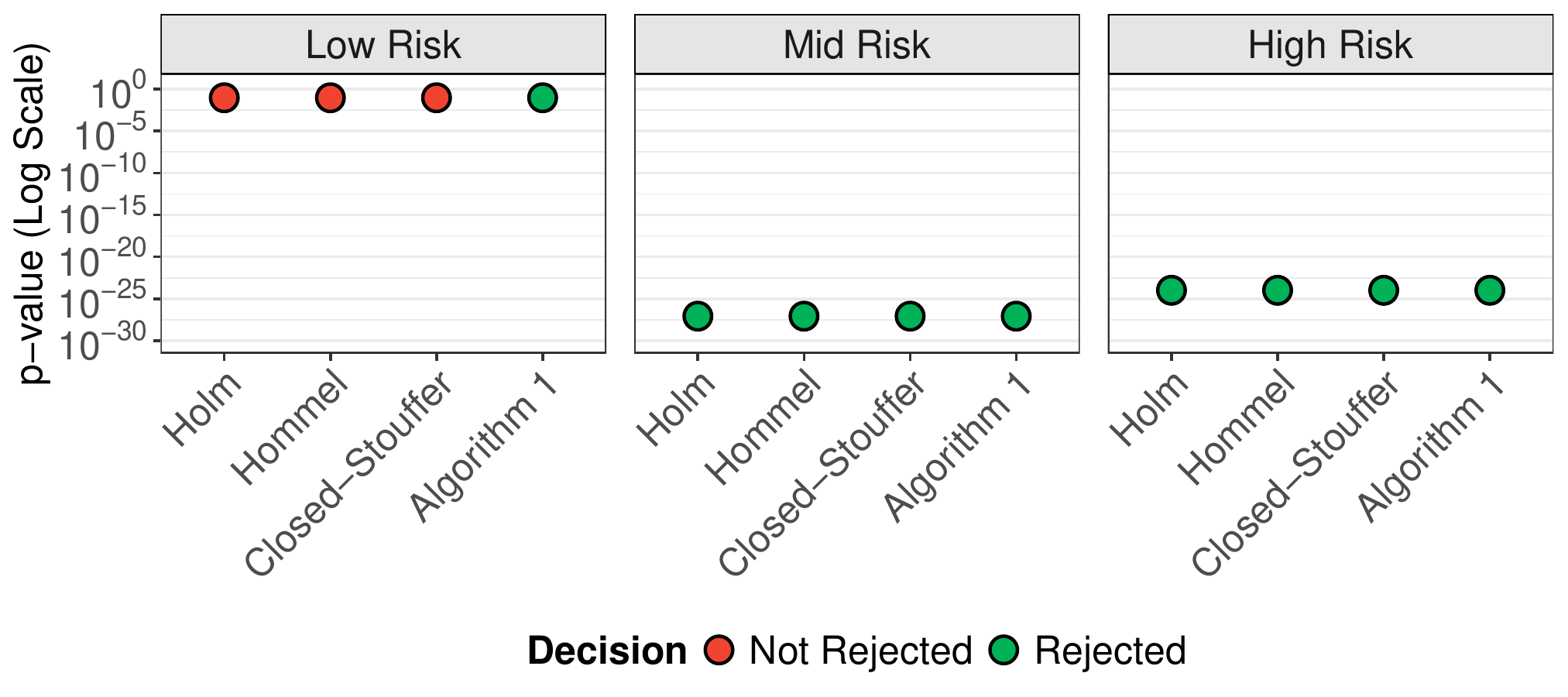}

\caption{Baseline-risk outcome: pooled \(p\)-values and decisions at \(\alpha=0.05\). Standard methods fail to reject the first hypothesis ($p \approx 0.094$), while Algorithm \ref{alg:compute_optimal_mu_K3_main} successfully rejects all three.}
\label{fig:risk_separate}
\end{figure}

\subsection{Finding Missed Discoveries in the Financial Factor Zoo}
\label{sec:finance_app}
To demonstrate the practical value of our proposed algorithm across domains, we apply it to a central problem in empirical finance: identifying genuine return-predicting signals from the `factor zoo' (\cite{cochrane11}; \cite{harvey16}).
We use the open-source Fama-French data library \cite{famadata} to test if three prominent factors: Momentum, Profitability, and Investment, provide significant explanatory power beyond the foundational Fama and French (1993) 3-factor model \cite{fama93}.
This creates a multiple testing problem where standard adjustments like Bonferroni or Holm are often criticized as overly conservative, substantially reducing power and causing missed true discoveries (\cite{zhu4}).

We run $K=3$ separate time-series regressions using monthly data from July 1963 to the present.
The test factors are defined as follows:
\begin{enumerate}
\item $R_{\text{MOM},t}$: The Momentum (MOM) factor, which buys stocks that have performed well in the recent past (winners) and sells stocks that have performed poorly (losers).
\item $R_{\text{RMW},t}$: The Profitability (Robust-Minus-Weak) factor, which buys stocks of firms with high profitability and sells stocks of firms with low profitability.
\item $R_{\text{CMA},t}$: The Investment (Conservative-Minus-Aggressive) factor, which buys stocks of firms that invest conservatively and sells stocks of firms that invest aggressively.
\end{enumerate}
These factors are theoretically significant, with RMW and CMA forming the basis of the later Fama-French 5-factor model \cite{fama15}.

The control variables correspond to the standard Fama-French (1993) 3-factors: the market excess return ($Mkt_t$), the Size factor ($SMB_t$), and the Value factor ($HML_t$); detailed definitions are provided in Appendix \ref{app:finance_details}.
Using the risk-free rate $RF_t$, we estimate the following regression for each test factor $k \in \{\text{MOM, RMW, CMA}\}$:
\begin{equation} \label{eq:ff_regression}
(R_{k,t} - RF_t) = \alpha_k + \beta_{k,mkt}(Mkt_t) + \beta_{k,smb}(SMB_t) + \beta_{k,hml}(HML_t) + \epsilon_{k,t}
\end{equation}
The parameter of interest is the intercept, $\alpha_k$, which represents the `abnormal return' not explained by the 3-factor model.
We simultaneously test the family of $K=3$ composite null hypotheses that these alphas are zero:
$$H_{0,k}: \alpha_k = 0 \quad \text{vs.} \quad H_{A,k}: \alpha_k \neq 0 \quad \text{for } k=1, 2, 3.$$

We model the alternative $p$-value density $g(u)$ using a Student-$t$ distribution with 4 degrees of freedom ($t_4$), motivated by heavy tails of asset and factor returns \citep{bollerslev1987, harvey2000}.
We apply our algorithm and standard FWER-controlling procedures at $\alpha = 0.05$.
The resulting two-sided $p$-values for the intercepts are $p_1 = 0.0023$ (MOM), $p_2 = 0.574$ (RMW), and $p_3 = 0.0008$ (CMA).

\paragraph{Results} 
Table \ref{tab:ff_results} summarizes the results.
Panel A shows that all methods correctly reject the strong Momentum and Investment signals ($p < 0.003$).
Crucially, only our algorithm rejects the Profitability (RMW) factor.
Its high marginal $p$-value is a known artifact of collinearity with the Value factor \citep{fama15}, causing standard methods like Holm to fail.
However, by evaluating the joint evidence in the full $p$-value vector, the optimal policy $\vec{D}^{\mu^*}$ leverages the decisive MOM and CMA signals to correctly identify RMW as significant while guaranteeing FWER $\le 0.05$.
Panel B validates this via simulation, confirming robust FWER control (0.049) and a substantial gain in average power ($\Pi_3$) over competing methods.
\begin{table}[htbp]
\centering
\begin{tabular}{lrrrrrrr}
\multicolumn{8}{l}{\textbf{Panel A: Rejection decisions for three Fama-French factors}} \\
\hline
Hypothesis & $p$-value & Bonf. & Holm & Hoch. & Homm. & RW & Alg. 1 \\
\hline
Momentum (MOM) & 0.0023 & R & R & R & R & R & \textbf{R} \\
Profitability (RMW) & 0.5741 & . & . & . & . & . & \textbf{R} \\
Investment (CMA) & 0.0008 & R & R & R & R & R & \textbf{R} \\
\hline
\multicolumn{8}{l}{\small Note: 'R' indicates rejection of the null hypothesis. `.' indicates failure to reject.}
\end{tabular}

\vspace{1.5em}

\begin{tabular}{lrrr}
\multicolumn{4}{l}{\textbf{Panel B: Simulated FWER and Power under an assumed $t_4$ alternative}} \\
\hline
Method & FWER & Average Power ($\Pi_3$) & Minimal Power ($\Pi_{\text{any}}$) \\
\hline
Bonferroni & 0.050 & 0.074 & 0.207 \\
Holm & 0.050 & 0.077 & 0.207 \\
Hochberg & 0.050 & 0.077 & 0.208 \\
Hommel & 0.050 & 0.077 & 0.209 \\
Romano-Wolf & 0.051 & 0.077 & 0.209 \\
Algorithm \ref{alg:compute_optimal_mu_K3_main} &\textbf{ 0.049} & \textbf{0.110} & \textbf{0.205} \\
\hline
\end{tabular}
\caption{Fama-French `Factor Zoo' results ($\alpha=0.05$). Panel A: Rejection decisions for factor alphas. Panel B: Simulations confirming FWER control and superior power of Algorithm \ref{alg:compute_optimal_mu_K3_main}.}
\label{tab:ff_results}
\end{table}

\section{Conclusion}
\label{sec:conclusion}
This paper bridges the gap between the theoretical existence of optimal multiple testing procedures and their practical implementation, demonstrating that such optimality can be made operational, at least in small-$K$ regimes, through a synthesis of duality theory and rigorous analytical design. 
Building on the framework of \cite{RHPA22}, we provide the first constructive method to compute the most powerful test under exact FWER control for $K=3$.
Methodologically, we completely characterize the dual problem by deriving an explicit optimization over the ordered $p$-value simplex (Lemma~\ref{lemma:optimization_problem_k_3}) and decomposing the dual Lagrangian (Lemma~\ref{lemma:lagrangian_intermsof_mu0}).
Our central theoretical contribution, Theorem~\ref{thm:mu_optimality_combined}, establishes the necessary coordinate-wise optimality conditions for the dual multipliers.
This theoretical structure underpins Algorithm~\ref{alg:compute_optimal_mu_K3_main}, a coordinate-descent procedure that leverages efficient bracketing and bisection to achieve linear convergence to the optimal policy.


Simulations show that this optimality yields substantial power gains: our method consistently outperforms standard procedures such as Holm and Hommel in average power $\Pi_3$, often while also improving minimal power $\Pi_{\text{any}}$. 
Real-data studies corroborate these gains: in the BCG vaccine analysis, our method uniquely detects efficacy in low-incidence populations, and in the financial factor zoo, it recovers the Profitability (RMW) factor missed by conservative corrections.

Future work should prioritize extending these explicit optimality conditions to general $K$.
While directly tackling the full $K$-dimensional dual may be infeasible, low-dimensional projections or scalable approximations offer promising avenues to preserve FWER control while achieving near-optimal power.
Relaxing the independence assumption to accommodate positive or block dependence structures would also significantly broaden the method's applicability.
Furthermore, the linear representations of power and error suggest analogous dual formulations for alternative objectives, such as weighted or minimal power, and error metrics like $k$-FWER or FDR.
Finally, since practical implementation often requires estimating the alternative density $g$, developing robust formulations or empirical-Bayes approaches to account for estimation error remains a critical step for large-scale deployment.

\section*{Supplementary Material and Data Availability}
All simulation code and data, as well as those used for the real-data applications, are available as supplementary material and in the public repository
\url{https://github.com/pdubey96/most-powerful-mht-fwer-supplementary} for reproducibility.
A detailed description of the files are also provided in Appendix~\ref{app:experiments}.

\section*{Acknowledgments}
Dubey gratefully acknowledges partial financial support from the Stewart Topper Fellowship award at the Georgia Institute of Technology. 
Huo was partially sponsored by a subcontract of NSF grant 2229876, the A. Russell Chandler III Professorship at the Georgia Institute of Technology, and the NIH-sponsored Georgia Clinical \& Translational Science Alliance.
We used generative AI tools (ChatGPT and Google Gemini) only for language editing and code formatting support; all data, results, and mathematical derivations are the authors’ own work.

\bibliographystyle{plainnat}
\bibliography{cite}

\newpage

\appendix

\appendix

\section{Supplementary Background for the Problem Formulation}
\label{app:pf_supp}

This appendix collects background material and notation that is standard in the development of our framework but was omitted from the main text for concision. In particular, it records (i) additional permutation notation and its relationship to symmetry; (ii) intuition for the arrangement-increasing condition; (iii) the full LR-ordering theorem and corollary from \citet{RHPA22}; and (iv) notation used in the linear integral representations of power and error.

\subsection{Additional notation for permutations and likelihood ratios}
\label{app:perm_notation}

Let $S_K$ denote the set of all permutations of $[K]$. For $\sigma\in S_K$, we write $\sigma(\vec X)$ (resp.\ $\sigma(\vec h)$) for the vector obtained by permuting the coordinates of $\vec X=(\vec X_1,\dots,\vec X_K)$ (resp.\ $\vec h=(h_1,\dots,h_K)$) according to $\sigma$. We use $\sigma_{ij}\in S_K$ to denote the transposition that swaps only the $i$-th and $j$-th entries. For example,
\[
\sigma_{12}(\vec X)=(\vec X_2,\vec X_1,\vec X_3,\dots,\vec X_K),\qquad
\sigma_{12}(\vec h)=(h_2,h_1,h_3,\dots,h_K).
\]
When likelihood ratios are used, we also write $\vec\Lambda(\vec X)=(\Lambda_1(\vec X_1),\dots,\Lambda_K(\vec X_K))$ and note that
\[
\vec\Lambda(\sigma_{12}(\vec X))=(\Lambda_1(\vec X_2),\Lambda_2(\vec X_1),\Lambda_3(\vec X_3),\dots,\Lambda_K(\vec X_K)).
\]
These conventions are useful when stating symmetry and arrangement-based conditions.

\subsection{Interpreting the arrangement-increasing condition}
\label{app:arr_inc_intuition}

Assumption~\ref{as:assumption2} (arrangement-increasing) formalizes the idea that the joint likelihood should not decrease when stronger evidence is aligned with true alternatives. To illustrate, consider indices $i\neq j$ with $h_i=1$ (alternative true) and $h_j=0$ (null true). If the observed data yield $\Lambda_i(\vec X_i)<\Lambda_j(\vec X_j)$, then the evidence is ``misaligned'' (the null index looks more significant). In this case, the condition in \eqref{eq:assumption2_eq1} holds, and arrangement-increasing asserts that swapping the entries to correct the misalignment cannot decrease the likelihood:
\[
\mathcal L_{\vec h}(\vec X)\le \mathcal L_{\vec h}(\sigma_{ij}(\vec X)).
\]
This monotonicity is what ultimately justifies restricting attention to decision rules that reject hypotheses with larger likelihood ratios at least as often as those with smaller likelihood ratios.

\subsection{LR-ordered decision policies}
\label{app:lrorder_full}

For completeness, we record the main LR-ordering improvement result and its corollary from \citet{RHPA22}. These results are invoked in the main text to justify restricting the search in \eqref{eq:mtpintitial1} to the class of symmetric LR-ordered policies.

\begin{theorem}\label{app:thm:lrorder}
(\cite[Theorem 1]{RHPA22})
Assume:
\begin{enumerate}
    \item Assumptions~\ref{as:assumption1} and~\ref{as:assumption2} hold;
    \item The power criterion $\Pi$ is $\Pi_{\text{any}}$ or $\Pi_l$ for $1\le l\le K$, and the error criterion $\mathrm{Err}$ is either $\mathrm{FDR}$ or $\mathrm{FWER}$.
\end{enumerate}
Assume also that we are given a symmetric decision policy $\vec D$. Then there exists a symmetric policy $\vec E$ such that:
\begin{enumerate}
    \item $\vec E$ is LR-ordered;
    \item $\mathrm{Err}_{\vec h_l}(\vec E)\le \mathrm{Err}_{\vec h_l}(\vec D)$ for all $0\le l<K$;
    \item $\Pi(\vec E)\ge \Pi(\vec D)$.
\end{enumerate}
If $\vec D$ is already LR-ordered, one may take $\vec E=\vec D$.
\end{theorem}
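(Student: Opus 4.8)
The plan is to construct $\vec E$ by rearranging the decisions of the given symmetric $\vec D$ so that larger likelihood ratios receive larger decisions, and then to verify the three conclusions through a sequence of pairwise exchange inequalities driven by the arrangement-increasing assumption.

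First I would define the candidate policy. At each $\vec X$, let $R(\vec X)=\sum_{k=1}^{K}D_k(\vec X)$ be the number of rejections made by $\vec D$. Since $\vec D$ is symmetric (Definition~\ref{def:decision_rule}), $R$ is permutation-invariant: $R(\sigma(\vec X))=R(\vec X)$ for all $\sigma\in S_K$. I then define $\vec E(\vec X)$ to reject exactly the $R(\vec X)$ coordinates carrying the largest likelihood ratios $\Lambda_k(\vec X_k)$, with a fixed coordinate-based tie-breaking rule. By construction $\vec E$ is LR-ordered in the sense of Definition~\ref{def:lrordering}, and because both $R$ and the ``reject the top $R$ by $\Lambda$'' selection are permutation-equivariant, $\vec E$ is again symmetric. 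If $\vec D$ is already LR-ordered, this construction returns $\vec E=\vec D$, which gives the final clause. This settles conclusion~1 and fixes $\vec E$ for the remaining claims.

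Next I would establish the power and error inequalities through a single pairwise-exchange lemma, then iterate. Writing $\Pi_l(\vec D)=\tfrac1l\,\mathbb E_{\vec h_l}[\sum_{k\le l}D_k]$, $\mathrm{FWER}_l(\vec D)=\mathbb P_{\vec h_l}(V>0)$, and $\mathrm{FDR}_l(\vec D)=\mathbb E_{\vec h_l}[V/R;\,R>0]$ as integrals against $f_{\vec h_l}$, I fix indices $i\neq j$ and consider the elementary operation that swaps the decisions on coordinates $i,j$ on the entire symmetric orbit of every $\vec X$ with $\Lambda_i(\vec X_i)>\Lambda_j(\vec X_j)$ but $D_i(\vec X)<D_j(\vec X)$ (an inversion); applying the swap on the whole orbit preserves symmetry. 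The change in any of the four functionals under one such swap is an integral over the inversion region of a decision increment multiplied by a density difference $f_{\vec h_l}(\vec X)-f_{\vec h_l}(\sigma_{ij}(\vec X))$. Here $\vec h$-exchangeability (Assumption~\ref{as:assumption1}) lines up the configuration-permuted densities, and the arrangement-increasing property (Assumption~\ref{as:assumption2}, via \eqref{eq:assumption2_eq1}) supplies the crucial sign: aligning the larger likelihood ratio with an alternative coordinate cannot decrease $\mathcal L_{\vec h}$, so each correcting swap contributes non-negatively to power and non-positively to the error. Summing over the finitely many inversions needed to sort $\vec D$ into $\vec E$ yields $\Pi(\vec E)\ge\Pi(\vec D)$ and $\mathrm{Err}_{\vec h_l}(\vec E)\le\mathrm{Err}_{\vec h_l}(\vec D)$ for all $0\le l<K$, giving conclusions~2 and~3. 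For $\Pi=\Pi_{\text{any}}=\mathbb P_{\vec h_K}(R>0)$ the argument collapses, since $\vec E$ preserves $R(\vec X)$ pointwise and hence the event $\{R>0\}$, so $\Pi_{\text{any}}(\vec E)=\Pi_{\text{any}}(\vec D)$ exactly.

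The main obstacle I anticipate is the bookkeeping in the pairwise-exchange step: one must perform the swap on the full permutation orbit rather than pointwise so that symmetry is retained at every stage, express each functional as an integral whose integrand isolates the difference $f_{\vec h_l}-f_{\vec h_l}\circ\sigma_{ij}$, and then check that the inequality $\Lambda_i>\Lambda_j$ on the inversion region matches exactly the hypothesis \eqref{eq:assumption2_eq1} so that Assumption~\ref{as:assumption2} delivers the correct orientation. The $\mathrm{FDR}_l$ case requires the extra observation that $R$ is invariant under the swap, so only the false-rejection count $V$ changes and the ratio $V/R$ moves monotonically on $\{R>0\}$; carefully handling that event and the ties among likelihood ratios are the remaining routine details.
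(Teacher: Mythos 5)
The paper itself does not prove this statement: Theorem~\ref{app:thm:lrorder} is imported verbatim from \citet[Theorem~1]{RHPA22}, and Appendix~\ref{app:lrorder_full} records it without argument. So there is no in-paper proof to compare against; what you have reconstructed is the standard rearrangement-and-exchange argument underlying the cited result, and your outline is sound: define $\vec E$ by rejecting the top-$R(\vec X)$ coordinates in LR order (which preserves $R$ pointwise, immediately giving $\Pi_{\text{any}}(\vec E)=\Pi_{\text{any}}(\vec D)$ and the final clause when $\vec D$ is already LR-ordered), then reduce to pairwise transpositions whose sign is controlled by Assumption~\ref{as:assumption2} after pairing $\vec X$ with $\sigma_{ij}(\vec X)$ and invoking Assumption~\ref{as:assumption1}.

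Three wrinkles deserve explicit repair. First, your fixed coordinate-based tie-breaking makes $\vec E$ fail \emph{both} symmetry (Definition~\ref{def:decision_rule}) and LR-ordering (Definition~\ref{def:lrordering}) on the tie set $\{\Lambda_i(\vec X_i)=\Lambda_j(\vec X_j)\}$: equality of LRs forces $E_i=E_j$ under Definition~\ref{def:lrordering}, which is incompatible with rejecting exactly $R$ coordinates when a tie straddles the cutoff. Under continuous LR distributions this set is null and the functionals are unaffected, but the conclusions as stated are pointwise, so you should either impose equal decisions on ties (modifying $R$ there) or note the almost-everywhere caveat. Second, sorting a single pair $(i,j)$ is not equivariant under permutations that move $i$ or $j$ elsewhere, so your intermediate policies are generally \emph{not} symmetric; the fix is that the per-swap inequality never needs symmetry of the intermediate policy — the orbit pairing of $\vec X$ with $\sigma_{ij}(\vec X)$ suffices — and only the fully sorted terminal policy must be (and is) symmetric. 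Third, for $\mathrm{FWER}_l$ the change under a swap is not pointwise of the bilinear form you describe: with $(D_i,D_j)$ going from $(0,1)$ to $(1,0)$, the indicator $\mathbbm 1\{V>0\}$ changes only on the event that no \emph{other} null is rejected and $h_i\neq h_j$, and only after summing the contributions at $\vec X$ and $\sigma_{ij}(\vec X)$ does the change collapse to $\pm\bigl(f_{\vec h_l}(\vec X)-f_{\vec h_l}(\sigma_{ij}(\vec X))\bigr)$, whose sign \eqref{eq:assumption2_eq1} then fixes; your $\mathrm{FDR}_l$ remark (swap-invariance of $R$, monotone movement of $V/R$ on $\{R>0\}$) is the right observation and the same pairing closes that case. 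With these points made precise, your sketch is a correct proof of the quoted theorem.
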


\begin{corollary}\label{app:cor:lrorder}
(\cite[Corollary 1]{RHPA22})
Under the assumptions of Theorem~\ref{app:thm:lrorder}, if an optimal solution is found within the class of LR-ordered symmetric policies, it is globally optimal for \eqref{eq:mtpintitial1}.
\end{corollary}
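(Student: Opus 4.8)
The plan is to sandwich the optimal value over symmetric LR-ordered policies between two copies of the global optimum $f^\ast$ of \eqref{eq:mtpintitial1}. Write $\mathcal P$, $\mathcal P_{\mathrm{sym}}$, and $\mathcal P_{\mathrm{sym,LR}}$ for the feasible policies that are, respectively, arbitrary, symmetric, and symmetric and LR-ordered, so that $\mathcal P_{\mathrm{sym,LR}}\subseteq\mathcal P_{\mathrm{sym}}\subseteq\mathcal P$. The inclusion immediately gives $\sup_{\mathcal P_{\mathrm{sym,LR}}}\Pi\le\sup_{\mathcal P}\Pi=f^\ast$, so it remains to prove the reverse inequality, after which any maximizer found inside $\mathcal P_{\mathrm{sym,LR}}$ necessarily attains $f^\ast$ and is therefore globally optimal. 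I would establish the reverse inequality through two successive reductions: first from $\mathcal P$ to $\mathcal P_{\mathrm{sym}}$ using exchangeability, and then from $\mathcal P_{\mathrm{sym}}$ to $\mathcal P_{\mathrm{sym,LR}}$ using Theorem~\ref{app:thm:lrorder}.

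For the first reduction, given any feasible $\vec D$ I would form its group average $\vec D^{\mathrm{sym}}(\vec X)=\tfrac1{K!}\sum_{\sigma\in S_K}\sigma^{-1}\!\bigl(\vec D(\sigma(\vec X))\bigr)$, which is symmetric in the sense of Definition~\ref{def:decision_rule} and lies in the $[0,1]$-valued (randomized) policy class employed in \eqref{eq:objconst}. Two facts must be checked. (a) \emph{Feasibility is preserved}: under $\vec h$-exchangeability (Assumption~\ref{as:assumption1}) the strong-FWER feasible set is invariant under the coordinate-relabeling action, and since each error functional is linear in $\vec D$, an average of feasible policies is feasible. (b) \emph{The objective is not decreased}: for the permutation-invariant power criteria relevant here, namely $\Pi_{\text{any}}$ and $\Pi_K$ (the case $\Pi_3$ driving this paper), a change of variables $\vec Y=\sigma(\vec X)$ together with Assumption~\ref{as:assumption1} shows $\Pi\bigl(\sigma^{-1}(\vec D(\sigma(\cdot)))\bigr)=\Pi(\vec D)$ for every $\sigma$, so by linearity $\Pi(\vec D^{\mathrm{sym}})=\Pi(\vec D)$. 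Consequently $\sup_{\mathcal P_{\mathrm{sym}}}\Pi=f^\ast$. For the second reduction I would simply invoke Theorem~\ref{app:thm:lrorder}: applied to the symmetric policy $\vec D^{\mathrm{sym}}$ it returns a symmetric LR-ordered policy $\vec E$ with $\mathrm{Err}_{\vec h_l}(\vec E)\le\mathrm{Err}_{\vec h_l}(\vec D^{\mathrm{sym}})\le\alpha$ and $\Pi(\vec E)\ge\Pi(\vec D^{\mathrm{sym}})$, hence $\vec E\in\mathcal P_{\mathrm{sym,LR}}$ and $\Pi(\vec E)\ge\Pi(\vec D)$. Taking the supremum over $\vec D\in\mathcal P$ yields $\sup_{\mathcal P_{\mathrm{sym,LR}}}\Pi\ge f^\ast$, closing the sandwich.

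The main obstacle is fact (b) in the first reduction, namely showing that symmetrization does not cost objective value. It is clean for the permutation-invariant criteria $\Pi_{\text{any}}$ and $\Pi_K$, because the all-alternatives configuration $\vec h_K$ is fixed by every $\sigma\in S_K$ and the associated functionals are symmetric in the decision coordinates; the delicate bookkeeping there is only to confirm that the simultaneous relabeling of data coordinates, decision coordinates, and truth states leaves the integrand invariant after the change of variables. For a general $\Pi_l$ with $l<K$ the naive average need not preserve the objective, since $\vec h_l$ is fixed only by the block-preserving subgroup of $S_K$; handling that case requires exploiting the symmetry of the full strong-FWER constraint set rather than exchangeability of the objective alone, and it is precisely this step of \citet{RHPA22} that supplies the general result. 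Since the present paper works throughout with $\Pi_3=\Pi_K$, the permutation-invariant argument suffices for all of our applications.
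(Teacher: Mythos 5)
The first thing to note is that the paper does not actually prove this corollary: it is imported verbatim from \cite{RHPA22} and stated ``for completeness,'' so the comparison is against the intended derivation, which is the immediate one from Theorem~\ref{app:thm:lrorder}: if $\vec D^{\ast}$ is optimal within the LR-ordered symmetric class and $\vec D$ is any feasible \emph{symmetric} policy, the theorem yields a feasible LR-ordered symmetric $\vec E$ with $\Pi(\vec E)\ge\Pi(\vec D)$, whence $\Pi(\vec D^{\ast})\ge\Pi(\vec E)\ge\Pi(\vec D)$. Your second reduction is exactly this argument, so the core of your proof matches the intended one. What you add is the first reduction, from arbitrary to symmetric policies via group averaging; the paper (following \cite{RHPA22}) instead restricts attention to symmetric rules from the outset on exchangeability grounds (Section~\ref{sec:most-powerulf-via-optimize}), so this step is extra machinery rather than part of the cited proof.

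Two points in that added step are where your write-up is shaky. First, your feasibility argument rests on the claim that ``each error functional is linear in $\vec D$,'' but $\mathrm{FWER}_l(\vec D)=\mathbb P_{\vec h_l}(V>0)$ is a union probability and is \emph{not} a linear functional of the coordinatewise rejection rule; the linear representation \eqref{eq:fwer_linear} becomes available only after symmetry and LR-ordering are in force. The repair is standard and worth stating: realize $\vec D^{\mathrm{sym}}$ as an external randomization (draw $\sigma$ uniformly, independently of the data, then apply the relabeled policy), under which every operating characteristic, FWER included, is exactly the average of those of the relabeled policies. Relatedly, feasibility of each relabeled policy $\sigma^{-1}(\vec D(\sigma(\cdot)))$ requires reading the constraints as strong control over \emph{all} configurations, not merely the structured $\vec h_l$ appearing in \eqref{eq:mtpintitial1}; for non-symmetric $\vec D$ these readings are not equivalent, and only the strong reading makes the feasible set relabeling-invariant. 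Second, as you yourself flag, averaging preserves the objective only for the permutation-invariant criteria $\Pi_{\text{any}}$ and $\Pi_K$, whereas the corollary inherits the theorem's hypotheses, which allow $\Pi_l$ for any $1\le l\le K$; so as a proof of the full statement over arbitrary policies, your argument is genuinely incomplete for $l<K$. If instead ``globally optimal'' is read over symmetric policies --- the class in which the paper works throughout --- then your entire first reduction is unnecessary and the corollary follows for all criteria from the two-line argument alone. For this paper's actual use case ($K=3$, $\Pi_3=\Pi_K$, strong FWER control) your proof, with the mixture interpretation made explicit, is sound.
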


\subsection{Notation behind the linear integral representations}
\label{app:linear_rep_notation}

This subsection unpacks the notation appearing in Theorem~\ref{thm:powererrorfn}.

\paragraph{Indexing configurations.}
For $1\le l\le K$, let $\binom{K}{l}$ denote the collection of all subsets $i\subseteq [K]$ with $|i|=l$. A subset $i\in\binom{K}{l}$ indexes the configuration in which the hypotheses in $i$ are false nulls and the hypotheses in the complement $i^c$ are true nulls. Under the p-value formulation with common alternative density $g(\cdot)$ (and independence), the joint density of $\vec u=(u_1,\dots,u_K)$ under configuration $i$ can be written as
\[
f_i(\vec u)=\prod_{k\in i} g(u_k)\prod_{k\in i^c} 1,
\]
where the factors ``$1$'' correspond to the $U(0,1)$ null density. (More general factorizations—e.g.\ under heterogeneous alternatives—fit the same indexing convention; see \citet{RHPA22} for the general setup.)

\paragraph{The role of $\bar i_{\min}$ in the FWER representation.}
In the linear representation of $\mathrm{FWER}_l$, the summation over configurations is restricted using the index
\[
\bar i_{\min} \;:=\; \min(i^c),
\]
i.e., the smallest index among the true null hypotheses under configuration $i$. The constraint $\bar i_{\min}=k$ selects, for each $k$, exactly those configurations in which $k$ is the ``first'' true null (in the ordered coordinate system on $Q$), which is the appropriate bookkeeping needed to express $\mathbb P_{\vec h_l}(V>0)$ as a linear functional of the (ordered) coordinate-wise decisions $D_k(\vec u)$; see \citet[Equations (4) and (7)]{RHPA22} for the derivation and Appendix~\ref{app:linear_rep_notation} for this notation.

The above indexing is what allows both $\Pi_l(\vec D)$ and $\mathrm{FWER}_l(\vec D)$ to be written in the linear forms used in \eqref{eq:power_general}--\eqref{eq:error_general}, which in turn yields the canonical primal program \eqref{eq:mtpopt1} and its Lagrangian dual in Section~\ref{sec:lagdualprob}.

\section{Structural Proofs for Characterising the Lagrangian Minimiser}
The goal of this appendix is to develop the technical ingredients needed to characterise the minimiser of the Lagrangian $L(\vec{D}^\mu,\mu)$ in \eqref{eq:lagrangian5_general} and, ultimately, to prove the optimality conditions stated in Theorem~\ref{thm:mu_optimality_combined}. 
We proceed in several steps.
We begin in Subsection~\ref{app:proof_fwer_implies_fdr} by proving Lemma~\ref{lemma:fwer_implies_fdr}, which shows that control of the family-wise error rate (FWER) at level $\alpha$ automatically implies control of the false discovery rate (FDR) at the same level. This justifies focusing on exact FWER control in the optimisation problem. 
Subsection~\ref{app:proof-g-monotone} then establishes Lemma~\ref{lem:gdot}, showing that under our likelihood-ratio–based construction the alternative $p$-value density $g$ coincides with the threshold function $\lambda$ and is non-increasing. 
This structural property of $g$ is used repeatedly in the later analysis.
Subsection~\ref{app:dual_problem} proves Lemma~\ref{lem:dual_problem}, which introduces the Lagrangian formulation of the multiple testing problem and derives the dual problem in terms of the Lagrange multipliers $\mu$. 
Building on this, Subsection~\ref{app:proof_optimization_problem_k_3} proves Lemma~\ref{lemma:optimization_problem_k_3} by specialising the general formulation to the case $K=3$, expressing both the power objective and the FWER constraints explicitly in terms of the ordered $p$-values and the density $g$. 
Subsection~\ref{app:opt_decision_mu} then proves Lemma~\ref{lem:opt_decision_mu}, providing an explicit description of the optimal decision policy $\vec{D}^\mu$ for a fixed $\mu$ via the indicator functions $\alpha_i^{\mu}$ and $\beta_i^{\mu}$ built from partial sums of the quantities $R_i(\mu,\vec{u})$. 
Finally, Subsection~\ref{app:lagrangian_intermsof_mu0} proves Lemma~\ref{lemma:lagrangian_intermsof_mu0}, rewriting $L(\vec{D}^\mu,\mu)$ in a form that cleanly separates its dependence on $(\mu_0,\mu_1,\mu_2)$ and introduces the auxiliary functions $f_1(\mu,\vec{u})$ and $f_2(\mu,\vec{u})$ used in our optimality analysis. 

Taken together, these lemmas prepare the ground for the next section, Section~\ref{app:mu_optimality}, where we combine the above ingredients to minimise $L(\vec{D}^\mu,\mu)$ with respect to $\mu$ and complete the proof of Theorem~\ref{thm:mu_optimality_combined}.

\subsection{FWER Control Implies FDR Control}
\label{app:proof_fwer_implies_fdr}
\begin{lemma}
\label{lemma:fwer_implies_fdr}
Let $\vec{D} \in \{0,1\}^K$ be any decision rule and consider any configuration $\vec{h}_l$ with $0 \leq l < K$. Then, the false discovery rate (FDR) and family-wise error rate (FWER), as defined in \eqref{eq:fdr_final} and \eqref{eq:fwer_final}, satisfy:
\begin{equation*}
\label{eq:fwer_fdr_relationship}
\mathrm{FDR}_l(\vec{D}) \leq \mathrm{FWER}_l(\vec{D}).
\end{equation*}
In particular, if a multiple testing procedure controls $\mathrm{FWER}_l(\vec{D}) \leq \alpha$, then it also controls $\mathrm{FDR}_l(\vec{D}) \leq \alpha$.
\end{lemma}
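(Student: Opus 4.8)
The plan is to establish a pointwise (sample-path) inequality between the two random variables whose expectations define $\mathrm{FDR}_l$ and $\mathrm{FWER}_l$, and then to integrate. Recall from Table~\ref{tab:outcome} that $V$ counts the false rejections and $R$ the total number of rejections, so that $0 \le V \le R \le K$ holds deterministically on every sample path. First I would rewrite the truncated expectation in \eqref{eq:fdr_final} as
\[
\mathrm{FDR}_l(\vec D) = \mathbb{E}_{\vec h_l}\!\bigl[(V/R)\,\mathbbm{1}\{R>0\}\bigr],
\]
adopting the standard convention that the summand is taken to be $0$ on the event $\{R=0\}$, which is exactly what the semicolon notation in \eqref{eq:fdr_final} encodes.

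The key step is the deterministic bound
\[
\frac{V}{R}\,\mathbbm{1}\{R>0\} \;\le\; \mathbbm{1}\{V>0\}.
\]
I would verify this by a two-case argument on the value of $V$. On the event $\{V=0\}$ both sides equal $0$: the left side vanishes because either $R=0$ kills the indicator or else $V/R=0$, and the right side is $0$ by definition. On the event $\{V>0\}$ we necessarily have $R \ge V > 0$, so $\mathbbm{1}\{R>0\}=1$ and $V/R \le 1$ since $V \le R$; the right side equals $1$, and the inequality again holds.

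Taking expectations under $\mathbb{P}_{\vec h_l}$ and invoking monotonicity of expectation then gives
\[
\mathrm{FDR}_l(\vec D) \;\le\; \mathbb{E}_{\vec h_l}\!\bigl[\mathbbm{1}\{V>0\}\bigr] \;=\; \mathbb{P}_{\vec h_l}(V>0) \;=\; \mathrm{FWER}_l(\vec D),
\]
which is precisely the claimed inequality. The ``in particular'' assertion is then immediate, since $\mathrm{FWER}_l(\vec D)\le\alpha$ forces $\mathrm{FDR}_l(\vec D)\le\alpha$ by transitivity.

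I do not anticipate a substantive obstacle: this is the classical comparison between the false discovery rate and the family-wise error rate, and the argument is entirely self-contained once the pointwise bound is in hand. The only point demanding care is the bookkeeping of the event $\{R=0\}$ and the convention that renders $V/R$ well defined there, but this is already built into the definition \eqref{eq:fdr_final}, so no additional regularity or structural assumptions on $\vec D$ or on the joint densities are required.
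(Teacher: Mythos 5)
Your proof is correct and follows essentially the same route as the paper: both arguments hinge on the case split $\{V=0\}$ versus $\{V>0\}$ together with the bound $V/R\le 1$ when $V>0$, the paper packaging this via the law of total expectation while you state it as the pointwise inequality $\tfrac{V}{R}\mathbbm{1}\{R>0\}\le\mathbbm{1}\{V>0\}$ and integrate. Your pointwise formulation is marginally cleaner, since it avoids conditional expectations (which the paper's decomposition implicitly assumes are well defined on each event), but the substance is identical.
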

\begin{proof}
From Table \ref{tab:outcome}, we summarize the results of testing $K$ multiple hypotheses as follows: $V$ denotes the total number of false discoveries, $S$ represents the number of true discoveries, and the total number of discoveries is given by $R = V + S$. 
Consequently, we have that if $R > 0$:  
\begin{equation}\label{eq:vbyrlessthan1}
\frac{V}{R} \le 1 \quad \text{with probability } 1.    
\end{equation}

We begin by applying the law of total expectation, splitting the domain of the expectation based on the event $\{V = 0\}$ and its complement $\{V > 0\}$.
For any $l$ with $0 \leq l < K$, we have that
\begin{equation} \label{eq:fdr_total_expectation}
\begin{aligned}
\mathrm{FDR}_l(\vec{D}) 
& \stackrel{\mbox{\eqref{eq:fdr_final}}}{=} \mathbb{E}_{\vec{h}_l}\left[\frac{V}{R} ; R > 0\right] \\
&= \mathbb{E}_{\vec{h}_l}\left[ \frac{V}{R} \cdot \mathbf{1}_{\{R > 0\}} \right] \\
&= \mathbb{E}_{\vec{h}_l}\left[ \frac{V}{R} \cdot \mathbf{1}_{\{R > 0\}} \mid V = 0 \right] \cdot \mathbb{P}_{\vec{h}_l}(V = 0) + \mathbb{E}_{\vec{h}_l}\left[ \frac{V}{R} \cdot \mathbf{1}_{\{R > 0\}} \mid V > 0 \right] \cdot \mathbb{P}_{\vec{h}_l}(V > 0).
\end{aligned}
\end{equation}

We now analyze each case separately.
\begin{enumerate}
\item If $V = 0$, then the quantity $\frac{V}{R} \cdot \mathbf{1}_{\{R > 0\}}=0$.
Hence:
\begin{equation} \label{eq:fdr_case1}
\mathbb{E}_{\vec{h}_l}\left[ \frac{V}{R} \cdot \mathbf{1}_{\{R > 0\}} \mid V = 0 \right] = 0.
\end{equation}

\item If $V > 0$, then necessarily $R = V + S > 0$, so the event $\{R > 0\}$ holds with probability 1 under this condition. 
Therefore:
\begin{equation} \label{eq:fdr_case2}
\begin{aligned}
\mathbb{E}_{\vec{h}_l}\left[ \frac{V}{R} \cdot \mathbf{1}_{\{R > 0\}} \mid V > 0 \right] 
&= \mathbb{E}_{\vec{h}_l}\left[ \frac{V}{R} \mid V > 0 \right] \\
& \stackrel{\mbox{\eqref{eq:vbyrlessthan1}}}{\le} \mathbb{E}_{\vec{h}_l}\left[ 1 \mid V > 0 \right] = 1.
\end{aligned}
\end{equation}
\end{enumerate}

Substituting \eqref{eq:fdr_case1} and \eqref{eq:fdr_case2} into the total expectation decomposition \eqref{eq:fdr_total_expectation}, we obtain:
\begin{equation}\label{eq:fdr<fwer_proved}
\begin{aligned}
\mathrm{FDR}_l(\vec{D}) 
& = \mathbb{E}_{\vec{h}_l}\left[ \frac{V}{R} \cdot \mathbf{1}_{\{R > 0\}} \mid V = 0 \right] \cdot \mathbb{P}_{\vec{h}_l}(V = 0) + \mathbb{E}_{\vec{h}_l}\left[ \frac{V}{R} \cdot \mathbf{1}_{\{R > 0\}} \mid V > 0 \right] \cdot \mathbb{P}_{\vec{h}_l}(V > 0) \\
&= 0 \cdot \mathbb{P}_{\vec{h}_l}(V = 0) 
+ \mathbb{E}_{\vec{h}_l}\left[ \frac{V}{R} \mid V > 0 \right] \cdot \mathbb{P}_{\vec{h}_l}(V > 0) \\
&\le 1 \cdot \mathbb{P}_{\vec{h}_l}(V > 0) \stackrel{\mbox{\eqref{eq:fwer_final}}}{=} \mathrm{FWER}_l(\vec{D}).
\end{aligned}
\end{equation}
Therefore, using \eqref{eq:fdr<fwer_proved} we have that if $\mathrm{FWER}_l(\vec{D}) \leq \alpha$ then it follows that $\mathrm{FDR}_l(\vec{D}) \leq \alpha$.
\end{proof}

\subsection{Proof of Lemma \ref{lem:gdot}}
\label{app:proof-g-monotone}

\begin{proof}[Proof of Lemma \ref{lem:gdot}]
Let $g(u)$ be the density of the random variable $u$ under $H_a$. For any $0 \leq \tau \leq 1$ we have,
\begin{equation*}
    \begin{aligned}
        \int_0^\tau g(u)du &= \mathbb{P}_{H_a}\{u \leq \tau\} \\ 
        &= \int \mathbb{I}\left\{\frac{f_1(x)}{f_0(x)}\geq \lambda(\tau)\right\}\frac{f_1(x)}{f_0(x)} f_0(x) dx
    \end{aligned}
\end{equation*}

We assume that $g(u)$ is continuous on $(0,1)$ and differentiable almost everywhere. 
From the mean value theorem for integrals, which states that for a continuous function $g(u)$ on an interval $[a,b]$, there exists some $u^* \in (a,b)$ such that  
\begin{equation*}
\int_{a}^{b} g(u) du = g(u^*) (b-a).
\end{equation*}
Applying this to the interval $[\tau, \tau + \Delta]$, we obtain  
\begin{equation}\label{eq:gu_mvt}
\int_{\tau}^{\tau+\Delta} g(u) du = g(u^*) \Delta, \quad \text{for some } u^* \in (\tau, \tau + \Delta).
\end{equation}
Since $g(u)$ is continuous at $\tau$, it follows that taking the limit $\Delta \to 0$, we have  
\begin{equation}\label{eq:gu_gt}
g(u^*) \to g(\tau).
\end{equation}

On the other hand, we have,
\begin{eqnarray}
\int_\tau^{\tau+\Delta} g(u)du &=& \int_0^{\tau+\Delta} g(u)du- \int_0^{\tau} g(u)du \nonumber \\
&=& \int \mathbb{I}\left\{\frac{f_1(x)}{f_0(x)}\geq \lambda(\tau+\Delta)\right\}\frac{f_1(x)}{f_0(x)} f_0(x) dx - 
\int \mathbb{I}\left\{\frac{f_1(x)}{f_0(x)}\geq \lambda(\tau)\right\}\frac{f_1(x)}{f_0(x)} f_0(x) dx \nonumber \\
&=& \int \mathbb{I}\left\{\lambda(\tau) \geq \frac{f_1(x)}{f_0(x)} \geq \lambda(\tau+\Delta)\right\}\frac{f_1(x)}{f_0(x)} f_0(x) dx \label{eq:gu_expand}
\end{eqnarray}
From \eqref{eq:pvalue} we have,
\begin{eqnarray}
&& \int \mathbb{I}\left\{\lambda(\tau) \geq \frac{f_1(x)}{f_0(x)} \geq \lambda(\tau+\Delta)\right\}f_0(x) dx \nonumber \\
&=& \int \mathbb{I}\left\{\frac{f_1(x)}{f_0(x)} \geq \lambda(\tau+\Delta)\right\}f_0(x) dx - \int \mathbb{I}\left\{\frac{f_1(x)}{f_0(x)} \geq \lambda(\tau)\right\}f_0(x) dx  \nonumber \\
&=& (\tau+\Delta)-\tau  \nonumber \\
&=& \Delta \label{eq:delta}
\end{eqnarray}
Combining \eqref{eq:gu_mvt}, \eqref{eq:gu_expand} and \eqref{eq:delta} we have 
\begin{equation*}
    \Delta \cdot \lambda(\tau+\Delta) \leq g(u^*) \cdot \Delta \leq \Delta \cdot \lambda(\tau).
\end{equation*}
The above can be simplified as 
\begin{equation}\label{eq:gu_sandwich}
    \lambda(\tau+\Delta) \leq g(u^*) \leq  \lambda(\tau). 
\end{equation}
Applying the limit $\Delta \to 0$ to \eqref{eq:gu_sandwich} and using \eqref{eq:gu_gt} we have that,
\begin{equation*}
    g(\tau)=\lambda(\tau).
\end{equation*}
Since $\lambda(\tau)$ is non-increasing in $\tau$, it follows that $g(\tau)$ is also non-increasing. 
Thus, for all $u \in (0,1)$, we conclude 
\begin{equation*}
   g(u)=\lambda(u).
\end{equation*}
\end{proof}

\subsection{Proof of Lemma \ref{lem:dual_problem}}
\label{app:dual_problem}
\begin{proof}
The Lagrangian associated with this problem \eqref{eq:mtpopt1} is given by \eqref{eq:lagrangian1} as:
\begin{equation*}
\begin{aligned}
L(\vec{D}, \mu) & =\int_{Q}\left(\sum_{i=1}^{K} a_{i}(\vec{u}) D_{i}(\vec{u})\right) d \vec{u}+\sum_{l=0}^{K-1} \mu_{l} \times\left\{\alpha-\int_{Q}(\sum_{i=1}^{K} b_{l, i}(\vec{u}) D_{i}(\vec{u})) d \vec{u}\right\} \\
& =\sum_{l=0}^{K-1} \mu_{l} \times \alpha+\int_{Q}\left\{\sum_{i=1}^{K} D_{i}(\vec{u})\left(a_{i}(\vec{u})-\sum_{l=0}^{K-1} \mu_{l} \times b_{l, i}(\vec{u})\right)\right\} d \vec{u} \\
& =\sum_{l=0}^{K-1} \mu_{l} \times \alpha+ \int_{Q} \sum_{i=1}^{K} D_{i}(\vec{u}) R_{i}(\mu,\vec{u}) d \vec{u} ,
\end{aligned}
\end{equation*} 
where, $R_{i}(\mu,\vec{u})$ is defined as in \eqref{eq:rimu1} and $\mu=(\mu_{0}, \ldots, \mu_{K-1})\geq0$ denote the Lagrange multipliers for the maximization problem \eqref{eq:mtpopt1}. 
In \eqref{eq:lagrangian1}, the decision policies $\vec{D} = \{D_1,\cdots,D_K\}$ represent the primal variables, while $\mu=(\mu_{0}, \ldots, \mu_{K-1})$ stand as the dual variables. 

Hence, we can formulate the dual problem of \eqref{eq:mtpopt1} as follows:
\begin{equation} \label{eq:dual1}
\begin{aligned}
  \min_{\mu\geq 0} \max_{\vec{D}} L(\vec{D}, \mu).
\end{aligned}
\end{equation} 
Looking at equation \eqref{eq:dual1}, our first step involves maximizing $L(\vec{D}, \mu)$. 
We can break this down into two parts, with our focus primarily on maximizing the second term in equation \eqref{eq:maxlagrangian1} below:
\begin{equation}\label{eq:maxlagrangian1}
\begin{aligned}
& \max_{\vec{D}} L(\vec{D}, \mu) \\
& \stackrel{\mbox{\eqref{eq:lagrangian1}}}{=} \max_{\vec{D}} \sum_{l=0}^{K-1} \mu_{l} \times \alpha
+ \int_{Q} \sum_{i=1}^{K} D_{i}(\vec{u}) R_{i}(\mu,\vec{u}) d \vec{u}
\end{aligned}
\end{equation} 

Note that the first term above is not a function of $\vec{D}$.
So, only the second term needs to be maximized. 
Recall, we only need to consider 
$$
\vec{D} \in \mathcal{D}=\{(0,0,\cdots,0),(1,0,\cdots,0),(1,1,\cdots,0),\cdots,(1,1,\cdots,1)\},
$$
which means that the sequence $\vec{D}$ starts with $1$ and decrease to $0$ and will never increase. 
The maximum value of the objective in \eqref{eq:maxlagrangian1} will be either $0$ or corresponds to an $l$, $1\leq l\leq K$, such that $\sum_{i=1}^{l} R_{i}(\mu,\vec{u})$ is the largest among all the similar partial sums. 
In summary, we can have the following: 
\begin{equation}\label{eq:maxdi}
\underset{\vec{D}}{\operatorname{max}} \int_{Q} \sum_{i=1}^{K} D_{i}(\vec{u}) R_{i}(\mu,\vec{u}) d \vec{u} 
= \int_{Q} \max \left\{\underset{1\leq l \leq K}{\operatorname{max}}\sum_{i=1}^{l} R_{i}(\mu,\vec{u}),0 \right\} d \vec{u}.
\end{equation}

The following decision policy accomplishes this. 
Let $l^\ast$ denote the value of $l$, $1 \le l \le K$, such that, for $\mu \ge 0$ and $\vec{u} \in Q$, we have 
\begin{equation*}
    \sum_{i=1}^{l^\ast} R_{i}(\mu,\vec{u}) = \underset{1\leq l \leq K}{\operatorname{max}}\sum_{i=1}^{l} R_{i}(\mu,\vec{u}),
\end{equation*}
and $l^\ast = 0$ if the right-hand side above is less than $0$. 
We then have 
\begin{equation*}
    D_{l}^{\mu}(\vec{u}) = \left\{\begin{array}{ll}
1, & \mbox{ if } l \le l^\ast \\
0, & \mbox{ otherwise.}
\end{array} \right.
\end{equation*}
Thus, by obtaining $\vec{D}^{\mu}$ as derived in the above equation \eqref{eq:optdecision1}, we solve the binary program \eqref{eq:maxlagrangian1} optimally. 
We then plugin the value of $\vec{D}$ as $\vec{D}^{\mu}$ as obtained in \eqref{eq:optdecision1} in equation \eqref{eq:maxlagrangian1}. Thus, we have:
\begin{equation*}
\begin{aligned}
    \max_{\vec{D}} L(\vec{D}, \mu) & \stackrel{\mbox{\eqref{eq:maxlagrangian1}}}{=} \max_{\vec{D}} \sum_{l=0}^{K-1} \mu_{l} \times \alpha
+ \int_{Q} \sum_{i=1}^{K} D_{i}(\vec{u}) R_{i}(\mu,\vec{u}) d \vec{u} \\ 
    &=\sum_{l=0}^{K-1} \mu_l \alpha + \underset{\vec{D}}{\operatorname{max}} \int_{Q} \sum_{i=1}^{K} D_{i}(\vec{u}) R_{i}(\mu,\vec{u}) d \vec{u} \\
    &\stackrel{\mbox{\eqref{eq:maxdi}}}{=}  \sum_{l=0}^{K-1} \mu_l \alpha + \int_Q \max \left\{ \max_{1 \leq l \leq K} \sum_{i=1}^{l} R_i(\mu, u), 0 \right\} du \\
    &\stackrel{\mbox{\eqref{eq:l*}}}{=}  \sum_{l=0}^{K-1} \mu_l \alpha + \int_Q \sum_{i=1}^{l^*(\mu, u)} R_i(\mu, u) du \\
    &= L(\vec{D}^{\mu}, \mu)
\end{aligned}
\end{equation*}

Consequently, the dual problem \eqref{eq:dual1} transforms into:
\begin{equation*}
\min_{\mu\geq 0} \max_{\vec{D}} L(\vec{D}, \mu) = \min_{\mu\geq 0}  L(\vec{D}^{\mu}, \mu),
\end{equation*}
This completes the proof.
\end{proof}

\subsection{Proof of Lemma \ref{lemma:optimization_problem_k_3}}
\label{app:proof_optimization_problem_k_3}

\begin{proof}
First, the objective function is given by \eqref{eq:pillinear1} as:
\begin{equation*}\label{eq:objective_pil3}
\begin{aligned}
    \Pi_3(\vec{D}) &= (3-1)!(3-3)! \int_{Q} \sum_{i \in {\binom{3}{3}}} f_{i}(\vec{u}) \sum_{k \in i} D_{k}(\vec{u}) d\vec{u} \\
    &= 2 \int_{Q} \sum_{i \in \{\{1,2,3\}\}} f_{i}(\vec{u}) \sum_{k \in i} D_{k}(\vec{u}) d\vec{u} \\
    &= 2 \int_{Q} f_{\{1,1,1\}}(\vec{u}) \sum_{k \in \{1,2,3\}} D_{k}(\vec{u}) d\vec{u} \\
    &\stackrel{\mbox{\eqref{eq:true_joint_dist}}}{=} 2 \int_{Q} f_{1,1}(u_1) f_{2,1}(u_2) f_{3,1}(u_3) \sum_{k \in \{1,2,3\}} D_{k}(\vec{u}) d\vec{u} \\
    &= 2 \int_{Q} g(u_1) g(u_2) g(u_3) (D_{1}(\vec{u}) + D_{2}(\vec{u}) + D_{3}(\vec{u}))d\vec{u} 
\end{aligned}
\end{equation*}
Second, the constraints are given by \eqref{eq:fwer_linear} as:
\begin{equation}\label{eq:constraints_fwer_0}
\begin{aligned}
    FWER_0(\vec{D}) &= 0!(3 - 0)! \int_Q \sum_{k=1}^3 D_k(\vec{u}) \sum_{\substack{i \in \binom{3}{0} \\ \Bar{i}_{min}=k}} f_i(\vec{u}) du \\
    &= 6 \int_Q \sum_{k=1}^3 D_k(\vec{u}) \sum_{\substack{i \in \{\phi\} \\ \Bar{i}_{min}=k}} f_i(\vec{u}) du \\
    &= 6 \int_Q \Bigg(D_1(\vec{u}) \sum_{\substack{i \in \{\phi\} \\ \Bar{i}_{min}=1}} f_i(\vec{u}) + D_2(\vec{u}) \sum_{\substack{i \in \{\phi\} \\ \Bar{i}_{min}=2}} f_i(\vec{u}) + D_3(\vec{u}) \sum_{\substack{i \in \{\phi\} \\ \Bar{i}_{min}=3}} f_i(\vec{u}) \Bigg) du
\end{aligned}
\end{equation}

\begin{equation}\label{eq:constraints_fwer_1}
\begin{aligned}
    FWER_1(\vec{D}) &= 1!(3 - 1)! \int_Q \sum_{k=1}^3 D_k(\vec{u}) \sum_{\substack{i \in \binom{3}{1} \\ \Bar{i}_{min}=k}} f_i(\vec{u}) du \\
    &= 2 \int_Q \sum_{k=1}^3 D_k(\vec{u}) \sum_{\substack{i \in \{\{1\},\{2\},\{3\}\} \\ \Bar{i}_{min}=k}} f_i(\vec{u}) du \\
    &= 2 \int_Q \Bigg( D_1(\vec{u}) \sum_{\substack{i \in \{\{1\},\{2\},\{3\}\} \\ \Bar{i}_{min}=1}} f_i(\vec{u}) + D_2(\vec{u}) \sum_{\substack{i \in \{\{1\},\{2\},\{3\}\} \\ \Bar{i}_{min}=2}} f_i(\vec{u}) + D_3(\vec{u}) \sum_{\substack{i \in \{\{1\},\{2\},\{3\}\} \\ \Bar{i}_{min}=3}} f_i(\vec{u}) \Bigg) du
\end{aligned}
\end{equation}

\begin{equation}\label{eq:constraints_fwer_2}
\begin{aligned}
    FWER_2(\vec{D}) &= 2!(3 - 2)! \int_Q \sum_{k=1}^3 D_k(\vec{u}) \sum_{\substack{i \in \binom{3}{2} \\ \Bar{i}_{min}=k}} f_i(\vec{u}) du \\
    &= 2 \int_Q \sum_{k=1}^3 D_k(\vec{u}) \sum_{\substack{i \in \{\{1,2\},\{1,3\},\{2,3\}\} \\ \Bar{i}_{min}=k}} f_i(\vec{u}) du \\
    &= 2 \int_Q \Bigg(D_1(\vec{u}) \sum_{\substack{i \in \{\{1,2\},\{1,3\},\{2,3\}\} \\ \Bar{i}_{min}=1}} f_i(\vec{u}) &+ D_2(\vec{u}) \sum_{\substack{i \in \{\{1,2\},\{1,3\},\{2,3\}\} \\ \Bar{i}_{min}=2}} f_i(\vec{u}) \\ &+ D_3(\vec{u}) \sum_{\substack{i \in \{\{1,2\},\{1,3\},\{2,3\}\} \\ \Bar{i}_{min}=3}} f_i(\vec{u}) \Bigg) du
\end{aligned}
\end{equation}

Now we have that, 
\begin{equation}\label{eq:i_min_bar}
    \Bar{i}_{min}=\left\{\begin{array}{ll}
1, & \mbox{ if } i=\phi \\
2, & \mbox{ if } i=\{1\} \\
1, & \mbox{ if } i=\{2\} \\
1, & \mbox{ if } i=\{3\} \\
3, & \mbox{ if } i=\{1,2\} \\
1, & \mbox{ if } i=\{2,3\} \\
2, & \mbox{ if } i=\{1,3\} \\
\phi, & \mbox{ if } i=\{1,2,3\} \\
\end{array} \right.
\end{equation}
Thus using \eqref{eq:i_min_bar} in  \eqref{eq:constraints_fwer_0}, \eqref{eq:constraints_fwer_1} and \eqref{eq:constraints_fwer_2} we have,

\begin{equation*}\label{eq:constraints_fwer_0_final}
\begin{aligned}
    FWER_0(\vec{D}) &= 6 \int_Q D_1(\vec{u}) \sum_{i \in \{\phi\}} f_i(\vec{u})  du \\ 
    &= 6 \int_Q D_1(\vec{u}) f_{\{0,0,0\}}  du \\ 
    &\stackrel{\mbox{\eqref{eq:true_joint_dist}}}{=}  6 \int_Q D_1(\vec{u}) du 
\end{aligned}
\end{equation*}

\begin{equation*}\label{eq:constraints_fwer_1_final}
\begin{aligned}
    FWER_1(\vec{D}) &= 2 \int_Q \Big( D_1(\vec{u}) \sum_{i \in \{\{2\},\{3\}\}} f_i(\vec{u}) + D_2(\vec{u}) \sum_{i \in \{\{1\}\}} f_i(\vec{u}) \Big) du \\
    &= 2 \int_Q D_1(\vec{u}) (f_{\{0,1,0\}}(\vec{u})+f_{\{0,0,1\}}(\vec{u}))  + D_2(\vec{u}) f_{\{1,0,0\}}(\vec{u}) du \\ 
    &\stackrel{\mbox{\eqref{eq:true_joint_dist}}}{=}  2 \int_Q D_1(\vec{u}) (g(u_2)+g(u_3))  + D_2(\vec{u}) g(u_1) du \\ 
\end{aligned}
\end{equation*}

\begin{equation*}\label{eq:constraints_fwer_2_final}
\begin{aligned}
    FWER_2(\vec{D}) &= 2 \int_Q \Big( D_1(\vec{u}) \sum_{i \in \{\{2,3\}\}} f_i(\vec{u}) + D_2(\vec{u}) \sum_{i \in \{\{1,3\}\}} f_i(\vec{u}) + D_3(\vec{u}) \sum_{i \in \{\{1,2\}\}} f_i(\vec{u}) \Big) du \\
    &= 2 \int_Q D_1(\vec{u}) f_{\{0,1,1\}}(\vec{u})+ D_2(\vec{u})f_{\{1,0,1\}}(\vec{u}))  + D_3(\vec{u}) f_{\{1,1,0\}}(\vec{u}) du \\ 
    &\stackrel{\mbox{\eqref{eq:true_joint_dist}}}{=}  2 \int_Q D_1(\vec{u})g(u_2)g(u_3))+D_2(\vec{u})g(u_1)g(u_3))+D_3(\vec{u})g(u_1)g(u_2)) du \\ 
\end{aligned}
\end{equation*}

Hence the optimization problem \eqref{eq:mtpopt1} becomes:
\begin{equation*}
\begin{aligned}
\max _{\vec{D}: Q \rightarrow[0,1]^{3}} & 2 \int_{Q}(D_{1}(\vec{u})+D_{2}(\vec{u})+D_{3}(\vec{u})) g(u_1) g(u_2) g(u_3) d \vec{u} \\
\text { s.t. } & FWER_{0}(\vec{D})=6 \int_{Q} D_{1}(\vec{u}) d \vec{u} \leq \alpha \text {, } \\
& FWER_{1}(\vec{D})=2 \int_{Q}\left[D_{1}(\vec{u})(g(u_2)+g(u_3))+D_{2}(\vec{u}) g(u_1) \right] d \vec{u} \leq \alpha \text {, } \\
& FWER_{2}(\vec{D})=2 \int_{Q}\left[D_{1}(\vec{u}) g(u_2) g(u_3)+ D_{2}(\vec{u}) g(u_1) g(u_3) + D_{3}(\vec{u}) g(u_1) g(u_2)\right] d \vec{u} \leq \alpha, \\
& 0 \leq D_{3}(\vec{u}) \leq D_{2}(\vec{u}) \leq D_{1}(\vec{u}) \leq 1, \ \forall \vec{u} \in Q.
\end{aligned}
\end{equation*} 
This completes the proof.
\end{proof}

\subsection{Proof of Lemma \ref{lem:opt_decision_mu}}
\label{app:opt_decision_mu}
\begin{proof}
We will solve the dual of \eqref{eq:objconst}. 
Using the notations as defined in Section \ref{sec:lagdualprob} and using $\mu$=$(\mu_0,\mu_1,\mu_2)$ to denote the Lagrange multipliers we derive $R_{i}(\mu,\vec{u})$ as in \eqref{eq:rimu1}:
\begin{equation}
    R_{i}(\mu,\vec{u})=a_{i}(\vec{u})-\sum_{l=0}^{2} \mu_{l} \times b_{l, i}(\vec{u}), \ 1\le i \le 3.
\end{equation}
Here $a_{i}(\vec{u})$ and $b_{l, i}(\vec{u})$ are obtained by comparing \eqref{eq:objconst} with \eqref{eq:power_general} and \eqref{eq:error_general}. Thus, we have,
\begin{equation}
\begin{array}{lll}
a_1(\vec{u})= & a_2(\vec{u})= &a_3(\vec{u})= 2g(u_1) g(u_2) g(u_3), \\
b_{0, 1}(\vec{u})= 6, & b_{0, 2}(\vec{u})=0, & b_{0, 3}(\vec{u})=0, \\
b_{1, 1}(\vec{u})=2(g(u_2)+g(u_3)), & b_{1, 2}(\vec{u})=2g(u_1), & b_{1, 3}(\vec{u})=0, \\
b_{2, 1}(\vec{u})=2g(u_2)g(u_3), & b_{2, 2}(\vec{u})=2g(u_1)g(u_3), & b_{2, 3}(\vec{u})=2g(u_1)g(u_2).
\end{array}
\end{equation}
Hence $R_{i}(\mu,\vec{u})$'s are given by:
\begin{equation*}
\begin{aligned}
R_{1}(\mu,\vec{u})= & 2g(u_1) g(u_2) g(u_3)- 6\mu_{0}- &2\mu_{1}(g(u_2)+g(u_3)) - & 2\mu_{2}g(u_2)g(u_3)\\
R_{2}(\mu,\vec{u})= & 2g(u_1) g(u_2) g(u_3) & -2\mu_{1} \ g(u_1) - & 2\mu_{2} \ g(u_1) g(u_3) \\
R_{3}(\mu,\vec{u})= & 2g(u_1) g(u_2) g(u_3) & -& 2\mu_2\ g(u_1)g(u_2)
\end{aligned}
\end{equation*} 
Recall that we only consider the domain $Q=\left\{\vec{u}: 0 \leq u_{1} \leq u_{2} \leq \cdots \leq u_{K} \leq 1\right\}$.
Here, we have $K=3$. 
Per Lemma \ref{lem:gdot}, we have $g(u_{1}) \geq g(u_{2}) \geq g(u_{3})$.

From Section \ref{sec:lagdualprob} we have the dual problem as \eqref{eq:ld_mu} :
\begin{equation} 
\label{eq:dual2}
\begin{aligned}
\min_{\mu\geq 0} L(\vec{D}^\mu, \mu).
\end{aligned}
\end{equation} where $L(\vec{D}, \mu)$ is the Lagrangian given by using \eqref{eq:lagrangian1},

\begin{equation}
\begin{aligned}
 L(\vec{D}^\mu, \mu) = \sum_{l=0}^{2} \mu_{l} \times \alpha+ \int_{Q} \sum_{i=1}^{3} D_{i}^{\mu}(\vec{u}) R_{i}(\mu,\vec{u})d \vec{u}
\end{aligned}
\end{equation} 
We then compute the \textit{optimal} decision policy $D$ using \eqref{eq:maxdi}. 

\begin{equation}
\max _{\vec{D}(\vec{u}) \in \mathcal{D}} \sum_{k=1}^{3} D_{k}(\vec{u}) R_{k}(\mu,\vec{u})=\max \left\{\max _{l=1, \ldots, 3} \sum_{k=1}^{l} R_{l}(\mu,\vec{u}), 0\right\}
\end{equation} which is given by using \eqref{eq:optdecision1} as,
\begin{equation}\label{eq:dimus1}
\begin{aligned}
D_{1}^{\mu}(\vec{u})= & \mathbbm{1}\left\{R_{1}(\mu,\vec{u})>0 \cup R_{1}(\mu,\vec{u})+R_{2}(\mu,\vec{u})>0 \cup R_{1}(\mu,\vec{u})+R_{2}(\mu,\vec{u})+R_{3}(\mu,\vec{u})>0 \right\} \\
D_{2}^{\mu}(\vec{u})= & D_{1}^{\mu}(\vec{u}) \times \mathbbm{1}\left\{R_{2}(\mu,\vec{u})>0 \cup R_{2}(\mu,\vec{u})+R_{3}(\mu,\vec{u})>0 \right\} \\
D_{3}^{\mu}(\vec{u})= & D_{2}^{\mu}(\vec{u}) \times \mathbbm{1}\left\{R_{3}(\mu,\vec{u})>0 \right\}.
\end{aligned}
\end{equation} 
Denote 
\begin{equation}
\begin{aligned}
\alpha_1^{\mu}(\vec{u}) &= \mathbbm{1}\left\{R_{1}(\mu,\vec{u}) > 0 \cup R_{1}(\mu,\vec{u})+R_{2}(\mu,\vec{u}) > 0 \cup R_{1}(\mu,\vec{u})+R_{2}(\mu,\vec{u})+R_{3}(\mu,\vec{u}) > 0 \right\}, \\
\alpha_2^{\mu}(\vec{u}) &= \mathbbm{1}\left\{R_{2}(\mu,\vec{u}) > 0 \cup R_{2}(\mu,\vec{u})+R_{3}(\mu,\vec{u}) > 0 \right\}, \\
\alpha_3^{\mu}(\vec{u}) &= \mathbbm{1}\left\{R_{3}(\mu,\vec{u}) > 0 \right\}.
\end{aligned}
\end{equation}

We have 

\begin{equation}
\begin{aligned}
D_{1}^{\mu}(\vec{u})= & \alpha_1^{\mu}(\vec{u}), \\
D_{2}^{\mu}(\vec{u})= & \alpha_{1}^{\mu}(\vec{u}) \alpha_2^{\mu}(\vec{u}), \\
D_{3}^{\mu}(\vec{u})= & \alpha_{1}^{\mu}(\vec{u}) \alpha_{2}^{\mu}(\vec{u}) \alpha_3^{\mu}(\vec{u}).
\end{aligned}
\end{equation} 

We have from \eqref{eq:alpha_mu_u}:
\begin{equation}\label{eq:1-alpha_mu_u}
\begin{aligned}
1-\alpha_1^{\mu}(\vec{u}) &= \mathbbm{1}\left\{R_{1}(\mu,\vec{u}) \le 0 \cap R_{1}(\mu,\vec{u})+R_{2}(\mu,\vec{u})\le0 \cap R_{1}(\mu,\vec{u})+R_{2}(\mu,\vec{u})+R_{3}(\mu,\vec{u})\le0 \right\}, \\
1-\alpha_2^{\mu}(\vec{u}) &= \mathbbm{1}\left\{R_{2}(\mu,\vec{u})\le0 \cap R_{2}(\mu,\vec{u})+R_{3}(\mu,\vec{u})\le0 \right\}, \\
1-\alpha_3^{\mu}(\vec{u}) &= \mathbbm{1}\left\{R_{3}(\mu,\vec{u})\le0 \right\}.
\end{aligned}
\end{equation}
This completes the proof.

Additionally, we denote:
{\footnotesize
\begin{equation}\label{eq:beta_mu_u}
\begin{aligned}
\beta_1^{\mu}(\vec{u})
&=1-\alpha_1^{\mu}(\vec{u}) = \mathbbm{1}\biggl\{\mu,\vec{u}:\begin{pmatrix}
R_{1}(\mu,\vec{u}) \\ 
R_{1}(\mu,\vec{u})+R_{2}(\mu,\vec{u}) \\
R_{1}(\mu,\vec{u})+R_{2}(\mu,\vec{u})+R_{3}(\mu,\vec{u})
\end{pmatrix} \le 0 \biggr\} \\
&=\mathbbm{1}\left\{\begin{array}{c}
2g(u_1) g(u_2) g(u_3) \le 6\mu_0 + 2(g(u_2)+g(u_3))\mu_1 + 2g(u_2)g(u_3)\mu_2\\
4g(u_1) g(u_2) g(u_3) \le 6\mu_0 + 2(g(u_1)+g(u_2)+g(u_3))\mu_1 + 2(g(u_1)g(u_3)+g(u_2)g(u_3))\mu_2\\
6g(u_1) g(u_2) g(u_3) \le 6\mu_0 + 2(g(u_1)+g(u_2)+g(u_3))\mu_1 + 2(g(u_1)g(u_2)+g(u_1)g(u_3)+g(u_2)g(u_3))\mu_2
\end{array}\right\} \\
\\
\beta_2^{\mu}(\vec{u}) & =1-\alpha_2^{\mu}(\vec{u}) =\mathbbm{1}\biggl\{\mu,\vec{u}:
\begin{pmatrix}
1&0\\
1&1
\end{pmatrix}
\begin{pmatrix}
2g(u_1) g(u_2) g(u_3) \\
2g(u_1) g(u_2) g(u_3)
\end{pmatrix}
\le 
\begin{pmatrix}
1&0\\
1&1
\end{pmatrix}
\begin{pmatrix}
2g(u_1)          & 2g(u_1) g(u_3) \\
0                & 2g(u_1)g(u_2)
\end{pmatrix}
\begin{pmatrix}
\mu_1 \\
\mu_2
\end{pmatrix}
\biggr\} \\
&= \mathbbm{1}\biggl\{\mu,\vec{u}:
\left\{
\begin{array}{c}
2g(u_1) g(u_2) g(u_3) - 2g(u_1) \mu_1 - 2g(u_1)g(u_3)\mu_2 \le 0 \\
4g(u_1) g(u_2) g(u_3) - 2g(u_1) \mu_1 - 2(g(u_1)g(u_3) + g(u_1)g(u_2))\mu_2 \le 0
\end{array}
\right\}
\biggr\}\\
\beta_3^{\mu}(\vec{u})&=1-\alpha_3^{\mu}(\vec{u})=\mathbbm{1}\biggl\{\mu,\vec{u}:2g(u_1) g(u_2) g(u_3) \le 2g(u_1)g(u_2)\mu_2 \biggr\}
\end{aligned}
\end{equation}}

\end{proof}

\subsection{Proof of Lemma \ref{lemma:lagrangian_intermsof_mu0}}
\label{app:lagrangian_intermsof_mu0}
\begin{proof}
We consider the MHT with $K=3$.
Focusing on the second term of the Lagrangian \eqref{eq:lagrangian1} in our example, we have 

{\small
\begin{eqnarray}
\int_{Q} \sum_{i=1}^{3} D_{i}^{\mu}(\vec{u}) R_{i}(\mu,\vec{u})d \vec{u} 
&\stackrel{\mbox{\eqref{eq:dimus2}}}{=}& \int_{Q} 
\alpha_1^{\mu}(\vec{u}) R_{1}(\mu,\vec{u}) + 
\alpha_1^{\mu}(\vec{u}) \alpha_2^{\mu}(\vec{u}) R_{2}(\mu,\vec{u}) + \alpha_1^{\mu}(\vec{u}) \alpha_2^{\mu}(\vec{u}) \alpha_3^{\mu}(\vec{u}) R_{3}(\mu,\vec{u}) d\vec{u} \nonumber \\
&\stackrel{\mbox{\eqref{eq:beta_mu_u}}}{=}& \int_{Q} 
(1-\beta_1^{\mu}(\vec{u})) R_{1}(\mu,\vec{u}) + 
(1-\beta_1^{\mu}(\vec{u})) (1-\beta_2^{\mu}(\vec{u})) R_{2}(\mu,\vec{u})\nonumber\\
&& + (1-\beta_1^{\mu}(\vec{u}))(1-\beta_2^{\mu}(\vec{u}))(1-\beta_3^{\mu}(\vec{u})) R_{3}(\mu,\vec{u}) d\vec{u} \nonumber\\
&=& \int_{Q} R_{1}(\mu,\vec{u})+R_{2}(\mu,\vec{u})+R_{3}(\mu,\vec{u})d\vec{u} \nonumber\\
&& + \int_Q \biggl(-\beta_1^{\mu}(\vec{u})R_{1}(\mu,\vec{u})-\beta_1^{\mu}(\vec{u})R_{2}(\mu,\vec{u})-\beta_2^{\mu}(\vec{u})R_{2}(\mu,\vec{u})-\beta_1^{\mu}(\vec{u})R_{3}(\mu,\vec{u})\nonumber\\
&& -\beta_2^{\mu}(\vec{u})R_{3}(\mu,\vec{u})-\beta_3^{\mu}(\vec{u})R_{3}(\mu,\vec{u})\biggr) d\vec{u}\nonumber\\
&& + \int_Q \beta_1^{\mu}(\vec{u})\beta_2^{\mu}(\vec{u})R_{2}(\mu,\vec{u}) + \biggl(\beta_1^{\mu}(\vec{u})\beta_2^{\mu}(\vec{u})+\beta_1^{\mu}(\vec{u})\beta_3^{\mu}(\vec{u})\\
&& +\beta_2^{\mu}(\vec{u})\beta_3^{\mu}(\vec{u}) \biggr)R_{3}(\mu,\vec{u}) d\vec{u} \nonumber\\
&& - \int_Q \beta_1^{\mu}(\vec{u})\beta_2^{\mu}(\vec{u}) \beta_3^{\mu}(\vec{u})R_{3}(\mu,\vec{u}) d\vec{u} \nonumber\\
&=& \int_{Q} R_{1}(\mu,\vec{u})+R_{2}(\mu,\vec{u})+R_{3}(\mu,\vec{u})d\vec{u}\nonumber\\ 
&&  - \int_Q \biggl( \beta_1^{\mu}(\vec{u})(R_{1}(\mu,\vec{u})+R_{2}(\mu,\vec{u})+R_{3}(\mu,\vec{u})) \nonumber\\
&&  + \beta_2^{\mu}(\vec{u})(R_{2}(\mu,\vec{u})+R_{3}(\mu,\vec{u})) + \beta_3^{\mu}(\vec{u})R_{3}(\mu,\vec{u})\biggr) d\vec{u} \nonumber\\
&&  + \int_Q \beta_1^{\mu}(\vec{u})\beta_2^{\mu}(\vec{u})(R_{2}(\mu,\vec{u})+R_{3}(\mu,\vec{u})) + \beta_3^{\mu}(\vec{u})(\beta_1^{\mu}(\vec{u})+\beta_2^{\mu}(\vec{u}))R_{3}(\mu,\vec{u}) d\vec{u} \nonumber\\
&&  -\int_Q \beta_1^{\mu}(\vec{u})\beta_2^{\mu}(\vec{u})\beta_3^{\mu}(\vec{u})R_{3}(\mu,\vec{u}) d\vec{u} \nonumber\\
&=& \int_{Q} (1-\beta_1^{\mu}(\vec{u}))(R_{1}(\mu,\vec{u})+R_{2}(\mu,\vec{u})+R_{3}(\mu,\vec{u}))d\vec{u} \nonumber\\
&&  -\int_Q \beta_2^{\mu}(\vec{u})(R_{2}(\mu,\vec{u})+R_{3}(\mu,\vec{u}))(1-\beta_1^{\mu}(\vec{u})) d\vec{u} \nonumber\\
&&  - \int_Q \beta_3^{\mu}(\vec{u})R_{3}(\mu,\vec{u})(1-\beta_1^{\mu}(\vec{u})-\beta_2^{\mu}(\vec{u})+\beta_1^{\mu}(\vec{u})\beta_2^{\mu}(\vec{u}))d\vec{u}\nonumber\\
&\stackrel{\mbox{\eqref{eq:beta_mu_u}}}{=}& \int_{Q} \alpha_1^{\mu}(\vec{u})(R_{1}(\mu,\vec{u})+R_{2}(\mu,\vec{u})+R_{3}(\mu,\vec{u}))d\vec{u} \nonumber\\
&&  -\int_Q \alpha_1^{\mu}(\vec{u})\beta_2^{\mu}(\vec{u})(R_{2}(\mu,\vec{u})+R_{3}(\mu,\vec{u})) d\vec{u} \nonumber\\
&&  - \int_Q \beta_3^{\mu}(\vec{u})R_{3}(\mu,\vec{u})(\alpha_1^{\mu}(\vec{u})\alpha_2^{\mu}(\vec{u}))d\vec{u}\nonumber\\
&=& \int_{Q} (R_{1}(\mu,\vec{u})+R_{2}(\mu,\vec{u})+R_{3}(\mu,\vec{u}))d\vec{u} \label{eq:lagrangian_2nd_term} \\
&& - \int_{Q} \beta_1^{\mu}(\vec{u})(R_{1}(\mu,\vec{u})+R_{2}(\mu,\vec{u})+R_{3}(\mu,\vec{u})) d\vec{u}\nonumber\\
&&  -\int_Q \alpha_1^{\mu}(\vec{u})\beta_2^{\mu}(\vec{u})(R_{2}(\mu,\vec{u})+R_{3}(\mu,\vec{u})) d\vec{u} \nonumber\\
&&  - \int_Q \alpha_1^{\mu}(\vec{u})\alpha_2^{\mu}(\vec{u}) \beta_3^{\mu}(\vec{u}) R_{3}(\mu,\vec{u}) d\vec{u}  \nonumber
\end{eqnarray}}

Thus, from \eqref{eq:lagrangian1} with $K=3$, and \eqref{eq:lagrangian_2nd_term} we have the objective function as,
\begin{equation}\label{eq:lagrangian3}
\begin{aligned}
L(\vec{D}^\mu, \mu) &= \sum_{l=0}^{2} \mu_{l} \times \alpha+ \int_{Q} \sum_{i=1}^{3} D_{i}^{\mu}(\vec{u}) R_{i}(\mu,\vec{u})d \vec{u} \\
&= \sum_{l=0}^{2} \mu_{l} \times \alpha+ \int_{Q} (R_{1}(\mu,\vec{u})+R_{2}(\mu,\vec{u})+R_{3}(\mu,\vec{u}))d\vec{u} \\
&- \int_{Q} \beta_1^{\mu}(\vec{u})(R_{1}(\mu,\vec{u})+R_{2}(\mu,\vec{u})+R_{3}(\mu,\vec{u})) d\vec{u}\\
&  -\int_Q \alpha_1^{\mu}(\vec{u})\beta_2^{\mu}(\vec{u})(R_{2}(\mu,\vec{u})+R_{3}(\mu,\vec{u})) d\vec{u} \\
& - \int_Q \alpha_1^{\mu}(\vec{u})\alpha_2^{\mu}(\vec{u})\beta_3^{\mu}(\vec{u})R_{3}(\mu,\vec{u})d\vec{u}
\end{aligned}
\end{equation}

Define,
\begin{equation*}\label{eq:bi}
b_3(\vec{u})=\begin{pmatrix}
2g(u_1) g(u_2) g(u_3) \\
2g(u_1) g(u_2) g(u_3) \\
2g(u_1) g(u_2) g(u_3)
\end{pmatrix}, \ 
b_2(\vec{u})=\begin{pmatrix}
2g(u_1) g(u_2) g(u_3) \\
2g(u_1) g(u_2) g(u_3)
\end{pmatrix}, \
b_1(\vec{u})= 2g(u_1) g(u_2) g(u_3)
\end{equation*}
We also define,
\begin{equation}\label{eq:Ai}
A_1(\vec{u})=\begin{pmatrix}
6 & 2(g(u_2)+g(u_3))  & 2g(u_2)g(u_3)\\
0  & 2g(u_1)          & 2g(u_1) g(u_3) \\
0  & 0                & 2g(u_1)g(u_2)
\end{pmatrix}, \ 
A_2(\vec{u})=\begin{pmatrix}
2g(u_1)          & 2g(u_1) g(u_3) \\
0                & 2g(u_1)g(u_2)
\end{pmatrix}, \ 
A_3(\vec{u})=2g(u_1)g(u_2)
\end{equation}
Then from \eqref{eq:beta_mu_u} we have,
\begin{equation*}\label{eq:beta_mu_u_2}
\begin{aligned}
\beta_1^{\mu}(\vec{u})&=\mathbbm{1}\biggl\{\mu,\vec{u}:
\begin{pmatrix}
1&0&0\\
1&1&0\\
1&1&1
\end{pmatrix} b_3(\vec{u})
\le 
\begin{pmatrix}
1&0&0\\
1&1&0\\
1&1&1
\end{pmatrix} A_1(\vec{u}) \mu
\biggr\} \\
\beta_2^{\mu}(\vec{u})&=\mathbbm{1}\biggl\{\mu,\vec{u}:
\begin{pmatrix}
1&0\\
1&1
\end{pmatrix} b_2(\vec{u})
\le 
\begin{pmatrix}
1&0\\
1&1
\end{pmatrix} A_2(\vec{u}) \mu
\biggr\}\\
\beta_3^{\mu}(\vec{u})&=\mathbbm{1}\biggl\{\mu,\vec{u}: b_1(\vec{u})\le A_3(\vec{u}) \mu
\biggr\}\\
\end{aligned}
\end{equation*}
Additionally, we can express the $R_{i}(\mu,\vec{u})$'s as,
\begin{equation*}\label{eq:ri_mu_2}
\begin{aligned}
R_{1}(\mu,\vec{u})+R_{2}(\mu,\vec{u})+R_{3}(\mu,\vec{u})&=1^T \begin{pmatrix}
R_{1}(\mu,\vec{u})\\
R_{2}(\mu,\vec{u})\\
R_{3}(\mu,\vec{u})
\end{pmatrix}=1^T(b_3(\vec{u})-A_1(\vec{u})\mu)\\
R_{2}(\mu,\vec{u})+R_{3}(\mu,\vec{u})&=
\begin{pmatrix}
0&1&1
\end{pmatrix}\begin{pmatrix}
R_{1}(\mu,\vec{u})\\
R_{2}(\mu,\vec{u})\\
R_{3}(\mu,\vec{u})
\end{pmatrix}=\begin{pmatrix}
0&1&1
\end{pmatrix}(b_3(\vec{u})-A_1(\vec{u})\mu)\\
R_{3}(\mu,\vec{u})&=
\begin{pmatrix}
0&0&1
\end{pmatrix}\begin{pmatrix}
R_{1}(\mu,\vec{u})\\
R_{2}(\mu,\vec{u})\\
R_{3}(\mu,\vec{u})
\end{pmatrix}=\begin{pmatrix}
0&0&1
\end{pmatrix}(b_3(\vec{u})-A_1(\vec{u})\mu)
\end{aligned}
\end{equation*}
We can now reformulate the objective function \eqref{eq:lagrangian3} as:
{\footnotesize
\begin{equation}\label{eq:lagrangian4}
\begin{aligned}
L(\vec{D}^\mu, \mu) =& \alpha \times 1^T\mu+ \int_Q 1^T(b_3(\vec{u})-A_1(\vec{u})\mu) d\vec{u}\\
& -\int_Q \beta_1^{\mu}(\vec{u}) 1^T(b_3(\vec{u})-A_1(\vec{u})\mu) d\vec{u} \\
& -\int_Q \alpha_1^{\mu}(\vec{u})\beta_2^{\mu}(\vec{u})\begin{pmatrix}
0&1&1
\end{pmatrix}(b_3(\vec{u})-A_1(\vec{u})\mu) d\vec{u} \\
&- \int_Q \alpha_1^{\mu}(\vec{u})\alpha_2^{\mu}(\vec{u})\beta_3^{\mu}(\vec{u})\begin{pmatrix}
0&0&1
\end{pmatrix}(b_3(\vec{u})-A_1(\vec{u})\mu)d\vec{u}\\
=& \alpha \times 1^T\mu \\
&+ \int_Q \biggl(1^Tb_3(\vec{u})- \beta_1^{\mu}(\vec{u}) 1^Tb_3(\vec{u}) - \alpha_1^{\mu}(\vec{u})\beta_2^{\mu}(\vec{u})\begin{pmatrix}
0&1&1
\end{pmatrix}b_3(\vec{u})-\alpha_1^{\mu}(\vec{u})\alpha_2^{\mu}(\vec{u})\beta_3^{\mu}(\vec{u})\begin{pmatrix}
0&0&1
\end{pmatrix}b_3(\vec{u}) \biggr)d\vec{u} \\
&+ \int_Q \biggl(-1^TA_1(\vec{u})+\beta_1^{\mu}(\vec{u}) 1^TA_1(\vec{u})+ \alpha_1^{\mu}(\vec{u})\beta_2^{\mu}(\vec{u})\begin{pmatrix}
0&1&1
\end{pmatrix}A_1(\vec{u})\\
&+\alpha_1^{\mu}(\vec{u})\alpha_2^{\mu}(\vec{u})\beta_3^{\mu}(\vec{u})\begin{pmatrix}
0&0&1
\end{pmatrix}A_1(\vec{u})\biggr)\mu d\vec{u}\\
=& \alpha \times 1^T\mu \\
&+ \int_Q 2g(u_1) g(u_2) g(u_3) \biggl(3- 3\beta_1^{\mu}(\vec{u}) - 2\alpha_1^{\mu}(\vec{u})\beta_2^{\mu}(\vec{u})-\alpha_1^{\mu}(\vec{u})\alpha_2^{\mu}(\vec{u})\beta_3^{\mu}(\vec{u}) \biggr)d\vec{u} \\
&+ \int_Q \biggl(-1^TA_1(\vec{u})+\beta_1^{\mu}(\vec{u}) 1^TA_1(\vec{u})+ \alpha_1^{\mu}(\vec{u})\beta_2^{\mu}(\vec{u})\begin{pmatrix}
0&1&1
\end{pmatrix}A_1(\vec{u})\\
&+\alpha_1^{\mu}(\vec{u})\alpha_2^{\mu}(\vec{u})\beta_3^{\mu}(\vec{u})\begin{pmatrix}
0&0&1
\end{pmatrix}A_1(\vec{u})\biggr)\mu d\vec{u}
\end{aligned}
\end{equation}}
Note here $Q=\left\{\vec{u}: 0 \leq u_{1} \leq u_{2} \leq u_{3} \leq 1\right\}$.
The objective is to minimize the  $L(\vec{D}^\mu, \mu)$ w.r.t. $\mu$.
We denote by $f_1(\mu,\vec{u})$ and $f_2(\mu,\vec{u})$ the second and third terms of the summation $L(\vec{D}^\mu, \mu)$ in \eqref{eq:lagrangian4} respectively. 
That is,
\begin{equation}\label{eq:f1}
\begin{aligned}
f_1(\mu,\vec{u})&=  \int_Q 2g(u_1) g(u_2) g(u_3) \biggl(3- 3\beta_1^{\mu}(\vec{u}) - 2\alpha_1^{\mu}(\vec{u})\beta_2^{\mu}(\vec{u})-\alpha_1^{\mu}(\vec{u})\alpha_2^{\mu}(\vec{u})\beta_3^{\mu}(\vec{u}) \biggr)d\vec{u}\\
&= \int_Q 2g(u_1) g(u_2) g(u_3) \biggl(3\alpha_1^{\mu}(\vec{u}) - 2\alpha_1^{\mu}(\vec{u})\beta_2^{\mu}(\vec{u})-\alpha_1^{\mu}(\vec{u})\alpha_2^{\mu}(\vec{u})\beta_3^{\mu}(\vec{u}) \biggr)d\vec{u}\\
&= \int_Q 2g(u_1) g(u_2) g(u_3) \biggl(3 - 2\beta_2^{\mu}(\vec{u})-\alpha_2^{\mu}(\vec{u})\beta_3^{\mu}(\vec{u}) \biggr)\alpha_1^{\mu}(\vec{u})d\vec{u}\\
&= \int_Q 2g(u_1) g(u_2) g(u_3) \biggl(1 + 2\alpha_2^{\mu}(\vec{u})-\alpha_2^{\mu}(\vec{u})\beta_3^{\mu}(\vec{u}) \biggr)\alpha_1^{\mu}(\vec{u})d\vec{u}\\
&= \int_Q 2g(u_1) g(u_2) g(u_3) \biggl(1 + \alpha_2^{\mu}(\vec{u})+\alpha_2^{\mu}(\vec{u})\alpha_3^{\mu}(\vec{u}) \biggr)\alpha_1^{\mu}(\vec{u})d\vec{u}\\
\end{aligned}
\end{equation}

{\footnotesize
\begin{equation}\label{eq:f2}
\begin{aligned}
f_2(\mu,\vec{u})=& \int_Q \biggl(-1^TA_1(\vec{u})+\beta_1^{\mu}(\vec{u}) 1^TA_1(\vec{u})+ \alpha_1^{\mu}(\vec{u})\beta_2^{\mu}(\vec{u})\begin{pmatrix}
0&1&1
\end{pmatrix}A_1(\vec{u})\\
&+\alpha_1^{\mu}(\vec{u})\alpha_2^{\mu}(\vec{u})\beta_3^{\mu}(\vec{u})\begin{pmatrix}
0&0&1
\end{pmatrix}A_1(\vec{u})\biggr)\mu d\vec{u}    \\
=&  \int_Q \biggl(-1^T+\beta_1^{\mu}(\vec{u}) 1^T+ \alpha_1^{\mu}(\vec{u})\beta_2^{\mu}(\vec{u})\begin{pmatrix}
0&1&1
\end{pmatrix}+\alpha_1^{\mu}(\vec{u})\alpha_2^{\mu}(\vec{u})\beta_3^{\mu}(\vec{u})\begin{pmatrix}
0&0&1
\end{pmatrix}\biggr)A_1(\vec{u})\mu d\vec{u} \\
=& \int_Q \begin{pmatrix}
-\alpha_1^{\mu}(\vec{u})\\
-\alpha_1^{\mu}(\vec{u})+\alpha_1^{\mu}(\vec{u})\beta_2^{\mu}(\vec{u})\\
-\alpha_1^{\mu}(\vec{u})+\alpha_1^{\mu}(\vec{u})\beta_2^{\mu}(\vec{u})+\alpha_1^{\mu}(\vec{u})\alpha_2^{\mu}(\vec{u})\beta_3^{\mu}(\vec{u})
\end{pmatrix}^TA_1(\vec{u})\mu d\vec{u}\\
=& \int_Q \biggl(-1^T+\beta_1^{\mu}(\vec{u}) 1^T+ \alpha_1^{\mu}(\vec{u})\beta_2^{\mu}(\vec{u})\begin{pmatrix}
0&1&1
\end{pmatrix}+\alpha_1^{\mu}(\vec{u})\alpha_2^{\mu}(\vec{u})\beta_3^{\mu}(\vec{u})\begin{pmatrix}
0&0&1
\end{pmatrix}\biggr)A_1(\vec{u})\mu d\vec{u} \\
&\stackrel{\mbox{\eqref{eq:Ai}}}{=} \int_Q \begin{pmatrix}
-\alpha_1^{\mu}(\vec{u})\\
-\alpha_1^{\mu}(\vec{u})+\alpha_1^{\mu}(\vec{u})\beta_2^{\mu}(\vec{u})\\
-\alpha_1^{\mu}(\vec{u})+\alpha_1^{\mu}(\vec{u})\beta_2^{\mu}(\vec{u})+\alpha_1^{\mu}(\vec{u})\alpha_2^{\mu}(\vec{u})\beta_3^{\mu}(\vec{u})
\end{pmatrix}^T \begin{pmatrix}
6 & 2(g(u_2)+g(u_3))  & 2g(u_2)g(u_3)\\
0  & 2g(u_1)          & 2g(u_1) g(u_3) \\
0  & 0                & 2g(u_1)g(u_2)
\end{pmatrix} \mu d\vec{u} \\
=& \int_Q \begin{pmatrix}
-6\alpha_1^\mu(\vec{u})\\
-2\alpha_1^\mu(\vec{u})\bigl[g(u_2) + g(u_3)\bigr]+2g(u_1)\Bigl[-\alpha_1^\mu(\vec{u})+\alpha_1^\mu(\vec{u})\beta_2^\mu(\vec{u})\Bigr]\\
-2\alpha_1^\mu(\vec{u})g(u_2)g(u_3)+2\bigl[-\alpha_1^\mu + \alpha_1^\mu\beta_2^\mu\bigr]g(u_1)g(u_3)+ 2g(u_1)g(u_2)\bigl[-\alpha_1^\mu + \alpha_1^\mu\beta_2^\mu +\alpha_1^\mu\alpha_2^\mu\beta_3^\mu\bigr] 
\end{pmatrix}^T \mu d\vec{u}\\
=& \int_Q \begin{pmatrix}
-6\alpha_1^\mu(\vec{u})\\
-2\alpha_1^\mu(\vec{u})\bigl[g(u_1)+g(u_2) + g(u_3)\bigr]+2g(u_1)\alpha_1^\mu(\vec{u})\beta_2^\mu(\vec{u})\\
2\alpha_{1}^{\mu}(\vec{u})\Bigl[-g(u_{2})g(u_{3})+\bigl(-1 + \beta_{2}^{\mu}(\vec{u})\bigr)g(u_{1})g(u_{3})+\bigl(-1 + \beta_{2}^{\mu}(\vec{u}) + \alpha_{2}^{\mu}(\vec{u})\beta_{3}^{\mu}(\vec{u})\bigr)
g(u_{1})g(u_{2})\Bigr] 
\end{pmatrix}^T \mu d\vec{u}\\
=& \int_Q \begin{pmatrix}
-6\alpha_1^\mu(\vec{u})\\
-2\alpha_1^\mu(\vec{u})\bigl[g(u_1)+g(u_2) + g(u_3)-g(u_1)\beta_2^\mu(\vec{u})\bigr]\\
2\alpha_{1}^{\mu}(\vec{u})\Bigl(-g(u_{2})g(u_{3})-\alpha_{2}^{\mu}(\vec{u})g(u_{1})g(u_{3})+\bigl(-\alpha_{2}^{\mu}(\vec{u}) + \alpha_{2}^{\mu}(\vec{u})\beta_{3}^{\mu}(\vec{u})\bigr)
g(u_{1})g(u_{2})\Bigr)
\end{pmatrix}^T \mu d\vec{u}\\
=& \int_Q 
\begin{pmatrix}
M_{1}^{\mu}(\vec{u})\\
M_{2}^{\mu}(\vec{u})\\
M_{3}^{\mu}(\vec{u})
\end{pmatrix}^T
\mu
d\vec{u}=\int_Q\bigl[
M_{1}^{\mu}(\vec{u})\mu_0
+
M_{2}^{\mu}(\vec{u})\mu_1
+
M_{3}^{\mu}(\vec{u})\mu_2
\bigr]d\vec{u}
\end{aligned}
\end{equation}}
where,
\begin{equation*}
\begin{aligned}
M_{1}^{\mu}(\vec{u})&=-6\alpha_1^\mu(\vec{u})\\
M_{2}^{\mu}(\vec{u})&=-2\alpha_1^\mu(\vec{u})\bigl[g(u_1)+g(u_2) + g(u_3)-g(u_1)\beta_2^\mu(\vec{u})\bigr]\\
M_{3}^{\mu}(\vec{u})&=2\alpha_{1}^{\mu}(\vec{u})\Bigl(-g(u_{2})g(u_{3})-\alpha_{2}^{\mu}(\vec{u})g(u_{1})g(u_{3})+\bigl(-\alpha_{2}^{\mu}(\vec{u}) + \alpha_{2}^{\mu}(\vec{u})\beta_{3}^{\mu}(\vec{u})\bigr)
g(u_{1})g(u_{2})\Bigr)
\end{aligned}
\end{equation*}

Using \eqref{eq:f1} and \eqref{eq:f2}, we have $L(\vec{D}^\mu, \mu)$ from \eqref{eq:lagrangian4}  as,
\begin{equation}
\begin{aligned}
L(\vec{D}^\mu, \mu)&= \sum_{l=0}^{2} \mu_{l} \times \alpha+f_1(\mu,\vec{u})+f_2(\mu,\vec{u}) \\
=&\alpha(\mu_0+\mu_1+\mu_2)\\
&+\int_Q 2g(u_1) g(u_2) g(u_3) \biggl(1 + \alpha_2^{\mu}(\vec{u})+\alpha_2^{\mu}(\vec{u})\alpha_3^{\mu}(\vec{u}) \biggr)\alpha_1^{\mu}(\vec{u})d\vec{u}\\
&+\int_Q\bigl[M_{1}^{\mu}(\vec{u})\mu_0+M_{2}^{\mu}(\vec{u})\mu_1+M_{3}^{\mu}(\vec{u})\mu_2\bigr]d\vec{u}\\
=&\alpha(\mu_0+\mu_1+\mu_2)\\
&+\int_Q 2g(u_1) g(u_2) g(u_3) \biggl(1 + \alpha_2^{\mu}(\vec{u})+\alpha_2^{\mu}(\vec{u})\alpha_3^{\mu}(\vec{u}) \biggr)\alpha_1^{\mu}(\vec{u})d\vec{u}\\
&+\int_Q\biggl[-6\alpha_1^\mu(\vec{u})\mu_0-2\alpha_1^\mu(\vec{u})\Bigl(g(u_1)+g(u_2) + g(u_3)-g(u_1)\beta_2^\mu(\vec{u})\Bigr)\mu_1\\
&\quad +2\alpha_{1}^{\mu}(\vec{u})\Bigl(-g(u_{2})g(u_{3})-\alpha_{2}^{\mu}(\vec{u})g(u_{1})g(u_{3})+\bigl(-\alpha_{2}^{\mu}(\vec{u}) + \alpha_{2}^{\mu}(\vec{u})\beta_{3}^{\mu}(\vec{u})\bigr)g(u_{1})g(u_{2})\Bigr)\mu_2\biggr]d\vec{u}\\
=&\alpha(\mu_0+\mu_1+\mu_2)\\
&+2\int_Q g(u_1) g(u_2) g(u_3)\alpha_1^{\mu}(\vec{u}) \biggl(1 + \alpha_2^{\mu}(\vec{u})+\alpha_2^{\mu}(\vec{u})\alpha_3^{\mu}(\vec{u}) \biggr)d\vec{u}\\
&-6\int_Q \alpha_1^\mu(\vec{u})\mu_0d\vec{u} \\
&-2\int_Q \alpha_1^\mu(\vec{u})\Bigl(g(u_1)\alpha_2^{\mu}(\vec{u})+g(u_2) + g(u_3)\Bigr)\mu_1d\vec{u}\\
& -2\int_Q\alpha_{1}^{\mu}(\vec{u})\Bigl(g(u_{2})g(u_{3})+\alpha_{2}^{\mu}(\vec{u})g(u_{1})g(u_{3})+\alpha_{2}^{\mu}(\vec{u})\alpha_{3}^{\mu}(\vec{u})g(u_{1})g(u_{2})\Bigr)\mu_2 d\vec{u}.
\end{aligned}
\end{equation}
This completes the proof.

Note that the Lagrangian expression above can be further rewritten as
\begin{equation}
\begin{aligned}
L(\vec{D}^\mu, \mu) =& \alpha(\mu_1+\mu_2)+\alpha\mu_0-6\int_Q\alpha_1^\mu(\vec{u})\mu_0d\vec{u}\\
&+ \int_Q 2\alpha_1^\mu(\vec{u})\Biggl[\biggl(g(u_1) g(u_2) g(u_3) \bigl(1 + \alpha_2^{\mu}(\vec{u})+\alpha_2^{\mu}(\vec{u})\alpha_3^{\mu}(\vec{u}) \bigr) \biggr)\\
&- \biggl(g(u_1)+g(u_2) + g(u_3)-g(u_1)\beta_2^\mu(\vec{u}) \biggr)\mu_1\\
&\quad+ \biggl(-g(u_{2})g(u_{3})-\alpha_{2}^{\mu}(\vec{u})g(u_{1})g(u_{3})- \alpha_{2}^{\mu}(\vec{u})\alpha_{3}^{\mu}(\vec{u})g(u_{1})g(u_{2}) \biggr)\mu_2\Biggr]d\vec{u}\\
=&c_1(\mu_1,\mu_2)+\alpha\mu_0-6\int_Q\alpha_1^\mu(\vec{u})\mu_0d\vec{u}+\int_Q 2\alpha_1^\mu(\vec{u})c_2(\vec{u},\mu_1,\mu_2)d\vec{u}
\end{aligned}
\end{equation}
where,
{\small
\begin{equation}
\begin{aligned}
c_1(\mu_1,\mu_2)&=\alpha(\mu_1+\mu_2)\\
c_2(\vec{u},\mu_1,\mu_2)&=\biggl(g(u_1) g(u_2) g(u_3) \Bigl(1 + \alpha_2^{\mu}(\vec{u})+\alpha_2^{\mu}(\vec{u})\alpha_3^{\mu}(\vec{u}) \Bigr) \biggr)- \biggl(g(u_1)+g(u_2) + g(u_3)-g(u_1)\beta_2^\mu(\vec{u}) \biggr)\mu_1\\
& + \biggl(-g(u_{2})g(u_{3})-\alpha_{2}^{\mu}(\vec{u})g(u_{1})g(u_{3})- \alpha_{2}^{\mu}(\vec{u})\alpha_{3}^{\mu}(\vec{u})g(u_{1})g(u_{2}) \biggr)\mu_2
\end{aligned}
\end{equation}}
are constants w.r.t. $\mu_0$.
\end{proof}

\section{\texorpdfstring{Proof of Theorem \ref{thm:mu_optimality_combined}: Minimizing the Lagrangian w.r.t. $\mu$}{Proof of Theorem: Minimizing the Lagrangian w.r.t. mu}}
\label{app:mu_optimality}
In this section, we complete the dual analysis by characterising the minimiser $\mu^\ast = (\mu_0^\ast,\mu_1^\ast,\mu_2^\ast)$ of the Lagrangian $L(\vec{D}^\mu,\mu)$ in \eqref{eq:lagrangian5_general}.
Our starting point is the dual formulation \eqref{eq:dual2} of the primal problem \eqref{eq:objconst}, together with the explicit optimal decision rule $\vec{D}^\mu$ from Lemma~\ref{lem:opt_decision_mu} and the decomposition of $L(\vec{D}^\mu,\mu)$ in Lemma~\ref{lemma:lagrangian_intermsof_mu0}.
We then treat the Lagrange multipliers coordinate-wise: fixing two coordinates at a time, we analyze the remaining one and the resulting change in the Lagrangian.
Sections~\ref{app:mu0_optimality}, \ref{app:mu1_optimality}, and~\ref{app:mu2_optimality} carry out this program for $\mu_0$, $\mu_1$, and $\mu_2$, respectively and yield the coordinate-wise characterisations of any minimiser $\mu^\ast$ as provided in Theorem~\ref{thm:mu_optimality_combined}.
Finally, Section~\ref{sec:d_mu_optimality} establishes that the decision policy parameterised by the optimal multipliers satisfies strong duality, confirming it as the globally optimal solution to the primal problem.

\begin{proof}
We focus on the dual formulation \eqref{eq:dual2} corresponding to the multiple hypotheses testing problem \eqref{eq:objconst}, and analyze the associated Lagrangian $L(\vec{D}^\mu, \mu)$. 
Focusing on one coordinate at a time, and for fixed values of the remaining coordinates, we subsequently derive the necessary and sufficient conditions for optimality.
The proof of Theorem \ref{thm:mu_optimality_combined} follows from combining Theorems \ref{thm:mu0_optimality}, \ref{thm:mu1_optimality} and \ref{thm:mu2_optimality} which are detailed separately in the following sections \ref{app:mu0_optimality}, \ref{app:mu1_optimality} and \ref{app:mu2_optimality} respectively.
\end{proof}

\subsection{\texorpdfstring{Minimizing the Lagrangian w.r.t. $\mu_0$}{Minimizing the Lagrangian w.r.t. mu0}}
\label{app:mu0_optimality}
We first begin with minimizing $L(\vec{D}^\mu, \mu)$ from \eqref{eq:lagrangian5_general} w.r.t. $\mu_0$.
We begin by examining the structure of the Lagrangian and isolating the terms that depend on \(\mu_0\) from those that do not. 
Specifically, Lemma \ref{lemma:lagrangian_intermsof_mu0} shows that the Lagrangian expression in \eqref{eq:lagrangian5_general} can be rewritten as
\begin{equation}\label{eq:lagrangian5}
\begin{aligned}
L(\vec{D}^\mu, \mu) \stackrel{\mbox{\eqref{eq:lagrangian5_general}}}{=}& \alpha(\mu_1+\mu_2)+\alpha\mu_0\\
&-6\int_Q\alpha_1^\mu(\vec{u})\mu_0d\vec{u}\\
&+ \int_Q 2\alpha_1^\mu(\vec{u})\Biggl[\biggl(g(u_1) g(u_2) g(u_3) \bigl(1 + \alpha_2^{\mu}(\vec{u})+\alpha_2^{\mu}(\vec{u})\alpha_3^{\mu}(\vec{u}) \bigr) \biggr) \\
&- \biggl(g(u_1)+g(u_2) + g(u_3)-g(u_1)\beta_2^\mu(\vec{u}) \biggr)\mu_1\\
&\quad+ \biggl(-g(u_{2})g(u_{3})-\alpha_{2}^{\mu}(\vec{u})g(u_{1})g(u_{3})- \alpha_{2}^{\mu}(\vec{u})\alpha_{3}^{\mu}(\vec{u})g(u_{1})g(u_{2}) \biggr)\mu_2\Biggr]d\vec{u}\\
=&c_1(\mu_1,\mu_2)+\alpha\mu_0-6\int_Q\alpha_1^\mu(\vec{u})\mu_0d\vec{u}+\int_Q 2\alpha_1^\mu(\vec{u})c_2(\vec{u},\mu_1,\mu_2)d\vec{u}
\end{aligned}
\end{equation}
where,
\begin{equation}\label{eq:c1_c2}
\begin{aligned}
c_1(\mu_1,\mu_2)&=\alpha(\mu_1+\mu_2)\\
c_2(\vec{u},\mu_1,\mu_2)&=g(u_1) g(u_2) g(u_3) \Bigl(1 + \alpha_2^{\mu}(\vec{u})+\alpha_2^{\mu}(\vec{u})\alpha_3^{\mu}(\vec{u}) \Bigr)- \biggl(g(u_1)+g(u_2) + g(u_3)-g(u_1)\beta_2^\mu(\vec{u}) \biggr)\mu_1\\
& + \biggl(-g(u_{2})g(u_{3})-\alpha_{2}^{\mu}(\vec{u})g(u_{1})g(u_{3})- \alpha_{2}^{\mu}(\vec{u})\alpha_{3}^{\mu}(\vec{u})g(u_{1})g(u_{2}) \biggr)\mu_2
\end{aligned}
\end{equation}
are constants w.r.t. $\mu_0$.

Since we minimize $L(\vec{D}^\mu, \mu)$ w.r.t. $\mu_0$ we denote it by $L(\mu_0)$. 
Similarly, we denote $\alpha_1^{\mu}(\vec{u})$ by $\alpha_1^{\mu_0}(\vec{u})$.
Thus, if $\mu_0^\ast$ is the minimizer, then for any $\delta\neq0$ we have $$L(\mu_0^\ast) \le L(\mu_0^\ast+\delta).$$
That is,
\begin{equation}\label{eq:mu0_min}
\begin{aligned}
0 &\le L(\mu_0^\ast+\delta)- L(\mu_0^\ast) \\
 &\stackrel{\mbox{\eqref{eq:lagrangian5}}}{=} \alpha.\delta-6\int_Q\alpha_1^{\mu_0^\ast+\delta}(\vec{u}).(\mu_0^\ast+\delta)d\vec{u}+6\int_Q\alpha_1^{\mu_0^\ast}(\vec{u})\mu_0^\ast d\vec{u}+\int_Q 2[\alpha_1^{\mu_0^\ast+\delta}(\vec{u})-\alpha_1^{\mu_0^\ast}(\vec{u})]c_2(\vec{u},\mu_1,\mu_2)d\vec{u}\\
&=\alpha.\delta-6\delta\int_Q\alpha_1^{\mu_0^\ast+\delta}(\vec{u})d\vec{u}+6\int_Q[\alpha_1^{\mu_0^\ast}(\vec{u})-\alpha_1^{\mu_0^\ast+\delta}(\vec{u})]\mu_0^\ast d\vec{u}+\int_Q 2[\alpha_1^{\mu_0^\ast+\delta}(\vec{u})-\alpha_1^{\mu_0^\ast}(\vec{u})]c_2(\vec{u},\mu_1,\mu_2)d\vec{u}\\
&=\delta\biggl[\alpha-6\int_Q\alpha_1^{\mu_0^\ast+\delta}(\vec{u})d\vec{u} \biggr]+\int_Q\biggl[\alpha_1^{\mu_0^\ast}(\vec{u})-\alpha_1^{\mu_0^\ast+\delta}(\vec{u})\biggr]\biggl[6\mu_0^\ast-2c_2(\vec{u},\mu_1,\mu_2)\biggr]d\vec{u}
\end{aligned}
\end{equation}
where $c_2(\vec{u},\mu_1,\mu_2)$ is defined as in \eqref{eq:c1_c2}. 

To characterize the minimizer $\mu_0^\ast$ of the Lagrangian $L(\vec{D}^\mu, \mu)$ with respect to $\mu_0$, we analyze the behavior of $L(\vec{D}^\mu, \mu)$ under small perturbations of $\mu_0$ as in \eqref{eq:mu0_min}. 
We now state the necessary and sufficient condition for $\mu_0^\ast$ to be the minimizer of the Lagrangian $L(\vec{D}^\mu, \mu)$ with respect to $\mu_0$.
\begin{theorem}
\label{thm:mu0_optimality}
Suppose Assumptions~\ref{as:assumption3} and~\ref{as:assumption4} hold.
Consider the dual formulation \eqref{eq:dual2} of the multiple hypotheses testing problem \eqref{eq:objconst} and the associated Lagrangian $L(\vec{D}^\mu, \mu)$ of the dual problem as given by \eqref{eq:lagrangian5}.
Let $\mu_0^\ast \in \mathbb{R}_+$ and fix $\mu_1, \mu_2$.

Then $\mu_0^\ast \text{ is a minimizer of } L(\vec{D}^\mu, \mu) \text{ with respect to } \mu_0$ only if
\begin{equation}\label{eq:mu_0_optimality}
\alpha = 6 \int_Q \alpha_1^{\mu_0^\ast}(\vec{u}) d\vec{u}.
\end{equation}
\end{theorem}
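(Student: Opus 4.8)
The plan is to start from the finite-difference inequality \eqref{eq:mu0_min} and extract the stated equality by taking one-sided derivatives in $\mu_0$. Writing $G(\mu_0) := 6\int_Q \alpha_1^{\mu_0}(\vec{u})\,d\vec{u}$ and denoting by $I(\delta)$ the second integral on the right of \eqref{eq:mu0_min}, the inequality reads $0 \le \delta\bigl[\alpha - G(\mu_0^\ast+\delta)\bigr] + I(\delta)$ for all $\delta \neq 0$. If I can show (i) that $G$ is continuous at $\mu_0^\ast$ and (ii) that $I(\delta) = o(\delta)$ as $\delta \to 0$, then dividing by $\delta$ and letting $\delta \to 0^+$ and $\delta \to 0^-$ forces $\alpha - G(\mu_0^\ast) \ge 0$ and $\alpha - G(\mu_0^\ast) \le 0$ respectively, yielding $\alpha = G(\mu_0^\ast)$, which is exactly \eqref{eq:mu_0_optimality}.

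The crux is establishing (ii). First I would record the algebraic identity
\[
6\mu_0 - 2c_2(\vec{u},\mu_1,\mu_2) = -M(\mu_0,\vec{u}), \qquad M(\mu_0,\vec{u}) := \max\{R_1, R_1+R_2, R_1+R_2+R_3\},
\]
which I would verify by a short case analysis over the values of the ($\mu_0$-independent) indicators $\alpha_2^{\mu}, \alpha_3^{\mu}$ entering $c_2$ in \eqref{eq:c1_c2}: in each case the defining expression for $c_2$ collapses so that $2c_2 - 6\mu_0$ reproduces the appropriate partial sum. Since every partial sum contains $R_1$ exactly once, $M(\mu_0,\vec{u})$ is affine in $\mu_0$ with slope $-6$, and $\alpha_1^{\mu_0}(\vec{u}) = \mathbbm{1}\{M(\mu_0,\vec{u}) > 0\}$. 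Consequently, on the transition set $T_\delta := \{\vec{u} : \alpha_1^{\mu_0^\ast}(\vec{u}) \neq \alpha_1^{\mu_0^\ast+\delta}(\vec{u})\}$ the sign of $M$ flips between $\mu_0^\ast$ and $\mu_0^\ast+\delta$, so $|M(\mu_0^\ast,\vec{u})| \le 6|\delta|$ there, and the identity gives $|6\mu_0^\ast - 2c_2| \le 6|\delta|$ on $T_\delta$. Since the integrand of $I(\delta)$ vanishes off $T_\delta$ and the factor $\alpha_1^{\mu_0^\ast} - \alpha_1^{\mu_0^\ast+\delta}$ is bounded by $1$, I obtain $|I(\delta)| \le 6|\delta|\,m(T_\delta)$; because $T_\delta$ shrinks to a null set as $\delta \to 0$, this gives $|I(\delta)|/|\delta| \le 6\,m(T_\delta) \to 0$, i.e. $I(\delta) = o(\delta)$.

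For (i), the continuity of $G$, I would invoke Assumptions~\ref{as:assumption3} and~\ref{as:assumption4}. The jump of the monotone function $G$ at any point equals $6$ times the measure of the level set $\{\vec{u} : M(0,\vec{u}) = 6\mu_0\}$; expressing $M(0,\vec{u})$ through $g(u_1), g(u_2), g(u_3)$ and using the lower Lipschitz bound (strict monotonicity of $g$) together with strict positivity, this level set is a lower-dimensional set and hence Lebesgue-null on $Q$, so $G$ has no jumps. The same measure-zero fact simultaneously guarantees $m(T_\delta) \to 0$ used above.

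The main obstacle I anticipate is part (ii): obtaining $o(\delta)$ rather than merely $O(\delta)$. Neither the smallness of the factor $6\mu_0^\ast - 2c_2$ (which is $O(\delta)$ on $T_\delta$) nor the smallness of $m(T_\delta)$ alone suffices, since each separately yields only $O(\delta)$; only their product gives the required decay. The argument therefore hinges on coupling the algebraic identity for $c_2$ with the linear rate of decrease of $M$ in $\mu_0$. The regularity Assumptions~\ref{as:assumption3}--\ref{as:assumption4} are precisely what make both ingredients rigorous, by ruling out positive-measure level sets of $M(0,\cdot)$.
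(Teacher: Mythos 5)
Your proposal is correct, and while it starts from the same finite-difference inequality \eqref{eq:mu0_min} and ends with the same two-sided sign argument (equivalent to the paper's Proposition~\ref{prop:o_delta_condition_mu_0}), the way you establish the key estimate $I(\delta)=o(\delta)$ is genuinely different from, and considerably cleaner than, the paper's. The paper proves \eqref{eq:o_delta_conjecture_mu0} by decomposing $\beta_1^{\mu_0^\ast+\delta}-\beta_1^{\mu_0^\ast}$ into three indicator terms in \eqref{eq:b1_diff} and running separate case analyses (with further sub-cases on $\alpha_2^{\mu},\alpha_3^{\mu}$), in each case invoking Assumptions~\ref{as:assumption3}--\ref{as:assumption4} to trap one or two coordinates in $O(\delta)$ windows so that the product of the transition-set measure and the second factor comes out $O(\delta^2)$. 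You instead observe the identity $2c_2-6\mu_0=R_1+\alpha_2^{\mu}R_2+\alpha_2^{\mu}\alpha_3^{\mu}R_3=\max\{R_1,\,R_1+R_2,\,R_1+R_2+R_3\}=:M(\mu_0,\vec u)$, which checks out against \eqref{eq:c1_c2}: the indicators $\alpha_2^{\mu},\alpha_3^{\mu}$ are $\mu_0$-free, and a short case check on their values confirms that the combination $R_1+\alpha_2^{\mu}R_2+\alpha_2^{\mu}\alpha_3^{\mu}R_3$ always selects the largest partial sum. Combined with $\alpha_1^{\mu_0}=\mathbbm{1}\{M(\mu_0,\vec u)>0\}$ and the affinity of $M$ in $\mu_0$ with slope $-6$, this makes the second factor automatically bounded by $6|\delta|$ on the entire transition set, with no case analysis and no regularity assumptions, leaving only $m(T_\delta)\to 0$ to verify; that follows from continuity of measure for $\delta\downarrow 0$ and, for $\delta\uparrow 0$ (equivalently, for left-continuity of $G$), from the nullity of the level set $\{M(\mu_0^\ast,\cdot)=0\}$, which is the single place where Assumptions~\ref{as:assumption3}--\ref{as:assumption4} enter (injectivity of $g$ plus positivity let you solve for one coordinate given the other two; the degenerate sub-case where the coefficient multiplying $g(u_1)$ vanishes is itself a null set of $(u_2,u_3)$, a detail worth spelling out in a full write-up). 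The trade-off: the paper's laborious route yields a quantitative $O(\delta^2)$ bound, whereas yours yields only $o(\delta)$ --- but $o(\delta)$ is all the theorem needs, your max-identity unifies the paper's three cases into one line, and it localizes the use of the regularity assumptions to a single measure-zero statement.
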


\begin{proof}
$\forall \delta $ we have from \eqref{eq:mu0_min},
{\footnotesize
\begin{equation}\label{eq:mu0_min2}
\begin{aligned}
0 \le & L(\mu_0^\ast+\delta)- L(\mu_0^\ast)\\
=&\delta\biggl[\alpha-6\int_Q\alpha_1^{\mu_0^\ast+\delta}(\vec{u})d\vec{u}\biggr]+\int_Q\biggl[\alpha_1^{\mu_0^\ast}(\vec{u})-\alpha_1^{\mu_0^\ast+\delta}(\vec{u})\biggr]\biggl[6\mu_0^\ast-2c_2(\vec{u},\mu_1,\mu_2)\biggr]d\vec{u}\\
=&\delta\biggl[\alpha-6\int_Q\alpha_1^{\mu_0^\ast+\delta}(\vec{u})d\vec{u}\biggr]+\int_Q\biggl[\alpha_1^{\mu_0^\ast}(\vec{u})-\alpha_1^{\mu_0^\ast+\delta}(\vec{u})\biggr]\biggl[6\mu_0^\ast-2\biggl(g(u_1) g(u_2) g(u_3) \Bigl(1 + \alpha_2^{\mu}(\vec{u})+\alpha_2^{\mu}(\vec{u})\alpha_3^{\mu}(\vec{u}) \Bigr) \biggr)\\
&  +2\biggl(g(u_1)+g(u_2) + g(u_3)-g(u_1)\beta_2^\mu(\vec{u}) \biggr)\mu_1  \\
&+2\biggl(g(u_{2})g(u_{3})+\alpha_{2}^{\mu}(\vec{u})g(u_{1})g(u_{3})+ \alpha_{2}^{\mu}(\vec{u})\alpha_{3}^{\mu}(\vec{u})g(u_{1})g(u_{2}) \biggr)\mu_2\biggr]d\vec{u}\\
=&\delta\biggl[\alpha-6\int_Q\alpha_1^{\mu_0^\ast}(\vec{u})d\vec{u}\biggr]+\delta\biggl[\int_Q 6\alpha_1^{\mu_0^\ast}(\vec{u})d\vec{u}-\int_Q 6\alpha_1^{\mu_0^\ast+\delta}(\vec{u})d\vec{u}\biggr]\\
&  +\int_Q\biggl[\alpha_1^{\mu_0^\ast}(\vec{u})-\alpha_1^{\mu_0^\ast+\delta}(\vec{u})\biggr]\biggl[6\mu_0^\ast-2\biggl(g(u_1) g(u_2) g(u_3) \Bigl(1 + \alpha_2^{\mu}(\vec{u})+\alpha_2^{\mu}(\vec{u})\alpha_3^{\mu}(\vec{u}) \Bigr) \biggr)\\
&  +2\biggl(g(u_1)+g(u_2) + g(u_3)-g(u_1)\beta_2^\mu(\vec{u}) \biggr)\mu_1 \\
&+2\biggl(g(u_{2})g(u_{3})+\alpha_{2}^{\mu}(\vec{u})g(u_{1})g(u_{3})+ \alpha_{2}^{\mu}(\vec{u})\alpha_{3}^{\mu}(\vec{u})g(u_{1})g(u_{2}) \biggr)\mu_2\biggr]d\vec{u}\\
=&\delta\biggl[\alpha-6\int_Q\alpha_1^{\mu_0^\ast}(\vec{u})d\vec{u}\biggr]+\int_Q \biggl(\alpha_1^{\mu_0^\ast}(\vec{u})-\alpha_1^{\mu_0^\ast+\delta}(\vec{u})\biggr)\biggl[6\delta+6\mu_0^\ast -2\biggl(g(u_1) g(u_2) g(u_3) \Bigl(1 + \alpha_2^{\mu}(\vec{u})+\alpha_2^{\mu}(\vec{u})\alpha_3^{\mu}(\vec{u}) \Bigr) \biggr)\\
&  +2\biggl(g(u_1)+g(u_2) + g(u_3)-g(u_1)\beta_2^\mu(\vec{u}) \biggr)\mu_1 \\
&+2\biggl(g(u_{2})g(u_{3})+\alpha_{2}^{\mu}(\vec{u})g(u_{1})g(u_{3})+ \alpha_{2}^{\mu}(\vec{u})\alpha_{3}^{\mu}(\vec{u})g(u_{1})g(u_{2}) \biggr)\mu_2 \biggr]d\vec{u}
\end{aligned}
\end{equation}}

Analyzing \eqref{eq:mu0_min2}, we need to check if the second term of the summation,
\begin{equation}\label{eq:o_delta_conjecture_mu0}
\int_Q\biggl[\alpha_1^{\mu_0^\ast}(\vec{u})-\alpha_1^{\mu_0^\ast+\delta}(\vec{u})\biggr]\biggl[6\delta + 6\mu_0^\ast-2c_2(\vec{u},\mu_1,\mu_2)\biggr]d\vec{u}=o(\delta) \\
\end{equation}

Now from \eqref{eq:beta_mu_u} we have:
{\footnotesize
\begin{equation}\label{eq:beta1}
\begin{aligned}
\beta_1^{\mu}(\vec{u})&= \mathbbm{1}\biggl\{\mu,\vec{u}:
\begin{pmatrix}
1&0&0\\
1&1&0\\
1&1&1
\end{pmatrix}
\begin{pmatrix}
2g(u_1) g(u_2) g(u_3) \\
2g(u_1) g(u_2) g(u_3) \\
2g(u_1) g(u_2) g(u_3)
\end{pmatrix}
\le 
\begin{pmatrix}
1&0&0\\
1&1&0\\
1&1&1
\end{pmatrix}
\begin{pmatrix}
6 & 2(g(u_2)+g(u_3))  & 2g(u_2)g(u_3)\\
0  & 2g(u_1)          & 2g(u_1) g(u_3) \\
0  & 0                & 2g(u_1)g(u_2)
\end{pmatrix}
\begin{pmatrix}
\mu_0 \\
\mu_1 \\
\mu_2
\end{pmatrix}
\biggr\} \\
&= \mathbbm{1}\biggl\{\mu,\vec{u}:
\begin{pmatrix}
2g(u_1) g(u_2) g(u_3) \\
4g(u_1) g(u_2) g(u_3) \\
6g(u_1) g(u_2) g(u_3)
\end{pmatrix}
\le \\
& \quad
\begin{pmatrix}
6 & 2(g(u_2)+g(u_3))  & 2g(u_2)g(u_3)\\
6 & 2(g(u_1)+g(u_2)+g(u_3)) & 2((g(u_1) g(u_3)+g(u_2)g(u_3)) \\
6 & 2(g(u_1)+g(u_2)+g(u_3)) & 2((g(u_1) g(u_3)+g(u_2)g(u_3)+g(u_1)g(u_2))  
\end{pmatrix}
\begin{pmatrix}
\mu_0 \\
\mu_1 \\
\mu_2
\end{pmatrix}
\biggr\}\\
&=\mathbbm{1}\left\{\begin{array}{c}
2g(u_1) g(u_2) g(u_3) \le 6\mu_0 + 2(g(u_2)+g(u_3))\mu_1 + 2g(u_2)g(u_3)\mu_2\\
4g(u_1) g(u_2) g(u_3) \le 6\mu_0 + 2(g(u_1)+g(u_2)+g(u_3))\mu_1 + 2(g(u_1)g(u_3)+g(u_2)g(u_3))\mu_2\\
6g(u_1) g(u_2) g(u_3) \le 6\mu_0 + 2(g(u_1)+g(u_2)+g(u_3))\mu_1 + 2(g(u_1)g(u_2)+g(u_1)g(u_3)+g(u_2)g(u_3))\mu_2
\end{array}\right\}
\end{aligned}
\end{equation}}

Thus we have for $\delta>0$ using \eqref{eq:beta1},
{\footnotesize
\begin{equation}\label{eq:b1_diff}
\begin{aligned}
&\beta_1^{\mu_0^\ast+\delta}(\vec{u})-\beta_1^{\mu_0^\ast}(\vec{u})\\
&=\mathbbm{1} \left\{\begin{array}{c}
6\mu_0^\ast \le 2g(u_1) g(u_2) g(u_3) - 2(g(u_2) + g(u_3))\mu_1 - 2g(u_2)g(u_3)\mu_2 \le 6(\mu_0^\ast+\delta)\\
4g(u_1) g(u_2) g(u_3) - 2(g(u_1)+g(u_2)+g(u_3))\mu_1 - 2(g(u_1)g(u_3)+g(u_2)g(u_3))\mu_2 \le 6(\mu_0^\ast+\delta)\\
6g(u_1) g(u_2) g(u_3) - 2(g(u_1)+g(u_2)+g(u_3))\mu_1 - 2(g(u_1)g(u_2)+g(u_1)g(u_3)+g(u_2)g(u_3))\mu_2 \le 6(\mu_0^\ast+\delta)
\end{array}\right\}\\
&+ \mathbbm{1} \left\{\begin{array}{c}
2g(u_1) g(u_2) g(u_3) - 2(g(u_2) + g(u_3))\mu_1 - 2g(u_2)g(u_3)\mu_2 \le 6 \mu_0^\ast \\
6\mu_0^\ast \le 4g(u_1) g(u_2) g(u_3) - 2(g(u_1)+g(u_2)+g(u_3))\mu_1 - 2(g(u_1)g(u_3)+g(u_2)g(u_3))\mu_2 \le 6(\mu_0^\ast+\delta)\\
6g(u_1) g(u_2) g(u_3) - 2(g(u_1)+g(u_2)+g(u_3))\mu_1 - 2(g(u_1)g(u_2)+g(u_1)g(u_3)+g(u_2)g(u_3))\mu_2 \le 6(\mu_0^\ast+\delta)
\end{array}\right\}\\
&+ \mathbbm{1} \left\{\begin{array}{c}
2g(u_1) g(u_2) g(u_3) - 2(g(u_2) + g(u_3))\mu_1 - 2g(u_2)g(u_3)\mu_2 \le 6\mu_0^\ast\\
4g(u_1) g(u_2) g(u_3) - 2(g(u_1)+g(u_2)+g(u_3))\mu_1 - 2(g(u_1)g(u_3)+g(u_2)g(u_3))\mu_2 \le 6\mu_0^\ast\\
6\mu_0^\ast \le 6g(u_1) g(u_2) g(u_3) - 2(g(u_1)+g(u_2)+g(u_3))\mu_1 - 2(g(u_1)g(u_2)+g(u_1)g(u_3)+g(u_2)g(u_3))\mu_2 \le 6(\mu_0^\ast+\delta)
\end{array}\right\}\\
&\le\mathbbm{1}\left\{\begin{array}{c}
6\mu_0^\ast \le 2g(u_1) g(u_2) g(u_3) - 2(g(u_2)+g(u_3))\mu_1 - 2g(u_2)g(u_3)\mu_2 \le 6(\mu_0^\ast+\delta)\\
2g(u_1) g(u_2) g(u_3) - 2g(u_1)\mu_1 - 2g(u_1)g(u_3)\mu_2 \le 6\delta\\
4g(u_1) g(u_2) g(u_3) - 2g(u_1)\mu_1 - 2(g(u_1)g(u_2) + g(u_1)g(u_3))\mu_2 \le 6\delta\\
\end{array}\right\}\\
&+ \mathbbm{1} \left\{\begin{array}{c}
0 \le 2g(u_1) g(u_2) g(u_3) - 2g(u_1)\mu_1 - 2g(u_1)g(u_3)\mu_2  \\
6\mu_0^\ast \le 4g(u_1) g(u_2) g(u_3) - 2(g(u_1)+g(u_2)+g(u_3))\mu_1 - 2(g(u_1)g(u_3)+g(u_2)g(u_3))\mu_2 \le 6(\mu_0^\ast+\delta)\\
2g(u_1) g(u_2) g(u_3) - 2g(u_1)g(u_2)\mu_2 \le 6\delta
\end{array}\right\}\\
&+ \mathbbm{1} \left\{\begin{array}{c}
0 \le 4g(u_1) g(u_2) g(u_3) - 2g(u_1)\mu_1 - 2(g(u_1)g(u_2)+g(u_1)g(u_3))\mu_2 \\
0 \le 2g(u_1) g(u_2) g(u_3) - 2g(u_1)g(u_2)\mu_2 \\
6\mu_0^\ast \le 6g(u_1) g(u_2) g(u_3) - 2(g(u_1)+g(u_2)+g(u_3))\mu_1 - 2(g(u_1)g(u_2)+g(u_1)g(u_3)+g(u_2)g(u_3))\mu_2 \le 6(\mu_0^\ast+\delta)
\end{array}\right\}
\end{aligned}
\end{equation}}

Note that, for a fixed value of \(\mu\) and \(\vec{u}\), only one of the indicator functions on the right-hand side of \eqref{eq:b1_diff} will be active. 
We analyze each case individually to establish \eqref{eq:o_delta_conjecture_mu0}.
\paragraph{Case 1: The first indicator function of the right-hand side of \eqref{eq:b1_diff} is true:}
Suppose the first inequality in the first term on the right-hand side of the summation in \eqref{eq:b1_diff} is true. 
Fix $u_2$,$u_3$,$\mu_1$ and $\mu_2$. 
Denote by $u_{1,\text{min}}$ and $u_{1,\text{max}}$ the minimum and maximum value of $u_1$. Then we have,
\begin{equation}\label{eq:b1_1st_inequality}
\begin{aligned}
&6\mu_0^\ast \le 2g(u_1) g(u_2) g(u_3) - 2(g(u_2)+g(u_3))\mu_1 - 2g(u_2)g(u_3)\mu_2 \le 6(\mu_0^\ast+\delta)\\
\Rightarrow& |2(g(u_{1,\text{min}})-g(u_{1,\text{max}}))g(u_2) g(u_3)|\le 6\delta\\
\Rightarrow& |g(u_{1,\text{min}})-g(u_{1,\text{max}})|\le \frac{3\delta}{g(u_2) g(u_3)} 
\end{aligned}
\end{equation}


Then we have for $u_1$,
\begin{equation}\label{eq:u1_bound}
|u_1-u_1'|  \stackrel{\mbox{\eqref{as:assumption3}}}{\le} \frac{1}{c_3} |g(u_1)-g(u_1')| \stackrel{\mbox{\eqref{eq:b1_1st_inequality}}}{\le} \frac{1}{c_3} \frac{3\delta}{g(u_2) g(u_3)} \stackrel{\mbox{\eqref{as:assumption4}}}{\le} \frac{1}{c_3} \frac{3\delta}{c_4^2}
\end{equation}

We have $Q= \{\vec{u}\colon 0 \le u_1 \le u_2 \le u_3 \le 1\}$.  
Define the set  
\begin{equation}
A_{\delta}
=
\bigl\{
\vec{u}\in Q: \beta_1^{\mu_0^\ast+\delta}(\vec{u})\neq 
\beta_1^{\mu_0^\ast}(\vec{u})
\bigr\}.
\end{equation}
From \eqref{eq:u1_bound}, for any fixed $u_2$ and $u_3$ values in the interval $[0,1]$, the possible $u_1$ value lies within a one-dimensional region whose measure is of order $\delta$. 
Since both $u_2$ and $u_3$ lie in $[0,1]$, the measure of the three-dimensional set $A_{\delta}$ is of order $\delta$.  
The difference  
\begin{equation}
\beta_1^{\mu_0^\ast+\delta}(\vec{u}) 
-
\beta_1^{\mu_0^\ast}(\vec{u}) 
\end{equation}
takes values in $\{-1, 0,1\}$. Hence, 
\begin{equation}
\Bigl\lvert
\beta_1^{\mu_0^\ast+\delta}(\vec{u})
-
\beta_1^{\mu_0^\ast}(\vec{u})
\Bigr\rvert
\le1
\quad
\text{for all } \vec{u}\in Q.
\end{equation}
Therefore, by definition of $A_{\delta}$,  
\begin{equation}
\left\lvert
\int_Q
\Bigl(
\beta_1^{\mu_0^\ast+\delta}(\vec{u})
-
\beta_1^{\mu_0^\ast}(\vec{u})
\Bigr)
d\vec{u}
\right\rvert
\le
\int_{A_{\delta}} 1d\vec{u}
=
\mathrm{measure}(A_{\delta})
=
O(\delta).
\end{equation}
Thus, we have that,
\begin{equation}\label{eq:b1_Odelta}
\int_Q
\Bigl(
\alpha_1^{\mu_0^\ast}(\vec{u})
-
\alpha_1^{\mu_0^\ast+\delta}(\vec{u})
\Bigr)
d\vec{u}
=
O(\delta)
\quad\Longleftrightarrow\quad
\int_Q
\Bigl(
\beta_1^{\mu_0^\ast+\delta}(\vec{u})
-
\beta_1^{\mu_0^\ast}(\vec{u})
\Bigr)
d\vec{u}
=
O(\delta).
\end{equation}
Consider the second factor of the integral of \eqref{eq:o_delta_conjecture_mu0}: 
\begin{equation}
\label{eq:2ndterm_0_delta_conjecture}
\int_Q\biggl[\alpha_1^{\mu_0^\ast}(\vec{u})-\alpha_1^{\mu_0^\ast+\delta}(\vec{u})\biggr]\biggl[6\delta+6\mu_0^\ast-2c_2(\vec{u},\mu_1,\mu_2)\biggr]d\vec{u}.
\end{equation}
From \eqref{eq:b1_diff}, we already know that  
\begin{equation}
\int_Q
\Bigl(
\alpha_1^{\mu_0^\ast+\delta}(\vec{u})
-
\alpha_1^{\mu_0^\ast}(\vec{u})
\Bigr)
d\vec{u}
=
O(\delta).
\end{equation}
If we can establish that the second factor in \eqref{eq:2ndterm_0_delta_conjecture} is also $O(\delta)$, it follows that the entire integral is $o(\delta)$. 

To show this, we begin with \eqref{eq:mu0_min2}. 
Our focus is on the second term within the second integral of the summation in \eqref{eq:mu0_min2}, which is precisely the term of interest, and we can rewrite it as:
{\small \begin{equation}\label{eq:secondterm_lagrangian5}
\begin{aligned}
&6\delta+6\mu_0^\ast-2c_2(\vec{u},\mu_1,\mu_2)\\
=& 6\delta+6\mu_0^\ast -2\biggl(g(u_1) g(u_2) g(u_3) \Bigl(1 + \alpha_2^{\mu}(\vec{u})+\alpha_2^{\mu}(\vec{u})\alpha_3^{\mu}(\vec{u}) \Bigr) \biggr) +2\biggl(g(u_1)+g(u_2) + g(u_3)-g(u_1)\beta_2^\mu(\vec{u}) \biggr)\mu_1 \\
&+2\biggl(g(u_{2})g(u_{3})+\alpha_{2}^{\mu}(\vec{u})g(u_{1})g(u_{3})+ \alpha_{2}^{\mu}(\vec{u})\alpha_{3}^{\mu}(\vec{u})g(u_{1})g(u_{2})\biggr)\mu_2 \\
=& 6\delta+6\mu_0^\ast -2g(u_1) g(u_2) g(u_3) - 2g(u_1) g(u_2) g(u_3)\alpha_2^{\mu}(\vec{u}) -2g(u_1) g(u_2) g(u_3)\alpha_2^{\mu}(\vec{u})\alpha_3^{\mu}(\vec{u})\\
&+2\biggl(g(u_2) + g(u_3) \biggr)\mu_1+2g(u_1)\alpha_2^{\mu}(\vec{u})\mu_1+ 2g(u_{2})g(u_{3})\mu_2 + 2g(u_{1})g(u_{3})\alpha_{2}^{\mu}(\vec{u})\mu_2 +2g(u_{1})g(u_{2})\alpha_{2}^{\mu}(\vec{u})\alpha_{3}^{\mu}(\vec{u})\mu_2\\
=& 6\delta + 6\mu_0^\ast -2g(u_1) g(u_2) g(u_3) + 2\bigg(g(u_2) + g(u_3)\bigg)\mu_1 + 2g(u_{2})g(u_{3})\mu_2 \\
&+ 2\alpha_{2}^{\mu}(\vec{u})\bigg(-g(u_1)g(u_2)g(u_3) + g(u_1)\mu_1+g(u_1)g(u_3)\mu_2\bigg)\\
&+ 2\alpha_{2}^{\mu}(\vec{u})\alpha_{3}^{\mu}(\vec{u})\bigg(-g(u_1)g(u_2)g(u_3)+g(u_1)g(u_2)\mu_2\bigg)
\end{aligned}
\end{equation}}

Now we compare this term with the first indicator function in \eqref{eq:b1_diff},
\begin{equation}\label{eq:b1_diff_1st_ineq}
\begin{aligned}
&\mathbbm{1}\left\{\begin{array}{c}
6\mu_0^\ast \le 2g(u_1) g(u_2) g(u_3) - 2(g(u_2)+g(u_3))\mu_1 - 2g(u_2)g(u_3)\mu_2 \le 6(\mu_0^\ast+\delta)\\
2g(u_1) g(u_2) g(u_3) - 2g(u_1)\mu_1 - 2g(u_1)g(u_3)\mu_2 \le 6\delta\\
4g(u_1) g(u_2) g(u_3) - 2g(u_1)\mu_1 - 2(g(u_1)g(u_2) + g(u_1)g(u_3))\mu_2 \le 6\delta
\end{array}\right\}\\
=& \mathbbm{1}\left\{\begin{array}{c}
-6\delta \le 6\mu_0^\ast - 2g(u_1) g(u_2) g(u_3) + 2(g(u_2)+g(u_3))\mu_1 + 2g(u_2)g(u_3)\mu_2 \le 0\\
2g(u_1) g(u_2) g(u_3) - 2g(u_1)\mu_1 - 2g(u_1)g(u_3)\mu_2 \le 6\delta\\
4g(u_1) g(u_2) g(u_3) - 2g(u_1)\mu_1 - 2(g(u_1)g(u_2) + g(u_1)g(u_3))\mu_2 \le 6\delta
\end{array}\right\}\\
=&\mathbbm{1}\left\{\begin{array}{c}
-6\delta \le F_1(\vec{u},\mu_0^\ast,\mu_1,\mu_2) \le 0\\
F_2(\vec{u},\mu_1,\mu_2) \le 6\delta\\
F_3(\vec{u},\mu_1,\mu_2) \le 6\delta
\end{array}\right\}\\
=&\mathbbm{1}\left\{\begin{array}{c}
0 \le F_1(\vec{u},\mu_0^\ast,\mu_1,\mu_2)+6\delta \le 6\delta\\
F_2(\vec{u},\mu_1,\mu_2) \le 6\delta\\
F_3(\vec{u},\mu_1,\mu_2) \le 6\delta
\end{array}\right\}
\end{aligned}
\end{equation}
where 
\begin{equation}\label{eq:f123}
\begin{aligned}
F_1(\vec{u},\mu_0^\ast,\mu_1,\mu_2)&=6\mu_0^\ast-2g(u_1) g(u_2) g(u_3) + 2(g(u_2)+g(u_3))\mu_1 + 2g(u_2)g(u_3)\mu_2\\
F_2(\vec{u},\mu_1,\mu_2)&=2g(u_1) g(u_2) g(u_3) - 2g(u_1)\mu_1 - 2g(u_1)g(u_3)\mu_2\\
F_3(\vec{u},\mu_1,\mu_2)&=4g(u_1) g(u_2) g(u_3) - 2g(u_1)\mu_1 - 2(g(u_1)g(u_2) + g(u_1)g(u_3))\mu_2.
\end{aligned}
\end{equation}

Now using \eqref{eq:b1_diff_1st_ineq}, \eqref{eq:secondterm_lagrangian5} can be rewritten as,
\begin{equation}\label{eq:secondterm_lagrangian5_f123}
\begin{aligned}
&6\delta+F_1(\vec{u},\mu_0^\ast,\mu_1,\mu_2)+2\alpha_2^{\mu}(\vec{u})\bigg(-\frac{1}{2}F_2(\vec{u},\mu_1,\mu_2) \bigg) \\
&+2\alpha_2^{\mu}(\vec{u})\alpha_3^{\mu}(\vec{u})\bigg(-\frac{1}{4}F_3(\vec{u},\mu_1,\mu_2)-\frac{1}{2}g(u_1)\mu_1+\frac{1}{2}g(u_1)g(u_2)\mu_2-\frac{1}{2}g(u_1)g(u_3)\mu_2 \bigg)\\
=& 6\delta+F_1(\vec{u},\mu_0^\ast,\mu_1,\mu_2)-\alpha_2^{\mu}(\vec{u})F_2(\vec{u},\mu_1,\mu_2)-\frac{1}{2}\alpha_2^{\mu}(\vec{u})\alpha_3^{\mu}(\vec{u})F_3(\vec{u},\mu_1,\mu_2)\\
&-\alpha_2^{\mu}(\vec{u})\alpha_3^{\mu}(\vec{u})g(u_1)\mu_1 + \alpha_2^{\mu}(\vec{u})\alpha_3^{\mu}(\vec{u})g(u_1)g(u_2)\mu_2-g(u_1)g(u_3)\alpha_2^{\mu}(\vec{u})\alpha_3^{\mu}(\vec{u})\mu_2 \\
=& \bigg(6\delta+F_1(\vec{u},\mu_0^\ast,\mu_1,\mu_2)\bigg) - \alpha_2^{\mu}(\vec{u})F_2(\vec{u},\mu_1,\mu_2)-\frac{1}{2}\alpha_2^{\mu}(\vec{u})\alpha_3^{\mu}(\vec{u})F_3(\vec{u},\mu_1,\mu_2)+\alpha_2^{\mu}(\vec{u})\alpha_3^{\mu}(\vec{u})g(u_1)\mu_1\\
&+\alpha_2^{\mu}(\vec{u})\alpha_3^{\mu}(\vec{u})g(u_1)\mu_2\bigg(g(u_2)-g(u_3)\bigg)
\end{aligned}
\end{equation}
Now consider various combinations of $\alpha_2^{\mu}(\vec{u})$.

\begin{enumerate}
\item If $\alpha_2^{\mu}(\vec{u})=0$, \eqref{eq:secondterm_lagrangian5_f123} becomes,
\begin{equation}\label{eq:6dplusf1}
\bigg(6\delta+F_1(\vec{u},\mu_0^\ast,\mu_1,\mu_2)\bigg)
\end{equation}
Also using \eqref{eq:b1_diff_1st_ineq}, we have 
\begin{equation}\label{eq:6dplusf1_Od}
\bigg(6\delta+F_1(\vec{u},\mu_0^\ast,\mu_1,\mu_2)\bigg)=O(\delta).
\end{equation}
Recall \eqref{eq:o_delta_conjecture_mu0}. We have,
\begin{equation}\label{eq:case1_o_delta_conjecture_proven}
\begin{aligned}
&\int_Q\biggl[\alpha_1^{\mu_0^\ast}(\vec{u})-\alpha_1^{\mu_0^\ast+\delta}(\vec{u})\biggr]\biggl[6\delta+6\mu_0^\ast-2c_2(\vec{u},\mu_1,\mu_2)\biggr]d\vec{u}\\
\stackrel{\mbox{\eqref{eq:6dplusf1}}}{=}& \int_Q\biggl[\alpha_1^{\mu_0^\ast}(\vec{u})-\alpha_1^{\mu_0^\ast+\delta}(\vec{u})\biggr]\biggl[6\delta+F_1(\vec{u},\mu_0^\ast,\mu_1,\mu_2)\biggr]d\vec{u}\\
\le & \|6\delta+F_1(\vec{u},\mu_0^\ast,\mu_1,\mu_2)\|_{\infty} \int_Q\bigg(\alpha_1^{\mu_0^\ast}(\vec{u})-\alpha_1^{\mu_0^\ast+\delta}(\vec{u})\bigg)d\vec{u}\\
\stackrel{\mbox{\eqref{eq:b1_Odelta},\eqref{eq:6dplusf1_Od}}}{=}& O(\delta^2)\\
=& o(\delta) \text{ in the limit } \delta \rightarrow 0.
\end{aligned}
\end{equation}
Thus, equation \eqref{eq:o_delta_conjecture_mu0} holds.
The proof then follows from proposition \ref{prop:o_delta_condition_mu_0}.

\item If $\alpha_2^{\mu}(\vec{u})=1$, we have $\beta_2^{\mu}(\vec{u})=0$.
Recall the expression of $\beta_2^{\mu}(\vec{u})$ in \eqref{eq:beta_mu_u}.
\begin{equation}\label{eq:b2_f23}
\begin{aligned}
\beta_2^{\mu}(\vec{u}) &= \mathbbm{1}\biggl\{\mu,\vec{u}:
\left\{
\begin{array}{c}
2g(u_1) g(u_2) g(u_3) - 2g(u_1) \mu_1 - 2g(u_1)g(u_3)\mu_2 \le 0 \\
4g(u_1) g(u_2) g(u_3) - 2g(u_1) \mu_1 - 2(g(u_1)g(u_3) + g(u_1)g(u_2))\mu_2 \le 0
\end{array}
\right\}
\biggr\}\\
& \stackrel{\mbox{\eqref{eq:f123}}}{=}\mathbbm{1}\biggl\{\mu,\vec{u}:
\left\{
\begin{array}{c}
F_2(\vec{u},\mu_1,\mu_2) \le 0 \\
F_3(\vec{u},\mu_1,\mu_2) \le 0
\end{array}
\right\}
\biggr\}.
\end{aligned}
\end{equation}
If $\beta_2^{\mu}(\vec{u})=0$, we have either $F_2(\vec{u},\mu_1,\mu_2) > 0$ or $F_3(\vec{u},\mu_1,\mu_2) > 0$ where $F_2(\vec{u},\mu_1,\mu_2)$ and $F_3(\vec{u},\mu_1,\mu_2)$ are defined in \eqref{eq:f123}. 

\begin{enumerate}
    \item If $F_2(\vec{u},\mu_1,\mu_2)>0$,
fix $u_1$,$u_3$,$\mu_1$ and $\mu_2$. 
Denote by $u_{2,\text{min}}$ and $u_{2,\text{max}}$ the minimum and maximum value of $u_2$. 
Then we have,
\begin{equation}
\begin{aligned}
& 0 < F_2(\vec{u},\mu_1,\mu_2) \stackrel{\mbox{\eqref{eq:b1_diff}}}{\le} 6\delta \\
\Rightarrow & 0 < 2g(u_1) g(u_2) g(u_3) - 2g(u_1) \mu_1 - 2g(u_1)g(u_3)\mu_2 \le 6 \delta \\ 
\Rightarrow &  |2(g(u_{2,\text{min}})-g(u_{2,\text{max}})) g(u_1) g(u_3)| < 6\delta \\ 
\Rightarrow & |g(u_{2,\text{min}})-g(u_{2,\text{max}})| <\frac{3\delta}{g(u_1) g(u_3)}
\end{aligned}
\end{equation}
Then we have for $u_2$,
\begin{equation}\label{eq:u2_bound}
|u_2-u_2'|  \stackrel{\mbox{\eqref{as:assumption3}}}{\le} \frac{1}{c_3} |g(u_2)-g(u_2')| \stackrel{\mbox{\eqref{eq:b1_1st_inequality}}}{\le} \frac{1}{c_3} \frac{3\delta}{g(u_1) g(u_3)} \stackrel{\mbox{\eqref{as:assumption4}}}{\le} \frac{1}{c_3} \frac{3\delta}{c_4^2}
\end{equation}

So we have that both $u_1$ and $u_2$ are $O(\delta)$ from \eqref{eq:u1_bound} and \eqref{eq:u2_bound} respectively. 
Recall \eqref{eq:o_delta_conjecture_mu0}
\begin{equation}
\int_Q\biggl[\alpha_1^{\mu_0^\ast}(\vec{u})-\alpha_1^{\mu_0^\ast+\delta}(\vec{u})\biggr]\biggl[6\delta + 6\mu_0^\ast-2c_2(\vec{u},\mu_1,\mu_2)\biggr]d\vec{u}=o(\delta).
\end{equation}
Define
\[
A_\delta \;:=\; \bigl\{\vec{u}\in Q:\,\alpha_1^{\mu_0^\ast}(\vec{u})\neq \alpha_1^{\mu_0^\ast+\delta}(\vec{u})\bigr\}.
\]
Since $\alpha_1^{\mu_0^\ast}$ and $\alpha_1^{\mu_0^\ast+\delta}$ are indicator functions, the integrand in \eqref{eq:o_delta_conjecture_mu0} vanishes outside $A_\delta$, and we can rewrite
\begin{equation}\label{eq:int_on_A_delta}
\begin{aligned}
&\int_Q\biggl[\alpha_1^{\mu_0^\ast}(\vec{u})-\alpha_1^{\mu_0^\ast+\delta}(\vec{u})\biggr]\biggl[6\delta + 6\mu_0^\ast-2c_2(\vec{u},\mu_1,\mu_2)\biggr]d\vec{u}\\
=&\int_{A_\delta}\biggl[\alpha_1^{\mu_0^\ast}(\vec{u})-\alpha_1^{\mu_0^\ast+\delta}(\vec{u})\biggr]\biggl[6\delta + 6\mu_0^\ast-2c_2(\vec{u},\mu_1,\mu_2)\biggr]d\vec{u}.
\end{aligned}
\end{equation}
From the inequalities established in \eqref{eq:u1_bound} and \eqref{eq:u2_bound}, we have that for $\vec{u}\in A_\delta$ the coordinates $u_1$ and $u_2$ vary over an interval of length $O(\delta)$:
there exist constants $K_1,K_2>0$ such that, for all $\vec{u},\vec{u}'\in A_\delta$ with the same $u_3$,
\begin{equation}\label{eq:u1u2_Odelta}
|u_1-u_1'|\le K_1\delta,
\qquad
|u_2-u_2'|\le K_2\delta.
\end{equation}
Since $Q\subset[0,1]^3$, the $u_3$-coordinate is always in $[0,1]$, so the Lebesgue measure of $A_\delta$ satisfies
\begin{equation}\label{eq:measure_Adelta}
|A_\delta| \;\le\; K\,\delta^2,
\qquad\text{for some constant }K>0 \text{ independent of }\delta.
\end{equation}

Next, we bound the factor $6\delta + 6\mu_0^\ast-2c_2(\vec{u},\mu_1,\mu_2)$ uniformly in $\vec{u}\in Q$. 
By Assumption~\ref{as:assumption3}, $g$ is continuous on $[0,1]$, hence bounded. 
Since $c_2(\vec{u},\mu_1,\mu_2)$ is a finite linear combination of products of $g(u_1),g(u_2),g(u_3)$ with fixed coefficients, it is also continuous in $\vec{u}$ and therefore bounded on the compact set $Q$. 
Thus there exists $B>0$ such that
\[
|c_2(\vec{u},\mu_1,\mu_2)|\le B 
\qquad\text{for all }\vec{u}\in Q.
\]
Fix $\delta_0>0$ and consider $0<\delta\le\delta_0$. Then for all $\vec{u}\in Q$,
\begin{equation}\label{eq:integrand_bound}
\begin{aligned}
\bigl|6\delta + 6\mu_0^\ast - 2c_2(\vec{u},\mu_1,\mu_2)\bigr|
&\le |6\delta| + 6|\mu_0^\ast| + 2|c_2(\vec{u},\mu_1,\mu_2)| \\
&\le 6\delta_0 + 6|\mu_0^\ast| + 2B \;=:\; M.
\end{aligned}
\end{equation}
In particular, $M$ is a finite constant that does not depend on $\delta$ or $\vec{u}$.

Combining \eqref{eq:int_on_A_delta}, \eqref{eq:measure_Adelta}, and \eqref{eq:integrand_bound}, and using that $\bigl|\alpha_1^{\mu_0^\ast}(\vec{u})-\alpha_1^{\mu_0^\ast+\delta}(\vec{u})\bigr|\le 1$, we obtain
\begin{equation*}
\begin{aligned}
&\biggl|\int_Q\bigl[\alpha_1^{\mu_0^\ast}(\vec{u})-\alpha_1^{\mu_0^\ast+\delta}(\vec{u})\bigr]\bigl[6\delta + 6\mu_0^\ast-2c_2(\vec{u},\mu_1,\mu_2)\bigr]d\vec{u}\biggr|\\
&= \biggl|\int_{A_\delta}\bigl[\alpha_1^{\mu_0^\ast}(\vec{u})-\alpha_1^{\mu_0^\ast+\delta}(\vec{u})\bigr]\bigl[6\delta + 6\mu_0^\ast-2c_2(\vec{u},\mu_1,\mu_2)\bigr]d\vec{u}\biggr| \\
&\le \int_{A_\delta} M\,d\vec{u}
\;\le\; M\,|A_\delta|
\;\le\; MK\,\delta^2.
\end{aligned}
\end{equation*}
Hence the integral in \eqref{eq:o_delta_conjecture_mu0} is $O(\delta^2)$ and therefore $o(\delta)$ as $\delta\to 0$.
This verifies \eqref{eq:o_delta_conjecture_mu0}, and by Proposition~\ref{prop:o_delta_condition_mu_0} we conclude that \eqref{eq:mu_0_optimality} holds.

\item If $F_3(\vec{u},\mu_1,\mu_2)>0$, we have
\begin{equation}\label{eq:case1_b}
\begin{aligned}
& 0 < F_3(\vec{u},\mu_1,\mu_2) \stackrel{\mbox{\eqref{eq:b1_diff}}}{\le} 6\delta \\
\Leftrightarrow & 0 < 4g(u_1) g(u_2) g(u_3) - 2g(u_1) \mu_1 - 2(g(u_1)g(u_3) + g(u_1)g(u_2))\mu_2 \le 6\delta \\
\Leftrightarrow &  0 < g(u_2) g(u_3) - \frac{\mu_1}{2} -\frac{(g(u_3) + g(u_2))\mu_2}{2} \le \frac{3\delta}{2g(u_1)}\\
\Leftrightarrow & 0 < \bigg(g(u_2)-\frac{\mu_2}{2}\bigg)\bigg(g(u_3)-\frac{\mu_2}{2}\bigg) - \frac{\mu_1}{2} - \frac{\mu_2^2}{4} \le \frac{3\delta}{2g(u_1)}\\
\end{aligned}
\end{equation}
We have from Lemma \ref{lem:gdot} that $g(u_2)\ge g(u_3)$ as $u_2\le u_3$.
Also, from above we have,
\begin{equation}
\bigg(g(u_2)-\frac{\mu_2}{2}\bigg)\bigg(g(u_3)-\frac{\mu_2}{2}\bigg) \ge 0
\end{equation}
Without loss of generality, we assume, 
\begin{equation}\label{eq:case1b_g1>g2}
g(u_2) \ge g(u_3) > \frac{\mu_2}{2}.
\end{equation}
Thu,s we have,
\begin{equation}\label{eq:case1b_g2>}
\begin{aligned}
&\frac{\mu_1}{2} + \frac{\mu_2^2}{4}  \stackrel{\mbox{\eqref{eq:case1_b}}}{\le} \bigg(g(u_2)-\frac{\mu_2}{2}\bigg)\bigg(g(u_3)-\frac{\mu_2}{2}\bigg) \stackrel{\mbox{\eqref{eq:case1b_g1>g2}}}{\le} \bigg(g(u_2)-\frac{\mu_2}{2}\bigg)^2 \\
\Rightarrow & \bigg(g(u_2)-\frac{\mu_2}{2}\bigg) \ge \sqrt{\frac{\mu_1}{2} + \frac{\mu_2^2}{4}}
\end{aligned}
\end{equation}
Fix $u_1$,$u_2$,$\mu_1$ and $\mu_2$. 
Denote by $u_{3,\text{min}}$ and $u_{3,\text{max}}$ the minimum and maximum value of $u_3$. 
Then we have using \eqref{eq:case1_b},
\begin{equation}\label{eq:g3_bound}
\begin{aligned}
& \bigg(g(u_2)-\frac{\mu_2}{2}\bigg)|g(u_{3,\text{min}})-g(u_{3,\text{max}})| <\frac{3\delta}{2g(u_1)} \\
\Rightarrow & |g(u_{3,\text{min}})-g(u_{3,\text{max}})| <\frac{3\delta}{2g(u_1)} \frac{1}{\bigg(g(u_2)-\frac{\mu_2}{2}\bigg)} \stackrel{\mbox{\eqref{eq:case1b_g2>}}}{<} \frac{3\delta}{2g(u_1)} \frac{1}{\sqrt{\frac{\mu_1}{2} + \frac{\mu_2^2}{4}}}
\end{aligned}
\end{equation}
Then we have for $u_3$,
\begin{equation}\label{eq:u3_bound}
|u_3-u_3'|  \stackrel{\mbox{\eqref{as:assumption3}}}{\le} \frac{1}{c_3} |g(u_3)-g(u_3')| \stackrel{\mbox{\eqref{eq:g3_bound}}}{\le} \frac{1}{c_3} \frac{3\delta}{2g(u_1)} \frac{1}{\sqrt{\frac{\mu_1}{2} + \frac{\mu_2^2}{4}}} \stackrel{\mbox{\eqref{as:assumption4}}}{\le} \frac{1}{c_3} \frac{3\delta}{2c_4} \frac{1}{\sqrt{\frac{\mu_1}{2} + \frac{\mu_2^2}{4}}}
\end{equation}

So we have that both $u_1$ and $u_3$ are $O(\delta)$ from \eqref{eq:u1_bound} and \eqref{eq:u3_bound} respectively. 
Thus it follows that \eqref{eq:o_delta_conjecture_mu0} is true and, consequently, \eqref{eq:mu_0_optimality} is true using proposition \ref{prop:o_delta_condition_mu_0}.
We now verify condition~\eqref{eq:o_delta_conjecture_mu0}.
Recall that
\begin{equation}
\int_Q\biggl[\alpha_1^{\mu_0^\ast}(\vec{u})-\alpha_1^{\mu_0^\ast+\delta}(\vec{u})\biggr]\biggl[6\delta + 6\mu_0^\ast-2c_2(\vec{u},\mu_1,\mu_2)\biggr]d\vec{u}=o(\delta).
\end{equation}
Define
\[
A_\delta \;:=\; \bigl\{\vec{u}\in Q:\,\alpha_1^{\mu_0^\ast}(\vec{u})\neq \alpha_1^{\mu_0^\ast+\delta}(\vec{u})\bigr\}.
\]
Since $\alpha_1^{\mu_0^\ast}$ and $\alpha_1^{\mu_0^\ast+\delta}$ are indicator functions, the integrand in \eqref{eq:o_delta_conjecture_mu0} vanishes outside $A_\delta$, and we can rewrite
\begin{equation}\label{eq:int_on_A_delta_2nditem}
\begin{aligned}
&\int_Q\biggl[\alpha_1^{\mu_0^\ast}(\vec{u})-\alpha_1^{\mu_0^\ast+\delta}(\vec{u})\biggr]\biggl[6\delta + 6\mu_0^\ast-2c_2(\vec{u},\mu_1,\mu_2)\biggr]d\vec{u}\\
=&\int_{A_\delta}\biggl[\alpha_1^{\mu_0^\ast}(\vec{u})-\alpha_1^{\mu_0^\ast+\delta}(\vec{u})\biggr]\biggl[6\delta + 6\mu_0^\ast-2c_2(\vec{u},\mu_1,\mu_2)\biggr]d\vec{u}.
\end{aligned}
\end{equation}

From the inequalities established in \eqref{eq:u1_bound} and \eqref{eq:u3_bound}, we have that for $\vec{u}\in A_\delta$ and fixed $u_2$ the coordinates $u_1$ and $u_3$ each vary over an interval of length $O(\delta)$:
there exist constants $K_1,K_3>0$ such that, for all $\vec{u},\vec{u}'\in A_\delta$ with the same $u_2$,
\begin{equation}\label{eq:u1u3_Odelta}
|u_1-u_1'|\le K_1\delta,
\qquad
|u_3-u_3'|\le K_3\delta.
\end{equation}
Since $Q\subset[0,1]^3$, the $u_2$-coordinate is always in $[0,1]$, so the Lebesgue measure of $A_\delta$ satisfies
\begin{equation}\label{eq:measure_Adelta_2nditem}
|A_\delta| \;\le\; K\,\delta^2,
\qquad\text{for some constant }K>0 \text{ independent of }\delta.
\end{equation}

Next, we bound the factor $6\delta + 6\mu_0^\ast-2c_2(\vec{u},\mu_1,\mu_2)$ uniformly in $\vec{u}\in Q$. 
By Assumption~\ref{as:assumption3}, $g$ is continuous on $[0,1]$, hence bounded. 
Since $c_2(\vec{u},\mu_1,\mu_2)$ is a finite linear combination of products of $g(u_1),g(u_2),g(u_3)$ with fixed coefficients, it is also continuous in $\vec{u}$ and therefore bounded on the compact set $Q$. 
Thus there exists $B>0$ such that
\[
|c_2(\vec{u},\mu_1,\mu_2)|\le B 
\qquad\text{for all }\vec{u}\in Q.
\]
Fix $\delta_0>0$ and consider $0<\delta\le\delta_0$. Then for all $\vec{u}\in Q$,
\begin{equation}\label{eq:integrand_bound_2nditem}
\begin{aligned}
\bigl|6\delta + 6\mu_0^\ast - 2c_2(\vec{u},\mu_1,\mu_2)\bigr|
&\le |6\delta| + 6|\mu_0^\ast| + 2|c_2(\vec{u},\mu_1,\mu_2)| \\
&\le 6\delta_0 + 6|\mu_0^\ast| + 2B \;=:\; M.
\end{aligned}
\end{equation}
In particular, $M$ is a finite constant that does not depend on $\delta$ or $\vec{u}$.

Combining \eqref{eq:int_on_A_delta_2nditem}, \eqref{eq:measure_Adelta_2nditem}, and \eqref{eq:integrand_bound_2nditem}, and using that $\bigl|\alpha_1^{\mu_0^\ast}(\vec{u})-\alpha_1^{\mu_0^\ast+\delta}(\vec{u})\bigr|\le 1$, we obtain
\begin{equation*}
\begin{aligned}
&\biggl|\int_Q\bigl[\alpha_1^{\mu_0^\ast}(\vec{u})-\alpha_1^{\mu_0^\ast+\delta}(\vec{u})\bigr]\bigl[6\delta + 6\mu_0^\ast-2c_2(\vec{u},\mu_1,\mu_2)\bigr]d\vec{u}\biggr|\\
&= \biggl|\int_{A_\delta}\bigl[\alpha_1^{\mu_0^\ast}(\vec{u})-\alpha_1^{\mu_0^\ast+\delta}(\vec{u})\bigr]\bigl[6\delta + 6\mu_0^\ast-2c_2(\vec{u},\mu_1,\mu_2)\bigr]d\vec{u}\biggr| \\
&\le \int_{A_\delta} M\,d\vec{u}
\;\le\; M\,|A_\delta|
\;\le\; MK\,\delta^2.
\end{aligned}
\end{equation*}
Hence the integral in \eqref{eq:o_delta_conjecture_mu0} is $O(\delta^2)$ and therefore $o(\delta)$ as $\delta\to 0$.
This verifies \eqref{eq:o_delta_conjecture_mu0}, and by Proposition~\ref{prop:o_delta_condition_mu_0} we conclude that \eqref{eq:mu_0_optimality} holds.

\end{enumerate}
\end{enumerate}

\paragraph{Case 2: The second indicator function of the right-hand side of \eqref{eq:b1_diff} is true:} In this case, we have the second indicator function of \eqref{eq:b1_diff} as true. This means the following inequalities are true:
\begin{equation}\label{eq:b1_diff_2nd_indicator}
\begin{aligned}
& 0 \le 2g(u_1) g(u_2) g(u_3) - 2g(u_1)\mu_1 - 2g(u_1)g(u_3)\mu_2  \\
& 6\mu_0^\ast \le 4g(u_1) g(u_2) g(u_3) - 2(g(u_1)+g(u_2)+g(u_3))\mu_1 - 2(g(u_1)g(u_3)+g(u_2)g(u_3))\mu_2 \le 6(\mu_0^\ast+\delta)\\
& 2g(u_1) g(u_2) g(u_3) - 2g(u_1)g(u_2)\mu_2 \le 6\delta
\end{aligned}
\end{equation}
From \eqref{eq:beta_mu_u}, we have $\beta_2^{\mu}(\vec{u})$ as,
\begin{equation}
\beta_2^{\mu}(\vec{u})=\mathbbm{1}\biggl\{\mu,\vec{u}:
\left\{
\begin{array}{c}
2g(u_1) g(u_2) g(u_3) - 2g(u_1) \mu_1 - 2g(u_1)g(u_3)\mu_2 \le 0 \\
4g(u_1) g(u_2) g(u_3) - 2g(u_1) \mu_1 - 2(g(u_1)g(u_3) + g(u_1)g(u_2))\mu_2 \le 0
\end{array}
\right\}
\biggr\}
\end{equation}
Due to the first inequality of \eqref{eq:b1_diff_2nd_indicator}, it can be seen that the first inequality of the above indicator function does not hold, which indicates that for this case, we have,
\begin{equation}\label{eq:beta_2_0_case2}
\beta_2^{\mu}(\vec{u})=0 \Leftrightarrow \alpha_2^{\mu}(\vec{u})=1
\end{equation}
From \eqref{eq:b1_diff_2nd_indicator} we have the second inequality as,
\begin{equation}\label{eq:b1_diff_2nd_ineq_case2}
\begin{aligned}
& 6\mu_0^\ast \le 4g(u_1) g(u_2) g(u_3) - 2(g(u_1)+g(u_2)+g(u_3))\mu_1 - 2(g(u_1)g(u_3)+g(u_2)g(u_3))\mu_2 \le 6(\mu_0^\ast+\delta) \\
\Rightarrow & \frac{6\mu_0^\ast}{4g(u_3)} \le g(u_1) g(u_2)-  \frac{\mu_1}{2g(u_3)}(g(u_1)+g(u_2))-\frac{\mu_1}{2}-\frac{1}{2}(g(u_1)+g(u_2))\mu_2 \le \frac{3(\mu_0^\ast+\delta)}{2g(u_3)}\\
\Rightarrow & \frac{6\mu_0^\ast}{4g(u_3)} \le g(u_1) g(u_2)-  (g(u_1)+g(u_2))\bigg(\frac{\mu_1}{2g(u_3)}+\frac{\mu_2}{2}\bigg)-\frac{\mu_1}{2} \le \frac{3(\mu_0^\ast+\delta)}{2g(u_3)}\\
\Rightarrow & \frac{3\mu_0^\ast}{2g(u_3)} \le \bigg(g(u_1)-\frac{\mu_1}{2g(u_3)}-\frac{\mu_2}{2}\bigg)\bigg(g(u_2)-\frac{\mu_1}{2g(u_3)}-\frac{\mu_2}{2}\bigg) -\frac{\mu_1}{2} - \bigg(\frac{\mu_1}{2g(u_3)}+\frac{\mu_2}{2}\bigg)^2   \le \frac{3(\mu_0^\ast+\delta)}{2g(u_3)}
\end{aligned}
\end{equation}
We must have,
\begin{equation}\label{eq:case2_g1>g2}
g(u_1) \ge g(u_2) \ge \frac{\mu_1}{2g(u_3)}+\frac{\mu_2}{2} \text{ or } \frac{\mu_1}{2g(u_3)}+\frac{\mu_2}{2} \ge g(u_1) \ge g(u_2) 
\end{equation}
Without loss of generality, assume the first inequality of \eqref{eq:case2_g1>g2} is true. Then we have from \eqref{eq:b1_diff_2nd_ineq_case2},
\begin{equation}\label{eq:case2_g1_sq}
\begin{aligned}
 &  \bigg(g(u_1)-\frac{\mu_1}{2g(u_3)}-\frac{\mu_2}{2}\bigg)\bigg(g(u_2)-\frac{\mu_1}{2g(u_3)}-\frac{\mu_2}{2}\bigg) \ge \frac{3\mu_0^\ast}{2g(u_3)} +\frac{\mu_1}{2} + \bigg(\frac{\mu_1}{2g(u_3)}+\frac{\mu_2}{2}\bigg)^2 \\
\Rightarrow & \bigg(g(u_1)-\frac{\mu_1}{2g(u_3)}-\frac{\mu_2}{2}\bigg)^2 \ge \frac{3\mu_0^\ast}{2g(u_3)} +\frac{\mu_1}{2} + \bigg(\frac{\mu_1}{2g(u_3)}+\frac{\mu_2}{2}\bigg)^2 \\
\Rightarrow & \bigg(g(u_1)-\frac{\mu_1}{2g(u_3)}-\frac{\mu_2}{2}\bigg) \ge \sqrt{\frac{3\mu_0^\ast}{2g(u_3)} +\frac{\mu_1}{2} + \bigg(\frac{\mu_1}{2g(u_3)}+\frac{\mu_2}{2}\bigg)^2} 
\end{aligned}
\end{equation}
Fix $u_1$,$u_3$,$\mu_1$ and $\mu_2$. 
Denote by $u_{2,\text{min}}$ and $u_{2,\text{max}}$ the minimum and maximum value of $u_2$. 
Then we have using \eqref{eq:b1_diff_2nd_ineq_case2},
{\small
\begin{equation}\label{eq:g2_bound_case2}
\begin{aligned}
& \bigg(g(u_1)-\frac{\mu_1}{2g(u_3)}-\frac{\mu_2}{2}\bigg)|g(u_{2,\text{min}})-g(u_{2,\text{max}})| <\frac{3\delta}{2g(u_3)} \\
\Rightarrow & |g(u_{2,\text{min}})-g(u_{2,\text{max}})| <\frac{3\delta}{2g(u_3)} \frac{1}{\bigg|\bigg(g(u_2)-\frac{\mu_1}{2g(u_3)}-\frac{\mu_2}{2}\bigg)\bigg|} \stackrel{\mbox{\eqref{eq:case2_g1_sq}}}{\le} \frac{3\delta}{2g(u_3)} \frac{1}{\sqrt{\frac{3\mu_0^\ast}{2g(u_3)} +\frac{\mu_1}{2} + \bigg(\frac{\mu_1}{2g(u_3)}+\frac{\mu_2}{2}\bigg)^2}}
\end{aligned}
\end{equation}}
Thus we have $u_2=O(\delta)$.

Next, we try to show that the second integrand in the second term of \eqref{eq:mu0_min2} is $O(\delta)$. This term can be rewritten as in \eqref{eq:secondterm_lagrangian5} as,
\begin{equation}\label{eq:secondterm_lagrangian5_case2}
\begin{aligned}
& 6\delta + 6\mu_0^\ast -2g(u_1) g(u_2) g(u_3) + 2\bigg(g(u_2) + g(u_3)\bigg)\mu_1 + 2g(u_{2})g(u_{3})\mu_2 \\
&+ 2\alpha_{2}^{\mu}(\vec{u})\bigg(-g(u_1)g(u_2)g(u_3) + g(u_1)\mu_1+g(u_1)g(u_3)\mu_2\bigg) \\
& + 2\alpha_{2}^{\mu}(\vec{u})\alpha_{3}^{\mu}(\vec{u})\bigg(-g(u_1)g(u_2)g(u_3)+g(u_1)g(u_2)\mu_2\bigg)\\
\stackrel{\mbox{\eqref{eq:beta_2_0_case2}}}{=}& 6\delta + 6\mu_0^\ast -2g(u_1) g(u_2) g(u_3) + 2\bigg(g(u_2) + g(u_3)\bigg)\mu_1 + 2g(u_{2})g(u_{3})\mu_2 \\
& -2g(u_1)g(u_2)g(u_3) + 2g(u_1)\mu_1 + 2g(u_1)g(u_3)\mu_2 + 2\alpha_{3}^{\mu}(\vec{u})\bigg(-g(u_1)g(u_2)g(u_3)+g(u_1)g(u_2)\mu_2\bigg)\\
=& 6\delta + 6\mu_0^\ast -4g(u_1) g(u_2) g(u_3) + 2\bigg(g(u_1)+g(u_2) + g(u_3)\bigg)\mu_1 + 2\bigg(g(u_1)g(u_3)+g(u_{2})g(u_{3})\bigg)\mu_2 \\
& + 2g(u_1)g(u_2)\alpha_{3}^{\mu}(\vec{u})\bigg(\mu_2-g(u_3)\bigg)\\
=& F_4(\vec{u},\mu_0^\ast,\mu_1,\mu_2), \text{ say.}
\end{aligned}
\end{equation}
We denote,
{\small
\begin{equation}\label{eq:f5}
F_5(\vec{u},\mu_0^\ast,\mu_1,\mu_2)= 6\delta + 6\mu_0^\ast -4g(u_1) g(u_2) g(u_3) + 2\bigg(g(u_1)+g(u_2) + g(u_3)\bigg)\mu_1 + 2\bigg(g(u_1)g(u_3)+g(u_{2})g(u_{3})\bigg)\mu_2
\end{equation}}
Note using upper inequality of the second equation of \eqref{eq:b1_diff_2nd_indicator}, we have 
\begin{equation}\label{eq:f5>0}
F_5(\vec{u},\mu_0^\ast,\mu_1,\mu_2)\ge 0. 
\end{equation}

Now consider various values of $\alpha_{3}^{\mu}(\vec{u})$.
\begin{enumerate}
\item If $\alpha_{3}^{\mu}(\vec{u})=0$, then we have
\begin{equation}\label{eq:f4=f5}
F_4(\vec{u},\mu_0^\ast,\mu_1,\mu_2)=F_5(\vec{u},\mu_0^\ast,\mu_1,\mu_2).
\end{equation}
Then using the lower inequality of the second equation of \eqref{eq:b1_diff_2nd_indicator},
\begin{equation}\label{eq:f4_bound1}
\begin{aligned}
& \begin{multlined}[t] 
    0 \stackrel{\mbox{\eqref{eq:f5>0}}}{\le} 6\delta + 6\mu_0^\ast - 4g(u_1) g(u_2) g(u_3) \\
    \qquad + 2(g(u_1)+g(u_2)+g(u_3))\mu_1 + 2(g(u_1)g(u_3)+g(u_2)g(u_3))\mu_2 \le 6\delta \end{multlined} \\
\Rightarrow & 0 \le F_5(\vec{u},\mu_0^\ast,\mu_1,\mu_2) \le 6\delta\\
\stackrel{\mbox{\eqref{eq:f4=f5}}}{\Rightarrow} & 0 \le F_4(\vec{u},\mu_0^\ast,\mu_1,\mu_2) \le 6\delta\\
\Rightarrow & F_4(\vec{u},\mu_0^\ast,\mu_1,\mu_2) = O(\delta).
\end{aligned}
\end{equation}

\item If $\alpha_{3}^{\mu}(\vec{u})=1$, using \eqref{eq:beta_mu_u} we have 
\begin{equation}\label{eq:mu2<g3}
\mu_2 < g(u_3)
\end{equation}
Using the third equation of \eqref{eq:b1_diff_2nd_indicator} we have,
\begin{equation}\label{eq:f4_2ndterm_bound}
0 \stackrel{\mbox{\eqref{eq:mu2<g3}}}{<} 2g(u_1)g(u_2) (g(u_3)-\mu_2) \le 6\delta
\end{equation}
Fix $u_1$,$u_2$ and $\mu_2$. 
Denote by $u_{3,\text{min}}$ and $u_{3,\text{max}}$ the minimum and maximum value of $u_3$. 
Then we have using \eqref{eq:f4_2ndterm_bound},
\begin{equation}\label{eq:g3_bound_case2}
\begin{aligned}
& 2g(u_1)g(u_2) |g(u_{3,\text{max}})-g(u_{3,\text{min}})| <6\delta \\
\Rightarrow & |g(u_{3,\text{max}})-g(u_{3,\text{min}})| <\frac{3\delta}{2g(u_1)g(u_2)} \stackrel{\mbox{\eqref{as:assumption4}}}{\le} \frac{3\delta}{c_4^2}
\end{aligned}
\end{equation}
Thus we have $u_3=O(\delta)$.
\end{enumerate}


Thus for Case 2 we have $u_2 = O(\delta)$ from \eqref{eq:g2_bound_case2} and, depending on whether $\alpha_{3}^{\mu}(\vec{u})=1$, either $F_4(\vec{u},\mu_0^\ast,\mu_1,\mu_2)=O(\delta)$ or $u_3 = O(\delta)$.
The verification of condition \eqref{eq:o_delta_conjecture_mu0} then follows by exactly the same argument as in Case~1, with the roles of the coordinates adjusted accordingly.
Hence \eqref{eq:o_delta_conjecture_mu0} holds in Case~2 as well, and Proposition~\ref{prop:o_delta_condition_mu_0} again yields \eqref{eq:mu_0_optimality}.

\paragraph{Case 3: The third indicator function of the right-hand side of \eqref{eq:b1_diff} is true:} In this case, we have the third indicator function of \eqref{eq:b1_diff} as true. This means the following inequalities are true:
{\footnotesize
\begin{equation}\label{eq:b1_diff_3rd_indicator}
\begin{aligned}
& 0 \le 4g(u_1) g(u_2) g(u_3) - 2g(u_1)\mu_1 - 2(g(u_1)g(u_2)+g(u_1)g(u_3))\mu_2 \\
& 0 \le 2g(u_1) g(u_2) g(u_3) - 2g(u_1)g(u_2)\mu_2 \\
& 6\mu_0^\ast \le 6g(u_1) g(u_2) g(u_3) - 2(g(u_1)+g(u_2)+g(u_3))\mu_1 - 2(g(u_1)g(u_2)+g(u_1)g(u_3)+g(u_2)g(u_3))\mu_2 \le 6(\mu_0^\ast+\delta)
\end{aligned}
\end{equation}}
From the first inequality of \eqref{eq:b1_diff_3rd_indicator}, based on $\beta_2^{\mu}(\vec{u})$ as in \eqref{eq:beta_mu_u}, we have
\begin{equation}\label{eq:beta_2_0_case3}
\beta_2^{\mu}(\vec{u})=0 \Leftrightarrow \alpha_2^{\mu}(\vec{u})=1
\end{equation}
Also, from the second inequality of \eqref{eq:b1_diff_3rd_indicator}, based on $\beta_3^{\mu}(\vec{u})$ as in \eqref{eq:beta_mu_u}, we have
\begin{equation}\label{eq:beta_3_0_case3}
\beta_3^{\mu}(\vec{u})=0 \Leftrightarrow \alpha_3^{\mu}(\vec{u})=1
\end{equation}
Using Lemma \ref{lem:gdot} and the second inequality of \eqref{eq:b1_diff_3rd_indicator}, we establish the following notation and inequalities:
\begin{equation}\label{eq:gi_prime_case3}
\begin{aligned}
& g(u_1)\ge g(u_2)\ge g(u_3) \ge \mu_2 \ge \frac{\mu_2}{3} \\
& g'(u_1) = g(u_1) - \frac{\mu_2}{3} \ge 0 \\
& g'(u_2) = g(u_2) - \frac{\mu_2}{3} \ge 0 \\
& g'(u_3) = g(u_3) - \frac{\mu_2}{3} \ge 0 \\
& g'(u_1)\ge g'(u_2)\ge g'(u_3) 
\end{aligned}
\end{equation}
Recall the third sequence of inequalities of \eqref{eq:b1_diff_3rd_indicator},
{\footnotesize
\begin{equation}\label{eq:case3_3rd_equation}
\begin{aligned}
& 6\mu_0^\ast \le 6g(u_1) g(u_2) g(u_3) - 2(g(u_1)+g(u_2)+g(u_3))\mu_1 - 2(g(u_1)g(u_2)+g(u_1)g(u_3)+g(u_2)g(u_3))\mu_2 \le 6(\mu_0^\ast+\delta) \\
\Leftrightarrow & \mu_0^\ast \le g(u_1) g(u_2) g(u_3) - \frac{(g(u_1)+g(u_2)+g(u_3))\mu_1}{3} - \frac{(g(u_1)g(u_2)+g(u_1)g(u_3)+g(u_2)g(u_3))\mu_2}{3} \le \mu_0^\ast+\delta
\end{aligned}
\end{equation}}
Analyzing the term in the middle of the above equation, we have,
\begin{equation}\label{eq:case3_3rd_equation_middleterm}
\begin{aligned}
& g(u_1) g(u_2) g(u_3) - \frac{(g(u_1)+g(u_2)+g(u_3))\mu_1}{3} - \frac{(g(u_1)g(u_2)+g(u_1)g(u_3)+g(u_2)g(u_3))\mu_2}{3}\\
=& \bigg(g(u_1) - \frac{\mu_2}{3}\bigg)\bigg(g(u_2) - \frac{\mu_2}{3}\bigg)\bigg(g(u_3) - \frac{\mu_2}{3}\bigg) - \bigg(\frac{\mu_2}{3}\bigg)^2 (g(u_1)+g(u_2)+g(u_3)) \\
& - \frac{\mu_1}{3} (g(u_1)+g(u_2)+g(u_3)) + \bigg(\frac{\mu_2}{3}\bigg)^3 \\
\stackrel{\mbox{\eqref{eq:gi_prime_case3}}}{=} & g'(u_1)g'(u_2)g'(u_3) - \bigg(\bigg(\frac{\mu_2}{3}\bigg)^2 + \frac{\mu_1}{3} \bigg) ( g'(u_1)+ g'(u_2)+ g'(u_3)+\mu_2) + \bigg(\frac{\mu_2}{3}\bigg)^3 \\
= & g'(u_1)g'(u_2)g'(u_3) - \bigg(\frac{\mu_2^2}{9} + \frac{\mu_1}{3} \bigg) ( g'(u_1)+ g'(u_2)+ g'(u_3)) - \frac{\mu_2^3}{9} -\frac{\mu_1\mu_2}{3} + \frac{\mu_2^3}{27} \\
= & g'(u_1)g'(u_2)g'(u_3) - \bigg(\frac{\mu_2^2}{9} + \frac{\mu_1}{3} \bigg) ( g'(u_1)+ g'(u_2)+ g'(u_3)) - \frac{\mu_1\mu_2}{3} - \frac{2\mu_2^3}{27} \\
= & g'(u_3)\bigg(g'(u_1)g'(u_2) - \frac{\mu_2^2}{9} - \frac{\mu_1}{3} \bigg) - c_3(\mu_1,\mu_2,u_1,u_2) \\
\end{aligned}
\end{equation}
where $c_3(\mu_1,\mu_2,u_1,u_2)$ is a constant term w.r.t. $\mu_1,\mu_2,u_1$ and $u_2$. 

From the first inequality of \eqref{eq:b1_diff_3rd_indicator} we have,
\begin{equation}\label{eq:b1_diff_3rd_ind_1st_eq}
\begin{aligned}
& 0 \le 4g(u_1) g(u_2) g(u_3) - 2g(u_1)\mu_1 - 2(g(u_1)g(u_2)+g(u_1)g(u_3))\mu_2 \\
\Leftrightarrow & 0 \le 2g(u_2) g(u_3) - \mu_1 - (g(u_2)+g(u_3))\mu_2 \\ 
\Leftrightarrow & \mu_1 + (g(u_2)+g(u_3))\mu_2 \le 2g(u_2) g(u_3) \\ 
\stackrel{\mbox{\eqref{eq:gi_prime_case3}}}{\Leftrightarrow} & \mu_1 + \bigg(g'(u_2)+g'(u_3)+\frac{2\mu_2}{3}\bigg)\mu_2 \le 2 \bigg(g'(u_2)+\frac{\mu_2}{3}\bigg)\bigg(g'(u_3)+\frac{\mu_2}{3}\bigg) \\ 
\Leftrightarrow & \mu_1 + \bigg(g'(u_2)+g'(u_3)\bigg)\mu_2 +\frac{2\mu_2^2}{3} \le 2g'(u_2)g'(u_3) + \frac{2\mu_2}{3}\bigg(g'(u_2)+g'(u_3)\bigg)+\frac{2\mu_2^2}{9} \\ 
\Leftrightarrow & \mu_1 +\frac{4\mu_2^2}{9} + \frac{\mu_2}{3}\bigg(g'(u_2)+g'(u_3)\bigg)  \le 2g'(u_2)g'(u_3) \\ 
\Leftrightarrow & \mu_1 +\frac{4\mu_2^2}{9} \le 2g'(u_2)g'(u_3) - \frac{\mu_2}{3}\bigg(g'(u_2)+g'(u_3)\bigg) \\ 
\Leftrightarrow & \frac{\mu_1}{2} +\frac{2\mu_2^2}{9}  \le  \bigg(g'(u_2)-\frac{\mu_2}{6}\bigg)\bigg(g'(u_3)-\frac{\mu_2}{6}\bigg)-\frac{\mu_2^2}{36}\\ 
\Leftrightarrow & \frac{\mu_1}{2} +\frac{\mu_2^2}{4}  \le  \bigg(g'(u_2)-\frac{\mu_2}{6}\bigg)\bigg(g'(u_3)-\frac{\mu_2}{6}\bigg)\\ 
\end{aligned}
\end{equation}
Recall using \eqref{eq:gi_prime_case3}, we have,
\begin{equation}
\begin{aligned}
& g(u_2) \ge \mu_2\\
\Rightarrow & g'(u_2) \stackrel{\mbox{\eqref{eq:gi_prime_case3}}}{=}  g(u_2)-\frac{\mu_2}{3}  > \frac{\mu_2}{6}  \\
\Rightarrow & g'(u_2)- \frac{\mu_2}{6} >0
\end{aligned}
\end{equation}
Similarly, from \eqref{eq:gi_prime_case3}, we have $g'(u_3)- \frac{\mu_2}{6} >0$.
Recall we have $g(u_1) \ge g(u_2)\ge g(u_3) \Rightarrow g'(u_1) \ge g'(u_2)\ge g'(u_3)$.
Thus it follows from \eqref{eq:b1_diff_3rd_ind_1st_eq},
\begin{equation}\label{eq:case3_g2_bound1}
\begin{aligned}
& \frac{\mu_1}{2} +\frac{\mu_2^2}{4}  \le  \bigg(g'(u_2)-\frac{\mu_2}{6}\bigg)^2 \\
\Leftrightarrow & \sqrt{\frac{\mu_1}{2} +\frac{\mu_2^2}{4}}  \le  \bigg(g'(u_2)-\frac{\mu_2}{6}\bigg)
\end{aligned}
\end{equation}
Since $g'(u_1) \ge g'(u_2)$, we also have that
\begin{equation}\label{eq:case3_g1_bound1}
\sqrt{\frac{\mu_1}{2} +\frac{\mu_2^2}{4}}  \le  \bigg(g'(u_1)-\frac{\mu_2}{6}\bigg)
\end{equation}
From \eqref{eq:case3_g2_bound1} and \eqref{eq:case3_g1_bound1} we have,
\begin{equation}\label{eq:case3_g1g2_bound1}
\begin{aligned}
 &\bigg(g'(u_1)-\frac{\mu_2}{6}\bigg) \ge \bigg(g'(u_2)-\frac{\mu_2}{6}\bigg) \ge \sqrt{\frac{\mu_1}{2} +\frac{\mu_2^2}{4}} \\
\Rightarrow & g'(u_1)g'(u_2) - \frac{\mu_2^2}{9} - \frac{\mu_1}{3} \\
\ge & \bigg(\frac{\mu_2}{6} + \sqrt{\frac{\mu_1}{2} +\frac{\mu_2^2}{4}} \bigg)^2 - \frac{\mu_2^2}{9} - \frac{\mu_1}{3} \\
= & \frac{\mu_2^2}{36} + \frac{\mu_1}{2} +\frac{\mu_2^2}{4} + \frac{\mu_2}{3}\sqrt{\frac{\mu_1}{2} +\frac{\mu_2^2}{4}} - \frac{\mu_2^2}{9} - \frac{\mu_1}{3} \\
= &  \frac{\mu_2^2}{6} + \frac{\mu_1}{6} + \frac{\mu_2}{3}\sqrt{\frac{\mu_1}{2} +\frac{\mu_2^2}{4}} \\
\ge & 0.
\end{aligned}
\end{equation}
Fix $u_1$,$u_2$,$\mu_1$ and $\mu_2$. 
Denote by $u_{3,\text{min}}$ and $u_{3,\text{max}}$ the minimum and maximum value of $u_3$. 
Then we have using  \eqref{eq:case3_3rd_equation} and \eqref{eq:case3_3rd_equation_middleterm},
\begin{equation}\label{eq:g3_bound_case3}
\begin{aligned}
& \bigg(g'(u_1)g'(u_2) - \frac{\mu_2^2}{9} - \frac{\mu_1}{3}\bigg)|g'(u_{3,\text{min}})-g'(u_{3,\text{max}})| <\delta \\
\stackrel{\mbox{\eqref{eq:case3_g1g2_bound1}}}{\Rightarrow} & |g'(u_{3,\text{min}})-g'(u_{3,\text{max}})|  \le \frac{\delta}{ \frac{\mu_2^2}{6} + \frac{\mu_1}{6} + \frac{\mu_2}{3}\sqrt{\frac{\mu_1}{2} +\frac{\mu_2^2}{4}}} = O(\delta)
\end{aligned}
\end{equation}
Thus we have $u_3=O(\delta)$.
Next, we show that the second integrand in the second term of \eqref{eq:mu0_min2} is $O(\delta)$. 
This term can be rewritten as in \eqref{eq:secondterm_lagrangian5} as,
\begin{equation}\label{eq:secondterm_lagrangian5_case3}
\begin{aligned}
& 6\delta + 6\mu_0^\ast -2g(u_1) g(u_2) g(u_3) + 2\bigg(g(u_2) + g(u_3)\bigg)\mu_1 + 2g(u_{2})g(u_{3})\mu_2 \\
&+ 2\alpha_{2}^{\mu}(\vec{u})\bigg(-g(u_1)g(u_2)g(u_3) + g(u_1)\mu_1+g(u_1)g(u_3)\mu_2\bigg) \\
& + 2\alpha_{2}^{\mu}(\vec{u})\alpha_{3}^{\mu}(\vec{u})\bigg(-g(u_1)g(u_2)g(u_3)+g(u_1)g(u_2)\mu_2\bigg)\\
\stackrel{\mbox{\eqref{eq:beta_2_0_case3},\eqref{eq:beta_3_0_case3}}}{=}& 6\delta + 6\mu_0^\ast -2g(u_1) g(u_2) g(u_3) + 2\bigg(g(u_2) + g(u_3)\bigg)\mu_1 + 2g(u_{2})g(u_{3})\mu_2 \\
& -2g(u_1)g(u_2)g(u_3) + 2g(u_1)\mu_1 + 2g(u_1)g(u_3)\mu_2  -2g(u_1)g(u_2)g(u_3)+2g(u_1)g(u_2)\mu_2\\
=& 6\delta + 6\mu_0^\ast -6g(u_1) g(u_2) g(u_3) + 2\bigg(g(u_1)+g(u_2) + g(u_3)\bigg)\mu_1 \\
&+ 2\bigg(g(u_1)g(u_2) + g(u_1)g(u_3)+g(u_{2})g(u_{3})\bigg)\mu_2 \\
\stackrel{\mbox{\eqref{eq:b1_diff_3rd_indicator}}}{=} & O(\delta) 
\end{aligned}
\end{equation}
Combining \eqref{eq:secondterm_lagrangian5_case3} and \eqref{eq:g3_bound_case3}, we have \eqref{eq:o_delta_conjecture_mu0}.
The proof then follows from proposition \ref{prop:o_delta_condition_mu_0}.
\end{proof}

\subsubsection{\texorpdfstring
  {Sufficient condition for characterizing the minimizer $\mu_0^\ast$}
  {Sufficient condition for characterizing the minimizer mu0*}}
The following proposition states a sufficient condition under which a simple integral identity fully characterizes the minimizer $\mu_0^\ast$ as derived in the above theorem and serves as a key stepping stone in that proof. 
This allows the proof of Theorem \ref{thm:mu0_optimality} to trace a single necessary equality in \eqref{eq:mu_0_optimality}.
\begin{prop}
\label{prop:o_delta_condition_mu_0}
Let $\mu_0^\ast \in \mathbb{R}_+$ and fix $\mu_1, \mu_2$. Suppose
\begin{equation}\label{eq:o_delta_condition_integral}
\int_Q\biggl[\alpha_1^{\mu_0^\ast}(\vec{u})-\alpha_1^{\mu_0^\ast+\delta}(\vec{u})\biggr]\biggl[6\delta+6\mu_0^\ast-2c_2(\vec{u},\mu_1,\mu_2)\biggr]d\vec{u}=o(\delta)  \\
\end{equation}
where $c_2(\vec{u}, \mu_1, \mu_2)$ is as defined in \eqref{eq:c1_c2}. Then \eqref{eq:mu_0_optimality} holds.
\end{prop}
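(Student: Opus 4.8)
The plan is to read off the first-order stationarity condition \eqref{eq:mu_0_optimality} directly from the minimizer inequality, using the assumed $o(\delta)$ bound to kill every term except the one linear in $\delta$. First I would invoke the exact decomposition of the Lagrangian increment already established in \eqref{eq:mu0_min2}, which writes
\[
L(\mu_0^\ast+\delta)-L(\mu_0^\ast)
=\delta\Bigl[\alpha-6\!\int_Q\alpha_1^{\mu_0^\ast}(\vec{u})\,d\vec{u}\Bigr]+I(\delta),
\]
where $I(\delta)$ denotes the integral appearing on the left-hand side of \eqref{eq:o_delta_condition_integral}. Setting $\Phi:=\alpha-6\int_Q\alpha_1^{\mu_0^\ast}(\vec{u})\,d\vec{u}$ for the quantity to be shown zero, the hypothesis \eqref{eq:o_delta_condition_integral} says precisely that $I(\delta)=o(\delta)$, so the increment equals $\delta\Phi+o(\delta)$.

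Next I would exploit that $\mu_0^\ast$ minimizes $L(\vec{D}^\mu,\mu)$ in its first coordinate, so that $L(\mu_0^\ast+\delta)-L(\mu_0^\ast)\ge 0$ for all admissible $\delta$. Forming the difference quotient and letting $\delta\to 0$ from each side, I would argue that for an interior minimizer ($\mu_0^\ast>0$) both signs of $\delta$ are permitted: dividing the inequality $\delta\Phi+I(\delta)\ge 0$ by $\delta>0$ and sending $\delta\to 0^+$ yields $\Phi\ge 0$, while dividing by $\delta<0$ (which reverses the inequality) and sending $\delta\to 0^-$ yields $\Phi\le 0$. Because $I(\delta)/\delta\to 0$ by hypothesis, these two one-sided limits trap $\Phi$ at $0$, which is exactly the claimed identity \eqref{eq:mu_0_optimality}.

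The only genuine subtlety will be the boundary case $\mu_0^\ast=0$, where the nonnegativity constraint rules out negative perturbations and the two-sided argument degenerates to the one-sided bound $\Phi\ge 0$. I would address this by noting that \eqref{eq:mu_0_optimality} is the stationarity condition appropriate to an interior optimum, and that at the boundary the complementary-slackness structure of Proposition~\ref{prop:optmu} renders the corresponding FWER constraint inactive, so the equality is replaced by its KKT counterpart rather than violated. Apart from this bookkeeping, the argument is a pure sign-and-limit extraction requiring no new estimates; the analytically demanding part, namely verifying that $I(\delta)=o(\delta)$, has already been discharged in the case analysis preceding this proposition in the proof of Theorem~\ref{thm:mu0_optimality}.
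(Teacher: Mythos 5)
Your proposal is correct and follows essentially the same route as the paper: both isolate the increment as $\delta\tau + o(\delta)$ with $\tau = \alpha - 6\int_Q\alpha_1^{\mu_0^\ast}(\vec{u})\,d\vec{u}$ and then use perturbations of both signs to force $\tau = 0$ (the paper phrases this as a contradiction argument, you phrase it as one-sided difference-quotient limits, which is the same calculation). Your explicit treatment of the boundary case $\mu_0^\ast = 0$ is a small refinement the paper silently skips, but it does not change the substance of the argument.
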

\begin{proof}
Suppose \eqref{eq:o_delta_condition_integral} holds.
We then demonstrate that then $\mu_0^\ast$ is the minimizer of $L(\vec{D}^\mu, \mu)$ of the dual problem \eqref{eq:dual2} w.r.t. $\mu_0$ only if $\alpha=6\int_Q\alpha_1^{\mu_0^\ast}(\vec{u})d\vec{u} $.

From \eqref{eq:mu0_min} we have that if $\mu_0^\ast$ is the minimizer,
\begin{equation}\label{eq:mu_0_must_hold}
\begin{aligned}
0&\le\delta\biggl[\alpha-6\int_Q\alpha_1^{\mu_0^\ast}(\vec{u})d\vec{u} \biggr]+\int_Q\biggl[\alpha_1^{\mu_0^\ast}(\vec{u})-\alpha_1^{\mu_0^\ast+\delta}(\vec{u})\biggr]\biggl[6\delta+6\mu_0^\ast-2c_2(\vec{u},\mu_1,\mu_2)\biggr]d\vec{u}\\
&=\delta \tau + o(\delta),\ \forall \delta 
\end{aligned}
\end{equation}
where, 
\begin{equation*}
\tau=\alpha-6\int_Q\alpha_1^{\mu_0^\ast}(\vec{u})d\vec{u}
\end{equation*} 
and using \eqref{eq:o_delta_condition_integral}, 
\begin{equation*}
\int_Q\biggl[\alpha_1^{\mu_0^\ast}(\vec{u})-\alpha_1^{\mu_0^\ast+\delta}(\vec{u})\biggr]\biggl[6\delta+6\mu_0^\ast-2c_2(\vec{u},\mu_1,\mu_2)\biggr]d\vec{u}=o(\delta)
\end{equation*} 

We use a contradiction argument to prove that the condition $\alpha=6\int_Q\alpha_1^{\mu_0^\ast+\delta}(\vec{u})d\vec{u}$ must hold.
Suppose \begin{equation}\label{eq:tau>0}
\tau =\alpha-6\int_Q\alpha_1^{\mu_0^\ast}(\vec{u})d\vec{u} > 0.
\end{equation}
Then choosing $\delta$ small enough for some $\delta<0$, we have using \eqref{eq:o_delta_condition_integral} and \eqref{eq:tau>0},
\begin{equation*}
 \delta\tau + o(\delta)<0
\end{equation*}
which contradicts \eqref{eq:mu_0_must_hold} which states that $\forall \delta$, $$ \delta\tau + o(\delta)\ge 0. $$
Therefore, \eqref{eq:tau>0} cannot be true.
A similar argument holds if $\tau<0$, and it cannot be true either.
So we must have,
\begin{equation*}
\tau=\alpha-6\int_Q\alpha_1^{\mu_0^\ast}(\vec{u})d\vec{u} = 0.
\end{equation*}
Thus, we have proved that $\mu_0^\ast$ is the minimizer of $L(\vec{D}^\mu, \mu)$ of the dual problem \eqref{eq:dual2} w.r.t. $\mu_0$ only if $\alpha=6\int_Q\alpha_1^{\mu_0^\ast}(\vec{u})d\vec{u}$ i.e., \eqref{eq:mu_0_optimality} is true.
\end{proof}

\subsection{\texorpdfstring{Minimizing the Lagrangian w.r.t. $\mu_1$}{Minimizing the Lagrangian w.r.t. mu1}}
\label{app:mu1_optimality}
Next, we have the Lagrangian as in \eqref{eq:lagrangian5_general}, and our objective is to minimize it w.r.t. $\mu_1$.
Note that from \eqref{eq:beta_mu_u} we have that both $\beta_1^{\mu}(\vec{u})$ and $\beta_2^{\mu}(\vec{u})$ depend on $\mu_1$, whereas $\beta_3^{\mu}(\vec{u})$ does not. 
Consequently, we find that both $\alpha_1^{\mu}(\vec{u})$ and $\alpha_2^{\mu}(\vec{u})$ depend on $\mu_1$, whereas $\alpha_3^{\mu}(\vec{u})$ does not depend on $\mu_1$. 
Moreover, in this case, we only need to consider the case  
\begin{equation}\label{eq:alpha_1_equals_1_mu1}
\alpha_1^{\mu}(\vec{u})=1,
\end{equation}
because if $\alpha_1^{\mu}(\vec{u})=0$, then
\begin{equation*}
L(\vec{D}^\mu, \mu) \stackrel{\mbox{\eqref{eq:lagrangian5_general}}}{=} \alpha(\mu_0+\mu_1+\mu_2)
\end{equation*}
therefore the minimimizer of $L(\vec{D}^\mu, \mu)$ w.r.t. $\mu_1$ can't be optimal until it is zero.
Thus, we assume $\alpha_1^{\mu}(\vec{u})=1$.
Plugging this in the Lagrangian \eqref{eq:lagrangian5_general} we have,
\begin{equation}\label{eq:lagrangian5_mu1}
\begin{aligned}
L(\vec{D}^\mu, \mu) \stackrel{\mbox{\eqref{eq:lagrangian5_general},\eqref{eq:alpha_1_equals_1_mu1}}}{=}&\alpha(\mu_0+\mu_1+\mu_2)\\
&+2\int_Q g(u_1) g(u_2) g(u_3) \biggl(1 + \alpha_2^{\mu}(\vec{u})+\alpha_2^{\mu}(\vec{u})\alpha_3^{\mu}(\vec{u}) \biggr)d\vec{u}\\
&-6\int_Q \mu_0d\vec{u} \\
&-2\int_Q \Bigl(g(u_1)\alpha_2^{\mu}(\vec{u})+g(u_2) + g(u_3)\Bigr)\mu_1d\vec{u}\\
& -2\int_Q \Bigl(g(u_{2})g(u_{3})+\alpha_{2}^{\mu}(\vec{u})g(u_{1})g(u_{3})+\alpha_{2}^{\mu}(\vec{u})\alpha_{3}^{\mu}(\vec{u})g(u_{1})g(u_{2})\Bigr)\mu_2 d\vec{u}.
\end{aligned}
\end{equation}
Note that on the right-hand side of the above equation, only $\alpha_{2}^{\mu}(\vec{u})$ depends on $\mu_1$.
For the remainder of this section, we will work with this expression of $L(\vec{D}^\mu, \mu)$.

Since we minimize $L(\vec{D}^\mu, \mu)$ w.r.t. $\mu_1$ we denote it by $L(\mu_1)$. 
Also, since only $\alpha_{2}^{\mu}(\vec{u})$ depends on $\mu_1$ we denote $\alpha_2^{\mu}(\vec{u})$ only by $\alpha_2^{\mu_1}(\vec{u})$. 
Similarly, we denote $\beta_2^{\mu}(\vec{u})$ by $\beta_2^{\mu_1}(\vec{u})$.
Because $\alpha_{3}^{\mu}(\vec{u})$ depends only on $\mu_2$ and we are working to minimize $L(\vec{D}^\mu, \mu)$ w.r.t. $\mu_1$ keeping $\mu_0$ and $\mu_2$ fixed, we denote $\alpha_{3}^{\mu}(\vec{u})$ and $\beta_{3}^{\mu}(\vec{u})$ by $\alpha_{3}(\vec{u})$ and $\beta_{3}(\vec{u})$ respectively.

Thus, we have that if $\mu_1^\ast$ is the minimizer, then for any $\delta\neq0$ we have $$L(\mu_1^\ast) \le L(\mu_1^\ast+\delta).$$
That is,
{\footnotesize
\begin{equation}\label{eq:mu1_min}
\begin{aligned}
0 &\le L(\mu_1^\ast+\delta)- L(\mu_1^\ast)\\
&\stackrel{\mbox{\eqref{eq:lagrangian5_mu1}}}{=} \alpha\delta \\
&+2\int_Q g(u_1) g(u_2) g(u_3) \biggl(1 + \alpha_3(\vec{u}) \biggr) \bigg[\alpha_2^{\mu_1^\ast+\delta}(\vec{u})-\alpha_2^{\mu_1^\ast}(\vec{u}) \bigg]d\vec{u}\\
&-2\int_Q \Bigl(g(u_1)\alpha_2^{\mu_1^\ast+\delta}(\vec{u})+g(u_2) + g(u_3)\Bigr)(\mu_1^\ast+\delta)d\vec{u}\\
&+2\int_Q \Bigl(g(u_1)\alpha_2^{\mu_1^\ast}(\vec{u})+g(u_2) + g(u_3)\Bigr)\mu_1^\ast d\vec{u}\\
& -2\int_Q \bigg[g(u_{2})g(u_{3})+\bigg(g(u_{1})g(u_{3})+\alpha_{3}(\vec{u})g(u_{1})g(u_{2})\bigg) \alpha_{2}^{\mu_1^\ast+\delta}(\vec{u})\bigg]\mu_2 d\vec{u}\\
& +2\int_Q \bigg[g(u_{2})g(u_{3})+\bigg(g(u_{1})g(u_{3})+\alpha_{3}(\vec{u})g(u_{1})g(u_{2})\bigg) \alpha_{2}^{\mu_1^\ast}(\vec{u})\bigg]\mu_2 d\vec{u} \\
=& \alpha\delta \\
&+2\int_Q g(u_1) g(u_2) g(u_3) \biggl(1 + \alpha_3(\vec{u}) \biggr) \bigg[\alpha_2^{\mu_1^\ast+\delta}(\vec{u})-\alpha_2^{\mu_1^\ast}(\vec{u}) \bigg]d\vec{u}\\
&-2 \int_Q g(u_1)\alpha_2^{\mu_1^\ast+\delta}\mu_1^\ast d\vec{u} -2 \int_Q (g(u_2) + g(u_3)) \mu_1^\ast d\vec{u}  -2 \int_Q g(u_1)\alpha_2^{\mu_1^\ast+\delta}\delta d\vec{u} -2 \int_Q (g(u_2) + g(u_3)) \delta d\vec{u} \\
&+ 2 \int_Q g(u_1)\alpha_2^{\mu_1^\ast}\mu_1^\ast d\vec{u} + 2 \int_Q (g(u_2) + g(u_3)) \mu_1^\ast d\vec{u} \\
&+2\int_Q \bigg(g(u_{1})g(u_{3})+\alpha_{3}(\vec{u})g(u_{1})g(u_{2})\bigg) \bigg[\alpha_2^{\mu_1^\ast}(\vec{u})-\alpha_2^{\mu_1^\ast+\delta}(\vec{u}) \bigg]\mu_2 d\vec{u}\\
=&  \alpha\delta \\
& -2\delta \int_Q \bigg(g(u_1)\alpha_2^{\mu_1^\ast+\delta}+g(u_2)+ g(u_3) \bigg) d\vec{u} \\
& +2\int_Q g(u_1) g(u_2) g(u_3) \biggl(1 + \alpha_3(\vec{u}) \biggr) \bigg[\alpha_2^{\mu_1^\ast+\delta}(\vec{u})-\alpha_2^{\mu_1^\ast}(\vec{u}) \bigg]d\vec{u}\\
& -2\int_Q g(u_1)\mu_1^\ast \bigg[\alpha_2^{\mu_1^\ast+\delta}(\vec{u})-\alpha_2^{\mu_1^\ast}(\vec{u}) \bigg]d\vec{u} \\
& -2\int_Q \bigg(g(u_{1})g(u_{3})+\alpha_{3}(\vec{u})g(u_{1})g(u_{2})\bigg) \bigg[\alpha_2^{\mu_1^\ast}(\vec{u})-\alpha_2^{\mu_1^\ast+\delta}(\vec{u}) \bigg]\mu_2d\vec{u} \\
=& \delta \bigg[\alpha - 2\int_Q \bigg(g(u_1)\alpha_2^{\mu_1^\ast+\delta}(\vec{u})+g(u_2)+ g(u_3) \bigg)d\vec{u} \bigg] \\
&+2\int_Q \bigg[g(u_1) g(u_2) g(u_3) \biggl(1 + \alpha_3(\vec{u}) \biggr)- g(u_1)\mu_1^\ast -\bigg(g(u_{1})g(u_{3})+\alpha_{3}(\vec{u})g(u_{1})g(u_{2})\bigg)\mu_2 \bigg] \bigg[\alpha_2^{\mu_1^\ast+\delta}(\vec{u})-\alpha_2^{\mu_1^\ast}(\vec{u}) \bigg]d\vec{u}
\end{aligned}
\end{equation}}

\begin{theorem}
\label{thm:mu1_optimality}
Suppose Assumptions~\ref{as:assumption3},~\ref{as:assumption4}, and~\ref{as:assumption5} hold.  
Consider the dual formulation \eqref{eq:dual2} of the multiple hypotheses testing problem \eqref{eq:objconst}, and the associated Lagrangian $L(\vec{D}^\mu, \mu)$ as defined in \eqref{eq:lagrangian5_general}.  
Then, for fixed values of the other coordinates $\mu_0$ and $\mu_2$, the coordinate $\mu_1^\ast \in \mathbb{R}_+$ is a minimizer of $L(\vec{D}^\mu, \mu)$ with respect to $\mu_1$ only if
\begin{equation}
\label{eq:mu1_optimality}
\alpha = 1 - 2 \int_Q g(u_1) \beta_2^{\mu_1^\ast}(\vec{u}) d\vec{u}.
\end{equation}
\end{theorem}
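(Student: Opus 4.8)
The plan is to mirror the coordinate-wise argument already carried out for $\mu_0$ in Theorem~\ref{thm:mu0_optimality}, exploiting the perturbation identity \eqref{eq:mu1_min} prepared for $\mu_1$. Having reduced to the region where $\alpha_1^\mu(\vec u)=1$ and rewritten the Lagrangian as \eqref{eq:lagrangian5_mu1}, optimality of $\mu_1^\ast$ forces $L(\mu_1^\ast+\delta)-L(\mu_1^\ast)\ge 0$ for every $\delta$. The right-hand side of \eqref{eq:mu1_min} splits into a linear-in-$\delta$ term with coefficient $\alpha-2\int_Q(g(u_1)\alpha_2^{\mu_1^\ast+\delta}(\vec u)+g(u_2)+g(u_3))\,d\vec u$ and a remainder weighted by the indicator difference $\alpha_2^{\mu_1^\ast+\delta}(\vec u)-\alpha_2^{\mu_1^\ast}(\vec u)$. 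First I would replace $\alpha_2^{\mu_1^\ast+\delta}$ by $\alpha_2^{\mu_1^\ast}$ inside the linear term, pushing the resulting difference into the remainder; this produces a $\delta$-independent coefficient $\tau:=\alpha-2\int_Q(g(u_1)\alpha_2^{\mu_1^\ast}(\vec u)+g(u_2)+g(u_3))\,d\vec u$, exactly as in the passage from \eqref{eq:mu0_min} to \eqref{eq:mu0_min2}.

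The crux is to show the remainder is $o(\delta)$, and here the decisive structural observation makes this case cleaner than the $\mu_0$ case: the bracketed weight multiplying $\alpha_2^{\mu_1^\ast+\delta}-\alpha_2^{\mu_1^\ast}$ in \eqref{eq:mu1_min}, namely $g(u_1)g(u_2)g(u_3)(1+\alpha_3(\vec u))-g(u_1)\mu_1^\ast-(g(u_1)g(u_3)+\alpha_3(\vec u)g(u_1)g(u_2))\mu_2$, equals $\tfrac12 F_2(\vec u,\mu_1^\ast,\mu_2)$ when $\alpha_3(\vec u)=0$ and $\tfrac12 F_3(\vec u,\mu_1^\ast,\mu_2)$ when $\alpha_3(\vec u)=1$, with $F_2,F_3$ as in \eqref{eq:f123}. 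Using Lemma~\ref{lem:gdot} together with $\beta_3^\mu(\vec u)=\mathbbm 1\{g(u_3)\le\mu_2\}$ from \eqref{eq:beta_mu_u}, one checks that $\alpha_3(\vec u)=1$ forces $F_3>F_2$, so that $\beta_2^\mu=\mathbbm 1\{F_3\le0\}$, while $\alpha_3(\vec u)=0$ forces $F_3\le F_2$, so that $\beta_2^\mu=\mathbbm 1\{F_2\le0\}$. Hence on the set where $\alpha_2$ flips, the binding constraint is precisely the $F_2$ or $F_3$ matched to $\alpha_3$, and the weight is exactly one half of that binding constraint.

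With this in hand I would set $A_\delta=\{\vec u\in Q:\alpha_2^{\mu_1^\ast}(\vec u)\ne\alpha_2^{\mu_1^\ast+\delta}(\vec u)\}$ and bound the indicator-weighted part of the remainder by $2\,\|\text{weight}\|_{L^\infty(A_\delta)}\,|A_\delta|$. Since shifting $\mu_1$ by $\delta$ changes $F_2$ and $F_3$ by exactly $-2g(u_1)\delta$, which is $\Theta(\delta)$ uniformly by Assumptions~\ref{as:assumption4} and~\ref{as:assumption5}, on $A_\delta$ the binding constraint lies in a band of width $O(\delta)$ around $0$, so the weight is $O(\delta)$ there. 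The measure bound $|A_\delta|=O(\delta)$ follows as in the $\mu_0$ analysis: the zero-crossing of the binding constraint confines one free coordinate (e.g.\ $u_2$, through $g(u_2)=\mu_1/g(u_3)+\mu_2$ for $F_2$) to an interval of length $O(\delta)$ via the lower Lipschitz bound of Assumption~\ref{as:assumption3} and the strict positivity of Assumption~\ref{as:assumption4}, with the remaining coordinates ranging freely in $[0,1]$; the quadratic constraint $F_3$ is handled by the completing-the-square estimates already used in Case~2 of Theorem~\ref{thm:mu0_optimality} (cf.\ \eqref{eq:case1_b}--\eqref{eq:u3_bound}). Combining these, and noting that the $\delta$-absorbed difference likewise contributes at most $O(\delta)\cdot|A_\delta|=O(\delta^2)$, the full remainder is $O(\delta^2)=o(\delta)$, so the contradiction argument of Proposition~\ref{prop:o_delta_condition_mu_0} yields $\tau=0$, i.e.\ $\alpha=2\int_Q(g(u_1)\alpha_2^{\mu_1^\ast}(\vec u)+g(u_2)+g(u_3))\,d\vec u$.

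Finally I would convert this into the stated form \eqref{eq:mu1_optimality}. Writing $\alpha_2^{\mu_1^\ast}=1-\beta_2^{\mu_1^\ast}$ gives $2\int_Q g(u_1)\alpha_2^{\mu_1^\ast}\,d\vec u=2\int_Q g(u_1)\,d\vec u-2\int_Q g(u_1)\beta_2^{\mu_1^\ast}\,d\vec u$, so the condition reads $\alpha=2\int_Q(g(u_1)+g(u_2)+g(u_3))\,d\vec u-2\int_Q g(u_1)\beta_2^{\mu_1^\ast}\,d\vec u$. Because $g(u_1)+g(u_2)+g(u_3)$ is symmetric, its integral over each of the $3!$ orderings of $[0,1]^3$ coincides, whence $\int_Q(g(u_1)+g(u_2)+g(u_3))\,d\vec u=\tfrac16\int_{[0,1]^3}(g(u_1)+g(u_2)+g(u_3))\,d\vec u=\tfrac16\cdot 3\int_0^1 g=\tfrac12$, using that $g$ integrates to one. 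Thus $2\int_Q(g(u_1)+g(u_2)+g(u_3))\,d\vec u=1$, and the optimality condition collapses to $\alpha=1-2\int_Q g(u_1)\beta_2^{\mu_1^\ast}(\vec u)\,d\vec u$, which is \eqref{eq:mu1_optimality}. The main obstacle throughout is the $o(\delta)$ control of the remainder; the single simplification that unlocks it is the identity that the weight equals $\tfrac12 F_{\text{binding}}$, which makes the weight automatically $O(\delta)$ on the flip set and spares us the more delicate coordinate-by-coordinate measure estimates required for $\mu_0$.
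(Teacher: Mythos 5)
Your proposal is correct, and it follows the same skeleton as the paper's proof: the perturbation inequality \eqref{eq:mu1_min}, an $o(\delta)$ bound on the indicator-weighted remainder, the contradiction argument forcing the linear coefficient to vanish, and a final simplification to the stated form. Where you genuinely depart from the paper is in how the $o(\delta)$ estimate is organised. The paper splits into cases according to which band condition in the difference $\beta_2^{\mu_1^\ast+\delta}-\beta_2^{\mu_1^\ast}$ (cf.\ \eqref{eq:beta_mu_u_mu1}) is active, and then sub-cases on $\alpha_3$, verifying in each branch that the weight is $O(\delta)$ (e.g.\ \eqref{eq:1st_ineq_subcase1_mi1min}, \eqref{eq:1st_ind_mu1_sum2upperineq}, \eqref{eq:Odelta_case2_mu1min_1stintegrand}) and that one coordinate is confined to an $O(\delta)$ interval. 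Your identity --- that the weight equals $\tfrac12 F_2$ or $\tfrac12 F_3$ according as $\alpha_3=0$ or $1$, and that this is exactly the constraint that determines $\beta_2$ (since $F_3-F_2=2g(u_1)g(u_2)(g(u_3)-\mu_2)$ has the same sign as $R_3$) --- collapses all of these branches into one statement: on the flip set the binding constraint transitions through $0$ under a shift of exactly $2g(u_1)\delta$, so the weight is automatically in $(0,c_5\delta]$. This buys a further simplification you did not fully exploit: once the weight is $O(\delta)$ on $A_\delta$, you only need $|A_\delta|=o(1)$ (which follows from continuity of Lebesgue measure, since the sets $\{0<F_{\mathrm{binding}}\le 2c_5\delta\}$ decrease to the empty set), so the coordinate-confinement estimates you import from the $\mu_0$ analysis --- including the completing-the-square bound that divides by $\sqrt{\mu_1^\ast/2+\mu_2^2/4}$ and is degenerate at $(\mu_1^\ast,\mu_2)=(0,0)$, a corner the paper also leaves untreated --- could be dropped entirely. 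Your closing computation via the permutation symmetry of $Q$, giving $2\int_Q(g(u_1)+g(u_2)+g(u_3))\,d\vec u=1$, is also cleaner than the paper's iterated integrals in \eqref{eq:mu1_simplifiedcondition}. Two cosmetic points: the contradiction step should cite the $\mu_1$ analogue, Proposition~\ref{prop:o_delta_condition_mu_1}, rather than Proposition~\ref{prop:o_delta_condition_mu_0} (your $\delta$-independent $\tau$ actually makes that argument slightly cleaner than the paper's version, whose $\tau$ still depends on $\delta$), and like the paper you carry out the band analysis only for $\delta>0$, though the $\delta<0$ case is symmetric.
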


\begin{proof}
From \eqref{eq:beta_mu_u} we have that,
\begin{equation}\label{eq:beta_mu_u_mu1}
\begin{aligned}
&\beta_2^{\mu_1^\ast+\delta}(\vec{u})-\beta_2^{\mu_1^\ast}(\vec{u})\\
=& \mathbbm{1}\biggl\{\mu,\vec{u}:
\left\{
\begin{array}{c}
2g(u_1) g(u_2) g(u_3) - 2g(u_1)(\mu_1^\ast+\delta) - 2g(u_1)g(u_3)\mu_2 \le 0 \\
4g(u_1) g(u_2) g(u_3) - 2g(u_1)(\mu_1^\ast+\delta) - 2(g(u_1)g(u_3) + g(u_1)g(u_2))\mu_2 \le 0
\end{array}\right\}\biggr\} \\
&- \mathbbm{1}\biggl\{\mu,\vec{u}:
\left\{
\begin{array}{c}
2g(u_1) g(u_2) g(u_3) - 2g(u_1)\mu_1^\ast - 2g(u_1)g(u_3)\mu_2 \le 0 \\
4g(u_1) g(u_2) g(u_3) - 2g(u_1)\mu_1^\ast - 2(g(u_1)g(u_3) + g(u_1)g(u_2))\mu_2 \le 0
\end{array}\right\}\biggr\} \\
=& \mathbbm{1}\biggl\{\mu,\vec{u}:
\left\{
\begin{array}{c}
2g(u_1)\mu_1^\ast \le 2g(u_1) g(u_2) g(u_3) - 2g(u_1)g(u_3)\mu_2 \le 2g(u_1)(\mu_1^\ast+\delta) \\
4g(u_1) g(u_2) g(u_3)  - 2(g(u_1)g(u_3) + g(u_1)g(u_2))\mu_2 \le  2g(u_1)(\mu_1^\ast+\delta)
\end{array}\right\}\biggr\} \\
&+ \mathbbm{1}\biggl\{\mu,\vec{u}:
\left\{
\begin{array}{c}
2g(u_1) g(u_2) g(u_3)  - 2g(u_1)g(u_3)\mu_2 \le 2g(u_1)\mu_1^\ast \\
2g(u_1)\mu_1^\ast \le 4g(u_1) g(u_2) g(u_3)  - 2(g(u_1)g(u_3) + g(u_1)g(u_2))\mu_2 \le - 2g(u_1)(\mu_1^\ast+\delta)
\end{array}\right\}\biggr\} \\
\le & \mathbbm{1}\biggl\{\mu,\vec{u}:
\left\{
\begin{array}{c}
2g(u_1)\mu_1^\ast \le 2g(u_1) g(u_2) g(u_3) - 2g(u_1)g(u_3)\mu_2 \le 2g(u_1)(\mu_1^\ast+\delta) \\
2g(u_1) g(u_2) g(u_3)  - 2 g(u_1)g(u_2)\mu_2 \le  2g(u_1)\delta
\end{array}\right\}\biggr\} \\
&+ \mathbbm{1}\biggl\{\mu,\vec{u}:
\left\{
\begin{array}{c}
0 \le 2g(u_1) g(u_2) g(u_3)  - 2g(u_1)g(u_2)\mu_2 \\
2g(u_1)\mu_1^\ast \le 4g(u_1) g(u_2) g(u_3)  - 2(g(u_1)g(u_3) + g(u_1)g(u_2))\mu_2 \le  2g(u_1)(\mu_1^\ast+\delta)
\end{array}\right\}\biggr\} \\
\end{aligned}
\end{equation}
Note that only one of the two indicator functions on the right-hand side of \eqref{eq:beta_mu_u_mu1} can be true.

\begin{enumerate}
\item Case 1, suppose the first indicator function on the right-hand side of \eqref{eq:beta_mu_u_mu1} is true, i.e.,
\begin{equation}
\label{eq:1st_ind_mu1}
\begin{aligned}
2g(u_1)\mu_1^\ast \le 2g(u_1) g(u_2) g(u_3) - 2g(u_1)g(u_3)\mu_2 &\le 2g(u_1)(\mu_1^\ast+\delta) \\
2g(u_1) g(u_2) g(u_3)  - 2 g(u_1)g(u_2)\mu_2 &\le  2g(u_1)\delta
\end{aligned}
\end{equation}
Fix $u_1$,$u_3$,$\mu_2$. 
Denote by $u_{2,\text{min}}$ and $u_{2,\text{max}}$ the minimum and maximum value of $u_2$. 
Then, we use the first inequality of \eqref{eq:1st_ind_mu1} to obtain the following.
\begin{equation}\label{eq:g2_bound_mu1}
\begin{aligned}
& 2g(u_1)\mu_1^\ast \le 2g(u_1) g(u_2) g(u_3) - 2g(u_1)g(u_3)\mu_2 &\le 2g(u_1)(\mu_1^\ast+\delta) \\
\Rightarrow & \mu_1^\ast \le g(u_2) g(u_3) - g(u_3)\mu_2 &\le (\mu_1^\ast+\delta) \\
\Rightarrow & |(g(u_{2,\text{min}})-g(u_{2,\text{max}}))g(u_3)| \le \delta \\
\Rightarrow & |(g(u_{2,\text{min}})-g(u_{2,\text{max}}))| \le \frac{\delta}{g(u_3)}
\end{aligned}
\end{equation}
Thus, we have for $u_2$,
\begin{equation}\label{eq:u2_bound_m1}
|u_2-u_2'|  \stackrel{\mbox{\eqref{as:assumption3}}}{\le} \frac{|g(u_2)-g(u_2')|}{c_3} \stackrel{\mbox{\eqref{eq:g2_bound_mu1}}}{\le} \frac{\delta}{c_3 g(u_3)} \ \stackrel{\mbox{\eqref{as:assumption4}}}{\le} \frac{\delta}{c_3 c_4}=O(\delta)
\end{equation}
We have $Q= \{\vec{u}\colon 0 \le u_1 \le u_2 \le u_3 \le 1\}$.  
Define the set  
\begin{equation}
A_{\delta}
=
\bigl\{
\vec{u}\in Q: \beta_2^{\mu_1^\ast+\delta}(\vec{u})\neq 
\beta_2^{\mu_1^\ast}(\vec{u})
\bigr\}.
\end{equation}
From \eqref{eq:u2_bound_m1}, for any fixed $u_1$ and $u_3$ values in the interval $[0,1]$, the possible $u_2$ value lies within a one-dimensional region whose measure is of order $\delta$. 
Since both $u_1$ and $u_3$ lie in $[0,1]$, the measure of the three-dimensional set $A_{\delta}$ is of order $\delta$.  
The difference  
\begin{equation}
\beta_2^{\mu_1^\ast+\delta}(\vec{u}) 
-
\beta_2^{\mu_1^\ast}(\vec{u}) 
\end{equation}
takes values in $\{-1, 0,1\}$. 
Hence, 
\begin{equation}
\Bigl\lvert
\beta_2^{\mu_1^\ast+\delta}(\vec{u})
-
\beta_2^{\mu_1^\ast}(\vec{u})
\Bigr\rvert
\le1
\quad
\text{for all } \vec{u}\in Q.
\end{equation}
Therefore, by definition of $A_{\delta}$,  
\begin{equation}
\left\lvert
\int_Q
\Bigl(
\beta_2^{\mu_1^\ast+\delta}(\vec{u})
-
\beta_2^{\mu_1^\ast}(\vec{u})
\Bigr)
d\vec{u}
\right\rvert
\le
\int_{A_{\delta}} 1d\vec{u}
=
\mathrm{measure}(A_{\delta})
=
O(\delta).
\end{equation}
Thus, we have the following.
\begin{equation}\label{eq:b2_Odelta_mu1}
\int_Q
\Bigl(
\alpha_2^{\mu_1^\ast}(\vec{u})
-
\alpha_2^{\mu_1^\ast+\delta}(\vec{u})
\Bigr)
d\vec{u}
=
O(\delta)
\quad\Longleftrightarrow\quad
\int_Q
\Bigl(
\beta_2^{\mu_1^\ast+\delta}(\vec{u})
-
\beta_2^{\mu_1^\ast}(\vec{u})
\Bigr)
d\vec{u}
=
O(\delta).
\end{equation}

Now consider the first integrand of the integral in \eqref{eq:mu1_min}.
\begin{equation}
\label{eq:mu1_min_1st_integrand}
g(u_1) g(u_2) g(u_3) \biggl(1 + \alpha_3(\vec{u}) \biggr)- g(u_1)\mu_1^\ast -\bigg(g(u_{1})g(u_{3})+\alpha_{3}(\vec{u})g(u_{1})g(u_{2})\bigg)\mu_2
\end{equation}
Depending on the value of $\alpha_{3}(\vec{u})$, there can be two subcases.
We consider both below.
\begin{enumerate}
\item First we consider $\alpha_{3}(\vec{u})=0$.
In this case \eqref{eq:mu1_min_1st_integrand} becomes,
\begin{equation}\label{eq:mu1_min_1st_integrand_alpha3_0}
g(u_1) g(u_2) g(u_3) - g(u_1)\mu_1^\ast -g(u_{1})g(u_{3})\mu_2
\end{equation}
From the first inequality of \eqref{eq:1st_ind_mu1} we have,
\begin{equation}
\label{eq:1st_ineq_subcase1_mi1min}
\begin{aligned}
&2g(u_1)\mu_1^\ast \le 2g(u_1) g(u_2) g(u_3) - 2g(u_1)g(u_3)\mu_2 \le 2g(u_1)(\mu_1^\ast+\delta) \\
\Leftrightarrow & g(u_1)\mu_1^\ast \le g(u_1) g(u_2) g(u_3) - g(u_1)g(u_3)\mu_2 \le g(u_1)(\mu_1^\ast+\delta) \\
\Leftrightarrow & 0 \le g(u_1) g(u_2) g(u_3) -g(u_1)\mu_1^\ast - g(u_1)g(u_3)\mu_2 \le g(u_1)\delta \stackrel{\mbox{\eqref{as:assumption5}}}{\le} c_5 \delta
\end{aligned}
\end{equation}
Thus using \eqref{eq:1st_ineq_subcase1_mi1min} we have for \eqref{eq:mu1_min_1st_integrand_alpha3_0},
\begin{equation}
\label{eq:mu1_min_1st_integrand_alpha3_0_Odelta}
g(u_1) g(u_2) g(u_3) - g(u_1)\mu_1^\ast -g(u_{1})g(u_{3})\mu_2=O(\delta).
\end{equation}
Then the second term in \eqref{eq:mu1_min} becomes,

\begin{equation}
\begin{aligned}
&\int_Q 
\bigg[
    g(u_1) g(u_2) g(u_3) \bigl(1 + \alpha_3(\vec{u}) \bigr)
    - g(u_1)\mu_1^\ast \\
&\qquad
    - \bigl(g(u_{1}) g(u_{3}) + \alpha_{3}(\vec{u}) g(u_{1}) g(u_{2})\bigr)\mu_2
\bigg]
\bigl[\alpha_2^{\mu_1^\ast+\delta}(\vec{u})
      - \alpha_2^{\mu_1^\ast}(\vec{u}) \bigr] \, d\vec{u} \\
&\le
\biggl\|
    g(u_1) g(u_2) g(u_3) \bigl(1 + \alpha_3(\vec{u}) \bigr)
    - g(u_1)\mu_1^\ast \\
&\qquad
    - \bigl(g(u_{1}) g(u_{3}) + \alpha_{3}(\vec{u}) g(u_{1}) g(u_{2})\bigr)\mu_2
\biggr\|_\infty
\int_Q \bigl[\alpha_2^{\mu_1^\ast+\delta}(\vec{u})
             - \alpha_2^{\mu_1^\ast}(\vec{u}) \bigr] \, d\vec{u} \\
&\stackrel{\eqref{eq:mu1_min_1st_integrand_alpha3_0_Odelta},\,
           \eqref{eq:b2_Odelta_mu1}}{=} O(\delta^2) \\
&= o(\delta).
\end{aligned}
\end{equation}

The condition \eqref{eq:mu1_optimality} thus follows from Proposition \eqref{prop:o_delta_condition_mu_1}.

\item Second, we consider $\alpha_{3}(\vec{u})=1$.
In this case \eqref{eq:mu1_min_1st_integrand} becomes,
\begin{equation}\label{eq:mu1_min_1st_integrand_alpha3_1}
2g(u_1) g(u_2) g(u_3)- g(u_1)\mu_1^\ast -\bigg(g(u_{1})g(u_{3})+g(u_{1})g(u_{2})\bigg)\mu_2
\end{equation}
Adding the two upper inequalities of \eqref{eq:1st_ind_mu1} we have,
\begin{equation}\label{eq:1st_ind_mu1_sum2upperineq}
\begin{aligned}
& 4g(u_1) g(u_2) g(u_3) - 2(g(u_1)g(u_2)+g(u_1)g(u_3))\mu_2 \le 2g(u_1)(\mu_1^\ast+\delta) + 2g(u_1)\delta \\
\Rightarrow & 4g(u_1) g(u_2) g(u_3) -2g(u_1)\mu_1^\ast- 2(g(u_1)g(u_2)+g(u_1)g(u_3))\mu_2 \le 4g(u_1)\delta \\
\Rightarrow & 2g(u_1) g(u_2) g(u_3) -g(u_1)\mu_1^\ast- (g(u_1)g(u_2)+g(u_1)g(u_3))\mu_2 \le 2g(u_1)\delta \stackrel{\mbox{\eqref{as:assumption5}}}{\le}  2c_5\delta 
\end{aligned}
\end{equation}
Also since $\alpha_{3}(\vec{u})=1$, this implies $\beta_{3}(\vec{u})=0$ from \eqref{eq:beta_mu_u}.
Using further \eqref{eq:beta_mu_u}, we have the following.
\begin{equation}\label{eq:beta30_mu1_subcase1}
\begin{aligned}
&\beta_{3}(\vec{u})=0 \\
\Rightarrow & 2g(u_1) g(u_2) g(u_3) > 2g(u_1)g(u_2)\mu_2 \\
\Rightarrow & g(u_1) g(u_2) g(u_3) > g(u_1)g(u_2)\mu_2 \\
\Rightarrow & g(u_1) g(u_2) g(u_3) - g(u_1)g(u_2)\mu_2 >0 \\
\end{aligned}
\end{equation}
Then analyzing \eqref{eq:mu1_min_1st_integrand_alpha3_1} we have,
\begin{equation}\label{eq:mu1_min_1st_integrand_alpha3_1_positive}
\begin{aligned}
& 2g(u_1) g(u_2) g(u_3)- g(u_1)\mu_1^\ast -\bigg(g(u_{1})g(u_{3})+g(u_{1})g(u_{2})\bigg)\mu_2 \\
= & \bigg( g(u_1) g(u_2) g(u_3) - g(u_{1})g(u_{2})\mu_2 \bigg) + \bigg(g(u_1) g(u_2) g(u_3) - g(u_1)\mu_1^\ast - g(u_{1})g(u_{3})\mu_2\bigg) > 0
\end{aligned}
\end{equation}
where the first term on the right-hand side is strictly positive by virtue of \eqref{eq:beta30_mu1_subcase1}, while the positivity of the second term follows from the lower bound provided in the first inequality of \eqref{eq:1st_ind_mu1}.
Thus using \eqref{eq:mu1_min_1st_integrand_alpha3_1_positive} and \eqref{eq:1st_ind_mu1_sum2upperineq} we have for \eqref{eq:mu1_min_1st_integrand_alpha3_1},
\begin{equation}\label{eq:subcase2_mu1_min}
\begin{aligned}
& 0< 2g(u_1) g(u_2) g(u_3)- g(u_1)\mu_1^\ast -\bigg(g(u_{1})g(u_{3})+g(u_{1})g(u_{2})\bigg)\mu_2 \le 2c_5 \delta \\
\Rightarrow & 2g(u_1) g(u_2) g(u_3)- g(u_1)\mu_1^\ast -\bigg(g(u_{1})g(u_{3})+g(u_{1})g(u_{2})\bigg)\mu_2 = O(\delta)
\end{aligned}
\end{equation}
Thus using \eqref{eq:b2_Odelta_mu1} and \eqref{eq:subcase2_mu1_min}
we have for the second term in \eqref{eq:mu1_min},

\begin{equation}
\begin{aligned}
&\int_Q 
\bigg[
    g(u_1) g(u_2) g(u_3) \bigl(1 + \alpha_3(\vec{u}) \bigr)
    - g(u_1)\mu_1^\ast \\
&\qquad
    - \bigl(g(u_{1}) g(u_{3}) + \alpha_{3}(\vec{u}) g(u_{1}) g(u_{2})\bigr)\mu_2
\bigg]
\bigl[
    \alpha_2^{\mu_1^\ast+\delta}(\vec{u})
    - \alpha_2^{\mu_1^\ast}(\vec{u})
\bigr] \, d\vec{u} \\
&\le 
\biggl\|
    g(u_1) g(u_2) g(u_3) \bigl(1 + \alpha_3(\vec{u}) \bigr)
    - g(u_1)\mu_1^\ast \\
&\qquad
    - \bigl(g(u_{1}) g(u_{3}) + \alpha_{3}(\vec{u}) g(u_{1}) g(u_{2})\bigr)\mu_2
\biggr\|_\infty
\int_Q
\bigl[
    \alpha_2^{\mu_1^\ast+\delta}(\vec{u})
    - \alpha_2^{\mu_1^\ast}(\vec{u})
\bigr] \, d\vec{u} \\
&\stackrel{\eqref{eq:subcase2_mu1_min},\,\eqref{eq:b2_Odelta_mu1}}{=} O(\delta^2) \\
&= o(\delta).
\end{aligned}
\end{equation}

The condition \eqref{eq:mu1_optimality} thus follows from Proposition \eqref{prop:o_delta_condition_mu_1}.
\end{enumerate}

\item Case 2, suppose the second indicator function of \eqref{eq:beta_mu_u_mu1} is true, i.e.,
\begin{equation}\label{eq:2nd_ind_mu1}
\begin{aligned}
& 0 \le 2g(u_1) g(u_2) g(u_3)  - 2g(u_1)g(u_2)\mu_2 \\
& 2g(u_1)\mu_1^\ast \le 4g(u_1) g(u_2) g(u_3)  - 2(g(u_1)g(u_3) + g(u_1)g(u_2))\mu_2 \le  2g(u_1)(\mu_1^\ast+\delta)
\end{aligned}
\end{equation}
From the first equation of the above, we have, using \eqref{eq:beta_mu_u} that
\begin{equation}
\label{eq:beta30_case2_mu1min}
\beta_{3}(\vec{u})=0 \Leftrightarrow \alpha_{3}(\vec{u})=1.
\end{equation}

Now consider the first integrand of the integral of \eqref{eq:mu1_min} as given by \eqref{eq:mu1_min_1st_integrand}
\begin{equation}\label{eq:mu1_min_1st_integrand_case2}
\begin{aligned}
&g(u_1) g(u_2) g(u_3) \biggl(1 + \alpha_3(\vec{u}) \biggr)- g(u_1)\mu_1^\ast -\bigg(g(u_{1})g(u_{3})+\alpha_{3}(\vec{u})g(u_{1})g(u_{2})\bigg)\mu_2 \\
\stackrel{\mbox{\eqref{eq:beta30_case2_mu1min}}}{=} &2g(u_1) g(u_2) g(u_3)- g(u_1)\mu_1^\ast -\bigg(g(u_{1})g(u_{3})+g(u_{1})g(u_{2})\bigg)\mu_2 
\end{aligned}
\end{equation}
Then from the second equation of \eqref{eq:2nd_ind_mu1}, we can show \eqref{eq:mu1_min_1st_integrand_case2} as O($\delta$)following,
\begin{equation}\label{eq:Odelta_case2_mu1min_1stintegrand}
\begin{aligned}
& 2g(u_1)\mu_1^\ast \le 4g(u_1) g(u_2) g(u_3)  - 2(g(u_1)g(u_3) + g(u_1)g(u_2))\mu_2 \le  2g(u_1)(\mu_1^\ast+\delta) \\
& 0 \le 2g(u_1) g(u_2) g(u_3) -g(u_1)\mu_1^\ast - (g(u_1)g(u_3) + g(u_1)g(u_2))\mu_2 \le 2g(u_1)\delta  \stackrel{\mbox{\eqref{as:assumption5}}}{\le} 2c_5 \delta \\
\Rightarrow & 2g(u_1) g(u_2) g(u_3)- g(u_1)\mu_1^\ast -\bigg(g(u_{1})g(u_{3})+g(u_{1})g(u_{2})\bigg)\mu_2 = O(\delta)
\end{aligned}
\end{equation}
Now, from the second equation of \eqref{eq:2nd_ind_mu1} we have the following.
\begin{equation}\label{eq:g2>g3_mu1min}
\begin{aligned}
& 2g(u_1)\mu_1^\ast \le 4g(u_1) g(u_2) g(u_3)  - 2(g(u_1)g(u_3) + g(u_1)g(u_2))\mu_2 \le  2g(u_1)(\mu_1^\ast+\delta) \\
\Leftrightarrow & 0 \le 4g(u_1) g(u_2) g(u_3)  - 2(g(u_1)g(u_3) + g(u_1)g(u_2))\mu_2 -2g(u_1)(\mu_1^\ast \le 2 \delta \\
\Leftrightarrow & g(u_2) g(u_3) - \frac{\mu_1^\ast}{2} - \frac{(g(u_3) +g(u_2))\mu_2}{2} \le \frac{2\delta}{4g(u_1)} \\
\Leftrightarrow & 0 \le \bigg(g(u_2)-\frac{\mu_2}{2} \bigg)\bigg(g(u_3)-\frac{\mu_2}{2} \bigg) - \frac{\mu_1^\ast}{2} - \frac{\mu_2^2}{4} \le \frac{\delta}{2g(u_1)}
\end{aligned}
\end{equation}

From Lemma \eqref{lem:gdot}, we have $g(u_2) \ge g(u_3)$ as $u_2 \le u_3$.
Also from the lower bound of \eqref{eq:g2>g3_mu1min} we have that,
\begin{equation}
\label{eq:g2>g3_mu1min2}
\bigg(g(u_2)-\frac{\mu_2}{2} \bigg)\bigg(g(u_3)-\frac{\mu_2}{2} \bigg)  \ge 0
\end{equation}
Without loss of generality, assume,
\begin{equation}
\label{eq:wlg_mu1min}
g(u_2) > g(u_3) > \frac{\mu_2}{2}.
\end{equation}
Then we have, 
\begin{equation}\label{eq:g2_lowerbd_mu1min}
\begin{aligned}
&\frac{\mu_1^\ast}{2} + \frac{\mu_2^2}{4}\\ \stackrel{\mbox{\eqref{eq:g2>g3_mu1min}}}{\le} & \bigg(g(u_2)-\frac{\mu_2}{2} \bigg)\bigg(g(u_3)-\frac{\mu_2}{2} \bigg) \\
\stackrel{\mbox{\eqref{eq:wlg_mu1min}}}{\le} &\bigg(g(u_2)-\frac{\mu_2}{2} \bigg)^2 \\
\Rightarrow & \bigg(g(u_2)-\frac{\mu_2}{2} \bigg) \ge \sqrt{\frac{\mu_1^\ast}{2} + \frac{\mu_2^2}{4}}
\end{aligned}
\end{equation}

Fix $u_1$,$u_3$,$\mu_1$ and $\mu_2$. 
Denote by $u_{3,\text{min}}$ and $u_{3,\text{max}}$ the minimum and maximum value of $u_3$. 
Then, using \eqref{eq:g2>g3_mu1min}, we have the following.
\begin{equation}\label{eq:g3_bound_mu1}
\begin{aligned}
& \bigg(g(u_2)-\frac{\mu_2}{2} \bigg) |g(u_{3,\text{min}})- g(u_{3,\text{max}})| \le \frac{\delta}{2g(u_1)} \\
\Rightarrow & |g(u_{3,\text{min}})- g(u_{3,\text{max}})| \le \frac{\delta}{2g(u_1)} \frac{1}{\bigg(g(u_2)-\frac{\mu_2}{2} \bigg)} \stackrel{\mbox{\eqref{eq:g2_lowerbd_mu1min}}}{<} \frac{\delta}{2g(u_1)} \frac{1}{\sqrt{\frac{\mu_1^\ast}{2} + \frac{\mu_2^2}{4}}} \\
\Rightarrow & |u_3 - u_3'| \stackrel{\mbox{\eqref{as:assumption3}}}{\le} \frac{1}{c_3} |g(u_3) - g(u_3')| \le \frac{1}{c_3} \frac{\delta}{2g(u_1)} \frac{1}{\sqrt{\frac{\mu_1^\ast}{2} + \frac{\mu_2^2}{4}}} \stackrel{\mbox{\eqref{as:assumption4}}}{\le} \frac{1}{c_3} \frac{\delta}{2c_4} \frac{1}{\sqrt{\frac{\mu_1^\ast}{2} + \frac{\mu_2^2}{4}}} \\
\Rightarrow & u_3 = O(\delta).
\end{aligned}
\end{equation}

We have $Q= \{\vec{u}\colon 0 \le u_1 \le u_2 \le u_3 \le 1\}$.  
Define the set  
\begin{equation}
A_{\delta}
=
\bigl\{
\vec{u}\in Q: \beta_2^{\mu_1^\ast+\delta}(\vec{u})\neq 
\beta_2^{\mu_1^\ast}(\vec{u})
\bigr\}.
\end{equation}
From \eqref{eq:g3_bound_mu1}, for any fixed $u_1$ and $u_2$ values in the interval $[0,1]$, the possible $u_3$ value lies within a one-dimensional region whose measure is of order $\delta$. 
Since both $u_1$ and $u_2$ lie in $[0,1]$, the measure of the three-dimensional set $A_{\delta}$ is of order $\delta$.  
The difference  
\begin{equation}
\beta_2^{\mu_1^\ast+\delta}(\vec{u}) 
-
\beta_2^{\mu_1^\ast}(\vec{u}) 
\end{equation}
takes values in $\{-1, 0,1\}$. Hence, 
\begin{equation}
\Bigl\lvert
\beta_2^{\mu_1^\ast+\delta}(\vec{u})
-
\beta_2^{\mu_1^\ast}(\vec{u})
\Bigr\rvert
\le1
\quad
\text{for all } \vec{u}\in Q.
\end{equation}
Therefore, by definition of $A_{\delta}$,  
\begin{equation}
\left\lvert
\int_Q
\Bigl(
\beta_2^{\mu_1^\ast+\delta}(\vec{u})
-
\beta_2^{\mu_1^\ast}(\vec{u})
\Bigr)
d\vec{u}
\right\rvert
\le
\int_{A_{\delta}} 1d\vec{u}
=
\mathrm{measure}(A_{\delta})
=
O(\delta).
\end{equation}
Thus, we have the following.
\begin{equation}\label{eq:b2_Odelta_mu1_case2}
\int_Q
\Bigl(
\alpha_2^{\mu_1^\ast}(\vec{u})
-
\alpha_2^{\mu_1^\ast+\delta}(\vec{u})
\Bigr)
d\vec{u}
=
O(\delta)
\quad\Longleftrightarrow\quad
\int_Q
\Bigl(
\beta_2^{\mu_1^\ast+\delta}(\vec{u})
-
\beta_2^{\mu_1^\ast}(\vec{u})
\Bigr)
d\vec{u}
=
O(\delta).
\end{equation}
Then the second term in \eqref{eq:mu1_min} becomes
{\small
\begin{equation}
\begin{aligned}
&\int_Q \bigg[
    g(u_1) g(u_2) g(u_3) \bigl(1 + \alpha_3(\vec{u}) \bigr)
    - g(u_1)\mu_1^\ast
    - \bigl(g(u_{1})g(u_{3}) + \alpha_{3}(\vec{u}) g(u_{1}) g(u_{2})\bigr)\mu_2
\bigg]
\bigl[\alpha_2^{\mu_1^\ast+\delta}(\vec{u}) - \alpha_2^{\mu_1^\ast}(\vec{u}) \bigr] \, d\vec{u} \\
&\le 
\biggl\|
    g(u_1) g(u_2) g(u_3) \bigl(1 + \alpha_3(\vec{u}) \bigr)
    - g(u_1)\mu_1^\ast \\
&\qquad
    - \bigl(g(u_{1})g(u_{3}) + \alpha_{3}(\vec{u}) g(u_{1}) g(u_{2})\bigr)\mu_2
\biggr\|_\infty
\int_Q \bigl[\alpha_2^{\mu_1^\ast+\delta}(\vec{u}) - \alpha_2^{\mu_1^\ast}(\vec{u}) \bigr] \, d\vec{u} \\
&\stackrel{\eqref{eq:b2_Odelta_mu1_case2},\eqref{eq:Odelta_case2_mu1min_1stintegrand}}{=}
O(\delta^2) \\
&= o(\delta).
\end{aligned}
\end{equation}}

The condition \eqref{eq:mu1_optimality} thus follows from Proposition \eqref{prop:o_delta_condition_mu_1}.
This completes the proof.

\end{enumerate}

Condition \eqref{eq:mu1_optimality} can be further simplified as follows:
\begin{equation}
\label{eq:mu1_simplifiedcondition}
\begin{aligned}
\alpha
=&2\int_{Q}\!\Bigl[g(u_{1})\bigl(1-\beta_{2}^{\mu_1^\ast}(\vec u)\bigr)+g(u_{2})+g(u_{3})\Bigr]d\vec u \\
=&2\int_{Q}\!\bigl[g(u_{1})+g(u_{2})+g(u_{3})\bigr]d\vec u
  -2\int_{Q}\!g(u_{1})\beta_{2}^{\mu_1^\ast}(\vec u)d\vec u \\
=&2\!\bigg[\!
     \int_{0}^{1}\!g(u_{1})
        \bigg(\int_{u_{1}}^{1}\!\!\int_{u_{2}}^{1}\!du_{3}du_{2}\bigg)du_{1}
     +\int_{0}^{1}\!g(u_{2})
        \bigg(\int_{0}^{u_{2}}\!\!\int_{u_{2}}^{1}\!du_{3}du_{1}\bigg)du_{2} +\int_{0}^{1}\!g(u_{3})
        \bigg(\int_{0}^{u_{3}}\!\!\int_{0}^{u_{2}}\!du_{1}du_{2}\bigg)du_{3}
   \bigg]\\
  &-2\int_{Q}\!g(u_{1})\beta_{2}^{\mu_1^\ast}(\vec u)d\vec u \\
=&2\!\bigg[
     \int_{0}^{1}\!g(u)\tfrac{(1-u)^{2}}{2}du
     +\int_{0}^{1}\!g(u)u(1-u)du
     +\int_{0}^{1}\!g(u)\tfrac{u^{2}}{2}du
   \bigg]
  -2\int_{Q}\!g(u_{1})\beta_{2}^{\mu_1^\ast}(\vec u)d\vec u \\
=&2\!\bigg[
     \tfrac12\int_{0}^{1}\!g(u)\bigl[(1-u)^{2}+2u(1-u)+u^{2}\bigr]du
   \bigg]
  -2\int_{Q}\!g(u_{1})\beta_{2}^{\mu_1^\ast}(\vec u)d\vec u \\
=&2\!\bigg[\tfrac12\int_{0}^{1}\!g(u)du\bigg]
  -2\int_{Q}\!g(u_{1})\beta_{2}^{\mu_1^\ast}(\vec u)d\vec u \\
=&1-2\int_{Q}\!g(u_{1})\beta_{2}^{\mu_1^\ast}(\vec u)d\vec u .
\end{aligned}
\end{equation}
\end{proof}

\subsubsection{\texorpdfstring{Sufficient condition for characterizing the minimizer $\mu_1^\ast$}{Sufficient condition for characterizing the minimizer mu1*}}
The following proposition states a sufficient condition under which a simple integral identity fully characterizes the minimizer $\mu_1^\ast$ as derived in the above theorem and serves as a key stepping stone in that proof. 
This allows the proof of Theorem \ref{thm:mu1_optimality} to trace a single necessary equality in \eqref{eq:mu1_optimality}.
\begin{prop}
\label{prop:o_delta_condition_mu_1}
Let $\mu_1^\ast \in \mathbb{R}_+$ and fix $\mu_0, \mu_2$. Suppose
{\small
\begin{equation}\label{eq:o_delta_condition_integral_mu1}
\int_Q \bigg[g(u_1) g(u_2) g(u_3) \biggl(1 + \alpha_3(\vec{u}) \biggr)- g(u_1)\mu_1^\ast -\bigg(g(u_{1})g(u_{3})+\alpha_{3}(\vec{u})g(u_{1})g(u_{2})\bigg)\mu_2 \bigg] \bigg[\alpha_2^{\mu_1^\ast+\delta}(\vec{u})-\alpha_2^{\mu_1^\ast}(\vec{u}) \bigg]d\vec{u}=o(\delta).
\end{equation}}
Then \eqref{eq:mu1_optimality} holds.
\end{prop}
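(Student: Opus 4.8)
The plan is to mirror the argument used for Proposition~\ref{prop:o_delta_condition_mu_0}, adapting it to the $\mu_1$ coordinate. The starting point is the minimization inequality in \eqref{eq:mu1_min}, which, for a minimizer $\mu_1^\ast$ and every $\delta$, expresses $0 \le L(\mu_1^\ast+\delta) - L(\mu_1^\ast)$ as the sum of a term linear in $\delta$ and (twice) the integral appearing in the hypothesis \eqref{eq:o_delta_condition_integral_mu1}. Since that integral is assumed to be $o(\delta)$, the goal is to bring the whole expression into the canonical form $0 \le \delta\tau + o(\delta)$ for a constant $\tau$ independent of $\delta$, and then to run a sign-based contradiction argument forcing $\tau = 0$.

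First I would isolate the linear term $\delta[\alpha - 2\int_Q(g(u_1)\alpha_2^{\mu_1^\ast+\delta}(\vec{u}) + g(u_2) + g(u_3))\,d\vec{u}]$ and replace $\alpha_2^{\mu_1^\ast+\delta}$ by $\alpha_2^{\mu_1^\ast}$, defining $\tau := \alpha - 2\int_Q(g(u_1)\alpha_2^{\mu_1^\ast}(\vec{u}) + g(u_2) + g(u_3))\,d\vec{u}$. This swap produces a correction $-2\delta\int_Q g(u_1)(\alpha_2^{\mu_1^\ast+\delta}(\vec{u}) - \alpha_2^{\mu_1^\ast}(\vec{u}))\,d\vec{u}$. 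I would bound this correction using the uniform bound $g \le c_5$ from Assumption~\ref{as:assumption5} together with the fact that the symmetric-difference set $\{\alpha_2^{\mu_1^\ast+\delta} \neq \alpha_2^{\mu_1^\ast}\}$ has Lebesgue measure $O(\delta)$, which is the confinement of $u_2$ or $u_3$ to an interval of length $O(\delta)$ established in the proof of Theorem~\ref{thm:mu1_optimality} (see \eqref{eq:b2_Odelta_mu1} and \eqref{eq:b2_Odelta_mu1_case2}). This makes the correction $\delta \cdot O(\delta) = o(\delta)$, so combining it with the hypothesis yields $0 \le \delta\tau + o(\delta)$ for all $\delta$.

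Next I would carry out the contradiction argument exactly as in Proposition~\ref{prop:o_delta_condition_mu_0}: if $\tau > 0$, choosing a small negative $\delta$ makes $\delta\tau + o(\delta) < 0$, contradicting nonnegativity; if $\tau < 0$, a small positive $\delta$ does the same. Hence $\tau = 0$, i.e.\ $\alpha = 2\int_Q(g(u_1)\alpha_2^{\mu_1^\ast}(\vec{u}) + g(u_2) + g(u_3))\,d\vec{u}$. Writing $\alpha_2^{\mu_1^\ast} = 1 - \beta_2^{\mu_1^\ast}$ and evaluating the iterated integrals over $Q$ using $\int_0^1 g = 1$, precisely the computation already recorded in \eqref{eq:mu1_simplifiedcondition}, collapses this to $\alpha = 1 - 2\int_Q g(u_1)\beta_2^{\mu_1^\ast}(\vec{u})\,d\vec{u}$, which is \eqref{eq:mu1_optimality}.

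The main obstacle is the correction term. The hypothesis \eqref{eq:o_delta_condition_integral_mu1} controls only the integral against the coefficient multiplying $\alpha_2^{\mu_1^\ast+\delta} - \alpha_2^{\mu_1^\ast}$, whereas replacing $\alpha_2^{\mu_1^\ast+\delta}$ by $\alpha_2^{\mu_1^\ast}$ in the linear term introduces a separate integral of $g(u_1)$ against the \emph{same} indicator difference. A naive bound gives only $O(\delta)$, which would be fatal since it is the same order as the $\delta\tau$ term we want to isolate. The argument therefore hinges on upgrading this estimate to $o(\delta)$, and this is exactly where the $O(\delta)$ measure estimate on the level-set difference, supplied by the coordinate-confinement analysis of Theorem~\ref{thm:mu1_optimality}, is indispensable; everything else is a routine transcription of the $\mu_0$ proof.
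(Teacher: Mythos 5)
Your proof is correct and takes essentially the same route as the paper: the same starting inequality \eqref{eq:mu1_min}, the same sign-based contradiction forcing $\tau=0$, and the same closing simplification \eqref{eq:mu1_simplifiedcondition} to arrive at \eqref{eq:mu1_optimality}. Your one deviation---defining $\tau$ with the fixed indicator $\alpha_2^{\mu_1^\ast}$ rather than the $\delta$-dependent $\alpha_2^{\mu_1^\ast+\delta}$, and explicitly bounding the resulting swap correction by $\delta\cdot O(\delta)=o(\delta)$ using $g\le c_5$ and the measure estimates \eqref{eq:b2_Odelta_mu1} and \eqref{eq:b2_Odelta_mu1_case2}---is a minor refinement that in fact tightens the paper's own argument, which writes $\tau$ with the $\delta$-dependent indicator and treats it as a constant without comment.
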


\begin{proof}
Suppose \eqref{eq:o_delta_condition_integral_mu1} holds.
We then demonstrate that then $\mu_1^\ast$ is the minimizer of $L(\vec{D}^\mu, \mu)$ of the dual problem \eqref{eq:dual2} w.r.t. $\mu_1$ only if $\alpha = 1 - 2 \int_Q g(u_1) \beta_2^{\mu_1^\ast}(\vec{u}) d\vec{u}$.

From \eqref{eq:mu1_min} we have that if $\mu_1^\ast$ is the minimizer,
{\footnotesize
\begin{equation}\label{eq:mu_1_must_hold}
\begin{aligned}
0&\le\delta \bigg[\alpha - 2\int_Q \bigg(g(u_1)\alpha_2^{\mu_1^\ast+\delta}(\vec{u})+g(u_2)+ g(u_3) \bigg)d\vec{u} \bigg] \\
&+2\int_Q \bigg[g(u_1) g(u_2) g(u_3) \biggl(1 + \alpha_3(\vec{u}) \biggr)- g(u_1)\mu_1^\ast -\bigg(g(u_{1})g(u_{3})+\alpha_{3}(\vec{u})g(u_{1})g(u_{2})\bigg)\mu_2 \bigg] \bigg[\alpha_2^{\mu_1^\ast+\delta}(\vec{u})-\alpha_2^{\mu_1^\ast}(\vec{u}) \bigg]d\vec{u}\\
&=\delta \tau + o(\delta),\ \forall \delta 
\end{aligned}
\end{equation}}
where, 
\begin{equation*}
\tau=\alpha - 2\int_Q \bigg(g(u_1)\alpha_2^{\mu_1^\ast+\delta}(\vec{u})+g(u_2)+ g(u_3) \bigg)d\vec{u}
\end{equation*} 
and using \eqref{eq:o_delta_condition_integral_mu1}, 
{\small
\begin{equation*}
\int_Q \bigg[g(u_1) g(u_2) g(u_3) \biggl(1 + \alpha_3(\vec{u}) \biggr)- g(u_1)\mu_1^\ast -\bigg(g(u_{1})g(u_{3})+\alpha_{3}(\vec{u})g(u_{1})g(u_{2})\bigg)\mu_2 \bigg] \bigg[\alpha_2^{\mu_1^\ast+\delta}(\vec{u})-\alpha_2^{\mu_1^\ast}(\vec{u}) \bigg]d\vec{u}=o(\delta)
\end{equation*}} 

We use a contradiction argument to prove that the condition $\alpha = 2\int_Q \bigg(g(u_1)\alpha_2^{\mu_1^\ast+\delta}(\vec{u})+g(u_2)+ g(u_3) \bigg)d\vec{u}$ must hold.
Suppose \begin{equation}
\label{eq:tau>0_mu1}
\tau =\alpha - 2\int_Q \bigg(g(u_1)\alpha_2^{\mu_1^\ast+\delta}(\vec{u})+g(u_2)+ g(u_3) \bigg)d\vec{u} > 0.
\end{equation}
Then choosing $\delta$ small enough for some $\delta<0$, we have using \eqref{eq:o_delta_condition_integral_mu1} and \eqref{eq:tau>0_mu1},
\begin{equation*}
 \delta\tau + o(\delta)<0
\end{equation*}
which contradicts \eqref{eq:mu_1_must_hold} which states that $\forall \delta$, $$ \delta\tau + o(\delta)\ge 0. $$
Therefore, \eqref{eq:tau>0_mu1} cannot be true.
A similar argument holds if $\tau<0$, and it cannot be true either.
So we must have,
\begin{equation*}
\tau =  \alpha - 2\int_Q \bigg(g(u_1)\alpha_2^{\mu_1^\ast+\delta}(\vec{u})+g(u_2)+ g(u_3) \bigg)d\vec{u} \stackrel{\mbox{\eqref{eq:mu1_simplifiedcondition}}}{=} \alpha - 1 - 2 \int_Q g(u_1) \beta_2^{\mu_1^\ast}(\vec{u}) d\vec{u} = 0.
\end{equation*}
Thus, we have proved that $\mu_1^\ast$ is the minimizer of $L(\vec{D}^\mu, \mu)$ of the dual problem \eqref{eq:dual2} w.r.t. $\mu_1$ only if $\alpha = 1 - 2 \int_Q g(u_1) \beta_2^{\mu_1^\ast}(\vec{u}) d\vec{u}$ i.e., \eqref{eq:mu1_optimality} is true.
\end{proof}

\subsection{\texorpdfstring{Minimizing the Lagrangian w.r.t. $\mu_2$}{Minimizing the Lagrangian w.r.t. mu2}}
\label{app:mu2_optimality}
Finally, we have the Lagrangian as in \eqref{eq:lagrangian5_general}, and our objective is to minimize it w.r.t. $\mu_2$.
In this case, we only need to consider the case  
\begin{equation}\label{eq:alpha_1_equals_1_mu2}
\alpha_1^{\mu}(\vec{u})=1,
\end{equation}
because if $\alpha_1^{\mu}(\vec{u})=0$, then
\begin{equation*}
L(\vec{D}^\mu, \mu) \stackrel{\mbox{\eqref{eq:lagrangian5_general}}}{=} \alpha(\mu_0+\mu_1+\mu_2)
\end{equation*}
therefore the minimimizer of $L(\vec{D}^\mu, \mu)$ w.r.t. $\mu_2$ can't be optimal until it is zero.
Thus, we assume $\alpha_1^{\mu}(\vec{u})=1$.
Plugging this in the Lagrangian \eqref{eq:lagrangian5_general} we have,
\begin{equation}\label{eq:lagrangian5_mu2}
\begin{aligned}
L(\vec{D}^\mu, \mu) \stackrel{\mbox{\eqref{eq:lagrangian5_general},\eqref{eq:alpha_1_equals_1_mu1}}}{=}&\alpha(\mu_0+\mu_1+\mu_2)\\
&+2\int_Q g(u_1) g(u_2) g(u_3) \biggl(1 + \alpha_2^{\mu}(\vec{u})+\alpha_2^{\mu}(\vec{u})\alpha_3^{\mu}(\vec{u}) \biggr)d\vec{u}\\
&-6\int_Q \mu_0d\vec{u} \\
&-2\int_Q \Bigl(g(u_1)\alpha_2^{\mu}(\vec{u})+g(u_2) + g(u_3)\Bigr)\mu_1d\vec{u}\\
& -2\int_Q \Bigl(g(u_{2})g(u_{3})+\alpha_{2}^{\mu}(\vec{u})g(u_{1})g(u_{3})+\alpha_{2}^{\mu}(\vec{u})\alpha_{3}^{\mu}(\vec{u})g(u_{1})g(u_{2})\Bigr)\mu_2 d\vec{u}.
\end{aligned}
\end{equation}
Given the form of the Lagrangian as presented in the equation above, we observe that if $\alpha_{2}^{\mu}(\vec{u}) = 0$, then the Lagrangian in \eqref{eq:lagrangian5_mu2} simplifies to the following:
{\small
\begin{equation}\label{eq:lagrangian5_mu2_alpha2_0}
\begin{aligned}
L(\vec{D}^\mu, \mu) \stackrel{\mbox{\eqref{eq:lagrangian5_mu2}}}{=}&\alpha(\mu_0+\mu_1+\mu_2) +2\int_Q g(u_1) g(u_2) g(u_3) d\vec{u} -6\int_Q \mu_0d\vec{u}\\
&-2\int_Q \Bigl(g(u_2) + g(u_3)\Bigr)\mu_1d\vec{u}  -2\int_Q g(u_{2})g(u_{3})\mu_2 d\vec{u} \\
=& \mu_2\bigg(\alpha-2\int_Q g(u_{2})g(u_{3}) d\vec{u} \bigg) \\
&+ \bigg(\alpha(\mu_0+\mu_1) +2\int_Q g(u_1) g(u_2) g(u_3) d\vec{u} -6\int_Q \mu_0d\vec{u} -2\int_Q \Bigl(g(u_2) + g(u_3)\Bigr)\mu_1d\vec{u} \bigg)\\
\end{aligned}
\end{equation}}
Observe that the second term in the above expression is constant with respect to $\mu_2$. 
We now analyze the first term, specifically the coefficient of $\mu_2$, which is given by $\alpha - 2\int_Q g(u_{2})g(u_{3}) d\vec{u}$. 
Here, $\alpha$ denotes the user-specified error threshold, while $2\int_Q g(u_{2})g(u_{3}) d\vec{u}$ is a fixed constant. 
Consequently, the coefficient $\alpha - 2\int_Q g(u_{2})g(u_{3}) d\vec{u}$ is itself a constant that may be either positive or negative.  
If this coefficient is positive, then the minimizer of the Lagrangian in \eqref{eq:lagrangian5_mu2_alpha2_0} with respect to $\mu_2$ cannot be optimal unless $\mu_2 = 0$. 
Conversely, if the coefficient is negative, the minimizer cannot be optimal unless $\mu_2 = \infty$.
Thus we assume 
\begin{equation}\label{eq:alpha_2_equals_1_mu2}
\alpha_2^{\mu}(\vec{u})=1.
\end{equation}
Plugging this in the Lagrangian \eqref{eq:lagrangian5_mu2} we have,
\begin{equation}\label{eq:lagrangian5_mu2_alpha2_1}
\begin{aligned}
L(\vec{D}^\mu, \mu) \stackrel{\mbox{\eqref{eq:lagrangian5_mu2},\eqref{eq:alpha_2_equals_1_mu2}}}{=}&\alpha(\mu_0+\mu_1+\mu_2)\\
&+2\int_Q g(u_1) g(u_2) g(u_3) \biggl(2+\alpha_3^{\mu}(\vec{u}) \biggr)d\vec{u}\\
&-6\int_Q \mu_0d\vec{u} \\
&-2\int_Q \Bigl(g(u_1) +g(u_2) + g(u_3)\Bigr)\mu_1d\vec{u}\\
& -2\int_Q \Bigl(g(u_{2})g(u_{3})+ g(u_{1})g(u_{3})+\alpha_{3}^{\mu}(\vec{u})g(u_{1})g(u_{2})\Bigr)\mu_2 d\vec{u}.
\end{aligned}
\end{equation}
Note that on the right-hand side of the above equation, only $\alpha_{3}^{\mu}(\vec{u})$ depends on $\mu_2$.
We will be working with this expression of $L(\vec{D}^\mu, \mu)$ for the remainder of this section.

Since we are minimizing $L(\vec{D}^\mu, \mu)$ w.r.t. $\mu_2$ we denote it as $L(\mu_2)$. 
Also since only $\alpha_{3}^{\mu}(\vec{u})$ depends on $\mu_2$ we denote $\alpha_3^{\mu}(\vec{u})$ only by $\alpha_3^{\mu_2}(\vec{u})$. 

Thus, we have that if $\mu_2^\ast$ is the minimizer, then for any $\delta\neq0$ we have $$L(\mu_2^\ast) \le L(\mu_2^\ast+\delta).$$
That is,
{\footnotesize
\begin{equation}\label{eq:mu2_min}
\begin{aligned}
0 &\le L(\mu_2^\ast+\delta)- L(\mu_2^\ast)\\
&\stackrel{\mbox{\eqref{eq:lagrangian5_mu2_alpha2_1}}}{=} \alpha\delta +2\int_Q g(u_1) g(u_2) g(u_3) \biggl(\alpha_3^{\mu_2^\ast+\delta}(\vec{u}) - \alpha_3^{\mu_2^\ast}(\vec{u}) \biggr)d\vec{u}\\
& -2\int_Q \Bigl(g(u_{2})g(u_{3})+ g(u_{1})g(u_{3})+\alpha_{3}^{\mu_2^\ast+\delta}(\vec{u})g(u_{1})g(u_{2})\Bigr)(\mu_2^\ast+\delta) d\vec{u} \\
& +2\int_Q \Bigl(g(u_{2})g(u_{3})+ g(u_{1})g(u_{3})+\alpha_{3}^{\mu_2^\ast}(\vec{u})g(u_{1})g(u_{2})\Bigr)\mu_2^\ast d\vec{u} \\
=& \alpha\delta -2\int_Q \Bigl(g(u_{2})g(u_{3})+ g(u_{1})g(u_{3})+\alpha_{3}^{\mu_2^\ast+\delta}(\vec{u})g(u_{1})g(u_{2})\Bigr)\delta d\vec{u} + 2\int_Q g(u_1) g(u_2) g(u_3) \biggl(\alpha_3^{\mu_2^\ast+\delta}(\vec{u}) - \alpha_3^{\mu_2^\ast}(\vec{u}) \biggr)d\vec{u}\\
& -2\int_Q \Bigl(g(u_{2})g(u_{3})+ g(u_{1})g(u_{3})+\alpha_{3}^{\mu_2^\ast+\delta}(\vec{u})g(u_{1})g(u_{2})\Bigr)\mu_2^\ast d\vec{u}  \\
&+2\int_Q \Bigl(g(u_{2})g(u_{3})+ g(u_{1})g(u_{3})+\alpha_{3}^{\mu_2^\ast}(\vec{u})g(u_{1})g(u_{2})\Bigr)\mu_2^\ast d\vec{u} \\
=&  \delta \bigg[\alpha -2\int_Q \Bigl(g(u_{2})g(u_{3})+ g(u_{1})g(u_{3})+\alpha_{3}^{\mu_2^\ast+\delta}(\vec{u})g(u_{1})g(u_{2})\Bigr) d\vec{u}\bigg] + 2\int_Q g(u_1) g(u_2) g(u_3) \biggl(\alpha_3^{\mu_2^\ast+\delta}(\vec{u}) - \alpha_3^{\mu_2^\ast}(\vec{u}) \biggr)d\vec{u}\\
& -2\int_Q \Bigl(\alpha_{3}^{\mu_2^\ast+\delta}(\vec{u})g(u_{1})g(u_{2}) - \alpha_{3}^{\mu_2^\ast}(\vec{u})g(u_{1})g(u_{2}\Bigr)\mu_2^\ast d\vec{u} \\
=&  \delta \bigg[\alpha -2\int_Q \Bigl(g(u_{2})g(u_{3})+ g(u_{1})g(u_{3})+\alpha_{3}^{\mu_2^\ast+\delta}(\vec{u})g(u_{1})g(u_{2})\Bigr) d\vec{u}\bigg] + 2\int_Q g(u_1) g(u_2) g(u_3) \biggl(\alpha_3^{\mu_2^\ast+\delta}(\vec{u}) - \alpha_3^{\mu_2^\ast}(\vec{u}) \biggr)d\vec{u}\\
& -2\int_Q g(u_{1})g(u_{2})\mu_2^\ast \bigg(\alpha_3^{\mu_2^\ast+\delta}(\vec{u}) - \alpha_3^{\mu_2^\ast}(\vec{u}) \bigg) d\vec{u} \\
=& \delta \bigg[\alpha -2\int_Q \Bigl(g(u_{2})g(u_{3})+ g(u_{1})g(u_{3})+\alpha_{3}^{\mu_2^\ast+\delta}(\vec{u})g(u_{1})g(u_{2})\Bigr) d\vec{u}\bigg] \\
&+ 2\int_Q \bigg[g(u_1) g(u_2) g(u_3) - g(u_{1})g(u_{2})\mu_2^\ast \bigg] \bigg[\alpha_3^{\mu_2^\ast+\delta}(\vec{u}) - \alpha_3^{\mu_2^\ast}(\vec{u}) \bigg]d\vec{u}
\end{aligned}
\end{equation}}

\begin{theorem}\label{thm:mu2_optimality}
Suppose Assumptions~\ref{as:assumption3},~\ref{as:assumption4}, and~\ref{as:assumption5} hold.  
Consider the dual formulation \eqref{eq:dual2} of the multiple hypotheses testing problem \eqref{eq:objconst}, and the associated Lagrangian $L(\vec{D}^\mu, \mu)$ as defined in \eqref{eq:lagrangian5_general}.  
Then, for fixed values of the other coordinates $\mu_0$ and $\mu_1$, the coordinate $\mu_2^\ast \in \mathbb{R}_+$ is a minimizer of $L(\vec{D}^\mu, \mu)$ with respect to $\mu_2$ only if
\begin{equation}
\label{eq:mu2_optimality}
\alpha = 2 \int_Q \alpha_3^{\mu_2^\ast}(\vec{u}) g(u_1)g(u_2) d\vec{u}.
\end{equation}
\end{theorem}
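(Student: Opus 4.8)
The plan is to follow the same three-stage template used for $\mu_0$ and $\mu_1$ in Theorems~\ref{thm:mu0_optimality} and~\ref{thm:mu1_optimality}. Starting from the one-sided perturbation inequality \eqref{eq:mu2_min}, I would split the increment $L(\mu_2^\ast+\delta)-L(\mu_2^\ast)$ into a linear term $\delta\,\tau_\delta$, where
\[
\tau_\delta=\alpha-2\int_Q\Bigl(g(u_2)g(u_3)+g(u_1)g(u_3)+\alpha_3^{\mu_2^\ast+\delta}(\vec u)\,g(u_1)g(u_2)\Bigr)d\vec u,
\]
and a remainder integral
\[
\mathcal R_\delta=2\int_Q g(u_1)g(u_2)\bigl(g(u_3)-\mu_2^\ast\bigr)\bigl[\alpha_3^{\mu_2^\ast+\delta}(\vec u)-\alpha_3^{\mu_2^\ast}(\vec u)\bigr]d\vec u .
\]
The goal is to show $\mathcal R_\delta=o(\delta)$; a contradiction argument (Proposition~\ref{prop:o_delta_condition_mu_0}, adapted exactly as in Proposition~\ref{prop:o_delta_condition_mu_1}) then forces the limiting coefficient $\tau_0=\lim_{\delta\to0}\tau_\delta$ to vanish, and a final algebraic reduction delivers \eqref{eq:mu2_optimality}.

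The first concrete step is to simplify $\alpha_3^\mu$. Dividing the defining inequality for $\beta_3^\mu$ in \eqref{eq:beta_mu_u} by $2g(u_1)g(u_2)>0$ (legitimate under Assumption~\ref{as:assumption4}) gives $\beta_3^\mu(\vec u)=\mathbbm 1\{g(u_3)\le\mu_2\}$, so $\alpha_3^\mu(\vec u)=\mathbbm 1\{g(u_3)>\mu_2\}$ depends on $\vec u$ only through $u_3$. Consequently, for $\delta>0$ the transition set
\[
A_\delta=\bigl\{\vec u\in Q:\alpha_3^{\mu_2^\ast+\delta}(\vec u)\neq\alpha_3^{\mu_2^\ast}(\vec u)\bigr\}
=\bigl\{\vec u\in Q:\mu_2^\ast<g(u_3)\le\mu_2^\ast+\delta\bigr\}
\]
constrains $g(u_3)$ to an interval of length $\delta$; by the lower Lipschitz bound (Assumption~\ref{as:assumption3}) this confines $u_3$ to an interval of length $O(\delta)$, while $u_1,u_2$ remain free, so $|A_\delta|=O(\delta)$.

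The key structural observation, and what makes this coordinate cleaner than $\mu_0$, is that the weight multiplying the indicator difference in $\mathcal R_\delta$ is exactly $g(u_1)g(u_2)\bigl(g(u_3)-\mu_2^\ast\bigr)$, which vanishes on the very threshold $g(u_3)=\mu_2^\ast$ that defines the jump of $\alpha_3$. On $A_\delta$ one has $0<g(u_3)-\mu_2^\ast\le\delta$, so using $g\le c_5$ (Assumption~\ref{as:assumption5}) the integrand is $O(\delta)$ pointwise; combined with $|A_\delta|=O(\delta)$ this yields
\[
|\mathcal R_\delta|\le 2c_5^{2}\,\delta\,|A_\delta|=O(\delta^2)=o(\delta).
\]
No multi-case split of the type needed for $\mu_0$ is required here, because the vanishing coefficient already annihilates the first-order contribution of the indicator jump.

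With $\mathcal R_\delta=o(\delta)$ established, the contradiction argument gives the stationarity identity $\alpha=2\int_Q\bigl(g(u_2)g(u_3)+g(u_1)g(u_3)+\alpha_3^{\mu_2^\ast}(\vec u)g(u_1)g(u_2)\bigr)d\vec u$, which I would then reduce to the compact form \eqref{eq:mu2_optimality} by an integral simplification over the ordered simplex entirely parallel to \eqref{eq:mu1_simplifiedcondition}, exploiting that each unordered pairwise product of densities integrates to a fixed constant over $Q$ by symmetry across the $3!$ orderings of the coordinates. I expect this last reduction to be the main obstacle: unlike the remainder estimate, it requires careful bookkeeping of which indicator ($\alpha_3$ versus its complement $\beta_3$) survives after the symmetrization, and of the constant contributed by the $g(u_2)g(u_3)$ and $g(u_1)g(u_3)$ terms, so that the first-order condition collapses precisely to the stated identity rather than to an equivalent but differently packaged expression.
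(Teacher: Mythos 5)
Your proposal reproduces the paper's proof of Theorem~\ref{thm:mu2_optimality} essentially step for step: the same decomposition of $L(\mu_2^\ast+\delta)-L(\mu_2^\ast)$ into a linear term and a remainder (the paper's \eqref{eq:mu2_min}), the same reduction of $\alpha_3^{\mu}$ to the one-dimensional threshold $\mathbbm{1}\{g(u_3)>\mu_2\}$ obtained by dividing the defining inequality in \eqref{eq:beta_mu_u} by $2g(u_1)g(u_2)>0$ (the paper's \eqref{eq:b3_diff}), the same $O(\delta^2)$ remainder bound pairing the $O(\delta)$-measure transition set with the weight $g(u_1)g(u_2)\bigl(g(u_3)-\mu_2^\ast\bigr)$, which is $O(\delta)$ there (the paper's \eqref{eq:beta3_diff_firstfactor_Odelta_result} and \eqref{eq:mu2_min_conjecture2_proven}), and the same contradiction argument, which the paper isolates as Proposition~\ref{prop:o_delta_condition_mu_2}. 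Your version of the remainder estimate is if anything slightly cleaner: you restrict the integral explicitly to $A_\delta$, where the pointwise bound $0<g(u_3)-\mu_2^\ast\le\delta$ genuinely holds, whereas the paper phrases the same estimate as a sup-norm bound that is only valid on the transition set; and your structural remark—that no case analysis is needed because the weight vanishes exactly at the jump threshold of $\alpha_3$—is precisely why the paper's $\mu_2$ proof is so much shorter than its $\mu_0$ counterpart.

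Your suspicion about the final algebraic reduction is well founded, and it marks the one place where this route does not land cleanly on the stated identity. Carrying out the symmetrization you describe gives $2\int_Q\bigl[g(u_2)g(u_3)+g(u_1)g(u_3)+g(u_1)g(u_2)\bigr]d\vec u=1$ (since $\int_0^1 g=1$), so the stationarity condition becomes $\alpha=1-2\int_Q\beta_3^{\mu_2^\ast}(\vec u)\,g(u_1)g(u_2)\,d\vec u$ — exactly where the paper's own simplification \eqref{eq:mu2_simplifiedcondition} stops, and the precise analogue of the $\mu_1$ condition \eqref{eq:mu1_optimality}. But this is \emph{not} algebraically the same as the stated form \eqref{eq:mu2_optimality}: writing $\alpha_3=1-\beta_3$, the two expressions differ by the constant $1-2\int_Q g(u_1)g(u_2)\,d\vec u = 1-\int_0^1 G(u)^2\,du>0$, where $G(u)=\int_0^u g$, because $g(u_1)g(u_2)$ is not symmetric and does not integrate to $\tfrac12$ over $Q$. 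So an honest execution of your plan terminates at the $\beta_3$-packaged identity; the conversion to \eqref{eq:mu2_optimality} asserted in the last line of Proposition~\ref{prop:o_delta_condition_mu_2} does not follow from the displayed algebra. In short, your analytic core is correct and matches the paper, and the final-step mismatch you anticipated is real — but it is a defect inherited from the paper's own write-up, not a gap specific to your argument.
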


\begin{proof}

We have for $\delta>0$ using \eqref{eq:beta_mu_u},
{\small
\begin{equation}\label{eq:b3_diff}
\begin{aligned}
&\beta_3^{\mu_2^\ast+\delta}(\vec{u})-\beta_3^{\mu_2^\ast}(\vec{u})\\
\stackrel{\mbox{\eqref{eq:beta_mu_u}}}{=}& \mathbbm{1}\biggl\{\mu,\vec{u}:2g(u_1) g(u_2) g(u_3) \le 2g(u_1)g(u_2)(\mu_2^\ast+\delta) \biggr\}-\mathbbm{1}\biggl\{\mu,\vec{u}:2g(u_1) g(u_2) g(u_3) \le 2g(u_1)g(u_2)\mu_2^\ast \biggr\} \\
=& \mathbbm{1}\biggl\{\mu,\vec{u}: 2g(u_1)g(u_2)\mu_2^\ast \le 2g(u_1) g(u_2) g(u_3) \le 2g(u_1)g(u_2)(\mu_2^\ast+\delta) \biggr\} \\
=& \mathbbm{1}\biggl\{\mu,\vec{u}: 0 \le 2g(u_1) g(u_2) g(u_3) - 2g(u_1)g(u_2)\mu_2^\ast \le 2g(u_1)g(u_2)\delta \biggr\}
\end{aligned}
\end{equation}}
From \eqref{eq:b3_diff} we have the following condition as true,
\begin{equation}\label{eq:beta3_diff_condition}
0 \le 2g(u_1) g(u_2) g(u_3) - 2g(u_1)g(u_2)\mu_2^\ast \le 2g(u_1)g(u_2)\delta.
\end{equation}

Fix $u_1$,$u_2$. 
Denote by $u_{3,\text{min}}$ and $u_{3,\text{max}}$ the minimum and maximum value of $u_3$. 
Then we have using \eqref{eq:beta3_diff_condition},
\begin{equation}\label{eq:b3_1st_inequality}
\begin{aligned}
& 0 \le 2g(u_1) g(u_2) g(u_3) - 2g(u_1)g(u_2)\mu_2^\ast \le 2g(u_1)g(u_2)\delta \\
\Rightarrow& 0 \le g(u_3) - \mu_2^\ast \le \delta \\
\Rightarrow& |g(u_{3,\text{min}})-g(u_{3,\text{max}})|\le \delta
\end{aligned}
\end{equation}

Then we have for $u_3$,
\begin{equation}\label{eq:u3_bound_mu2}
|u_3-u_3'|  \stackrel{\mbox{Assumption \ref{as:assumption3}}}{\le} \frac{1}{c_3} |g(u_3)-g(u_3')| \stackrel{\mbox{\eqref{eq:b3_1st_inequality}}}{\le} \frac{\delta}{c_3} 
\end{equation}

We have $Q= \{\vec{u}\colon 0 \le u_1 \le u_2 \le u_3 \le 1\}$.  
Define the set  
\begin{equation}
A_{\delta}
=
\bigl\{
\vec{u}\in Q: \beta_3^{\mu_2^\ast+\delta}(\vec{u})\neq 
\beta_3^{\mu_2^\ast}(\vec{u})
\bigr\}.
\end{equation}
From \eqref{eq:u3_bound_mu2}, for any fixed $u_1$ and $u_2$ values in the interval $[0,1]$, the possible $u_3$ value lies within a one-dimensional region whose measure is of order $\delta$. 
Since both $u_1$ and $u_2$ lie in $[0,1]$, the measure of the three-dimensional set $A_{\delta}$ is of order $\delta$.  
The difference  
\begin{equation}
\beta_3^{\mu_2^\ast+\delta}(\vec{u}) 
-
\beta_3^{\mu_2^\ast}(\vec{u}) 
\end{equation}
takes values in $\{0,1\}$. Hence, 
\begin{equation}
\Bigl\lvert
\beta_3^{\mu_2^\ast+\delta}(\vec{u})
-
\beta_3^{\mu_2^\ast}(\vec{u})
\Bigr\rvert
\le1
\quad
\text{for all } \vec{u}\in Q.
\end{equation}
Therefore, by definition of $A_{\delta}$,  
\begin{equation}
\left\lvert
\int_Q
\Bigl(
\beta_3^{\mu_2^\ast+\delta}(\vec{u})
-
\beta_3^{\mu_2^\ast}(\vec{u})
\Bigr)
d\vec{u}
\right\rvert
\le
\int_{A_{\delta}} 1d\vec{u}
=
\mathrm{measure}(A_{\delta})
=
O(\delta).
\end{equation}
Thus, we have that,
\begin{equation}\label{eq:b3_Odelta}
\int_Q
\Bigl(
\alpha_3^{\mu_2^\ast}(\vec{u})
-
\alpha_3^{\mu_2^\ast+\delta}(\vec{u})
\Bigr)
d\vec{u}
=
O(\delta)
\quad\Longleftrightarrow\quad
\int_Q
\Bigl(
\beta_3^{\mu_2^\ast+\delta}(\vec{u})
-
\beta_3^{\mu_2^\ast}(\vec{u})
\Bigr)
d\vec{u}
=
O(\delta).
\end{equation}

Next, we analyze the integral in the right hand side of \eqref{eq:mu2_min}:
\begin{equation}
\label{eq:mu2_min_conjecture2}
2\int_Q \bigg[g(u_1) g(u_2) g(u_3) - g(u_{1})g(u_{2})\mu_2^\ast \bigg] \bigg[\alpha_3^{\mu_2^\ast+\delta}(\vec{u}) - \alpha_3^{\mu_2^\ast}(\vec{u}) \bigg]d\vec{u}.
\end{equation}
We consider the first integrand of the integral of \eqref{eq:mu2_min_conjecture2}: 
From \eqref{eq:b3_Odelta}, we already know that  
\begin{equation*}
\int_Q
\Bigl(
\alpha_3^{\mu_2^\ast+\delta}(\vec{u})
-
\alpha_3^{\mu_2^\ast}(\vec{u})
\Bigr)
d\vec{u}
=
O(\delta).
\end{equation*}
If we can establish that the first integrand in \eqref{eq:mu2_min_conjecture2} is also $O(\delta)$, it follows that the entire integral is $o(\delta)$.
To show this, we begin with \eqref{eq:beta3_diff_condition}:
\begin{equation}\label{eq:beta3_diff_firstfactor_Odelta}
\begin{aligned}
& 0 \le 2g(u_1) g(u_2) g(u_3) - 2g(u_1)g(u_2)\mu_2^\ast \le 2g(u_1)g(u_2)\delta \\
\Leftrightarrow & 0 \le g(u_1) g(u_2) g(u_3) - g(u_1)g(u_2)\mu_2^\ast \le g(u_1)g(u_2)\delta \stackrel{\mbox{Assumption \ref{as:assumption5}}}{\le} c_5^2 \delta
\end{aligned}
\end{equation}

Thus we have using \eqref{eq:beta3_diff_firstfactor_Odelta}
\begin{equation}\label{eq:beta3_diff_firstfactor_Odelta_result}
\bigg[g(u_1) g(u_2) g(u_3) - g(u_{1})g(u_{2})\mu_2^\ast \bigg]=O(\delta).
\end{equation}
Thus have,
\begin{equation}\label{eq:mu2_min_conjecture2_proven}
\begin{aligned}
&\int_Q \bigg[g(u_1) g(u_2) g(u_3) - g(u_{1})g(u_{2})\mu_2^\ast \bigg] \bigg[\alpha_3^{\mu_2^\ast+\delta}(\vec{u}) - \alpha_3^{\mu_2^\ast}(\vec{u}) \bigg]d\vec{u}\\
\le & \|g(u_1) g(u_2) g(u_3) - g(u_{1})g(u_{2})\mu_2^\ast)\|_{\infty} \int_Q\bigg(\alpha_3^{\mu_2^\ast}(\vec{u})-\alpha_3^{\mu_2^\ast+\delta}(\vec{u})\bigg)d\vec{u}\\
\stackrel{\mbox{\eqref{eq:b3_Odelta},\eqref{eq:beta3_diff_firstfactor_Odelta_result}}}{=}& O(\delta^2)\\
=& o(\delta) .
\end{aligned}
\end{equation}
\eqref{eq:mu2_optimality} follows using proposition \ref{prop:o_delta_condition_mu_2}.

\eqref{eq:mu2_optimality} can be further simplified as:
\begin{equation}
\label{eq:mu2_simplifiedcondition}
\begin{aligned}
\alpha
&=2\!\int_{Q}\!
      \Bigl[g(u_{2})g(u_{3})
            +g(u_{1})g(u_{3})
            +\alpha_{3}^{\mu_2^\ast}(\vec u)g(u_{1})g(u_{2})\Bigr]d\vec u \\
&=2\!\int_{Q}\!
      \Bigl[g(u_{2})g(u_{3})
            +g(u_{1})g(u_{3})
            +\bigl(1-\beta_{3}^{\mu_2^\ast}(\vec u)\bigr)g(u_{1})g(u_{2})\Bigr]d\vec u \\
&=2\!\int_{Q}\!
      \Bigl[g(u_{2})g(u_{3})
            +g(u_{1})g(u_{3})
            +g(u_{1})g(u_{2})\Bigr]d\vec u
  -2\!\int_{Q}\!\beta_{3}^{\mu_2^\ast}(\vec u)g(u_{1})g(u_{2})d\vec u \\
\end{aligned}
\end{equation}
This completes the proof.
\end{proof}

\subsubsection{\texorpdfstring{Sufficient condition for characterizing the minimizer $\mu_2^\ast$}{Sufficient condition for characterizing the minimizer mu2*}}
The following proposition states a sufficient condition under which a simple integral identity fully characterizes the minimizer $\mu_2^\ast$ as derived in the above theorem and serves as a key stepping stone in that proof. 
This allows the proof of Theorem \ref{thm:mu2_optimality} to trace a single necessary equality in \eqref{eq:mu2_optimality}.
\begin{prop}
\label{prop:o_delta_condition_mu_2}
Let $\mu_2^\ast \in \mathbb{R}_+$ and fix $\mu_0, \mu_1$. Suppose
\begin{equation}\label{eq:o_delta_condition_integral_mu2}
\int_Q \bigg[g(u_1) g(u_2) g(u_3) - g(u_{1})g(u_{2})\mu_2^\ast \bigg] \bigg[\alpha_3^{\mu_2^\ast+\delta}(\vec{u}) - \alpha_3^{\mu_2^\ast}(\vec{u}) \bigg]d\vec{u} =o(\delta).
\end{equation}
Then \eqref{eq:mu2_optimality} holds.
\end{prop}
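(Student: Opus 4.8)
The plan is to prove Proposition~\ref{prop:o_delta_condition_mu_2} by mirroring exactly the structure of Proposition~\ref{prop:o_delta_condition_mu_0}, which was already established in full detail. The statement asserts that \emph{if} the integral perturbation term in \eqref{eq:o_delta_condition_integral_mu2} is $o(\delta)$, \emph{then} the optimality condition \eqref{eq:mu2_optimality} holds at any minimizer $\mu_2^\ast$. The whole argument hinges on the first-order expansion of $L(\mu_2^\ast+\delta)-L(\mu_2^\ast)$ derived in \eqref{eq:mu2_min}, so I would begin by recalling that expansion and isolating its two pieces: a linear-in-$\delta$ term carrying the coefficient $\tau = \alpha - 2\int_Q(g(u_2)g(u_3)+g(u_1)g(u_3)+\alpha_3^{\mu_2^\ast+\delta}(\vec u)g(u_1)g(u_2))\,d\vec u$, and the perturbation integral that the hypothesis assumes is $o(\delta)$.

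\textbf{Key steps in order.} First I would assume \eqref{eq:o_delta_condition_integral_mu2} and invoke \eqref{eq:mu2_min} to write, for every $\delta$,
\begin{equation*}
0 \le L(\mu_2^\ast+\delta)-L(\mu_2^\ast) = \delta\,\tau_\delta + o(\delta),
\end{equation*}
where $\tau_\delta$ denotes the bracketed coefficient of $\delta$. Second, I would note that as $\delta\to 0$ the set on which $\alpha_3^{\mu_2^\ast+\delta}$ differs from $\alpha_3^{\mu_2^\ast}$ has measure $O(\delta)$ by \eqref{eq:b3_Odelta}, so $\int_Q\alpha_3^{\mu_2^\ast+\delta}(\vec u)g(u_1)g(u_2)\,d\vec u \to \int_Q\alpha_3^{\mu_2^\ast}(\vec u)g(u_1)g(u_2)\,d\vec u$; hence $\tau_\delta \to \tau := \alpha - 2\int_Q(g(u_2)g(u_3)+g(u_1)g(u_3)+\alpha_3^{\mu_2^\ast}(\vec u)g(u_1)g(u_2))\,d\vec u$. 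Third, I would run the standard sign-contradiction argument: if $\tau>0$, then for small negative $\delta$ the dominant term $\delta\tau$ becomes strictly negative and swamps the $o(\delta)$ remainder, violating the minimality inequality $\delta\tau+o(\delta)\ge 0$; symmetrically $\tau<0$ is excluded by taking small positive $\delta$. Therefore $\tau=0$, which is precisely \eqref{eq:mu2_optimality} after rewriting $\alpha_3^{\mu_2^\ast}=1-\beta_3^{\mu_2^\ast}$ as already recorded in \eqref{eq:mu2_simplifiedcondition}.

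\textbf{Main obstacle.} The only delicate point is that the coefficient $\tau_\delta$ itself depends on $\delta$ through $\alpha_3^{\mu_2^\ast+\delta}$, so the expansion is not literally of the clean form $\delta\tau + o(\delta)$ with a \emph{fixed} $\tau$ until one absorbs the discrepancy $\delta\cdot\bigl(\tau_\delta-\tau\bigr) = -2\delta\int_Q(\alpha_3^{\mu_2^\ast+\delta}-\alpha_3^{\mu_2^\ast})g(u_1)g(u_2)\,d\vec u$ into the remainder. I expect this to be the crux: I would bound this discrepancy using \eqref{eq:b3_Odelta} together with Assumption~\ref{as:assumption5} (so that $g(u_1)g(u_2)\le c_5^2$), giving $\delta\cdot O(\delta)=O(\delta^2)=o(\delta)$, and fold it into the $o(\delta)$ term. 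Once this absorption is justified, the argument reduces to the elementary first-order necessary-condition reasoning and concludes exactly as in Proposition~\ref{prop:o_delta_condition_mu_0}.
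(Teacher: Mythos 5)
Your proposal follows essentially the same route as the paper's own proof: it invokes the expansion \eqref{eq:mu2_min}, absorbs the hypothesis \eqref{eq:o_delta_condition_integral_mu2} into the $o(\delta)$ remainder, runs the sign-contradiction argument on the coefficient of $\delta$ to force $\tau=0$, and identifies the result with \eqref{eq:mu2_optimality} via \eqref{eq:mu2_simplifiedcondition}. The one refinement you add — explicitly showing that the $\delta$-dependence of the coefficient through $\alpha_3^{\mu_2^\ast+\delta}$ contributes only $O(\delta^2)$ (via \eqref{eq:b3_Odelta} and Assumption~\ref{as:assumption5}) and can be folded into the remainder — is a point the paper's proof treats implicitly, so your version is, if anything, slightly more careful while remaining the same argument.
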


\begin{proof}
Suppose \eqref{eq:o_delta_condition_integral_mu2} holds.
We then demonstrate that then $\mu_2^\ast$ is the minimizer of $L(\vec{D}^\mu, \mu)$ of the dual problem \eqref{eq:dual2} w.r.t. $\mu_2$ only if $\alpha = 2 \int_Q \alpha_3^{\mu_2^\ast}(\vec{u}) g(u_1)g(u_2) d\vec{u}$.

From \eqref{eq:mu2_min} we have that if $\mu_2^\ast$ is the minimizer,
\begin{equation}\label{eq:mu_2_must_hold}
\begin{aligned}
0&\le \delta \bigg[\alpha -2\int_Q \Bigl(g(u_{2})g(u_{3})+ g(u_{1})g(u_{3})+\alpha_{3}^{\mu_2^\ast+\delta}(\vec{u})g(u_{1})g(u_{2})\Bigr) d\vec{u}\bigg] \\
&+ 2\int_Q \bigg[g(u_1) g(u_2) g(u_3) - g(u_{1})g(u_{2})\mu_2^\ast \bigg] \bigg[\alpha_3^{\mu_2^\ast+\delta}(\vec{u}) - \alpha_3^{\mu_2^\ast}(\vec{u}) \bigg]d\vec{u} \\
&=\delta \tau + o(\delta),\ \forall \delta 
\end{aligned}
\end{equation}
where, 
\begin{equation*}
\tau=\alpha -2\int_Q \Bigl(g(u_{2})g(u_{3})+ g(u_{1})g(u_{3})+\alpha_{3}^{\mu_2^\ast+\delta}(\vec{u})g(u_{1})g(u_{2})\Bigr) d\vec{u}
\end{equation*} 
and using \eqref{eq:o_delta_condition_integral_mu2}, 
\begin{equation*}
\int_Q \bigg[g(u_1) g(u_2) g(u_3) - g(u_{1})g(u_{2})\mu_2^\ast \bigg] \bigg[\alpha_3^{\mu_2^\ast+\delta}(\vec{u}) - \alpha_3^{\mu_2^\ast}(\vec{u}) \bigg]d\vec{u}=o(\delta)
\end{equation*} 

We use a contradiction argument to prove that the condition $$\alpha = 2\int_Q \Bigl(g(u_{2})g(u_{3})+ g(u_{1})g(u_{3})+\alpha_{3}^{\mu_2^\ast+\delta}(\vec{u})g(u_{1})g(u_{2}) \bigg)d\vec{u}$$ must hold.
Suppose \begin{equation}
\label{eq:tau>0_mu2}
\tau =\alpha -2\int_Q \Bigl(g(u_{2})g(u_{3})+ g(u_{1})g(u_{3})+\alpha_{3}^{\mu_2^\ast+\delta}(\vec{u})g(u_{1})g(u_{2}) > 0.
\end{equation}
Then choosing $\delta$ small enough for some $\delta<0$, we have using \eqref{eq:o_delta_condition_integral_mu2} and \eqref{eq:tau>0_mu2},
\begin{equation*}
 \delta\tau + o(\delta)<0
\end{equation*}
which contradicts \eqref{eq:mu_2_must_hold} which states that $\forall \delta$, $$ \delta\tau + o(\delta)\ge 0. $$
Therefore, \eqref{eq:tau>0_mu2} cannot be true.
A similar argument holds if $\tau<0$, and it cannot be true either.
So we must have,
{\small
\begin{equation*}
\tau =  \alpha - 2\int_Q \Bigl(g(u_{2})g(u_{3})+ g(u_{1})g(u_{3})+\alpha_{3}^{\mu_2^\ast+\delta}(\vec{u})g(u_{1})g(u_{2}) \bigg)d\vec{u} \stackrel{\mbox{\eqref{eq:mu2_simplifiedcondition}}}{=} \alpha - 2 \int_Q \alpha_3^{\mu_2^\ast}(\vec{u}) g(u_1)g(u_2) d\vec{u} = 0.
\end{equation*}}
Thus, we have proved that $\mu_2^\ast$ is the minimizer of $L(\vec{D}^\mu, \mu)$ of the dual problem \eqref{eq:dual2} w.r.t. $\mu_2$ only if $\alpha = 2 \int_Q \alpha_3^{\mu_2^\ast}(\vec{u}) g(u_1)g(u_2) d\vec{u}$ i.e., \eqref{eq:mu2_optimality} is true.
\end{proof}

\subsection{\texorpdfstring{Optimality of the decision policy $\vec{D}^{\mu^*}$}{Optimality of the decision policy D(mu*)}}
\label{sec:d_mu_optimality}
In this subsection, we establish that the decision policy $\vec{D}^{\mu^*}$, parameterized by the Lagrange multipliers satisfying the conditions in Theorem \ref{thm:mu_optimality_combined}, is indeed the global solution to the primal problem \eqref{eq:objconst}.
The proof follows an argument analogous to Proposition \ref{prop:optmu}, demonstrating that the computed $\mu^*$ closes the duality gap.

We have obtained the Lagrange multipliers $\mu^*$ via Theorem \ref{thm:mu_optimality_combined}, which provides the necessary optimality conditions for minimizing the dual function $L(\vec{D}^\mu, \mu)$ as defined in \eqref{eq:lagrangian5_general}.
Note that from Lemma \ref{lemma:optimization_problem_k_3}, we are solving the stated optimization problem \eqref{eq:objconst}.
Let $k^\ast$ denote the optimal value of the primal optimization problem \eqref{eq:objconst}.
From the weak duality principle, the dual function provides an upper bound on the primal optimal value for any feasible dual vector.
Thus clearly we have the following:
\begin{equation}
\label{eq:k_less}
k^\ast \le L(\vec{D}^\mu,\mu) \quad \forall \mu \succeq 0.
\end{equation}
Specifically, this holds for our minimizer $\mu^\ast$, so $k^\ast \le L(\vec{D}^{\mu^\ast}, \mu^\ast)$.

To establish the reverse inequality, we examine the value of the Lagrangian at $\mu^\ast$.
Using \eqref{eq:dimus2} and \eqref{eq:lagrangian5_general}, it can be shown that:
\begin{equation}\label{eq:k_greater}
\begin{aligned}
& L(\vec{D}^\mu, \mu) \stackrel{\mbox{\eqref{eq:lagrangian5_general},\eqref{eq:dimus2}}}{=} \int_{Q}\left(\sum_{i=1}^{K} a_{i}(\vec{u}) D_{i}^{\mu^{*}}(\vec{u})\right) d \vec{u} \\
\Rightarrow & L(\vec D^{\mu^\ast},\mu^\ast) \le k^\ast
\end{aligned}
\end{equation}
where the coefficients $a_{i}(\vec{u})$ are determined by the choice of the power function, in this case, the average power $\Pi_3$, and can be identified by comparing \eqref{eq:lagrangian5_general} and \eqref{eq:dimus2}.
Equation \eqref{eq:k_greater} holds because $D^{\mu^\ast}$ is a feasible solution of \eqref{eq:objconst}; the optimality conditions \eqref{eq:mu_0_optimality_condition}, \eqref{eq:mu_1_optimality_condition}, and \eqref{eq:mu_2_optimality_condition} derived in Theorem \ref{thm:mu_optimality_combined} imply that the FWER constraints of \eqref{eq:objconst} are satisfied.
Thus from \eqref{eq:k_less} and \eqref{eq:k_greater} we have that $k^\ast=L(\vec D^{\mu^\ast},\mu^\ast)$, confirming that strong duality holds and that $\vec{D}^{\mu^\ast}$ is indeed the optimal decision policy.

\section{Monotonicity Analysis of the Dual Constraint Mappings: Proofs of Lemma \ref{lem:monotonic_optimality}, \ref{lem:alpha1_mu12_monotone}, \ref{lem:beta2_mu02_monotone}, and \ref{lem:alpha3_mu01_monotone}}
In this section, we collect the monotonicity properties that underlie our dual optimality conditions and the convergence analysis of Algorithm~\ref{alg:compute_optimal_mu_K3_main}. 
Section~\ref{app:monotonic_optimality} proves Lemma~\ref{lem:monotonic_optimality}, showing that each of the three constraint maps in~\eqref{eq:f_gamma} is non-increasing in its own coordinate; this structure is crucial for the solvability of the optimality system and for ensuring that the coordinatewise root-finding steps in Algorithm~\ref{alg:compute_optimal_mu_K3_main} are well posed. 
Section~\ref{app:alpha1_mu12_monotone} establishes Lemma~\ref{lem:alpha1_mu12_monotone}, a cross-coordinate monotonicity property of \(\alpha_1^{\mu}\) in \(\mu_1\) and \(\mu_2\), which allows us to control how the first constraint responds to changes in the remaining multipliers. Subsection~\ref{app:beta2_mu02_monotone} proves Lemma~\ref{lem:beta2_mu02_monotone}, describing how \(\beta_2^{\mu}\) varies with \(\mu_0\) and \(\mu_2\) and thereby underpinning the monotone behaviour of the second constraint. 
Finally, Section~\ref{app:alpha3_mu01_monotone} establishes Lemma~\ref{lem:alpha3_mu01_monotone}, a corresponding monotonicity result for \(\alpha_3^{\mu}\) in \(\mu_0\) and \(\mu_1\), completing the set of properties needed to deduce uniqueness of the dual minimiser \(\vec{\mu}^{*}\) and to support the contraction-based convergence guarantee in Lemma~\ref{lem:algo_convergence}.

\subsection{Proof of Lemma \ref{lem:monotonic_optimality}}
\label{app:monotonic_optimality}
\begin{proof}
We establish monotonicity separately for each coordinate $\mu_i$ ($i = 0,1,2$), treating them in turn.
Recall from Lemma \ref{lem:opt_decision_mu}, the function $R_i(\mu, \vec{u})$ is given by
\begin{equation}
\label{eq:rimu_mapping}
R_i(\mu, \vec{u}) = a_i(\vec{u}) - \sum_{l=0}^{2} \mu_l b_{l,i}(\vec{u}), \quad i = 1, 2, 3.
\end{equation}

\begin{enumerate}
\item We first prove \eqref{eq:mu0_mapping}.
Fix $\mu_1, \mu_2$ and consider two values $\mu_0' > \mu_0$. Define $\mu = (\mu_0, \mu_1, \mu_2)$ and $\mu' = (\mu_0', \mu_1, \mu_2)$. 
Then, by linearity of \eqref{eq:rimu_mapping},
\begin{equation*}
R_i(\mu', \vec{u}) = R_i(\mu, \vec{u}) - (\mu_0' - \mu_0) b_{0,i}(\vec{u}) \le R_i(\mu, \vec{u}),
\end{equation*}
since $b_{0,i}(\vec{u}) \ge 0$ for all $i$ and $\vec{u}$ from \eqref{eq:non_negative_ai_bli}.
Thus the function $R_i(\mu, \vec{u})$ non-increasing in $\mu_0$ for $i=1,2,3$.

Define three cumulative sums:
\begin{equation*}
S_1(\mu, \vec{u}) = R_1(\mu, \vec{u}), \quad
S_2(\mu, \vec{u}) = R_1(\mu, \vec{u}) + R_2(\mu, \vec{u}), \quad
S_3(\mu, \vec{u}) = R_1(\mu, \vec{u}) + R_2(\mu, \vec{u}) + R_3(\mu, \vec{u}).
\end{equation*}

The indicator function $\alpha_1^{\mu}(\vec{u})$ is defined in Lemma \ref{lem:opt_decision_mu} as
\begin{equation*}
\alpha_1^{\mu}(\vec{u}) = \mathbbm{1}\left\{S_1(\mu, \vec{u}) > 0  \cup  S_2(\mu, \vec{u}) > 0  \cup  S_3(\mu, \vec{u}) > 0\right\}.
\end{equation*}
Since each $S_k$ is non-increasing in $\mu_0$, the sets
\begin{equation*}
\left\{\vec{u} \in Q : S_k(\mu, \vec{u}) > 0 \right\}
\end{equation*}
are nested and non-increasing in $\mu_0$ in the sense of set inclusion. 
That is, for $\mu_0' > \mu_0$, we have
\begin{equation*}
\left\{S_k(\mu', \vec{u}) > 0 \right\} \subseteq \left\{S_k(\mu, \vec{u}) > 0 \right\} \quad \text{for } k = 1, 2, 3.
\end{equation*}
Consequently, the union of these sets, which defines the indicator $$\alpha_1^{(\mu_0, \mu_1, \mu_2)}(\vec{u}) = \mathbbm{1}\{\cup_{k=1}^{3} {S_k(\mu, \vec{u}) > 0}\},$$ is also non-increasing in $\mu_0$ for each $\vec{u}$. 
Therefore, $\alpha_1^{(\mu_0, \mu_1, \mu_2)}(\vec{u})$ is pointwise non-increasing in $\mu_0$.

Integrating both sides over $Q$, we obtain
\begin{equation*}
\int_Q \alpha_1^{(\mu_0', \mu_1, \mu_2)}(\vec{u}) d\vec{u} \le \int_Q \alpha_1^{(\mu_0, \mu_1, \mu_2)}(\vec{u}) d\vec{u},
\end{equation*}
which proves that the mapping \eqref{eq:mu0_mapping} is non-increasing in $\mu_0$.

\item We now prove \eqref{eq:mu1_mapping}.
Note that the right-hand side of this mapping can be rewritten as
\begin{equation}
\label{eq:mu1_map_rewritten}
1 - 2 \int_Q \beta_2^{(\mu_0,\mu_1,\mu_2)}g(u_1)(\vec{u}) d\vec{u}= 1 - 2 \int_Q g(u_1)(\vec{u}) d\vec{u} + 2 \int_Q \alpha_2^{(\mu_0,\mu_1,\mu_2)}g(u_1)(\vec{u}) d\vec{u}
\end{equation}

Fix $\mu_0, \mu_2$ and consider two values $\mu_1' > \mu_1$. Define $\mu = (\mu_0, \mu_1, \mu_2)$ and $\mu' = (\mu_0, \mu_1', \mu_2)$. 
Then, by linearity of \eqref{eq:rimu_mapping},
\begin{equation*}
R_i(\mu', \vec{u}) = R_i(\mu, \vec{u}) - (\mu_1' - \mu_1) b_{1,i}(\vec{u}) \le R_i(\mu, \vec{u}),
\end{equation*}
since $b_{1,i}(\vec{u}) \ge 0$ for all $i$ and $\vec{u}$ from \eqref{eq:non_negative_ai_bli}.
Thus the function $R_i(\mu, \vec{u})$ non-increasing in $\mu_1$ for $i=1,2,3$.

Now, define the cumulative sums:
\begin{equation*}
S_2(\mu, \vec{u}) = R_2(\mu, \vec{u}), \quad
S_3(\mu, \vec{u}) = R_2(\mu, \vec{u}) + R_3(\mu, \vec{u}).
\end{equation*}

The indicator function $\alpha_2^{\mu}(\vec{u})$ is defined in Lemma \ref{lem:opt_decision_mu} as
\begin{equation*}
\alpha_2^{\mu}(\vec{u}) = \mathbbm{1}\left\{S_2(\mu, \vec{u}) > 0  \cup  S_3(\mu, \vec{u}) > 0 \right\}.
\end{equation*}
Since each $S_k$ in this case is non-increasing in $\mu_1$, the sets
\begin{equation*}
\left\{\vec{u} \in Q : S_k(\mu, \vec{u}) > 0 \right\}
\end{equation*}
are nested and non-increasing in $\mu_1$ in the sense of set inclusion. 
That is, for $\mu_1' > \mu_1$, we have
\begin{equation*}
\left\{S_k(\mu', \vec{u}) > 0 \right\} \subseteq \left\{S_k(\mu, \vec{u}) > 0 \right\} \quad \text{for } k = 2, 3.
\end{equation*}
Consequently, the union of these sets, which defines the indicator $$\alpha_2^{(\mu_0, \mu_1, \mu_2)}(\vec{u}) = \mathbbm{1}\{\cup_{k=2}^{3} {S_k(\mu, \vec{u}) > 0}\},$$ is also non-increasing in $\mu_1$ for each $\vec{u}$. 
Therefore, $\alpha_2^{(\mu_0, \mu_1, \mu_2)}(\vec{u})$ is pointwise non-increasing in $\mu_1$.

Integrating both sides over $Q$, we obtain
\begin{equation*}
\int_Q \alpha_2^{(\mu_0, \mu_1', \mu_2)}(\vec{u}) d\vec{u} \le \int_Q \alpha_2^{(\mu_0, \mu_1, \mu_2)}(\vec{u}) d\vec{u},
\end{equation*}
which proves that \eqref{eq:mu1_map_rewritten} and consequently the mapping \eqref{eq:mu1_mapping} is non-increasing in $\mu_1$.

\item Finally we prove \eqref{eq:mu2_mapping}.
Fix $\mu_0, \mu_1$ and consider two values $\mu_2' > \mu_2$. Define $\mu = (\mu_0, \mu_1, \mu_2)$ and $\mu' = (\mu_0, \mu_1, \mu_2')$. 
Then, by linearity of \eqref{eq:rimu_mapping},
\begin{equation*}
R_i(\mu', \vec{u}) = R_i(\mu, \vec{u}) - (\mu_2' - \mu_2) b_{2,i}(\vec{u}) \le R_i(\mu, \vec{u}),
\end{equation*}
since $b_{2,i}(\vec{u}) \ge 0$ for all $i$ and $\vec{u}$ from \eqref{eq:non_negative_ai_bli}.
Thus the function $R_i(\mu, \vec{u})$ non-increasing in $\mu_2$ for $i=1,2,3$.

The indicator function $\alpha_3^{\mu}(\vec{u})$ is defined in Lemma \ref{lem:opt_decision_mu} as
\begin{equation*}
\alpha_3^{\mu}(\vec{u}) = \mathbbm{1}\left\{R_3(\mu, \vec{u})>0\right\}.
\end{equation*}
Since $R_3(\mu, \vec{u})$ is non-increasing in $\mu_2$, the sets
\begin{equation*}
\left\{\vec{u} \in Q : R_3(\mu, \vec{u}) > 0 \right\}
\end{equation*}
are nested and non-increasing in $\mu_2$ in the sense of set inclusion. 
That is, for $\mu_2' > \mu_2$, we have
\begin{equation*}
\left\{R_3(\mu', \vec{u}) > 0 \right\} \subseteq \left\{R_3(\mu, \vec{u}) > 0 \right\}.
\end{equation*}
Consequently, the indicator $\alpha_3^{(\mu_0, \mu_1, \mu_2)}(\vec{u}) = \mathbbm{1}\{R_3(\mu, \vec{u})\}$, is also non-increasing in $\mu_2$ for each $\vec{u}$. 
Therefore, $\alpha_3^{(\mu_0, \mu_1, \mu_2)}(\vec{u})$ is pointwise non-increasing in $\mu_2$.

Integrating both sides over $Q$, we obtain
\begin{equation*}
\int_Q \alpha_3^{(\mu_0, \mu_1, \mu_2')}(\vec{u}) d\vec{u} \le \int_Q \alpha_3^{(\mu_0, \mu_1, \mu_2)}(\vec{u}) d\vec{u},
\end{equation*}
which proves that the mapping \eqref{eq:mu2_mapping} is non-increasing in $\mu_2$.

\end{enumerate}

This completes the proof.
\end{proof}

\subsection{Proof of Lemma \ref{lem:alpha1_mu12_monotone}}
\label{app:alpha1_mu12_monotone}

\begin{proof}
Recall from Lemma~\ref{lem:opt_decision_mu} that  
\(
R_i(\mu,\vec u)=a_i(\vec u)-\sum_{l=0}^{2}\mu_l b_{l,i}(\vec u),
\ i=1,2,3.
\)  
We treat the coordinates \(\mu_{1}\) and \(\mu_{2}\) in turn.

\begin{enumerate}
\item[\textbf{(i)}] \emph{Monotonicity in \(\mu_{1}\).}  
Fix \(\mu_{0},\mu_{2}\) and choose \(\mu_{1}'<\mu_{1}''\).  
Set \(\mu=(\mu_{0},\mu_{1}',\mu_{2})\) and \(\mu'=(\mu_{0},\mu_{1}'',\mu_{2})\).  
By linearity of \eqref{eq:rimu_mapping},
\[
R_i(\mu',\vec u)=R_i(\mu,\vec u)-(\mu_{1}''-\mu_{1}')\,b_{1,i}(\vec u)\le R_i(\mu,\vec u),
\]
since $b_{0,i}(\vec{u}) \ge 0$ for all $i$ and $\vec{u}$ from \eqref{eq:non_negative_ai_bli}. 
Define the cumulative sums.
\[
S_1(\mu,\vec u)=R_1(\mu,\vec u),\;
S_2(\mu,\vec u)=R_1(\mu,\vec u)+R_2(\mu,\vec u),\;
S_3(\mu,\vec u)=R_1(\mu,\vec u)+R_2(\mu,\vec u)+R_3(\mu,\vec u).
\]
Because each \(S_k\) is non-increasing in \(\mu_{1}\), the sets
\(
\{\vec u\in Q:S_k(\mu',\vec u)>0\}
\)
are contained in
\(
\{\vec u\in Q:S_k(\mu,\vec u)>0\}
\) for \(k=1,2,3\).
Thus
\[
\alpha_{1}^{(\mu_{0},\mu_{1}'',\mu_{2})}(\vec u)
 =\mathbbm 1\!\bigl\{\cup_{k=1}^{3}\{S_k(\mu',\vec u)>0\}\bigr\}
\le
\mathbbm 1\!\bigl\{\cup_{k=1}^{3}\{S_k(\mu,\vec u)>0\}\bigr\}
 =\alpha_{1}^{(\mu_{0},\mu_{1}',\mu_{2})}(\vec u)
\]
for every \(\vec u\).
Integrating gives
\[
\int_{Q}\alpha_{1}^{(\mu_{0},\mu_{1}'',\mu_{2})}(\vec u)\,d\vec u
\;\le\;
\int_{Q}\alpha_{1}^{(\mu_{0},\mu_{1}',\mu_{2})}(\vec u)\,d\vec u,
\]
so the map in \(\mu_{1}\) is non-increasing.

\item[\textbf{(ii)}] \emph{Monotonicity in \(\mu_{2}\).}  
Fix \(\mu_{0},\mu_{1}\) and choose \(\mu_{2}'<\mu_{2}''\).  
Set \(\mu=(\mu_{0},\mu_{1},\mu_{2}')\) and \(\mu'=(\mu_{0},\mu_{1},\mu_{2}'')\).  
Linearity yields
\[
R_i(\mu',\vec u)=R_i(\mu,\vec u)-(\mu_{2}''-\mu_{2}')\,b_{2,i}(\vec u)\le R_i(\mu,\vec u),
\]
since \(b_{2,i}(\vec u)\ge0\).  
The same cumulative sums \(S_1,S_2,S_3\) are therefore non-increasing in \(\mu_{2}\), giving
\[
\alpha_{1}^{(\mu_{0},\mu_{1},\mu_{2}'')}(\vec u)
\le
\alpha_{1}^{(\mu_{0},\mu_{1},\mu_{2}') }(\vec u)
\quad\text{for all }\vec u\in Q.
\]
Integrating establishes
\[
\int_{Q}\alpha_{1}^{(\mu_{0},\mu_{1},\mu_{2}'')}\!(\vec u)\,d\vec u
\le
\int_{Q}\alpha_{1}^{(\mu_{0},\mu_{1},\mu_{2}')}\!(\vec u)\,d\vec u,
\]
so the map in \(\mu_{2}\) is also non-increasing.
\end{enumerate}
\end{proof}

\subsection{Proof of Lemma \ref{lem:beta2_mu02_monotone}}
\label{app:beta2_mu02_monotone}

\begin{proof}
Recall the linear form  
\begin{equation}
\label{eq:rimu_mapping_beta}
R_i(\mu,\vec u)=a_i(\vec u)-\sum_{l=0}^{2}\mu_l\,b_{l,i}(\vec u),
\qquad i=1,2,3,
\end{equation}
with \(b_{l,i}(\vec u)\ge0\).  
The indicator \(\beta_{2}^{\mu}(\vec u)\) can be written as  
\(
\beta_{2}^{\mu}(\vec u)=1-\alpha_{2}^{\mu}(\vec u)
\)
where  
\(
\alpha_{2}^{\mu}(\vec u)=\mathbbm 1\{R_{2}>0\ \cup\ R_{2}+R_{3}>0\}.
\)
Thus
\[
1-2\!\int_{Q}\beta_{2}^{\mu}\,g(u_{1})\,d\vec u
\;=\;
1-2\!\int_{Q}g(u_{1})\,d\vec u
\;+\;
2\!\int_{Q}\alpha_{2}^{\mu}\,g(u_{1})\,d\vec u,
\]
So monotonicity of the whole mapping follows from the monotonicity of  
\(\displaystyle\int_{Q}\alpha_{2}^{\mu}g(u_{1})\,d\vec u\).

\smallskip
\noindent\textbf{(i) Dependence on \(\mu_{0}\).}  
Fix \(\mu_{1},\mu_{2}\) and take \(\mu_{0}'<\mu_{0}''\).  
Let \(\mu=(\mu_{0}',\mu_{1},\mu_{2})\) and \(\mu'=(\mu_{0}'',\mu_{1},\mu_{2})\).
By \eqref{eq:rimu_mapping_beta},
\[
R_i(\mu',\vec u)=R_i(\mu,\vec u)-(\mu_{0}''-\mu_{0}')\,b_{0,i}(\vec u)\le R_i(\mu,\vec u),
\]
so each \(R_i\) is non-increasing in \(\mu_{0}\).  
Hence the sets
\(\{\vec u:R_{2}(\mu',\vec u)>0\}\) and
\(\{\vec u:R_{2}(\mu',\vec u)+R_{3}(\mu',\vec u)>0\}\)
are contained in the corresponding sets for \(\mu\), implying  
\(
\alpha_{2}^{\mu'}(\vec u)\le\alpha_{2}^{\mu}(\vec u)
\)
for all \(\vec u\).  
Because \(g(u_{1})\ge0\),
\[
\int_{Q}\alpha_{2}^{\mu'}\,g(u_{1})\,d\vec u
\;\le\;
\int_{Q}\alpha_{2}^{\mu}\,g(u_{1})\,d\vec u,
\]
so the mapping is non-increasing in \(\mu_{0}\).

\smallskip
\noindent\textbf{(ii) Dependence on \(\mu_{2}\).}  
Fix \(\mu_{0},\mu_{1}\) and take \(\mu_{2}'<\mu_{2}''\).  
Let \(\mu=(\mu_{0},\mu_{1},\mu_{2}')\) and \(\mu'=(\mu_{0},\mu_{1},\mu_{2}'')\).  
Using \(b_{2,i}(\vec u)\ge0\),
\[
R_i(\mu',\vec u)=R_i(\mu,\vec u)-(\mu_{2}''-\mu_{2}')\,b_{2,i}(\vec u)\le R_i(\mu,\vec u),
\]
so \(R_i\) is non-increasing in \(\mu_{2}\) and the same nesting argument gives  
\(
\alpha_{2}^{\mu'}(\vec u)\le\alpha_{2}^{\mu}(\vec u)
\)
pointwise.  
Integrating with the positive weight \(g(u_{1})\) yields the desired inequality, establishing that the mapping is non-increasing in \(\mu_{2}\).
\end{proof}

\subsection{Proof of Lemma \ref{lem:alpha3_mu01_monotone}}
\label{app:alpha3_mu01_monotone}

\begin{proof}
Recall  
\(
R_i(\mu,\vec u)=a_i(\vec u)-\sum_{l=0}^{2}\mu_l b_{l,i}(\vec u)
\)  
with \(b_{l,i}(\vec u)\ge0\) and that  
\(
\alpha_{3}^{\mu}(\vec u)=\mathbbm 1\{R_3(\mu,\vec u)>0\}.
\)

\begin{enumerate}
\item[\textbf{(i)}] \emph{Monotonicity in \(\mu_{0}\).}  
Fix \(\mu_{1},\mu_{2}\) and choose \(\mu_{0}'<\mu_{0}''\).  
Put \(\mu=(\mu_{0}',\mu_{1},\mu_{2})\) and \(\mu'=(\mu_{0}'',\mu_{1},\mu_{2})\).  
Linearity gives
\[
R_3(\mu',\vec u)=R_3(\mu,\vec u)-(\mu_{0}''-\mu_{0}')\,b_{0,3}(\vec u)\le R_3(\mu,\vec u),
\]
so \(R_3\) is non-increasing in \(\mu_{0}\).  
Hence  
\(
\{R_3(\mu',\vec u)>0\}\subseteq\{R_3(\mu,\vec u)>0\}
\)
and
\(
\alpha_{3}^{\mu'}(\vec u)\le\alpha_{3}^{\mu}(\vec u)
\) pointwise.  
Multiplying by \(g(u_{1})g(u_{2})\ge0\) and integrating yields
\[
\int_{Q}\alpha_{3}^{(\mu_{0}'',\mu_{1},\mu_{2})}g(u_{1})g(u_{2})\,d\vec u
\;\le\;
\int_{Q}\alpha_{3}^{(\mu_{0}',\mu_{1},\mu_{2})}g(u_{1})g(u_{2})\,d\vec u,
\]
so the mapping is non-increasing in \(\mu_{0}\).

\item[\textbf{(ii)}] \emph{Monotonicity in \(\mu_{1}\).}  
Fix \(\mu_{0},\mu_{2}\) and take \(\mu_{1}'<\mu_{1}''\).
Set \(\mu=(\mu_{0},\mu_{1}',\mu_{2})\) and \(\mu'=(\mu_{0},\mu_{1}'',\mu_{2})\).  
Since \(b_{1,3}(\vec u)\ge0\),
\[
R_3(\mu',\vec u)=R_3(\mu,\vec u)-(\mu_{1}''-\mu_{1}')\,b_{1,3}(\vec u)
\le R_3(\mu,\vec u),
\]
so \(R_3\) is non-increasing in \(\mu_{1}\).  
The same set-inclusion argument gives
\(
\alpha_{3}^{(\mu_{0},\mu_{1}'',\mu_{2})}(\vec u)
\le
\alpha_{3}^{(\mu_{0},\mu_{1}',\mu_{2})}(\vec u)
\)
for every \(\vec u\).  
Integrating with the weight \(g(u_{1})g(u_{2})\) shows the mapping is non-increasing in \(\mu_{1}\).
\end{enumerate}
\end{proof}

\section{Convergence Analysis of Algorithm \ref{alg:compute_optimal_mu_K3_main}}
\label{sec:convergence}
In this section we analyse the convergence of our coordinate--descent procedure for computing the optimal Lagrange multipliers $\vec{\mu}^{*}$.
Recall that Theorem~\ref{thm:mu_optimality_combined} characterises $\vec{\mu}^{*}$ as the unique solution of the optimality system~\eqref{eq:f_gamma}, and Algorithm~\ref{alg:compute_optimal_mu_K3_main} implements an inexact fixed--point iteration for this system via the one–dimensional root–finding subroutine in Algorithm~\ref{alg:ComputeCoordinateMu}.
We show that, under a mild contraction assumption on the associated update map, the iterates of Algorithm~\ref{alg:compute_optimal_mu_K3_main} converge linearly to $\vec{\mu}^{*}$ up to the inner accuracy of the root finder.
Equivalently, the computational complexity of our proposed algorithm is proportional to the logarithm of the reciprocal of the target error: to obtain an $O(\varepsilon)$-accurate approximation of $\vec{\mu}^{*}$, the number of outer iterations grows only on the order of $\log(1/\varepsilon)$.

\begin{lemma}[Linear convergence of the coordinate--descent algorithm]\label{lem:algo_convergence}
Suppose Assumptions \ref{as:assumption3}, \ref{as:assumption4}, \ref{as:assumption5} and Theorem~\ref{thm:mu_optimality_combined} hold, so that there exists a unique vector $\vec{\mu}^{*}$ that satisfies the optimality system \eqref{eq:f_gamma} with equality for every constraint.
Let $\{\vec{\mu}^{(t)}\}_{t\ge 0}$ be the sequence produced by Algorithms~\ref{alg:compute_optimal_mu_K3_main}--\ref{alg:ComputeCoordinateMu} with root--finding tolerance $\delta>0$.
Assume further that the update map $T:[0,U_{\max}]^{3}\to[0,U_{\max}]^{3}$ given by
\[
T(\mu_0,\mu_1,\mu_2)=
\Bigl(F_0^{-1}(\alpha;\mu_1,\mu_2),
      F_1^{-1}(\alpha;\mu_0,\mu_2),
      F_2^{-1}(\alpha;\mu_0,\mu_1)\Bigr)
\]
is a contraction on $[0,U_{\max}]^{3}$ with respect to the Euclidean norm: there exists $L\in(0,1)$ such that
\begin{equation}\label{eq:T_contraction}
\|T(\vec{\mu})-T(\vec{\nu})\|_2 \le L\|\vec{\mu}-\vec{\nu}\|_2
\quad\text{for all }\vec{\mu},\vec{\nu}\in[0,U_{\max}]^{3}.
\end{equation}
Then the algorithm terminates in finitely many outer iterations with an iterate $\hat{\vec{\mu}}$ obeying
$\|\hat{\vec{\mu}}-\vec{\mu}^{*}\|_2\le\varepsilon+\delta$.
Moreover, the outer iteration enjoys a linear rate in the sense that, for all $t\ge 1$,
\begin{equation}\label{eq:linear_rate}
\|\vec{\mu}^{(t)}-\vec{\mu}^{*}\|_2
\le
L\,\|\vec{\mu}^{(t-1)}-\vec{\mu}^{*}\|_2
+
\sqrt{3}\,\delta.
\end{equation}
Consequently, $\|\vec{\mu}^{(t)}-\vec{\mu}^{*}\|_2 \le C L^{t} + C' \delta$ for some constants $C,C'$, and in particular choosing $\delta=O(\varepsilon)$ guarantees that the prescribed stopping tolerance $\varepsilon$ is achieved after $O\!\bigl(\log(\varepsilon^{-1})\bigr)$ outer iterations of Algorithm~\ref{alg:compute_optimal_mu_K3_main}.
\end{lemma}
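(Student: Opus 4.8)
The plan is to treat Algorithm~\ref{alg:compute_optimal_mu_K3_main} as an inexact Banach fixed-point iteration for the map $T$ and to leverage the contraction hypothesis \eqref{eq:T_contraction} together with the root-finder accuracy guarantee. First I would establish the per-iteration error recursion \eqref{eq:linear_rate}. The key observation is that $\vec{\mu}^{*}$ is a fixed point of $T$, i.e.\ $T(\vec{\mu}^{*})=\vec{\mu}^{*}$, which follows directly from Theorem~\ref{thm:mu_optimality_combined}: the optimality system \eqref{eq:f_gamma} holds with equality at $\vec{\mu}^{*}$, so each coordinate $\mu_\gamma^{*}$ is the exact root $F_\gamma^{-1}(\alpha;\cdot)$, which exists and is unique by the strict monotonicity in Lemma~\ref{lem:monotonic_optimality}. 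The idealized update would produce $T(\vec{\mu}^{(t-1)})$ exactly, but the subroutine \texttt{ComputeCoordinateMu} returns each coordinate only to within tolerance $\delta$ under bisection. Writing $\vec{\mu}^{(t)}=T(\vec{\mu}^{(t-1)})+\vec{e}^{(t)}$ where $\vec{e}^{(t)}$ is the accumulated root-finding error, I would bound $\|\vec{e}^{(t)}\|_2\le\sqrt{3}\,\delta$ since each of the three coordinates contributes at most $\delta$. Then the triangle inequality gives
\begin{align*}
\|\vec{\mu}^{(t)}-\vec{\mu}^{*}\|_2
&=\|T(\vec{\mu}^{(t-1)})+\vec{e}^{(t)}-T(\vec{\mu}^{*})\|_2\\
&\le \|T(\vec{\mu}^{(t-1)})-T(\vec{\mu}^{*})\|_2+\|\vec{e}^{(t)}\|_2
\le L\,\|\vec{\mu}^{(t-1)}-\vec{\mu}^{*}\|_2+\sqrt{3}\,\delta,
\end{align*}
which is precisely \eqref{eq:linear_rate}.

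Second, I would unroll this one-step recursion into a closed-form bound. Iterating \eqref{eq:linear_rate} yields a geometric series: $\|\vec{\mu}^{(t)}-\vec{\mu}^{*}\|_2\le L^{t}\|\vec{\mu}^{(0)}-\vec{\mu}^{*}\|_2+\sqrt{3}\,\delta\sum_{j=0}^{t-1}L^{j}$. Since $L\in(0,1)$, the geometric sum is bounded by $1/(1-L)$, so setting $C=\|\vec{\mu}^{(0)}-\vec{\mu}^{*}\|_2$ and $C'=\sqrt{3}/(1-L)$ gives $\|\vec{\mu}^{(t)}-\vec{\mu}^{*}\|_2\le C L^{t}+C'\delta$, establishing the stated asymptotic error floor proportional to $\delta$. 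Because $T$ maps the compact box $[0,U_{\max}]^{3}$ into itself and is a contraction, the iterates remain well defined and bounded throughout, so every invocation of \texttt{ComputeCoordinateMu} successfully brackets a root (the monotonicity of Lemmas~\ref{lem:monotonic_optimality}--\ref{lem:alpha3_mu01_monotone} guaranteeing that the target functions cross $\alpha$ exactly once).

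Third, I would translate the geometric decay into an iteration-complexity statement. To drive $C L^{t}$ below a prescribed fraction of $\varepsilon$, it suffices that $t\ge \log(C/\varepsilon)/\log(1/L)$, i.e.\ $t=O(\log(1/\varepsilon))$ outer iterations. Choosing the inner tolerance $\delta=O(\varepsilon)$ ensures the residual floor $C'\delta$ is also $O(\varepsilon)$, so the algorithm's stopping criterion $\|\vec{\mu}^{(t)}-\vec{\mu}^{(t-1)}\|_2\le\varepsilon$ is met within $O(\log(1/\varepsilon))$ iterations, and the returned $\hat{\vec{\mu}}$ satisfies $\|\hat{\vec{\mu}}-\vec{\mu}^{*}\|_2\le\varepsilon+\delta$ by combining the contraction bound on the successive-iterate gap with the fixed-point estimate. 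Finite termination follows since $T_{\max}$ provides a hard cap while the contraction guarantees the tolerance is reached well before it.

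The main obstacle I anticipate is not the fixed-point algebra, which is routine once contraction is assumed, but rather justifying that the error injected by the inexact coordinate solves composes additively as $\sqrt{3}\,\delta$ rather than being amplified across the sequential (Gauss--Seidel-style) coordinate updates. In Algorithm~\ref{alg:compute_optimal_mu_K3_main}, $\mu_1^{(t)}$ is computed using the \emph{already updated} $\mu_0^{(t)}$, so the three coordinate errors are not independent and the clean product-space map $T$ above is a simplification of the true sequential update. A fully rigorous argument would either (i) verify that the contraction constant $L$ controls this sequential coupling, absorbing the intermediate perturbations into the same geometric factor, or (ii) reformulate $T$ as the composition of three one-coordinate Gauss--Seidel maps and track how the $\delta$-errors propagate through each composition. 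I would handle this by bounding the sequential map's deviation from the exact parallel map coordinate-by-coordinate and showing the cumulative perturbation stays $O(\delta)$, which is where Assumptions~\ref{as:assumption3}--\ref{as:assumption5} (ensuring the target functions $F_\gamma$ are Lipschitz with a uniform lower slope, hence their inverses are Lipschitz) do the essential quantitative work.
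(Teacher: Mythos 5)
Your proposal follows essentially the same route as the paper's proof: it treats Algorithm~\ref{alg:compute_optimal_mu_K3_main} as an inexact Banach fixed-point iteration for $T$, identifies $\vec{\mu}^{*}$ as the unique fixed point via Theorem~\ref{thm:mu_optimality_combined} and the strict monotonicity in Lemma~\ref{lem:monotonic_optimality}, writes $\vec{\mu}^{(t)}=T(\vec{\mu}^{(t-1)})+\vec{e}^{(t)}$ with $\|\vec{e}^{(t)}\|_2\le\sqrt{3}\,\delta$, and then obtains \eqref{eq:linear_rate}, the geometric unrolling with $C'=\sqrt{3}/(1-L)$, and the $O(\log(\varepsilon^{-1}))$ iteration count exactly as the paper does. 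One remark worth making: the Gauss--Seidel coupling you flag as the main anticipated obstacle is silently elided in the paper's own proof as well --- the paper defines $\vec{e}^{(t)}$ relative to the parallel (Jacobi) map $T$ and asserts $\|\vec{e}^{(t)}\|_{\infty}\le\delta$ even though $\mu_1^{(t)}$ is computed at the already-updated $\mu_0^{(t)}$ rather than at $\mu_0^{(t-1)}$, so your proposed coordinate-by-coordinate tracking of the sequential perturbation (via Lipschitz inverses of the $F_\gamma$) is more careful than the published argument, not merely a restatement of it.
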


\begin{proof}
By Lemma~\ref{lem:monotonic_optimality}, the map $x\mapsto F_\gamma(x;\mu_A,\mu_B)$ in \eqref{eq:f_gamma} is continuous and strictly decreasing on $[0,U_{\max}]$ for each $\gamma\in\{0,1,2\}$ and each $(\mu_A,\mu_B)\in[0,U_{\max}]^2$.
Hence, for every $(\mu_A,\mu_B)$, the equation $F_\gamma(x;\mu_A,\mu_B)=\alpha$ has a unique solution in $[0,U_{\max}]$, denoted $F_\gamma^{-1}(\alpha;\mu_A,\mu_B)$, so the map $T$ in the statement is well defined.

A vector $\vec{\mu}$ is a fixed point of $T$ if and only if it satisfies all three equations in \eqref{eq:f_gamma}.
By Theorem~\ref{thm:mu_optimality_combined}, the unique solution of \eqref{eq:f_gamma} is precisely the dual minimiser $\vec{\mu}^{*}$; therefore $\vec{\mu}^{*}$ is also the unique fixed point of $T$.

Algorithm~\ref{alg:compute_optimal_mu_K3_main} implements an inexact fixed--point iteration for $T$.
Let $\vec{\mu}^{(t)}=(\mu_0^{(t)},\mu_1^{(t)},\mu_2^{(t)})^{\!\top}$ be the vector of multipliers at the start of the $t$-th outer cycle, and define the perturbation
\[
\vec{e}^{(t)}:=\vec{\mu}^{(t)}-T(\vec{\mu}^{(t-1)}).
\]
Each coordinate of $T(\vec{\mu}^{(t-1)})$ is computed by the bisection subroutine Algorithm~\ref{alg:ComputeCoordinateMu} with absolute error at most $\delta$, so
$\|\vec{e}^{(t)}\|_{\infty}\le\delta$ and hence
\begin{equation}\label{eq:et_bound}
\|\vec{e}^{(t)}\|_2\le\sqrt{3}\,\delta
\quad\text{for all }t\ge 1.
\end{equation}

Using the error decomposition $\vec{\mu}^{(t)}=T(\vec{\mu}^{(t-1)})+\vec{e}^{(t)}$ and the contraction property \eqref{eq:T_contraction}, we obtain
\[
\begin{aligned}
\|\vec{\mu}^{(t)}-\vec{\mu}^{*}\|_2
&=\bigl\|T(\vec{\mu}^{(t-1)})-T(\vec{\mu}^{*})+\vec{e}^{(t)}\bigr\|_2 \\
&\le\|T(\vec{\mu}^{(t-1)})-T(\vec{\mu}^{*})\|_2+\|\vec{e}^{(t)}\|_2 \\
&\le L\,\|\vec{\mu}^{(t-1)}-\vec{\mu}^{*}\|_2+\sqrt{3}\,\delta,
\end{aligned}
\]
which is exactly \eqref{eq:linear_rate}.

Iterating this inequality yields, for all $t\ge 1$,
\begin{equation}\label{eq:geom_unroll}
\begin{aligned}
\|\vec{\mu}^{(t)}-\vec{\mu}^{*}\|_2
&\le L^{t}\|\vec{\mu}^{(0)}-\vec{\mu}^{*}\|_2
   +\sqrt{3}\,\delta\sum_{j=0}^{t-1}L^{j} \\
&\le L^{t}\|\vec{\mu}^{(0)}-\vec{\mu}^{*}\|_2
   +\frac{\sqrt{3}\,\delta}{1-L}.
\end{aligned}
\end{equation}
Thus $\|\vec{\mu}^{(t)}-\vec{\mu}^{*}\|_2 \le C L^{t} + C' \delta$ for appropriate constants $C,C'$ depending on $\vec{\mu}^{(0)}$ and $L$.

Given a target tolerance $\varepsilon>0$, choose $\delta$ so that
$\sqrt{3}\,\delta/(1-L)\le\varepsilon$.
Then \eqref{eq:geom_unroll} implies
\[
\|\vec{\mu}^{(t)}-\vec{\mu}^{*}\|_2
\le L^{t}\|\vec{\mu}^{(0)}-\vec{\mu}^{*}\|_2+\varepsilon.
\]
Requiring the first term to be at most $\varepsilon$ yields
\[
L^{t}\|\vec{\mu}^{(0)}-\vec{\mu}^{*}\|_2\le\varepsilon
\quad\Longrightarrow\quad
t\;\ge\;\frac{\log\!\bigl(\varepsilon/\|\vec{\mu}^{(0)}-\vec{\mu}^{*}\|_2\bigr)}
                {\log(1/L)}.
\]
Hence, after $t=O(\log(\varepsilon^{-1}))$ outer iterations we have
$\|\vec{\mu}^{(t)}-\vec{\mu}^{*}\|_2\le\varepsilon+\delta$, and the algorithm stops with an iterate $\hat{\vec{\mu}}$ satisfying the same bound.
This proves the claim.
\end{proof}

\section{Additional Details for the Simulation Study}
\label{app:simulations}

This appendix collects the technical details for the simulation settings in Section~\ref{sec:simulations}, including the construction of $p$-values, the derivation of the alternative $p$-value densities $g(u)$ and verification of the assumptions used in our theoretical results. 
We also provide some additional numerical summaries for completeness.

\subsection{Truncated Normal Means Model}
\label{app:sim_truncnormal_details}

Recall the one-sided truncated normal model in \eqref{eq:sim1_trunc_normal_mean_mht} with truncation bound $M=6$. 
Let $\Phi$ and $\phi$ denote the standard normal CDF and PDF. 
Under the null, the truncated-normal CDF on $[-M,M]$ is
\[
F_{0,T}(x)
  \;=\; \frac{\Phi(x)-\Phi(-M)}{Z_0}, \qquad 
  Z_0 \;=\; \Phi(M)-\Phi(-M), \qquad x\in[-M,M],
\]
and we define the $p$-values by the probability integral transform
\[
u_k \;=\; F_{0,T}(X_k), \qquad k=1,2,3,
\]
so that $u_k \sim U(0,1)$ under $H_{0k}$.

Under the alternative $H_{Ak}$, $X_k \sim N(\theta,1)_{\mathrm{trunc}}$ with CDF
\[
F_{A,T}(x)
  \;=\; \frac{\Phi(x-\theta)-\Phi(-M-\theta)}{Z_1(\theta)}, \qquad 
  Z_1(\theta) \;=\; \Phi(M-\theta)-\Phi(-M-\theta).
\]
Writing $x = F_{0,T}^{-1}(u)$, we obtain
\[
x \;=\; \Phi^{-1}\!\big(\,\Phi(-M) + u Z_0\,\big),
\]
and the induced alternative density of $u$ is
\[
g(u)
  \;=\; \frac{f_A(x)}{f_0(x)} 
         \left|\frac{dx}{du}\right|
  \;=\; \frac{Z_0}{Z_1(\theta)} \exp\!\Big(\theta x - \tfrac12 \theta^2\Big),
\]
where $f_0$ and $f_A$ are the (truncated) null and alternative densities of $X_k$. 
This expression is used throughout Section~\ref{sec:simulations}. 
It is straightforward to verify that $g(u)$ is strictly positive and non-increasing in $u$ for the parameter values considered, satisfying Assumptions~\ref{as:assumption3}–\ref{as:assumption5}.

For the numerical results in Section~\ref{sec:simulations}, we use $N=120{,}000$ Monte Carlo replications for each value of $\theta \in \{0,-1.0,-1.5,-2.0,-2.5,-3.0,-3.5,-4.0\}$. 
We report both the average power $\Pi_3$ and the any-discovery power $\Pi_{\text{any}}$, and also compute the empirical FWER at $\theta=0$. 
The specific values quoted in the main text (e.g.\ $\Pi_3 = 0.650$ vs.\ $0.555$ at $\theta=-2.0$ and $\Pi_{\text{any}} = 0.932$ vs.\ $0.853$) are directly obtained from these simulations.

\subsection{Two-Sided Mixture Normal Model}
\label{app:sim_mixture_details}

For the two-sided mixture normal setting \eqref{eq:sim2_mixture_normal_mht}, the null distribution is $X_k \sim N(0,1)$ and the alternative is
\[
f_A(x) \;=\; 0.5\,\phi(x-\theta) + 0.5\,\phi(x+\theta), \qquad \theta<0.
\]
We use the standard two-sided $p$-value
\[
u_k \;=\; 2 \Phi(-|X_k|), \qquad k=1,2,3,
\]
which satisfies $u_k \sim U(0,1)$ under $H_{0k}$. 
Let $x_p = \Phi^{-1}(1-u/2)$ denote the positive test statistic corresponding to a given $p$-value $u$. 
Then the likelihood ratio between the mixture alternative and the null at $x_p$ is
\[
g(u) \;=\; \frac{f_A(x_p)}{f_0(x_p)}
        \;=\; \frac{0.5\,\phi(x_p-\theta) + 0.5\,\phi(x_p+\theta)}{\phi(x_p)}
        \;=\; \exp\!\Big(-\tfrac12\theta^2\Big)\cosh(\theta x_p).
\]
As $u$ increases, $x_p$ decreases, and hence $g(u)$ is non-increasing in $u$. 
It is strictly positive for all $u\in(0,1)$, satisfying Assumptions~\ref{as:assumption3} and~\ref{as:assumption4}. 
While $g(u)$ is unbounded as $u\to 0$, violating the boundedness in Assumption~\ref{as:assumption5}, we observe in practice that Algorithm~\ref{alg:compute_optimal_mu_K3_main} converges stably and maintains exact FWER control for all parameter values considered.

The same Monte Carlo design as in Section~\ref{sec:simulations} is used, with $N=120{,}000$ replications for each $\theta \in \{0,-1.0,-1.5,-2.0,-2.5,-3.0,-3.5,-4.0\}$. 
The numerical values reported in Section~\ref{sec:simulations} (for instance, $\Pi_3 = 0.504$ vs.\ $0.415$ and $\Pi_{\text{any}} = 0.804$ vs.\ $0.737$ at $\theta=-2.0$) are obtained from these simulations.

\subsection{\texorpdfstring{Student $t$ Heavy-Tailed Model}{Student t Heavy-Tailed Model}}
\label{app:sim_t_details}

In the $t$-distribution setting \eqref{eq:sim3_t_dist_mht}, the null distribution is $X_k \sim N(0,1)$ and the alternative is $X_k \sim t_{\mathrm{df}}$. 
We again use two-sided $p$-values
\[
u_k \;=\; 2 \Phi(-|X_k|), \qquad k=1,2,3,
\]
which are uniform under the null. 
Let $x_p = \Phi^{-1}(1-u/2)$ as before. 
Then the corresponding $p$-value density under the alternative is given by the likelihood ratio
\[
g(u)
  \;=\; \frac{f_A(x_p)}{f_0(x_p)}
  \;=\; \frac{\mathrm{dt}(x_p,\mathrm{df})}{\phi(x_p)},
\]
where $\mathrm{dt}(\cdot,\mathrm{df})$ denotes the $t$-distribution density with $\mathrm{df}$ degrees of freedom. 
Since the $t$-distribution has heavier tails than the normal, this ratio is non-increasing in $u$ and strictly positive, so the key monotonicity condition (Lemma~\ref{lem:gdot}) is satisfied, though $g(u)$ is unbounded near $0$.

We consider $\mathrm{df} \in \{2,4,\dots,20\}$ and again use $N=120{,}000$ replications for each value. 
The numerical values quoted in Section~\ref{sec:simulations}, such as $\Pi_3 = 0.166$ vs.\ $0.146$ and $\Pi_{\text{any}} = 0.322$ vs.\ $0.365$ at $\mathrm{df}=2$, come from these simulations and illustrate the trade-off between optimising $\Pi_3$ and maximising $\Pi_{\text{any}}$ under heavy tails.

\subsection{Beta Model on \texorpdfstring{$p$}{p}-Values}
\label{app:sim_beta_details}

For the Beta model \eqref{eq:sim2_beta_mht}, the null distribution of the $p$-values is uniform on $(0,1)$, $u_k \sim \text{Beta}(1,1)$, while under the alternative we have $u_k \sim \text{Beta}(\theta,1)$ with $\theta<1$. 
The alternative density is
\[
g(u)
  \;=\; \frac{u^{\theta-1}(1-u)^{1-1}}{B(\theta,1)}
  \;=\; \theta u^{\theta-1}, \qquad 0<u<1,
\]
where $B(\theta,1)=1/\theta$ is the Beta function. 
For $\theta \in \{0.8,0.6,0.4,0.2\}$, $g(u)$ is strictly positive and non-increasing in $u$, placing increasing mass near zero as $\theta$ decreases. 
Thus Assumptions~\ref{as:assumption3}–\ref{as:assumption5} are satisfied exactly in this setting.

We again use $N=120{,}000$ Monte Carlo replications per parameter value. 
The results summarised in Figure~\ref{tab:sim_pi3_beta} show that, for instance at $\theta=0.2$, our $\Pi_3$-optimised policy achieves $\Pi_3=0.576$ and $\Pi_{\text{any}}=0.854$, compared with $\Pi_3=0.490$ and $\Pi_{\text{any}}=0.490$ for Hommel, representing a substantial gain in both power metrics.

\section{Root-Finding Subroutine for Coordinate Updates}
\label{app:algos}
In this section we detail the one–dimensional root–finding subroutine that underpins our main coordinate descent scheme in Algorithm \ref{alg:compute_optimal_mu_K3_main}. 
Algorithm~\ref{alg:ComputeCoordinateMu} (\texttt{ComputeCoordinateMu}) takes as input one of the constraint maps $F_{\gamma}(\cdot;\mu_A,\mu_B)$, the fixed values of the other two multipliers, and a target FWER level $\alpha$, and returns an approximate solution of the scalar equation $F_{\gamma}(x;\mu_A,\mu_B)=\alpha$. 
The subroutine combines an expanding–interval bracketing step with a bisection phase, and includes explicit failure flags for the two pathological cases where $F_{\gamma}(0)<\alpha$ or no bracket is found before reaching $U_{\max}$. 
Together with the monotonicity properties established in Lemma~\ref{lem:monotonic_optimality}, this construction guarantees that each coordinate update is well posed and numerically stable, and provides the building block for the global convergence result in Lemma~\ref{lem:algo_convergence}.

\begin{algorithm}[htbp]
\caption{\texttt{ComputeCoordinateMu} (Subroutine for Algorithm~\ref{alg:compute_optimal_mu_K3_main})}
\label{alg:ComputeCoordinateMu} 
\begin{algorithmic}[htbp]
    \Statex \textbf{Input:}
    \Statex \hspace{0.5em} 1: Functions $F_{\gamma}(x;\mu_{A},\mu_{B})$ ($\gamma \in \{0,1,2\}$, $x$ is the coordinate being solved for, and $mu_{A},\mu_{B}'$ are its other parameters).
    \Statex \hspace{0.5em} 2: Fixed value for paramater $\mu_{A}$: $\mu_{A}'$
    \Statex \hspace{0.5em} 3: Fixed value for parameter $\mu_{B}$: $\mu_{B}'$
    \Statex \hspace{0.5em} 4: Target FWER level $\alpha$.
    \Statex \hspace{0.5em} 5: Bisection parameters: tolerance $\delta$, max iterations $MaxIter_b$, initial step $U_s$, expansion factor $U_f$, max coordinate value $U_{max}$.

    \State $L \gets 0$.
    \State $\mu_{coord} \gets 0$. \Comment{Value for the current coordinate being computed}
    \State $flag \gets 0$. \Comment{0 for success, 1 for termination}
    \State $msg \gets \text{`'}$.
    \If{$F_{\gamma}(L; \mu_{A}', \mu_{B}') = \alpha$}
        \State $\mu_{coord} \gets L$. \Comment{$H(0)=\alpha$ met, optimal value is $0$.}
    \ElsIf{$F_{\gamma}(L; \mu_{A}', \mu_{B}') < \alpha$}
        \State $flag \gets 1$. \Comment{No optimal solution $H(0) < \alpha$ for current coordinate}
        \State $msg \gets \text{`Consider decreasing FWER level } \alpha\text{.'}$
        \State $\mu_{coord} \gets L$. \Comment{Value assigned, but main algorithm will terminate based on flag.}
    \Else \Comment{$F_{\gamma}(L; \mu_{A}', \mu_{B}') > \alpha$}
        \State $U \gets L + U_s$. 
        \While{$F_{\gamma}(U; \mu_{A}', \mu_{B}') > \alpha$ \textbf{and} $U < U_{max}$}
            \State $U \gets U \times U_f$. 
        \EndWhile
        \If{$F_{\gamma}(U; \mu_{A}', \mu_{B}') > \alpha$} 
            \State $flag \gets 1$. \Comment{No optimal solution: Failed to bracket positive root for current coordinate.}
            \State $msg \gets \text{`Consider increasing } U_{max} \text{ or decreasing FWER level $\alpha$} \text{.'}$
            \State $\mu_{coord} \gets L$. \Comment{Value assigned, but main algorithm will terminate based on flag.}
        \Else
            \For{$j \gets 1 \text{ to } MaxIter_b$} 
                \State $mid \gets L + (U-L)/2$.
                \If{$(U-L)/2 < \delta$} \textbf{break}; \EndIf 
                \If{$F_{\gamma}(mid; \mu_{A}', \mu_{B}') > \alpha$}
                    \State $L \gets mid$.
                \Else
                    \State $U \gets mid$.
                \EndIf
            \EndFor
            \State $\mu_{coord} \gets L + (U-L)/2$.
        \EndIf
    \EndIf
    \Statex \textbf{Output:}
    \Statex \hspace{0.5em} 1: Computed coordinate value $\mu_{coord}$.
    \Statex \hspace{0.5em} 2: Termination flag $flag$.
    \Statex \hspace{0.5em} 3: Termination message $msg$.
\end{algorithmic}
\end{algorithm}

\section{Experimental Setup, Codes and Datasets, and Supplementary Results}
\label{app:experiments}
This appendix details the experimental infrastructure and reproducibility materials for the study. 
In Section \ref{sec:comp_environment} we first describe the computational environment in which the experiments and simulations were carried out. Subsequently, in Section \ref{sec:codes_data}, we provide an overview of the supplementary source code and datasets, implemented in the R programming language, which allow for the complete reproduction of the power comparison simulations and the real-world applications concerning BCG vaccine efficacy and financial asset pricing.

\subsection{Computational Environment}
\label{sec:comp_environment}
All computational tasks reported in this paper, including the implementation of the coordinate-descent algorithm (Algorithm \ref{alg:compute_optimal_mu_K3_main}), the Monte Carlo simulations for power assessment in the simulations (truncated normal, mixture normal, $t$-distribution, and Beta models), and the real world data analysis of the BCG vaccine and Fama-French datasets, were performed on a single local workstation. 
The full research project was completed without the need for high-performance computing clusters or GPU acceleration, underscoring the computational efficiency of the proposed method.

The specific hardware configuration is a 13-inch 2020 MacBook Pro (Model Name: MacBook Pro; Model Identifier: MacBookPro16,2). The system is powered by a 2.3 GHz Quad-Core Intel Core i7 processor (1 processor, 4 cores) with Hyper-Threading Technology enabled. The processor architecture features a 512 KB L2 cache per core and an 8 MB L3 cache. The system is equipped with 32 GB of 3733 MHz LPDDR4X memory.

Regarding the firmware and graphical environment, the workstation operates with System Firmware Version 2069.0.0.0.0 (iBridge: 22.16.10353.0.0,0) and OS Loader Version 582-1023. Graphics processing is handled by an Intel Iris Plus Graphics chipset with 1536 MB of memory, utilizing the built-in 13.3-inch Retina Display (2560 $\times$ 1600 resolution). The software environment for all simulations and analyses was macOS Sequoia, Version 15.0.1.

\subsection{Code and datasets}
\label{sec:codes_data}
The source code and datasets used to reproduce all simulation studies and real-data applications in this paper are provided in the Supplementary Material folder labeled \texttt{Code and Datasets}. 
The same code and datasets are also available at
\href{https://github.com/pdubey96/most-powerful-mht-fwer-supplementary}{https://github.com/pdubey96/most-powerful-mht-fwer-supplementary}.
All scripts are written in R and organised according to the sections of the manuscript they support.

\paragraph{Simulation Studies}
Four scripts are provided to reproduce the power comparison results discussed in Section \ref{sec:simulations}. 
Each script implements Algorithm \ref{alg:compute_optimal_mu_K3_main} for a specific distributional setting and compares it against standard FWER-controlling procedures.
\begin{itemize}
    \item \texttt{sim1\_truncated\_normal.R}: Implements the simulation for $K=3$ independent truncated normal means (Section \ref{sec:simulations}). 
    This script generates the synthetic data internally and reproduces the power curves for $\Pi_3$ and $\Pi_{\text{any}}$.
    \item \texttt{sim2\_mixture\_normal.R}: Implements the simulation for two-sided mixture normal alternatives (Section \ref{sec:simulations}). It reproduces the results demonstrating the method's performance under multimodal distributions.
    \item \texttt{sim3\_students\_t.R}: Implements the simulation for heavy-tailed $t$-distribution alternatives (Section \ref{sec:simulations}), evaluating robustness across varying degrees of freedom.
    \item \texttt{sim4\_beta\_pvalues.R}: Implements the simulation for Beta-distributed $p$-values (Section \ref{sec:simulations}), where the theoretical assumptions of the algorithm are strictly satisfied.
\end{itemize}

\paragraph{Real-World Applications}
Two scripts are provided to reproduce the empirical analyses in Section \ref{sec:realdata}:
\begin{itemize}
    \item \texttt{exp1\_bcg\_subgroup\_analysis.R}: Performs the subgroup analysis on the BCG vaccine trial dataset (Section \ref{sec:bcg}). This script loads the \texttt{dat.bcg} dataset from the \texttt{metadat} package, conducts fixed-effects meta-analyses across four subgroup splits (Allocation, Latitude, Risk, Era), and reproduces the rejection decisions and visualizations.
    \item \texttt{exp2\_finance\_factor\_zoo\_analysis.R}: Performs the asset pricing factor analysis (Section \ref{sec:finance_app}). This script uses the \texttt{frenchdata} package to download monthly returns for the Fama-French 5-Factors and Momentum factor directly from the Kenneth French Data Library. It performs time-series regressions to test the significance of the intercepts.
\end{itemize}

\paragraph{Software Requirements}
All analyses were conducted using R. The scripts require the following packages: \texttt{metafor} and \texttt{metadat} for the clinical analysis; \texttt{frenchdata} for the financial data retrieval; and \texttt{dplyr}, \texttt{tidyr}, \texttt{ggplot2}, \texttt{patchwork}, and \texttt{scales} for data manipulation and visualization. The \texttt{parallel} package is used optionally to accelerate the panel analysis in the BCG experiment.

\section{Real-World Applications Supplementary Details}
\label{app:real_world_details}

This appendix provides technical supplementary material for the real-world applications presented in Section \ref{sec:realdata}.
We first detail the subgroup definitions, statistical methodology, and supplementary results for the BCG vaccine efficacy analysis.
Subsequently, we provide the problem context, variable definitions, and distributional justifications for the financial factor zoo application.

\subsection{BCG Analysis Supplementary Details}
\label{app:bcg_details}

\paragraph{Detailed Subgroup Definitions}
In the main text, we utilized four specific three-way splits of the data.
Here we provide the detailed definitions and scientific justification for each split used to construct the $K=3$ families of hypotheses.
\begin{enumerate}\setlength\itemsep{5pt}
\item \emph{Allocation (trial conduct):} We define $S_1=\{\texttt{alloc}=\text{random}\}$, $S_2=\{\texttt{alloc}=\text{alternate}\}$, $S_3=\{\texttt{alloc}=\text{systematic}\}$.
This isolates differences in assignment and concealment, which can shift estimated effects.
\emph{Example:} If “alternate” assignment inflated effects in some historical trials, the allocation split can detect whether BCG efficacy persists under truly randomised conduct.
\item \emph{Latitude (geography/environment):} We compute sample tertiles $q_{1/3},q_{2/3}$ of \texttt{ablat}. We define $S_1=\{\texttt{ablat}\le q_{1/3}\}$ (low\_lat), $S_2=\{q_{1/3}<\texttt{ablat}\le q_{2/3}\}$ (mid\_lat), $S_3=\{\texttt{ablat}>q_{2/3}\}$ (high\_lat).
Geography proxies background exposure to environmental mycobacteria, a known modifier of apparent BCG efficacy.
\emph{Example:} This allows comparison of low\_lat (closer to the equator) versus high\_lat groups.
\item \emph{Baseline risk (underlying incidence):} For each trial we compute control-arm risk $\mathrm{CER}=\texttt{cpos}/(\texttt{cpos}+\texttt{cneg})$, take tertiles, and form $S_1$ (low\_risk), $S_2$ (mid\_risk), $S_3$ (high\_risk).
This tests whether the benefit is consistent in low- versus high-incidence settings, which drives generalizability.
\emph{Example:} low\_risk collects trials with the lowest third of control TB rates.
\item \emph{Era (time proxy):} We order trials by their dataset index and take tertiles to define $S_1$ (era1), $S_2$ (era2), $S_3$ (era3).
This captures coarse temporal changes in diagnostics and implementation.
\emph{Example:} Comparing early versus late trials.
\end{enumerate}
This construction is essential as it targets questions that matter scientifically (does efficacy hold under different conduct, environments, risks, or eras?).
These axes are routinely examined in TB/BCG evidence reviews, so our splits align with established practice rather than ad hoc slicing.

\paragraph{Statistical Methodology: Estimates and Meta-Analysis}
From each trial’s data, we compute two numbers: the estimated log risk ratio $y$ and its variance $v$.
\[
y \;=\; \log\!\left(\frac{\texttt{tpos}/(\texttt{tpos}+\texttt{tneg})}{\texttt{cpos}/(\texttt{cpos}+\texttt{cneg})}\right),
\qquad
v \;\approx\; \frac{1}{\texttt{tpos}} - \frac{1}{\texttt{tpos}+\texttt{tneg}} + \frac{1}{\texttt{cpos}} - \frac{1}{\texttt{cpos}+\texttt{cneg}}.
\]
Here, \(y\) is the study’s estimated \emph{log risk ratio} (how much the risk differs between vaccinated and control on a log scale).
The number \(v\) is the estimated \emph{variance} of \(y\); it measures the uncertainty in that estimate.
Larger trials (with larger cell counts) have smaller \(v\), meaning more precise estimates, and smaller trials have larger \(v\).
These \((y,v)\) pairs put all trials on the same scale.

Within each three-way split, we first partition the trials into three non-overlapping sets (the \emph{subgroups}) defined by that split.
For each subgroup, we compute a pooled log risk ratio by taking the inverse-variance–weighted average of the study estimates (\(w_i=1/v_i\)), so that more precise studies (smaller \(v_i\)) contribute more to the subgroup estimate:
\[
\hat\theta_{\text{FE}}=\frac{\sum_i w_i\, y_i}{\sum_i w_i},
\qquad
\mathrm{se}(\hat\theta_{\text{FE}})=\left(\sum_i w_i\right)^{-1/2}.
\]
We then form a two-sided \(z\)-statistic \(z=\hat\theta_{\text{FE}}/\mathrm{se}(\hat\theta_{\text{FE}})\) and its \(p\)-value \(p=2\,\Phi(-|z|)\).
This uniform construction across trials makes subgroup results directly comparable and reproducible.

\paragraph{Supplementary Results for Allocation, Era, and Latitude Splits}
In the main text, we focused on the Baseline Risk split where methods disagreed.
Here we present the results for the other splits to demonstrate the agreement between methods when evidence is strong.
For example, partitioning the trials by allocation into \(S_1=\)random, \(S_2=\)alternate, \(S_3=\)systematic leads to the following specific hypotheses:
\[
H_{01}:\theta_{\text{random}}=0,\quad
H_{02}:\theta_{\text{alternate}}=0,\quad
H_{03}:\theta_{\text{systematic}}=0.
\]
The \emph{allocation} split yields three very small \(p\)-values; consequently, all procedures reject all three null hypotheses (Table \ref{tab:alloc}).
“Reject” here means rejecting the null hypothesis \(H_{0k}\!:\theta_k=0\) for that subgroup at strong FWER \(\alpha=0.05\) within the split.
This confirms that the methods agree when the evidence is uniformly strong.

Repeating the analysis on \emph{latitude} and \emph{era} tertiles shows similar agreement.
A concise summary is provided in Table~\ref{tab:agg}.
With the error rate held fixed at \(\alpha=0.05\), Algorithm \ref{alg:compute_optimal_mu_K3_main} matches the rejection rates of standard methods in these high-signal scenarios, while uniquely identifying the additional signal in the baseline risk scenario.

\begin{table}[htbp]
\centering
\begin{tabular}{lccccc}
\hline
Level & \(p\)-value & Holm & Hommel & Closed-Stouffer & Algorithm \ref{alg:compute_optimal_mu_K3_main} \\
\hline
alternate  & \(3.22\times 10^{-20}\) & Reject & Reject & Reject & Reject\\
random     & \(6.94\times 10^{-08}\) & Reject & Reject & Reject & Reject\\
systematic & \(1.44\times 10^{-05}\) & Reject & Reject & Reject & Reject\\
\hline
\end{tabular}
\caption{Allocation split: \(p\)-values and decisions at \(\alpha=0.05\).}
\label{tab:alloc}
\end{table}

\begin{table}[htbp]
\centering
\begin{tabular}{lcccccc}
\hline
Outcome & \multicolumn{4}{c}{Avg.\#rej} & \multicolumn{2}{c}{Frac.\(\ge 1\)} \\
\cline{2-5}\cline{6-7}
 & Holm & Hommel & ClosedS & Alg. 1 & ClosedS & Alg. 1 \\
\hline
allocation & 1.00 & 1.00 & 1.00 & 1.00 & 1.00 & 1.00 \\
era   & 1.00 & 1.00 & 1.00 & 1.00 & 1.00 & 1.00 \\
latitude   & 1.00 & 1.00 & 1.00 & 1.00 & 1.00 & 1.00 \\
baseline risk  & 0.67 & 0.67 & 0.67 & \textbf{1.00} & 0.67 & \textbf{1.00} \\
\hline
\end{tabular}
\caption{Aggregates by outcome. “Avg.\#rej” is the average number of rejections (out of 3). “Frac.\(\ge 1\)” is the fraction of subgroups with at least one rejection. Alg. 1 refers to Algorithm \ref{alg:compute_optimal_mu_K3_main}.}
\label{tab:agg}
\end{table}

\subsection{Financial Application Supplementary Details}
\label{app:finance_details}

\paragraph{Problem Context: The Factor Zoo}
The application addresses a central problem in empirical finance: identifying genuine return-predicting signals from the `factor zoo' (\cite{cochrane11}; \cite{harvey16}).
The dataset is structured as a large time-series table where each row is a month (from July 1963 to the present) and each column is a specific `factor' or investment strategy (e.g., Profitability, Size etc.), but many are suspected to be false positives.
The value in each cell is the portfolio's return for that month.
This setting creates a multiple hypotheses testing problem where standard adjustments like the Bonferroni or Holm procedures are often criticized for being so conservative that they lead to a severe loss of statistical power, causing researchers to miss true discoveries (\cite{zhu4}).
Our algorithm is constructed to maximise power subject to exact FWER control under the assumptions of the $K=3$ hypothesis setting, and under these assumptions it delivers substantially higher power than standard multiplicity corrections while maintaining the nominal error rate.
Analyzing this public dataset ensures our results are transparent, reproducible, and directly relevant to the existing finance literature.

\paragraph{Detailed Control Variable Definitions}
The regression model controls for the standard Fama-French 3-factors, which are defined as follows:
\begin{enumerate}
\item $Mkt_t$: The market excess return (Mkt-RF) at time $t$.
\item $SMB_t$: The Size (Small-Minus-Big) factor return at time $t$, representing the excess return of small-cap stocks over large-cap stocks.
\item $HML_t$: The Value (High-Minus-Low) factor return at time $t$, representing the excess return of value stocks (high book-to-market ratio) over growth stocks.
\end{enumerate}

\paragraph{Justification for Distributional Assumptions}
In the main text, we modeled the alternative $p$-value density $g(u)$ using a Student-$t$ distribution with 4 degrees of freedom ($t_4$).
This choice is motivated by the well-documented heavy tails of asset returns and factor returns in empirical finance.
Low-degree-of-freedom $t$-distributions (e.g., $t_3$, $t_4$, $t_5$) provide a much better fit to financial data than the Gaussian benchmark (see, e.g., \cite{bollerslev1987}, \cite{harvey2000}).
Under this specification, the induced $g(u)$ satisfies our monotonicity condition (Lemma \ref{lem:gdot}) while remaining economically realistic for factor-return data.



\end{document}